\newtheorem{theorem}{Theorem}[section]
\newtheorem{proposition}[theorem]{Proposition}
\newtheorem{lemma}[theorem]{Lemma}
\newtheorem{corollary}[theorem]{Corollary}
\newtheorem{assumption}[theorem]{Assumption}
\newtheorem{definition}[theorem]{Definition}
\newtheorem{remark}[theorem]{Remark}
\newtheorem{example}[theorem]{Example}
\numberwithin{equation}{section}
\numberwithin{figure}{section}
\numberwithin{table}{section}
\newcommand\beq{\begin{equation}}
\newcommand{\bea}{\begin{eqnarray}}
\newcommand{\eea}{\end{eqnarray}}
\newcommand{\beas}{\begin{eqnarray*}}
\newcommand{\eeas}{\end{eqnarray*}}
\newcommand{\beql}{\begin{equation} \label}
\newcommand{\eeq}{\end{equation}}
\newcommand{\R}{\mathbb R}
\newcommand{\N}{\mathbb N}
\newcommand{\C}{\mathbb C}                           
\newcommand{\Z}{\mathbb Z}
\newcommand{\T}{\mathbb T}
\newcommand{\s}[1]{\CMcal{#1}}
\newcommand{\f}[1]{\mathcal{#1}}                  
\newcommand{\bb}[1]{\mathscr{#1}}
\newcommand{\rr}[1]{\mathfrak{#1}}
\newcommand{\n}[1]{\mathds {#1}}
\newcommand{\expo}[1]{{\rm e}^{#1}}                 
\newcommand{\dd}{\,{\rm d}}
\newcommand{\ii}{\,{\rm i}\,}
\newcommand{\ncint}{\mathrel{{\ooalign{$\int$\cr\kern+.07em\raise.15ex\hbox{$\pmb{\scriptstyle-}$}\cr}}}}           \newcommand{\ncpartial}{\mathrel{{\ooalign{$\partial$\cr\kern+.29em\raise.79ex\hbox{$\pmb{\scriptstyle-}$}\cr}}}}
\newcommand{\virg}[1]{\lq\lq#1\rq\rq}                
\newcommand{\ie}{{\sl i.\,e.\,}}
\newcommand{\eg}{{\sl e.\,g.\,}}
\newcommand{\cf}{{\sl cf.\,}}
\newcommand{\blu}{\textcolor[rgb]{0.05,0.24,0.57}}
\begin{document}

\title[Classification of \emph{\virg{Real}} Bloch-bundles]{Classification of \emph{\virg{Real}} Bloch-bundles:\\
Topological Quantum Systems of type {\bf AI}}


\author[G. De~Nittis]{Giuseppe De Nittis}
\address[De~Nittis]{Department Mathematik, Universit\"{a}t Erlangen-N\"{u}rnberg,
Germany}
\email{denittis@math.fau.de}

\author[K. Gomi]{Kiyonori Gomi}
\address[Gomi]{Department of Mathematical Sciences, Shinshu University,  Nagano, Japan}
\email{kgomi@math.shinshu-u.ac.jp}

\thanks{{\bf MSC2010}
Primary: 57R22; Secondary:  55N25, 53C80, 19L64}

\thanks{{\bf Keywords.}
Topological insulators, Bloch-bundle, \virg{Real} vector bundle, \virg{Real} Chern class.}


\begin{abstract}
\vspace{-4mm}
We provide a classification of type {\bf AI} topological quantum systems in dimension $d=1,2,3,4$ which is based on the
 \emph{equivariant} homotopy properties of \virg{Real} vector bundles. This allows us to produce a fine classification able to take care also of the non stable regime which is usually not accessible via   $K$-theoretic techniques.
We prove the absence of non-trivial phases for one-band {\bf AI} free or periodic quantum particle systems  in each spatial dimension
by inspecting  the second equivariant cohomology group which classifies  \virg{Real} line bundles. We also show that  the classification of \virg{Real} line bundles suffices for the complete classification of {\bf AI} topological quantum systems  in dimension $d\leqslant 3$.
In dimension $d=4$ the determination of different topological phases (for free or periodic  systems) is fixed by the second \emph{\virg{Real} Chern class} which provides 
an even labeling identifiable with the degree of a suitable map. Finally, we provide explicit  realizations of
non trivial $4$-dimensional free models for each given topological degree.
\end{abstract}


\maketitle

%
%
\vspace{-4mm}
\tableofcontents

\section{Introduction}\label{sect:intro}

The  construction of a classification scheme for \emph{topological insulators} \cite{schnyder-ryu-furusaki-ludwig-08,kitaev-09,ryu-schnyder-furusaki-ludwig-10} has been one of the most important results obtained in the investigation of topological states of matter. In this scheme topological insulators are classified in terms of the dimensionality of the system and the existence of certain discrete symmetries (for a modern review see, \eg, \cite{hasan-kane-10}). The class of systems which are not subject to any symmetry is denoted with the uppercase letter ${\bf A}$
according to the \emph{Altland-Zirnbauer-Cartan 
classification} (AZC) of topological insulators \cite{altland-zirnbauer-97,schnyder-ryu-furusaki-ludwig-08}.  
This class contains
Quantum Hall systems which have been extensively studied in the last decades  \cite{thouless-kohmoto-nightingale-nijs-82,bellissard-elst-schulz-baldes-94}.
A second class, which is of interest for the present work, consists of systems subject to \emph{even time reversal symmetry} (+TR). This class is denoted by 
${\bf AI}$
in the AZC scheme and contains 
 spinless, or integer spin (\ie bosonic), quantum systems which are invariant under time inversion (\cf Section \ref{sect:bloc_bund}). For reasons that will be clarified in the sequel, we will use the specification \virg{Real} for systems in class ${\bf AI}$\footnote{ 
The use of the adjective  \virg{Real} is justified (by topological reasons) only for operator of class ${\bf AI}$. 
 Sometimes the same nomenclature   is used also for systems in class ${\bf AII}$ (\cf Remark \ref{rk:classAII}), namely for systems subjected to an \emph{odd} time reversal symmetry (-TR). However, a careful analysis of the geometric structure that underlies these systems suggests that the term
 \virg{Quaternionic} is certainly more appropriate \cite{vaisman-90, denittis-gomi}.}.

 \medskip

 In the absence of disorder, 
the electronic properties of solids are described by translationally invariant systems of independent particles
with a typical band spectrum pattern deductible from the Bloch-Floquet theory  \cite{kuchment-93}. In particular, topological insulators are \emph{gapped systems}
in the sense that they show  
 a bulk energy gap separating the highest occupied  valence 
bands from the lowest empty conduction bands. The description of the topological phases  for this kind of systems is captured by the \emph{Fermi (spectral) projection} on the occupied states which, according to a well-established procedure (\cf Section \ref{sect:bloc_bund}), can be associated with a vector bundle known as \emph{Bloch-bundle} \cite{panati-07,denittis-lein-11}. This identification,
described in detail in Section \ref{sect:bloc_bund}, 
  allows to treat the problem of the  classification of topological  phases of electronic systems looking at the underlying vector bundle theory (in the proper category). Mathematically, electronic gapped systems can be treated inside the more general framework of \emph{topological quantum systems}:
\begin{definition}[Topological quantum systems: class {\bf A}]\label{def:tsq}
Let $X$ be a topological space which verifies Assumption \ref{ass:1},
$\s{H}$ a separable Hilbert space and $\bb{K}(\s{H})$ the algebra of compact operators on $\s{H}$.
A \emph{topological quantum system} (of class {\bf A}) is a self-adjoint map
\begin{equation}\label{eq:tqsA1}
X\;\ni\;x\; \longmapsto\; H(x)=H(x)^*\;\in\;\bb{K}(\s{H})
\end{equation}
continuous with respect to the norm-topology on $\bb{K}(\s{H})$. Let $\sigma(H(x))=\{\lambda_j(x)\ |\ j\in\s{I}\subseteq\Z\}\subset\R$,  be the sequence of eigenvalues of $H(x)$ ordered  according to  $\ldots\lambda_{-2}(x)\leqslant\lambda_{-1}(x)<0\leqslant\lambda_1(x)\leqslant\lambda_2(x)\leqslant \ldots$. The map $x\mapsto \lambda_j(x)$ (which is continuous by standard perturbative arguments \cite{kato-95}) is called \emph{$j$-th energy band}. 
An \emph{isolated family} of energy bands is any (finite) collection $\{\lambda_{j_1}(\cdot),\ldots,\lambda_{j_m}(\cdot)\}$
of energy bands such that 
\begin{equation}\label{eq:tqsA2}
\min_{x\in X}\ {\rm dist}\left(
\bigcup_{s=1}^m\{\lambda_{j_s}(x)\}\;,\; \bigcup_{j\in\s{I}\setminus\{j_1,\ldots,j_m\}}\{\lambda_{j}(x)\}
\right)\;=\;C_g\;>0.
\end{equation}
Inequality \eqref{eq:tqsA2} is usually called \emph{gap condition} (\cf with equation \eqref{eq:gap}).
\end{definition}

A standard construction, similar to that explained in Section \ref{sect:bloc_bund}, associates to each  topological quantum system of type 
\eqref{eq:tqsA1} with an {isolated family} of $m$ energy bands a complex vector bundle $\bb{E}\to X$ of rank $m$ that we can still call Bloch-bundle\footnote{The setting described in Definition \ref{def:tsq} can be generalized to unbounded operator-valued maps $x\mapsto H(x)$ by requiring 
the continuity  of the \emph{resolvent map} $x\mapsto R_z(x):=\big(H(x)-z\n{1}\big)^{-1}\in\bb{K}(\s{H})$. 
 Another possible generalization is to replace the norm-topology with the open-compact topology as in \cite[Appendix D]{freed-moore-13}. However, these kind of generalizations have no particular consequences  for the purposes of this work.}. 
The first step of this construction is the realization of a continuous map of rank $m$ spectral projections $X\ni x\mapsto P(x)$
associated with the isolated family of energy bands by means of the Riesz-Dunford integral (\cf equation \ref{eq:proj}); the second step
 turns out to be a concrete application of the \emph{Serre-Swan Theorem} \cite[Theorem 2.10]{gracia-varilly-figueroa-01} which relates vector bundles and continuous family of projections.
Since topological quantum systems
 in class {\bf A} lead to complex vector bundles (without any other extra structure) they are topologically classified by the sets ${\rm Vec}_\C^m(X)$ of  equivalence classes (up to vector bundle isomorphisms) of rank $m$ complex vector bundles over the base space $X$.  The theory of classification of complex vector bundles is a classical, well-studied, problem in topology. Under rather general assumptions for the base space $X$, the sets ${\rm Vec}_\C^m\big(X\big)$ can be classified using homotopy theory techniques and, for dimension $d\leqslant 4$ (and for all $m$) a complete description can be done in terms of cohomology groups and Chern classes \cite{peterson-59,bott-tu-82,woodward-82,cadek-vanzura-93}. A summary of these results is given in Sections \ref{sect:class_C-vb}.

\medskip

Topological quantum systems  described by Definition \ref{def:tsq} are ubiquitous in mathematical physics and  are not necessarily related to condensed matter systems (see \eg the rich monograph \cite{bohm-mostafazadeh-koizumi-niu-zwanziger-03}). Usually the set $X$ plays the role of a configuration space for \emph{parameters} which describe the action of external fields on a system governed by a Hamiltonian $H(x)$.
The phenomenology of these systems can be  enriched by the presence of certain symmetries. In this work we are mainly interested to effects induced by symmetries of  bosonic time-reversal type, namely:

\begin{definition}[Topological quantum systems: class {\bf AI}]\label{def:tsqAI}
Let $x\mapsto H(x)$ be a topological quantum system according to Definition \ref{def:tsq}. Assume that $X$ is endowed with an \emph{involution} $\tau:X\to X$ (\cf Section \ref{sect:Rs-vb_inv}) and $\s{H}$ with a \emph{complex conjugation} (\ie an anti-linear involution) $C:\s{H}\to\s{H}$.
We say that $x\mapsto H(x)$ is a topological quantum system of class {\bf AI} if there exists a continuous unitary-valued map $x\mapsto J(x)$ on $X$ such that
\begin{equation}\label{eq:tqsA3}
J(x)\; H(x)\; J(x)^*\;=\;C\;H(\tau(x))\;C\;,\qquad\quad C\;J(\tau(x))\;C\;=\; J(x)^*\;.
\end{equation}
We refer to  \eqref{eq:tqsA3} as an  \emph{even time reversal symmetry} (+TR) (\cf with equation \eqref{intro:1}).
\end{definition}

Similarly to systems of type {\bf A}  also systems in class {\bf AI} leads to complex vector bundles. However, as consequence of the presence of a +TR symmetry, these vector bundles are endowed with an extra structure named \virg{Real} by M.~F. Atiyah in \cite{atiyah-66}. According to its very definition (\cf Section \ref{sect:Rs-vb}) a  \virg{Real} vector bundle, often named \emph{$\rr{R}$-bundle} in this work, is a complex vector bundle over an \emph{involutive space} $(X,\tau)$ endowed with an involutive automorphism of the total space which covers the involution $\tau$ and restricts to an \emph{anti-linear} map between conjugate  fibers.  $\rr{R}$-bundles define a new category of locally trivial fibered objects with their own morphisms. The set of equivalence classes  (in the appropriate category) of rank $m$ $\rr{R}$-bundles over the {involutive space} $(X,\tau)$ is denoted with ${\rm Vec}_{\rr{R}}^m(X,\tau)$. Therefore, the description of  ${\rm Vec}_{\rr{R}}^m(X,\tau)$ provides the classification for type {\bf AI} topological quantum systems.
However, the  classification theory of $\rr{R}$-bundles is not as developed as for the complex case.
Even if a cohomological classification \cite{edelson-71}  and a characteristic class theory \cite{kahn-59,krasnov-92} are available for \virg{Real} vector bundles since long time, usually only $K$-theoretic results has been developed for the classification of topological phases of condensed matter systems
\cite{kitaev-09} (see also \cite{budich-trauzettel-13} for a recent review). $K$-theory is an extremely powerful tool for the study of vector bundles, in general. 
However, its application to a complete classification of topological quantum systems	
results 	
insufficient for two reasons: It provides a classification only for the \emph{stable regime} ignoring the situation of isolated family of $m$ bands
when $m$ is a small number; It does not provides any information about the objects that classify different Bloch-bundles.
The second point is of great importance in our opinion. In fact there is a great advantage in the identification of the parameters which classify different phases of a topological quantum system with cohomology classes. For instance, the discovery of the link between quantum Hall states
and Chern numbers provided by the Kubo formula \cite{thouless-kohmoto-nightingale-nijs-82,bellissard-elst-schulz-baldes-94} has been one of the most impressive and fruitful result in the study of topological insulators.
Finally, let us point out that there are particular situations (\eg presence of extra symmetries as considered in Appendix \ref{sec:inv_symm}) in which the $K$-theoretic description is genuinely different from the vector bundle classification (\cf Remark \ref{rk:redKZ2-VBZ2}).

\medskip

The aim of this work is double: First of all we want
to discuss a classification procedure for \virg{Real} vector bundles 
over a sufficiently general involutive space $(X,\tau)$
based on the analysis of the \emph{equivariant} structure induced by the involution $\tau$; Second of all, we want to apply such a classification scheme to the interesting case of topological insulators, considered as special examples of topological quantum systems. We expect in this case
a classification  finer (at least mathematically) than the usual $K$-theoretic description capable to take care also of possible effects due to  non stable regime.
As regards the first objective, our major result is:
\begin{theorem}[Classification of {\bf AI} topological quantum systems]\label{theoTQS_main1}
Let $(X,\tau)$ be an involutive space such that $X$ has a finite $\Z_2$-CW-complex decomposition of dimension $d$ with fixed cells only in dimension 0. Then, for all $m\in\N$:
\begin{enumerate}
\item[(i)] ${\rm Vec}_{\rr{R}}^m(X,\tau)\;=\;0$\ \ \ if $d=1$;\vspace{1.2mm}
\item[(ii)] ${\rm Vec}_{\rr{R}}^m(X,\tau)\;\simeq\;{\rm Vec}_{\rr{R}}^{[d/2]}(X,\tau)$\ \ \ if $d\geqslant2$  (here $[x]$ denotes the integer part of $x\in\R$);\vspace{1.2mm}
\item[(iii)] ${\rm Vec}_{\rr{R}}^m(X,\tau)\;\simeq\;H^2_{\Z_2}\big(X,\Z(1)\big)$\ \ \   if $1\leqslant d\leqslant3$ and the isomorphism is given by the first \emph{\virg{Real} Chern class} $\tilde{c}_1$.
\end{enumerate}
\end{theorem}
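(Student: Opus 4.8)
The plan is to translate the classification into $\Z_2$-equivariant homotopy theory and then run obstruction theory adapted to the hypothesis that every equivariant cell above dimension zero is free. By the Real version of the Serre--Swan correspondence one has a natural identification
\[
{\rm Vec}_{\rr{R}}^m(X,\tau)\;\cong\;[X,\,G_m]_{\Z_2},
\]
where $G_m$ is the rank-$m$ Real (conjugation-equivariant) Grassmannian, a classifying $\Z_2$-space whose underlying space is $BU(m)$ and whose fixed-point set is the real Grassmannian $BO(m)$. Adding a trivial Real line produces the stabilization maps $G_j\hookrightarrow G_{j+1}$, and all three statements will follow from the behaviour of $[X,-]_{\Z_2}$ under these maps together with the bottom identification $G_1=\C P^\infty$.

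For part (ii) I would invoke equivariant obstruction theory relative to the two orbit types. The inclusion $G_{[d/2]}\hookrightarrow G_m$ is $(2[d/2]+1)$-connected on underlying spaces, since the homotopy fibre of $BU(j)\to BU(m)$ is the complex Stiefel manifold $U(m)/U(j)$, which is $2j$-connected; and it is $[d/2]$-connected on fixed points, the fibre of $BO(j)\to BO(m)$ being $(j-1)$-connected. Because the only fixed cells sit in dimension $0$, the fixed-point obstructions are controlled as soon as $[d/2]\geqslant 1$, i.e.\ $d\geqslant 2$; because all cells in dimension $\geqslant 1$ are free, the remaining obstructions live in ordinary (Bredon, hence dimension-$d$-truncated) cohomology with coefficients $\pi_n\big(U(m)/U([d/2])\big)$, which vanish for $n\leqslant 2[d/2]$. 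When $d$ is even, $2[d/2]=d$ kills every existence and uniqueness obstruction and the stabilization is a bijection; this is the clean half of the argument.

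The main obstacle is the top-dimensional uniqueness obstruction when $d$ is odd, where $2[d/2]=d-1$ and the first non-vanishing Stiefel homotopy group $\pi_d\big(U(m)/U([d/2])\big)=\Z$ produces a potentially non-zero class in a top-degree equivariant cohomology group. The existence (surjectivity) half is unaffected, since that obstruction lives one degree higher and is truncated away; so every rank-$m$ bundle with $m>[d/2]$ still splits off a trivial Real line and reduces to rank $[d/2]$. To secure injectivity I would not try to annihilate the abstract cohomology group but instead exhibit a retraction: the Real determinant bundle $\det E=\Lambda^m_\C E$ is natural, satisfies $\det(E\oplus\underline{\C})=\det E$, and for $d\leqslant 3$ (where $[d/2]=1$ and the reduced object is a line bundle) it recovers the rank-one reduction on the nose; combined with the fact that a Real line bundle is determined by $\tilde c_1$ (part (iii)), this forces the reduction to be well defined and the stabilization to be injective. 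This is precisely the place where the regime $d\leqslant 4$ of the paper matters and where the determinant/first-Chern-class input replaces the failed dimension count.

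Finally, part (i) is the degenerate end of the same analysis: for $d=1$ the underlying $BU(m)$ is $1$-connected and $BO(m)$ is connected, so the equivariant classifying map of any Real bundle is equivariantly null, giving ${\rm Vec}_{\rr{R}}^m(X,\tau)=0$. For part (iii) I would combine the rank reduction of (ii) (which for $1\leqslant d\leqslant 3$ brings every class down to a Real line bundle, or to $0$ when $d=1$) with the equivariant Eilenberg--MacLane identification $\C P^\infty\simeq K_{\Z_2}(\Z(1),2)$ of the conjugation-$\C P^\infty$; this yields
\[
{\rm Vec}_{\rr{R}}^1(X,\tau)\;=\;[X,\C P^\infty]_{\Z_2}\;\cong\;H^2_{\Z_2}\big(X,\Z(1)\big),
\]
with the tautological class corresponding to the universal first Real Chern class $\tilde c_1$. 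Naturality then identifies the isomorphism with $\tilde c_1$, as claimed.
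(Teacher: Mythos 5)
Your proposal reaches the same three statements, but by a genuinely different mechanism in parts (ii) and (iii), so a comparison is in order. For (ii) the paper never runs obstruction theory on classifying spaces: it argues geometrically, proving by induction over the $\Z_2$-cells that a rank-$m$ $\rr{R}$-bundle with $d<2m$ admits a nowhere-vanishing $\rr{R}$-section (Proposition \ref{prop:stab_ran_R}, whose key inputs are that $(\C^m)^\times$ is $2(m-1)$-connected and that all cells of positive dimension are free), and then splits off trivial $\rr{R}$-lines one at a time (Theorem \ref{theo:stab_ran_R}). Your connectivity count for the Stiefel fibres of $G_{[d/2]}\hookrightarrow G_m$ is the classifying-space shadow of exactly that argument: the two routes buy the same surjectivity of stabilization, but yours has the advantage of making the injectivity question explicit, which the paper leaves implicit in the word \virg{isomorphism}. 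For (iii) the paper quotes the Kahn--Krasnov classification of $\rr{R}$-line bundles (Theorem \ref{theo:clasR-lin}), whose proof is sheaf-theoretic (equivariant Galois--Grothendieck cohomology plus the exponential sequence), whereas you re-prove that input by representability of Borel cohomology with $\Z(1)$-coefficients by the conjugation $\C P^\infty$; both are legitimate, and your route is essentially the one followed in the reference [Go] that the paper also cites. Part (i) is the same proof as the paper's (Proposition \ref{prop_(i)} via Lemma \ref{lemma:Z2-reduct}, using $\pi_0=\pi_1=0$ for $G_m(\C^\infty)$ and $\pi_0=0$ for the fixed set $G_m(\R^\infty)$).

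One point deserves emphasis: your dimension count honestly exposes that for odd $d$ the uniqueness obstruction in degree $d$, with coefficients $\pi_d\big(\n{U}(m)/\n{U}([d/2])\big)\simeq\Z$, need not vanish, and your determinant repair only applies when $[d/2]=1$, i.e. $d=3$. So your proof of the bijection in (ii) covers all even $d$ and $d=3$, but not odd $d\geqslant 5$. Be aware, though, that the paper's own justification (\virg{direct consequence} of the stable-range splitting) has the same limitation, since the splitting only yields surjectivity of stabilization; and the complex analogue shows the danger is real, as ${\rm Vec}^2_\C(\n{S}^5)\simeq\pi_4(\n{U}(2))\simeq\Z_2$ while ${\rm Vec}^3_\C(\n{S}^5)=0$, so cancellation genuinely fails in odd dimensions at the edge of the stable range. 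In the range where the theorem is actually exploited in the paper ($d\leqslant 4$), your proof is complete and, on the injectivity point, more careful than the paper's.
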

\noindent
The various items in Theorem \ref{theoTQS_main1} are proved separately in the paper. The proof of (i) is essentially the same proof of Proposition \ref{prop_(i)} (see Remark \ref{rk:gen_prof_int}) and it is 
a consequence of the  \emph{homotopy classification} of \virg{Real} vector bundles (\cf Theorem \ref{theo:honotopy_class})
and of the \emph{$\Z_2$-homotopy reduction} (Lemma \ref{lemma:Z2-reduct}). 
  Item (ii) is a direct consequence of  Theorem \ref{theo:stab_ran_R} which fixes the {stable range} condition for \virg{Real} vector bundles.
 The proof of (iii) is a consequence of (ii) together with Theorem \ref{theo:clasR-lin} which establishes the cohomological classification for \virg{Real} line bundles \cite{kahn-59,krasnov-92}.
 For the sake of completeness, let us recall that an involutive space $(X,\tau)$ has the structure of a $\Z_2$-CW-complex if it admits a skeleton decomposition given by gluing cells of different dimensions which carry a $\Z_2$-action. More details about this notion are given in Section \ref{ssec:stab_reng_equi_constr}. The cohomology group that  appears in (iii)  is the \emph{equivariant Borel} cohomolgy of the space $(X,\tau)$ computed with respect to the \emph{local system} of coefficients $\Z(1)$. 
A short reminder of this theory is given in Section \ref{ssec:borel_constr}. Finally, the construction of the \emph{\virg{Real} Chern classes} is explained in Section \ref{sect:eq_chern_class}.

\medskip

In order to relate Theorem \ref{theoTQS_main1} with the theory of condensed matter type electron systems we need to specify a proper base space and a proper (time-reversal) involution $\tau$. Let us introduce the following two definitions:
\begin{definition}[Free charge systems]
\label{def:free_ferm}
A $d$-dimensional gapped system of independent free charged particles is 
(represented by) an element of ${\rm Vec}_\C^m\big(\n{S}^d\big)$ where
$$
\n{S}^{d}\;:=\;\big\{k\in\R^{d+1}\,|\; \|k\|= 1\big\}
$$
is the $d$-dimensional sphere. Free systems with a +TR symmetry are (represented by) elements of ${\rm Vec}_{\rr{R}}^m\big(\n{S}^{d},\tau\big)$ where the map $\tau:\n{S}^{d}\to \n{S}^{d}$, defined by
\begin{equation}
\label{eq:inv_sphe}
\tau(k_0,k_1,\ldots,k_d)\;:=\;(k_0,-k_1,\ldots,-k_d)
\end{equation}
for all $k= (k_0,k_1,\ldots,k_d)\in\n{S}^{d}$, is called \emph{TR-involution}.
\end{definition}
\begin{definition}[Periodic charge systems]
\label{def:per_ferm}
A $d$-dimensional gapped system of independent charged particles in a periodic environment is 
(represented by) an element of ${\rm Vec}_\C^m\big(\n{T}^d\big)$ where
$$
\n{T}^{d}\;:=\;\underbrace{\n{S}^{1}\;\times\;\ldots\;\times\;\n{S}^{1}}_{d\text{\upshape -times}} 
$$
is the $d$-dimensional torus. Periodic systems with a +TR symmetry are (represented by) elements of ${\rm Vec}_{\rr{R}}^m\big(\n{T}^{d},\tau\big)$ where the \emph{TR-involution} $\tau:\n{T}^{d}\to \n{T}^{d}$
is the diagonal map $\tau:=\tau_1\times\ldots\times\tau_1$  defined by   the {TR-involution} $\tau_1:\n{S}^{1}\to \n{S}^{1}$  on the 1-sphere
according to  \eqref{eq:inv_sphe}.
\end{definition}
\noindent
The short notations $\tilde{\n{S}}^{d}\equiv(\n{S}^{d},\tau)$ and $\tilde{\n{T}}^{d}\equiv(\n{T}^{d},\tau)$ for the involutive spaces associated to free and periodic fermionic systems, respectively, will be often used hereafter.
The physical content of the nomenclature introduced in Definition  \ref{def:free_ferm} and Definition \ref{def:per_ferm} will be justified in Section \ref{sect:bloc_bund}, where we provide the link between translation invariant quantum systems (with +TR symmetry) and complex vector bundles (with a \virg{Real} structure). 
Assuming, for a moment, the correctness
of these definitions, we can argue that
 type {\bf A} topological insulators are classified by ${\rm Vec}_\C^m\big(\n{S}^d\big)$ (free case) and ${\rm Vec}_\C^m\big(\n{T}^d\big)$ (periodic case), while type {\bf AI} topological insulators are classified by ${\rm Vec}_{\rr{R}}^m\big(\n{S}^{d},\tau\big)$ (free case) and ${\rm Vec}_{\rr{R}}^m\big(\n{T}^{d},\tau\big)$ (periodic case). However, this last statement deserves a necessary clarification.
 
 \medskip
 
Vector bundles over a space $X$ are \emph{completely} classified in terms of equivalence classes of homotopic maps (this is true both for complex and \virg{Real} vector bundles). $K$-theory \cite{karoubi-97} provides a  \emph{less} fine classification (of cohomological type).
The difference between the $K$-theoretic classification and the homotopic classification consists in the fact that the first is insensitive to the number of isolated energy bands that one considers to define  the associated Bloch-bundle. In fact $K$-theory is based on a weak form of equivalence which declares that two vector bundles are \emph{stable} equivalent if they are isomorphic when completed with a suitable number of topological trivial bands. From a physical point of view the \emph{topological} state of a fermionic system cannot change if one includes in the description an arbitrary number of  chemically \emph{inert} bands. Since {inert} bands are, in particular, topologically trivial, this means that the relevant invariants for a \emph{general}  description of topological insulators are only those predicted from the $K$-theory. On the other side the \emph{spurious} invariants which are detected only by the homotopic classification can still be of some relevance in \emph{particular} situations. First of all there is a mathematical interest related to the fact that the triviality of the Bloch-bundle is a crucial ingredient for a rigorous proof of the \emph{Peierls substitution} and a rigorous derivation of effective tight-binding models
(see \eg \cite{panati-spohn-teufel-03}) ubiquitously used in physics. 
Usually these tight-binding models are obtained from a perturbation of few energy bands (one or two in the interesting cases) and in this situation the triviality of the related Bloch-bundle really depends also on the spurious  invariants. 
In principle the existence  of 
{spurious} invariants  can be of some interest also from a physical point of view: in fact this is related with \emph{particular} behaviors of electronic systems (or photonic crystals \cite{joannopoulos-johnson-winn-meade-08})
when artificially confined in a narrow range of energies (or frequencies).
 
\medskip

  A second important aspect  of the classification of topological phases is the role of  the Brillouin zone which describes the base space of the associated Bloch-Bundle. Since the Brillouin zone for electrons in a periodic background turns out to be a $d$-dimensional torus $\n{T}^d$ one expects to classify topological insulators with (stable) inequivalent classes  of vector bundles over $\n{T}^d$. On the other side there are experimental evidences that not all the topological invariants associated to  the topology of $\n{T}^d$ are 
 easily detectable with experiments. For instance,  in dimension $d=3$  the aspected classification for periodic systems in class {\bf A} is $\Z^3$  (see the third row of Table \ref{tab:01}.1)  and each phase should be described by the three (quantized) spatial components of the Hall conductance \cite{kohmoto-halperin-wu-92,koshino-aoki-halperin-02}. However, the experimental measurement of these quantities turns out to be extremely difficult (although not impossible in principle) since the high instability under weak disorder effects. In the jargon of topological insulators this type of unstable invariants are called \emph{week} and are distinct from the \emph{strong} invariants which are stable with respect to the presence of disorder. From a physical point of view {strong} invariants are associated to topological phases
characterized by  boundary states that avoid the Anderson localization when disorder is present (according to \cite{schnyder-ryu-furusaki-ludwig-08}  this is the  most effective defining property of a topological insulator); from a mathematical point of view {strong} invariants emerge from the classification of vector bundles when the Brillouin torus $\n{T}^d$ (periodic case) is replaced by  the sphere $\n{S}^d$ (free case). According to this nomenclature the first two rows of Table \ref{tab:01}.1 describe {strong} invariants in class  {\bf A} and {\bf AI}, respectively while the third and fourth rows contain the full set of invariants (weak + strong). Although strong invariants are prominent for the detection of stable topological phases this does not imply that  weak  invariants are void of interest.
Mathematically weak invariants contain relevant information about the geometry of the Bloch-bundle 
and are responsible for interesting facts: \eg they provide obstructions to a fast decay of the Wannier functions \cite{panati-07,denittis-lein-11}. Also physically one cannot exclude that an improvement of the measurement techniques will make possible in the next future a detection of these weak unstable effects. For this reason, for instance,  the theoretical investigation around the description of the three dimensional QHE is still active, as shown by a large number of recent papers devoted to this subject (see \eg \cite{bernevig-hughes-raghu-arovas-07}).

\medskip

Summarizing, the situation is as follows: on one side the \emph{standard} (universally accepted in the physics community) classification of topological insulators
is based on the $K$-theory (complex and \virg{Real}) of the sphere \cite{kitaev-09}; on the other side there are {solid} mathematical motivations 
and particular physical situations which make relevant also the study of  {spurious} and weak invariants\footnote{In this respect we point out that a formula for the weak invariants has been proposed already in \cite[eq. (26)]{kitaev-09}.}
 of the Bloch-bundle.  The application of Theorem  \ref{theoTQS_main1} to the case of free and periodic electron systems leads to:

\begin{theorem}[Homotopic classification of {\bf AI} topological insulators]\label{theo:AI_class}
Let $(\n{S}^d,\tau)$ and $(\n{T}^d,\tau)$ be the involutive spaces introduced in Definition \ref{def:free_ferm} and Definition \ref{def:per_ferm}, respectively. Then:
\begin{enumerate}
\item[(i)] ${\rm Vec}_{\rr{R}}^m\big(\n{S}^{1},\tau\big)=0$\ \  for all $m\in\N$;\vspace{1.2mm}
\item[(ii)] ${\rm Vec}_{\rr{R}}^1\big(\n{S}^{d},\tau\big)=0$\ \  and\ \  ${\rm Vec}_{\rr{R}}^1\big(\n{T}^{d},\tau\big)=0$\ \  for all $d\in\N$;\vspace{1.2mm}
\item[(iii)] ${\rm Vec}_{\rr{R}}^m\big(\n{S}^{d},\tau\big)=0$\ \  and\ \  ${\rm Vec}_{\rr{R}}^m\big(\n{T}^{d},\tau\big)=0$\ \  for all $m\in\N$ and $d=2,3$;\vspace{1.2mm}
\item[(iv)] ${\rm Vec}_{\rr{R}}^m\big(\n{S}^{4},\tau\big)=2\Z$\ \  and\ \  ${\rm Vec}_{\rr{R}}^m\big(\n{T}^{4},\tau\big)=2\Z$\ \  for all $m\geqslant 2$.
\end{enumerate}
\end{theorem}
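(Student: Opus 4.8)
The plan is to deduce all four items from the general classification in Theorem \ref{theoTQS_main1}, once the relevant involutive spaces have been given admissible $\Z_2$-CW-structures, and to carry out the two cohomological computations these reductions require. First I would verify the hypotheses. The TR-involution on $\n{S}^d$ fixes exactly the two poles $(\pm 1,0,\dots,0)$ and acts as the (free) antipodal map on every latitude sphere $\{k_0=\text{const}\}\cong\n{S}^{d-1}$; realizing $\n{S}^d$ as the representation sphere $S(\R\oplus d\,\sigma)$ with $\sigma$ the sign representation yields a finite $\Z_2$-CW-decomposition of $\tilde{\n{S}}^d$ of dimension $d$ with the two poles as the only fixed $0$-cells and all higher cells free. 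For $\tilde{\n{T}}^d=(\tilde{\n{S}}^1)^{\times d}$ the product cell structure has $2^d$ fixed $0$-cells (the products of poles) and, since a product cell is pointwise fixed iff each factor is a fixed $0$-cell, no fixed cells in positive dimension. Hence both families satisfy the standing assumption of Theorem \ref{theoTQS_main1}, and item (i) of that theorem applied to the one-dimensional complex $\tilde{\n{S}}^1$ gives part (i) at once.

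For parts (ii) and (iii) the whole problem collapses onto the single vanishing statement
\[
H^2_{\Z_2}\big(\tilde{\n{S}}^d,\Z(1)\big)\;=\;0\;=\;H^2_{\Z_2}\big(\tilde{\n{T}}^d,\Z(1)\big),\qquad d\in\N ,
\]
since part (iii) is Theorem \ref{theoTQS_main1}(iii) with $d\in\{2,3\}$, while part (ii) is the classification of \virg{Real} line bundles by $\tilde c_1$ of Theorem \ref{theo:clasR-lin}, valid in every dimension. I would establish the display through the Leray--Serre spectral sequence of the Borel fibration $X\to X\times_{\Z_2}E\Z_2\to B\Z_2$, whose $E_2$-page reads $H^p\big(\Z_2,H^q(X)\otimes\Z(1)\big)$. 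The arithmetic input is that $H^p(\Z_2,\Z(1))$ vanishes in even degrees and is $\Z_2$ in odd degrees, while $H^p(\Z_2,\Z)$ is $\Z$ for $p=0$, is $\Z_2$ for $p$ even positive, and vanishes for $p$ odd. For $\n{S}^d$ only the rows $q=0$ and $q=d$ survive, so in total degree $2$ the sole potentially nonzero entry is $E_2^{2,0}=H^2(\Z_2,\Z(1))=0$ as soon as $d\geqslant3$; the borderline values $d=1,2$ are dispatched by hand, using that $\tau$ acts on $H^d(\n{S}^d)$ by $\deg\tau=(-1)^d$. For $\n{T}^d$ the involution acts on $H^q(\n{T}^d)=\Lambda^q\Z^d$ by $(-1)^q$, so after the twist the three entries in total degree $2$ involve $H^2(\Z_2,\Z(1))$, $H^1(\Z_2,\Z)$ and $H^0(\Z_2,\Z(1))$, all zero. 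This yields the display, hence parts (ii) and (iii).

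Part (iv) is the substantial one. For $d=4$ the stable-range isomorphism Theorem \ref{theoTQS_main1}(ii) collapses every rank $m\geqslant[4/2]=2$ onto the rank-two case, so it suffices to compute ${\rm Vec}_{\rr{R}}^2(\tilde{\n{S}}^4,\tau)$ and ${\rm Vec}_{\rr{R}}^2(\tilde{\n{T}}^4,\tau)$. The first \virg{Real} Chern obstruction $\tilde c_1$ lies in $H^2_{\Z_2}(-,\Z(1))=0$ from the previous step, so the complete invariant is the second \virg{Real} Chern class $\tilde c_2\in H^4_{\Z_2}(-,\Z(2))$, which I would realize as the degree of the equivariant clutching map $\n{S}^3\to U(2)$ obtained by trivializing over the two invariant polar $4$-disks and gluing across the free equatorial $\n{S}^3$. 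Computing the target cohomology by the same spectral sequence exhibits a free part $\Z$ carrying $\tilde c_2$, and the content of part (iv) is that the image of the classifying map is exactly $2\Z$: surjectivity onto $2\Z$ follows from the explicit rank-two free models built later in the paper, whereas the evenness is the delicate point.

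The evenness is where I expect the main obstacle. The underlying complex classification already realizes every value of the ordinary second Chern number, ${\rm Vec}_\C^2(\n{S}^4)=\Z$, and the obvious necessary conditions $c_2(\bar E)=c_2(E)$ and $\tau^\ast=\mathrm{id}$ on $H^4(\n{S}^4)$ impose no parity, so the factor of two cannot be seen on the underlying bundle: it is precisely the obstruction to upgrading an isomorphism $\tau^\ast\bar E\cong E$ to an involutive \virg{Real} structure, detected by the torsion of $H^4_{\Z_2}(\tilde{\n{S}}^4,\Z(2))$ through $\tilde c_2$. To pin it down I would analyse the clutching map under the symmetry constraint $\phi(\tau\,\cdot)=\overline{\phi(\cdot)}$, reduce via $\det$ to the $SU(2)$-part, and show that compatibility of the \virg{Real} trivializations at the two fixed poles forces the $SU(2)$-degree to be even; concretely I expect the factor of two to originate in the degree-doubling of the double cover $\n{S}^3\to\R P^3$ of the equator, once the conjugation involution on the target is correctly accounted for. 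Finally, for $\tilde{\n{T}}^4$ one must still exclude spurious \emph{weak} contributions from the lower rows of the spectral sequence for $H^4_{\Z_2}(\tilde{\n{T}}^4,\Z(2))$, so that ${\rm Vec}_{\rr{R}}^2(\tilde{\n{T}}^4,\tau)$ reduces to the same free $2\Z$ as the sphere; a comparison collapsing $\tilde{\n{T}}^4$ onto its top cell should close this last gap.
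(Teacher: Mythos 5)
Your treatment of parts (i)--(iii) is correct and structurally the same reduction the paper makes (Theorem \ref{theoTQS_main1} plus the vanishing of $H^2_{\Z_2}\big(\tilde{\n{S}}^d,\Z(1)\big)$ and $H^2_{\Z_2}\big(\tilde{\n{T}}^d,\Z(1)\big)$, combined with Theorem \ref{theo:clasR-lin} for rank one), but you compute the cohomological input by a genuinely different route: the Leray--Serre spectral sequence of the Borel fibration with twisted coefficients, $E_2^{p,q}=H^p\big(\Z_2,H^q(X)\otimes\Z(1)\big)$. Your arithmetic is right ($\tau^\ast$ acts by $(-1)^d$ on $H^d(\n{S}^d)$ and by $(-1)^q$ on $H^q(\n{T}^d)$, and every $E_2$-entry of total degree two vanishes, so no differentials need be analyzed). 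The paper instead uses the Gysin-type recursions \eqref{eq:recurs01} and the equivariant suspension formula \eqref{eq:susp3} (Propositions \ref{prob:R_linT} and \ref{prob:R_linS}); what those buy is the cohomology in \emph{all} degrees (equations \eqref{eq:equi_cohom1dA}--\eqref{eq:equi_cohom1dB} and \eqref{eq:equi_cohom_spher_d_int}--\eqref{eq:equi_cohom_spher_d_loc}), which the paper reuses for the $d=4$ analysis and for Lemma \ref{lemm:S4T4}. Either route is legitimate for (ii)--(iii).

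Part (iv) is where your proposal has genuine gaps. First, the evenness. Your proposed mechanism --- degree-doubling through the double cover $\n{S}^3\to\R P^3$ --- cannot work as stated: an equivariant clutching map satisfies $\varphi(-k)=\overline{\varphi(k)}$, and since complex conjugation is \emph{not} the identity on ${\rm S}\n{U}(2)$, the map $\varphi$ does not descend to $\R P^3$ at all. The actual mechanism (proof of Proposition \ref{prop:S_4}) is a fundamental-domain argument made rigorous by the $\Z_2$-homotopy reduction Lemma \ref{lemma:Z2-reduct}: because the equatorial $\hat{\n{S}}^2\subset\hat{\n{S}}^3$ consists of free cells of dimension $\leqslant 2$ and $\pi_j({\rm S}\n{U}(2))=0$ for $j\leqslant 2$, one may normalize $\varphi\equiv\n{1}_2$ on the equator; then $\deg\varphi=[\varphi_+]+[\varphi_-]$, and $[\varphi_+]=[\varphi_-]$ because both the antipodal map on $\n{S}^3$ and conjugation on ${\rm S}\n{U}(2)\simeq\n{S}^3$ preserve orientation, whence $\deg\varphi=2[\psi]$. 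Second, you take for granted that $\tilde{c}_2$ (equivalently the clutching degree) is a \emph{complete} invariant once $\tilde{c}_1=0$. No \virg{Real} analogue of the Peterson--Woodward theorem \eqref{eq:clasVect_bun} is available here, and indeed $H^4_{\Z_2}(\tilde{\n{S}}^4,\Z)\simeq\Z_2\oplus\Z$ is strictly larger than the answer $2\Z$, so completeness cannot be read off cohomology: in the paper it is Proposition \ref{prop:c_2-classific}, a \emph{corollary} of the classification, not an input. Injectivity must be proved directly (vanishing degree forces $[\psi]=0$ by torsion-freeness of $\pi_3$, and the two null-homotopies assemble into an equivariant one); one must also check that the residual $\Z_2$-action $[\varphi]\mapsto[\epsilon\varphi\epsilon]$ in Proposition \ref{prop:eq_clut_constr} preserves the degree, so that the final quotient does not identify classes.

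For the torus the same issue recurs in sharper form: since there is no cohomological classification of rank-two $\rr{R}$-bundles in dimension four, inspecting rows of a spectral sequence for $H^4_{\Z_2}(\tilde{\n{T}}^4,\Z(2))$ cannot by itself exclude extra classes. Your instinct to compare with the sphere via a collapse is exactly the paper's strategy, but the substance is the surjectivity of $\upsilon^\ast:{\rm Vec}^m_{\rr{R}}(\n{S}^4,\tau)\to{\rm Vec}^m_{\rr{R}}(\n{T}^4,\tau)$, and for that one needs Lemma \ref{lemm:S4T4}: every $\rr{R}$-bundle over $\tilde{\n{T}}^4$ restricts \emph{trivially} to the three-dimensional subcomplex $\s{T}_4=\bigcup_{i}\tilde{\T}^3_i$ (proved by iterated Mayer--Vietoris, giving ${\rm Vec}^m_{\rr{R}}(\s{T}_4,\tau)\simeq H^2_{\Z_2}(\s{T}_4,\Z(1))=0$), so that the bundle descends along $\upsilon:\tilde{\n{T}}^4\to\tilde{\n{T}}^4/\s{T}_4\simeq\tilde{\n{S}}^4$. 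Injectivity of $\upsilon^\ast$ then follows from the commutative diagram \eqref{eq:diag1} together with the bijectivity of $\upsilon^\ast:H^4(\n{S}^4,\Z)\to H^4(\n{T}^4,\Z)$. This triviality-over-$\s{T}_4$ step is absent from your outline and is the missing idea; your appeal to the explicit models of Section \ref{sect:non-trivial_ex} for realizing every even degree is, by contrast, exactly what the paper does.
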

\noindent
Items (i), (ii) and (iii) are direct consequences of Theorem  \ref{theoTQS_main1} and the  explicit computations
$$
H^2_{\Z_2}\big(\tilde{\n{T}}^d,\Z(1)\big)\;\simeq\;H^2_{\Z_2}\big(\tilde{\n{S}}^d,\Z(1)\big)\;=\;0\qquad\quad \forall\ d\in\N
$$
proved in Proposition \ref{prob:R_linT} and Proposition \ref{prob:R_linS}. The case $d=4$ considered in item (iv) is proved in Proposition \ref{prop:S_4} and Proposition \ref{prop:T_4} for the sphere and the torus, respectively.
 \begin{center}
 \begin{table}[h]\label{tab:01}
 \begin{tabular}{|c||c||c|c|c|c||c|}
\hline
VB  & AZC  & $d=1$ & $d=2$&$d=3$&$d=4$&\\
\hline
 \hline
 \rule[-3mm]{0mm}{9mm}
 ${\rm Vec}_{\C}^m(\n{S}^d)$& {\bf A} & $0$ & $\Z$ & $0$ &   \begin{tabular}{ll}
 \blu{$0$}& \blu{($m=1$)}\\
  $\Z$& ($m\geqslant2$)\\
\end{tabular}   &Free systems\\
\cline{1-6}
 \rule[-3mm]{0mm}{9mm}
${\rm Vec}_{\rr{R}}^m(\n{S}^d,\tau)$ & {\bf AI} & $0$ & $0$ &$0$&\begin{tabular}{ll}
 \blu{$0$}& \blu{($m=1$)}\\
  $2\Z$& ($m\geqslant2$)\\
\end{tabular}&(strong invariants)\\
\hline
 \hline
  \rule[-3mm]{0mm}{9mm}
 ${\rm Vec}_{\C}^m(\n{T}^d)$ & {\bf A} & $0$ & $\Z$ & $\Z^3$&    \begin{tabular}{ll}
 \blu{$\Z^6$}& \blu{($m=1$)}\\
  $\Z^7$& ($m\geqslant2$)\\
\end{tabular}   &Periodic systems\\
\cline{1-6}
 \rule[-3mm]{0mm}{9mm}
${\rm Vec}_{\rr{R}}^m(\n{T}^d,\tau)$ & {\bf AI} & 0 &$0$&$0$&
\begin{tabular}{ll}
 \blu{$0$}& \blu{($m=1$)}\\
  $2\Z$& ($m\geqslant2$)\\
\end{tabular} &(strong + weak)\\
\hline
\end{tabular}\vspace{2mm}
 \caption{
 This table summarizes the  content of Theorem \ref{theo:AI_class}.
 The column VB lists the relevant  equivalence classes of vector bundles and the related
 Altland-Zirnbauer-Cartan labels \cite{altland-zirnbauer-97,schnyder-ryu-furusaki-ludwig-08}  are displayed in column AZC.  The classification in blue corresponds to the \emph{non-stable} regime ($d> 2m$)
 which is not covered by the $K$-theoretic classification.
  The involutive spaces $(\n{S}^d,\tau)$ and $(\n{T}^d,\tau)$
 are described in Definition \ref{def:free_ferm} and Definition \ref{def:per_ferm}, respectively. The first two 
 rows describe the so-called \emph{strong} invariants. The \emph{weak} invariants can be read from the table as the difference between the invariants of the torus $\n{T}^d$ and those of  the sphere $\n{S}^d$.}
 \end{table}
 \end{center}
\noindent
A comparison between the first two rows of Table \ref{tab:01}.1 and the standard physics literature (\eg \cite[Table I]{schnyder-ryu-furusaki-ludwig-08})
shows that the homotopic classification of free electron systems completely agrees with the standard $K$-theoretic classification of topological insulators if $d\leqslant3$. In this dimensional range, also the weak invariants are exactly those prescribed by the $K$-theory of the torus (\cf \cite[eq. (26)]{kitaev-09}). In fact the complete equivalence between $K$-theory and homotopic classification up to dimension 3 is a general fact which depends essentially on Theorem \ref{theoTQS_main1} (ii). In $d=4$ there is the first difference between stable and non-stable regime:
Bloch-bundle of rank 1 (one isolated energy band) behave differently with respect to Bloch-bundles of higher rank (family of several intersecting energy bands). Dimension $d=4$ is also interesting since it is the
 only situation in Table  \ref{tab:01}.1 where non trivial topological states may appear in presence of TR-symmetry.
A closer look to Proposition \ref{prop:S_4} and Proposition \ref{prop:T_4} (together with Proposition \ref{prop:zero_odd_chern}) shows that:
\begin{theorem}[Classifying invariant for {\bf AI} topological insulators in $d=4$]\label{theo:AI_class4D}
Necessary conditions for a rank $m\geqslant 2$ complex vector bundle $\bb{E}$ over $\n{S}^4$ or $\n{T}^4$ to admit a \virg{Real} structure compatible with the TR-involution $\tau$ are $c_1(\bb{E})=0$ and $c_2(\bb{E})\in 2\Z$ where $c_j(\bb{E})$
is the $j$-th Chern class of $\bb{E}$.
\end{theorem}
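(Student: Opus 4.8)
The plan is to derive both conditions from the theory of \virg{Real} Chern classes of Section \ref{sect:eq_chern_class}, exploiting the fact that the $j$-th \virg{Real} Chern class $\tilde{c}_j(\bb{E})\in H^{2j}_{\Z_2}(X,\Z(j))$ of an $\rr{R}$-bundle is sent to the ordinary Chern class $c_j(\bb{E})\in H^{2j}(X,\Z)$ by the forgetful homomorphism $H^{2j}_{\Z_2}(X,\Z(j))\to H^{2j}(X,\Z)$ which discards the $\Z_2$-equivariant structure. Here $X$ is either $\n{S}^4$ or $\n{T}^4$ equipped with the TR-involution $\tau$, and $\Z(j)=\Z(1)^{\otimes j}$, so that it reduces to $\Z(1)$ for $j$ odd and to the trivial system $\Z$ for $j$ even.

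For the vanishing of $c_1$ I would observe that the determinant $\det\bb{E}$ inherits a \virg{Real} structure and is thus an $\rr{R}$-line bundle over $(X,\tau)$. By Theorem \ref{theo:clasR-lin} such bundles are classified by $\tilde{c}_1\in H^2_{\Z_2}(X,\Z(1))$, and this group is trivial for both $\tilde{\n{S}}^4$ and $\tilde{\n{T}}^4$ by Proposition \ref{prob:R_linS} and Proposition \ref{prob:R_linT}. Hence $\det\bb{E}$ is \virg{Real}-trivial and $c_1(\bb{E})=c_1(\det\bb{E})=0$; equivalently this is the degree-one instance of Proposition \ref{prop:zero_odd_chern}. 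On $\tilde{\n{T}}^4$ one may alternatively invoke the self-conjugacy isomorphism $\bb{E}\simeq\tau^*\overline{\bb{E}}$, which forces $c_1=-\tau^*c_1$; since $\tau^*$ is the identity on $H^2(\n{T}^4,\Z)$, this yields $2c_1=0$ in a torsion-free group, hence $c_1=0$.

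For the parity of $c_2$ the heart of the matter is the image of the forgetful map $\jmath^*\colon H^4_{\Z_2}(X,\Z)\to H^4(X,\Z)\simeq\Z$. I would compute the Borel equivariant cohomology $H^4_{\Z_2}(X,\Z)$ through the fibration $X\hookrightarrow X\times_{\Z_2}E\Z_2\to B\Z_2$, whose fixed-point data (two poles for $\tilde{\n{S}}^4$, sixteen points for $\tilde{\n{T}}^4$) combined with the sign local system yields, after the relevant Gysin/transfer analysis, the image $\jmath^*\big(H^4_{\Z_2}(X,\Z)\big)=2\Z$. This is exactly the computation performed in Proposition \ref{prop:S_4} and Proposition \ref{prop:T_4}, and it is the same factor of two responsible for ${\rm Vec}_{\rr{R}}^m(\tilde{\n{S}}^4)=2\Z$ in Theorem \ref{theoTQS_main1}. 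Since $c_2(\bb{E})=\jmath^*\big(\tilde{c}_2(\bb{E})\big)$ lies in this image, I conclude $c_2(\bb{E})\in 2\Z$.

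The hard part is precisely this index-two phenomenon, and I would stress that the crude self-conjugacy relation $c_j(\bb{E})=(-1)^j\tau^*c_j(\bb{E})$ (from $\bb{E}\simeq\tau^*\overline{\bb{E}}$ together with $c_j(\overline{\bb{E}})=(-1)^jc_j(\bb{E})$) is genuinely insufficient for the parity statement: on $\n{S}^4$ the involution $\tau$ is a product of four reflections and therefore acts as the identity on $H^4(\n{S}^4,\Z)$, so the relation collapses to the tautology $c_2=c_2$. The evenness thus cannot be extracted from any purely homotopy-theoretic constraint on the underlying complex bundle and really requires the equivariant computation, which is the main obstacle and the reason the statement is pinned to Proposition \ref{prop:S_4} and Proposition \ref{prop:T_4}.
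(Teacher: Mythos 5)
Your argument for $c_1(\bb{E})=0$ is sound and is essentially the paper's own route: $\tilde{c}_1$ (equivalently, the class of the Real determinant line bundle $\det\bb{E}$) lives in $H^2_{\Z_2}(X,\Z(1))$, which vanishes for $\tilde{\n{S}}^4$ and $\tilde{\n{T}}^4$ by Proposition \ref{prob:R_linS} and Proposition \ref{prob:R_linT}; this is Proposition \ref{prop:zero_odd_chern} in degree one. The problem is the second half. Your key claim, that the forgetful homomorphism $\jmath\colon H^4_{\Z_2}(X,\Z)\to H^4(X,\Z)$ has image $2\Z$, is false. For $X=\tilde{\n{S}}^4$ consider the Serre spectral sequence of the Borel fibration $\n{S}^4\hookrightarrow \n{S}^4_{\sim\tau}\to\R P^\infty$: since $\tau=\mathrm{diag}(1,-1,-1,-1,-1)$ has determinant $+1$, it preserves the orientation of $\n{S}^4$, so the local system $H^4(\n{S}^4,\Z)$ is untwisted and $E_2^{0,4}\simeq\Z$; the only possibly nonzero differential out of $E_r^{0,4}$ lands in $E_5^{5,0}$, a subquotient of $H^5(\R P^\infty,\Z)=0$. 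Hence $E_\infty^{0,4}=E_2^{0,4}$ and the restriction-to-fiber (forgetful) map is \emph{surjective}, not of index two. This is consistent with the paper's computation $H^{4}_{\Z_2}(\tilde{\n{S}}^4,\Z)\simeq\Z_2\oplus\Z$ in \eqref{eq:cohom_4d} and with its explicit remark that these groups are \virg{redundant} for the classification of Real bundles (they also encode $\Z_2$-equivariant structures). Consequently the observation that $c_2(\bb{E})=\jmath(\tilde{c}_2(\bb{E}))$ lies in the image of $\jmath$ carries no parity information, and no Gysin/transfer analysis of $H^4_{\Z_2}(X,\Z)$ can produce the factor of two.

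The factor of two in the paper comes from an entirely different place, and you misattribute it: Proposition \ref{prop:S_4} and Proposition \ref{prop:T_4} are not computations of the image of a cohomological forgetful map. For the sphere, Proposition \ref{prop:eq_clut_constr} identifies ${\rm Vec}^2_{\rr{R}}(\n{S}^4,\tau)$ with $[\hat{\n{S}}^3,{\rm S}\n{U}(2)]_{\rm eq}/\Z_2$ (equivariant clutching), and the evenness is a degree-doubling argument: an equivariant clutching map can be deformed (Lemma \ref{lemma:Z2-reduct}) to be constant on the equator; its restrictions to the two hemispheres are then related by the antipodal map and complex conjugation, both orientation preserving, so they define the \emph{same} class $[\psi]\in\pi_3\big({\rm S}\n{U}(2)\big)$ and the total degree is $2[\psi]$. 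The torus case is then reduced to the sphere via the collapse map $\upsilon\colon\tilde{\n{T}}^4\to\tilde{\n{T}}^4/\s{T}_4\simeq\tilde{\n{S}}^4$ together with Lemma \ref{lemm:S4T4} and the naturality of Chern classes. In other words, the constraint is on which classes in $H^4_{\Z_2}(X,\Z)$ are \emph{realized} as $\tilde{c}_2$ of Real bundles, not on the image of $\jmath$ itself; your proof must be repaired by replacing the cohomological computation with (or simply citing) this equivariant-homotopy argument. Your closing remark that the naive self-conjugacy relation $c_2=\tau^*c_2$ is vacuous here is correct, but the step where you locate the index-two phenomenon inside the forgetful map on Borel cohomology is exactly the step that fails.
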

For a complex vector bundle $\bb{E}$ over $X$ the Chern class
$c_j(\bb{E})$ is an element of $H^{2j}(X,\Z)$. If $X$ has dimension four, the map $H^{4}(X,\Z)\ni c_2(\bb{E})\mapsto \Z$ can be defined via the integration of the 4-form (associated with) $c_2(\bb{E})$ over $X$. The resulting integer is usually called second \emph{Chern number} (\cf Remark \ref{rk:deg_eq}). Then, we can rephrase Theorem \ref{theo:AI_class4D} saying that an even 
second  {Chern number} is a necessary condition for 
the existence of a \virg{Real} structure for four-dimensional systems.
On the other side, from a physical point of view, the existence of non-trivial topological quantities related with second  {Chern numbers} suggests the possibility to realize such  topological states in time-dependent three-dimensional systems subjected to electromagnetic interaction. In this situation the 
\emph{isotropic
magneto-electric response} \cite{qi-hughes-zhang-08,essin-moore-vanderbilt-09,hughes-prodan-bernevig-11} seems to be the right
physical phenomenon able to distinguish between different topological phases. Albeit  we do not investigate explicitly the physics of the magneto-electric response, we provide in Section \ref{sect:non-trivial_ex} a 
concrete recipe for  the construction of  all topologically non-trivial 
 systems of type {\bf AI}. This construction is based on a family of standard prototype models, described by equation \eqref{eq:hamilt_Sigma}, which are ubiquitous in the literature about topological insulators (see \eg \cite[eq. 103]{ryu-schnyder-furusaki-ludwig-10}). In Lemma \ref{lem:AppA} we prove that these models are in fact sufficient to realize all inequivalent topological  phases of type {\bf AI} in dimension $d=4$. Moreover, in Remark \ref{rk:deg2} we describe a standard and  extremely effective computational procedure for the determination of the associated  invariants  based on the notion of
Brouwer degree of maps
 \cite[Chapter I, Section 4]{bott-tu-82}.

\medskip

The classification procedure developed in this work is strongly based on the use of the proper characteristic classes able to describe the category of \virg{Real} vector bundles. These classes appear in the literature under different names; for instance they are called \emph{equivariant} Chern classes in
\cite{kahn-59} (where they were introduced for the first time) or \emph{mixed} Chern classes in \cite{krasnov-92}. To avoid confusion, we prefer to use  the name \virg{Real} Chern classes (or $\rr{R}$-Chern classes for short) to denote these objects.
\virg{Real} Chern classes are elements of an equivariant cohomolgy theory known as \emph{Borel cohomolgy}.
As a complementary result of our analysis we prove that the first two \virg{Real} Chern classes classify completely \virg{Real} vector bundles over $(\n{S}^{d},\tau)$ and $(\n{T}^{d},\tau)$
up to dimension $d=4$ (\cf Proposition \ref{prop:c_2-classific}). Moreover, we provide explicit computations for the related {Borel cohomology groups}; We expect that this may have an independent interest in other fields.
Finally, let us point out that the classification given in  Theorem \ref{theo:AI_class} can be, in principle, extended also to dimensions $d>4$. In fact Proposition \ref{prop:eq_clut_constr} (or equivalently Lemma \ref{lemma:doub_coset})
works for all $d\in\N$ as well as the (equivariant)
\emph{collapsing} map 
$\upsilon:{\n{T}}^d\to{\n{S}}^d$ described in Section \ref{sect:4-d}. On the other side, it is also true that
increasing $d$
the computation of the homotopy classes becomes more and more complicated.

\medskip

\medskip

\noindent
{\bf Acknowledgements.} 
GD's research is supported
 by
the Alexander von Humboldt Foundation. KG's research is supported by 
the Grant-in-Aid for Young Scientists (B 23740051), JSPS.
GD wants to thank H. Schulz-Baldes, G. Landi and G. Pezzini for very helpful discussions and C. Villegas Blas for the hospitality at
Istituto de Matem\'{a}ticas at Universidad Nacional Aut\'{o}noma de M\'{e}xico in
 Cuernavaca where this investigation has begun. KG is indebted to K. Shiozaki for many valuable discussions.
\medskip

\section{\virg{Real} Bloch-bundles}
\label{sect:bloc_bund}

\subsection{The free case} 
Let us start with systems of \emph{free charges}. These are described by self-adjoint operators on the Hilbert space $L^2(\R^d,\dd x)\otimes\C^L$ with a typical structure 
\begin{equation}\label{eq:free_sys1}
\hat{H}\;:=\;\sum_{j=1}^NF_j(-\ii\partial_{x_1},\ldots,-\ii\partial_{x_d})\;\otimes\; M_j
\end{equation}
where the  functions $F_j$'s are usually continuous, real valued and bounded and the $M_j$'s self-adjoint elements in the algebra ${\rm Mat}_L(\C)$ (complex square matrices of size $L$).
The number $L\in\N$ takes in account possible extra degrees of freedom like {spin} or isospin. The typical example of a Hamiltonian of type \eqref{eq:free_sys1} is the \emph{Dirac operator}
given by setting $F_j= -\ii\partial_{x_j}$ (although these functions are unbounded!) and choosing the $M_j$'s among the generators of a representation of a Clifford algebra.
The main feature of operators of type \eqref{eq:free_sys1} is the invariance under \emph{continuous} translations. For all $a\in\R^d$ let
$\hat{U}_a$ be the unitary operator defined by $(\hat{U}_a\psi)(\cdot)=\psi(\cdot-a)$ for all $\psi\in L^2(\R^d,\dd x)\otimes\C^L$. One can easily verify that
 $\big[\hat{H};\hat{U}_a\big]=0$ for all $a\in\R^d$.

\medskip

The most important consequence of the translational invariance is that the spectrum of $\hat{H}$ can be computed using the \emph{Fourier transform}
$\f{F}:L^2(\R^d,\dd x)\otimes\C^L\to L^2(\hat{\R}^d,\dd \kappa)\otimes\C^L$ defined (diagonally on the discrete degrees of freedom) by the usual formula
$$
(\f{F}\psi)(\kappa)\;:=\;\frac{1}{(2\pi)^{\frac{d}{2}}}\int_{\R^d}\dd x\;\expo{-\ii \kappa\cdot x}\;\psi(x)\;.
$$
The transformed operator $\f{F}\;\hat{H}\;\f{F}^{-1}$ decomposes in a  family of matrices   according to
\begin{equation}\label{eq:free_sys2}
{H}(\kappa_1,\ldots,\kappa_d)\;:=\;\sum_{j=1}^LF_j(\kappa_1,\ldots,\kappa_d)\;\otimes\; M_j\;.
\end{equation}
The continuity of the $F_j$'s assures the continuity of the map
$\hat{\R}^d\ni \kappa\mapsto H(\kappa)\in {\rm Mat}_L(\C)$.
Let $\epsilon_1(\kappa)\leqslant\ldots\leqslant \epsilon_L(\kappa)$ be the $L$ real eigenvalues of $H(\kappa)$. Due to usual perturbative arguments \cite{kato-95} also the \emph{energy bands} $\kappa\mapsto \epsilon_j(\kappa)$ are continuous. Moreover,
\begin{equation}\label{eq:spec_free1}
\sigma\big(\hat{H}\big)\;=\; \bigcup_{\kappa\in \hat{\R}^d}\;\sigma\big({H}(\kappa)\big)\;=\;\bigcup_{j=1}^L \s{I}_j
\end{equation}
where $\s{I}_j:=\overline{\{\epsilon\in\R\;|\; \exists\; \kappa\in \hat{\R}^d,\ \ \epsilon=\epsilon_j(\kappa)\}}$. If in addition
 $F_j(\kappa)\to c_j$ when $|\kappa|\to\infty$ independently
of the direction (usually one introduces suitable cutoffs) we can extend continuously the family $H(\kappa)$ over the \emph{one-point compactification} $\hat{\R}^d\cup\{\infty\}$ which is topologically identifiable with  $\n{S}^d$ via the \emph{stereographic coordinates}
$(\kappa_1,\ldots,\kappa_d)\mapsto(k_0,k_1,\ldots,k_d)$
\begin{equation}\label{eq:ster_proj}
k_0(\kappa)\;:=\;\frac{\|\kappa\|^2-1}{\|\kappa\|^2+1}\;,\qquad k_j(\kappa)\;:=\;\frac{2\kappa_j}{\|\kappa\|^2+1}\;,\quad j=1,\ldots,d\;.
\end{equation}
This construction, schematically represented in
 Figure \ref{fig:01} is quite common in the theory of topological insulators \cite[Section 2.1]{ryu-schnyder-furusaki-ludwig-10}.
 Let us remark that the addition of the point $\{\infty\}$ does not change the spectrum \eqref{eq:spec_free1}.

\begin{figure}[htbp]\label{fig:01}
\begin{center}
\fbox{
\includegraphics[height=6cm]{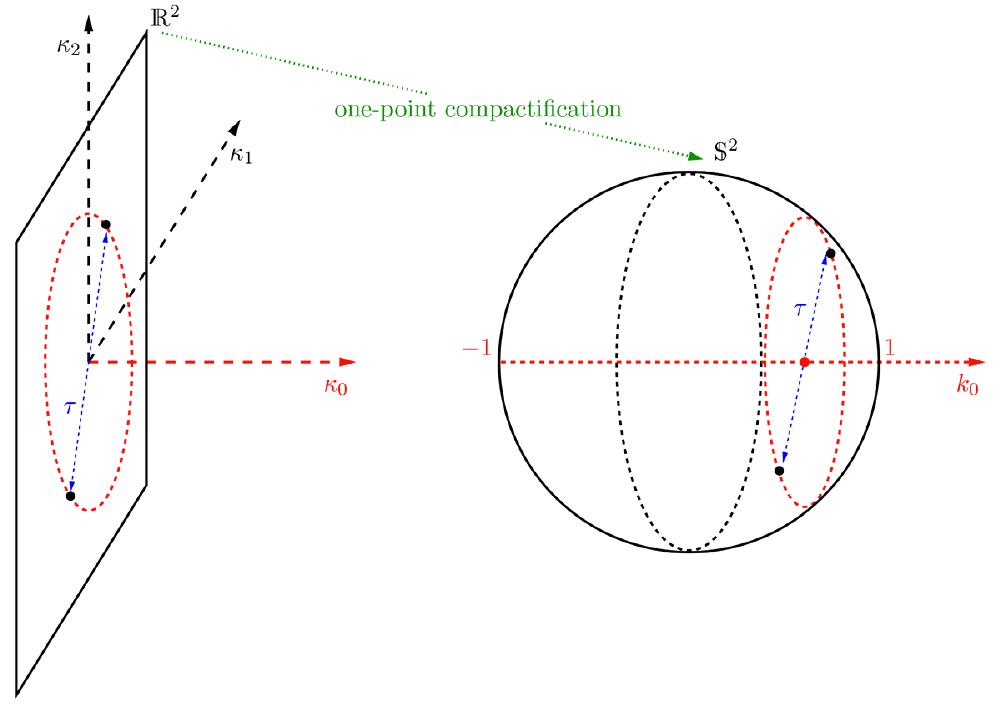}
}
\end{center}
\caption{\footnotesize 
One-point compactification  of the momentum space $\hat{\R}^d$ via stereographic projection and associated TRS-involution.
}
\end{figure}

\medskip

When the spectrum of $\hat{H}$  has a non-trivial \emph{isolated} spectral region $S\subset\sigma\big(\hat{H}\big)$ which verifies  the \emph{gap condition}
\begin{equation}\label{eq:gap}
{\rm dist}\big(S,\sigma\big(\hat{H}\big)\setminus S\big)\;=\; g>0\;,
\end{equation}
the associated spectral projection $\hat{P}_S$ decomposes by means of  $\f{F}$ in a continuous family of projections over the extended space $\n{S}^d$. Each element $P_S(k)$ of the family is a projection (of size $L$) which can be represented as a Riesz-Dunford integral
\begin{equation}\label{eq:proj}
P_S(k)\;:=\;\frac{\ii}{2\pi}\int_{\s{C}}\dd z\;\Big(H(k)-z\;\n{1}_{L}\Big)^{-1}\;.
\end{equation}
Here $\s{C}\subset\C$ is any regular closed path which encloses, without touching, the spectral region $S$. 
Due to the continuity of  $\n{S}^d\ni k\mapsto P_S(k)$, the dimension of  ${\rm Ran}\; {P_S}(k):=\{{\rm v}\in\C^L\ |\ {P_S}(k){\rm v}={\rm v}\}$ is constant over $\n{S}^d$ and the disjoint union
\begin{equation}\label{eq:tot_spac}
\bb{E}_S\;:=\; \bigsqcup_{k\in\n{S}^{d}}\; {\rm Ran}\; {P_S}(k)
\end{equation}
has, in a natural way, the structure of a fibered space $\pi:\bb{E}_S\to \n{S}^{d}$ with \emph{fibers} ${\rm Ran}\; {P_S}(k)$.
The total space $\bb{E}_S$ can be endowed with the topology generated by  \emph{tubular neighborhoods} (\cf \cite[eq. 4.1]{denittis-lein-11}) which makes  $\pi$ continuous and open. Moreover, a set of local trivializations can be constructed using 
the continuity of $P_S(k)$ and the \emph{Nagy formula} (\cf \cite[Lemma 4.5]{denittis-lein-11}). Summarizing, the collection of vector spaces \eqref{eq:tot_spac} defines the total space of a complex vector bundle
$\pi:\bb{E}_S\to \n{S}^{d}$ which is usually called \emph{Bloch-bundle}.
The rank of this vector bundle is fixed by 
 $m={\rm Tr}_{\C^L}{P_S}(k)$.

\medskip

The above construction shows that each isolated spectral region $S$ of a Hamiltonian   \eqref{eq:free_sys1}
defines, up to {isomorphisms},  an element of ${\rm Vec}_\C^m(\n{S}^d)$. 
We recall that a {morphism} (resp. {isomorphism})
 of complex vector bundles over the base space $\n{S}^d$ is a continuous map (resp. homeomorphism) between total spaces which 
 commutes with the bundle projections  and restricts to a linear map on each fiber \cite{atiyah-67,husemoller-94}.

\medskip

Let us consider now the implications of an \emph{anti-}unitary  symmetry of type $\hat{\Theta}:=\hat{C}\hat{J}$ where 
$\hat{C}$ is the  {complex-conjugation}   $(\hat{C}\psi)(\cdot)=\overline{\psi}(\cdot)$  and $\hat{J}$ is a {unitary operator} with the property 
\beql{intro:1}
\left\{
\begin{aligned}
\hat{C}\hat{J}\hat{C}&=\;\hat{J}^\ast\\
\hat{J}\;\hat{H}\;\hat{J}^\ast\;&=\;\hat{C}\hat{H}\hat{C}\;
\end{aligned}
\right.\qquad\quad (\text{{\bf AI} - symmetry})\;.
\eeq
Equation \eqref{intro:1} is equivalent to $\hat{\Theta}\; \hat{H}\; \hat{\Theta}^\ast=\hat{H}$ and the first in  
\eqref{intro:1} implies that
$\hat{\Theta}$ is an \emph{involution} in the sense that 
$\hat{\Theta}^2=\n{1}$ (or equivalently $\hat{\Theta}=\hat{\Theta}^\ast$). 
From a physical point of view  operators $\hat{H}$ with the  symmetry  \eqref{intro:1} are called type
 {\bf AI} topological insulators  \cite{altland-zirnbauer-97,schnyder-ryu-furusaki-ludwig-08}.
This is the class  which contains spinless or integer spin systems with a +TR symmetry. For instance,
if $\hat{J}=\n{1}$ and $L=1$ equation \eqref{intro:1} describes the usual time-reversal symmetry for a spinless quantum particle.

\begin{remark}[Topological insulators in class {\bf AII}]\label{rk:classAII}{\upshape
If in  \eqref{intro:1} one replaces the first condition with $\hat{C}\hat{J}\hat{C}=-\hat{J}^\ast$, or equivalently 
$\hat{\Theta}^2=-\n{1}$, one obtains a different class of topological insulators denoted by {\bf AII} (\cf \cite{altland-zirnbauer-97,schnyder-ryu-furusaki-ludwig-08}). An operator  of this type  has an \emph{odd} {time-reversal symmetry} (-TR). This is the typical case  for quantum system  with half-integer spin.
}\hfill $\blacktriangleleft$
\end{remark}

Let us consider for simplicity the case of $\hat{J}=\n{1}\otimes T$ where $T\in{\rm Mat}_L(\C)$ is a unitary matrix such that $\overline{T}=T^*$. Since $\hat{J}$ commutes with the translations $\hat{U}_a$ the 
 anti-unitary $\hat{\Theta}$ factorizes through the Fourier transform in the sense that
 $$
 (\f{F}\hat{\Theta}\psi)(\kappa)\;=\; T^*\overline{\psi}(-\kappa)\;=:\;\Theta \big(\psi(\kappa)\big)
 $$
 where the last equality  defines the map $\Theta$. A look to equations \eqref{eq:ster_proj} shows that the involution $\kappa\to-\kappa$ on the momentum space $ \hat{\R}^d$ is mapped
into the involution $\tau:\n{S}^d\to\n{S}^d$ 
described in Definition \ref{def:free_ferm}. Moreover, with the help of the Riesz-Dunford formula \eqref{eq:proj} one verifies 
 \beql{intro:2}
\Theta\;P_S(k)\; \Theta\; =\; P_S\big(\tau(k)\big)\qquad\quad \forall\ \ k\in\n{S}^d\;,
\eeq
namely $\Theta$ acts as a morphism  $\Theta:\bb{E}_S\to\bb{E}_S$  which intertwines \emph{anti}-linearly   fibers over $k$ and $\tau(k)$ in  such a way that $\Theta^2={\rm Id}$. This extra structure induced by the {symmetry}  
 $\hat{\Theta}$ makes the Bloch-bundle $\bb{E}_S$ a \virg{Real} vector bundle in the sense of  Atiyah \cite{atiyah-66}. We will describe these objects more in details in Section \ref{sect:Rs-vb}.

\subsection{The periodic case}
 A \emph{periodic} system of {independent charges} is described by a self-adjoint Hamiltonian on $L^2(\R^d)\otimes\C^L$ (continuous case) or $\ell^2(\Z^d)\otimes\C^L$ (discrete case) which is \emph{$\Z^d$-periodic} in the sense that $\big[\hat{H};\hat{U}_n\big]=0$ for all $n\in\Z^d$. As for the free case, one exploits this symmetry to decompose $\hat{H}$ into a continuous family of simpler operators. This can be done through the \emph{Bloch-Floquet transform} \cite{kuchment-93} 
which generalizes the discrete Fourier transform. In detail the {Bloch-Floquet transform} of a vector $\psi$ in $L^2(\R^d)\otimes\C^L$  or $\ell^2(\Z^d)\otimes\C^L$ is given by the formula
\begin{equation}\label{eq:BF-tras}
(\f{F}\psi)(y;\kappa)\;:=\;\sum_{n\in\Z^d}\expo{-\ii \; \kappa\cdot n}\;\psi(y+n)
\end{equation}
which is initially defined on a dense set of  fast-decaying vectors and then extended by continuity to a unitary map between
$L^2(\R^d)\otimes\C^L$  or $\ell^2(\Z^d)\otimes\C^L$ and the direct integral $\int_{\n{B}}^\oplus\dd \kappa\ \s{H}_{\rm f}$.
The space $\n{B}:=\R^d/(2\pi\Z)^d$ is called \emph{Brillouin zone} and the \emph{fiber space} $\s{H}_{\rm f}$ coincide with $L^2([0,1]^d)\otimes\C^L$ in the continuous case and with $\C^L$ in the discrete case. The space $\n{B}$ has the topology of a $d$-dimensional torus $\n{T}^d=\n{S}^1\times\ldots\times \n{S}^1$, as it  is evident from the homeomorphism 
 $\n{B}\ni(\kappa_1,\ldots,\kappa_d)\mapsto(k_1,\ldots,k_d)\in\n{T}^d$ (\cf Figure \ref{fig:02}) given by
\begin{equation}\label{eq:per_sys1}
\begin{aligned}
k_j(\kappa)&\;=\big(\cos\kappa_j,\sin\kappa_j\big)\qquad\quad j=1,\ldots,d\;.
\end{aligned}
\end{equation}
The Bloch-Floquet transform  $\f{F}\;\hat{H}\;\f{F}^{-1}$ of a periodic operator has a fibered structure  $\T^d\ni k\mapsto H(k)\in \bb{B}(\s{H}_{\rm f})$  and, at least  in the cases of physical relevance, the map $k\mapsto H(k)$ is  continuous or even more regular.

\begin{figure}[htbp]\label{fig:02}
\begin{center}
\fbox{
\includegraphics[height=6cm]{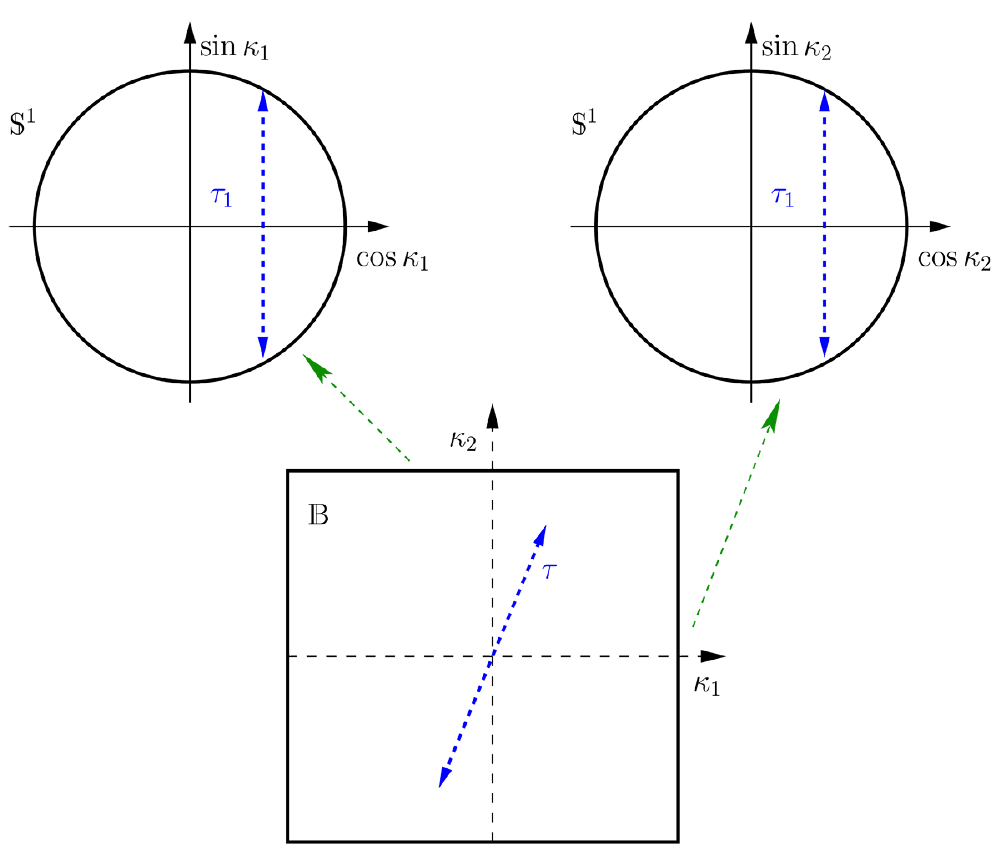}
}
\end{center}
\caption{\footnotesize 
Passage from the Brillouin zone $\n{B}\simeq [-\tfrac{\pi}{2},-\tfrac{\pi}{2}]^d$ to the torus $\n{T}^d=\n{S}^1\times\ldots\times\n{S}^1$ and associated TRS-involution.
}
\end{figure}

\medskip

The spectrum of $\hat{H}$ can be reconstructed from the spectra of the fiber operators $H(k)$ according to the usual relation
\begin{equation}\label{eq:spec_per1}
\sigma\big(\hat{H}\big)\;=\; \bigcup_{k\in {\T}^d}\;\sigma\big({H}(k)\big)\;.
\end{equation}
If there exists
 a non-trivial {isolated} spectral region $S\subset\sigma\big(\hat{H}\big)$ in the sense of the
{gap condition} \eqref{eq:gap}  one can associate to $S$ the continuous family of projections $k\mapsto P_S(k)$ as in equation
\eqref{eq:proj} and the related Bloch-bundle with a total space $\bb{E}_S$ defined as  in \eqref{eq:tot_spac}. Replacing $\n{S}^{d}$ with $\n{T}^{d}$ this argument explains the relation between gapped periodic systems  and  elements of ${\rm Vec}_\C^m(\n{T}^d)$. 

\medskip

Also for periodic systems the presence of a +TR  symmetry of type \eqref{intro:1} endows the Bloch-bundle with a \virg{Real} structure.  We need only  to assume that $\hat{J}$ {commutes} with the translations $\hat{U}_n$ for all $n\in\Z^d$ (\eg this is true if $\hat{J}=\n{1}$) in order to apply consistently the Bloch-Floquet transform. If this is the case from equation \eqref{eq:BF-tras} one obtains
$$
 (\f{F}\hat{\Theta}\psi)(\cdot;\kappa)\;=\; \hat{J}^*\sum_{n\in\Z^d}\expo{-\ii\;\kappa\cdot n}\;\hat{U}_n^\ast\overline{\psi}(\cdot)\;=:\;\Theta \big(\psi(\cdot,-\kappa)\big)\;.
 $$
Equation \eqref{eq:per_sys1} shows that the involution $\kappa\to-\kappa$ on the Brillouin zone $\n{B}$ 
is equivalent to the  involution $\tau:\n{T}^d\to\n{T}^d$ described in Definition \ref{def:per_ferm}. Moreover, the last equation
translates in terms of projections in a relation similar to \eqref{intro:2} which establishes the \virg{Real} structure for the Bloch-bundle $\bb{E}_S$ induced by the symmetry $\hat{\Theta}$.

\section{Classification of complex vector bundles}
\label{sect:class_C-vb}

We assume familiarity of the reader with the basics of the vector bundle theory
and we refer to  classical monographs \cite{atiyah-66,milnor-stasheff-74,husemoller-94} for an extended presentation of this subject.
Hereafter we tacitly assume that:
 \begin{assumption}\label{ass:1}
$X$ is a compact and path connected Hausdorff space with a CW-complex structure.
\end{assumption}
This assumption covers our cases of interest $X=\n{S}^d$ and $X=\T^d$.

\subsection{Homotopy classification for complex vector bundles.}
\label{sec:hom_vecc_compx}
Complex vector bundles can be obtained, up to isomorphisms, as pullbacks of a \emph{universal} classifying vector bundle (the same is true also for real vector bundles, see Remark \ref{rk:iso_categ}).
A model for the universal vector bundle is provided by the \emph{tautological} vector bundle over the \emph{Grassmann manifold}  
$$
G_m(\C^\infty)\;:=\;\bigcup_{n=m}^{\infty}\;G_m(\C^n)\;,
$$
where,
for each pair $m\leqslant n$, $G_m(\C^n)\simeq\n{U}(n)/\big(\n{U}(m)\times \n{U}(n-m)\big)$ is the set of $m$-dimensional (complex) subspaces of $\C^n$. 
Any $G_m(\C^n)$ can be endowed with the structure of a finite CW-complex, making it into
a closed (\ie compact without boundary) manifold  of (real) dimension $2m(n- m)$.
 The inclusions $\C^n \subset \C^{n+1} \subset\ldots$  yield inclusions $G_m(\C^n)\subset G_m(\C^{n+1})\subset\ldots$ and one can equip $G_m(\C^\infty)$
with the direct limit topology. The resulting space has the structure of an infinite CW-complex which is, in particular, paracompact and path-connected.

\medskip

Each manifold $G_m(\C^n)$ is the base space of a {canonical} rank $m$ complex vector bundle $\pi:\bb{T}_m^n\to G_m(\C^n)$ where the total space $\bb{T}_m^n$ consists of all pairs $(\Sigma,{\rm v})$ with $\Sigma\in G_m(\C^n)$  and ${\rm v}$ any vector in $\Sigma$ and the projection is defined by
 $\pi(\Sigma,{\rm v})=\Sigma$. Now, when  $n$ tends to infinity, the same construction leads to 
 the \emph{tautological} $m$-plane bundle $\pi:\bb{T}_m^\infty\to G_m(\C^\infty)$. 
This vector bundle is the universal object which classifies  complex vector bundles in the sense that any rank $m$ complex vector bundle $\bb{E}\to X$ can be realized, up to isomorphisms, as the pullback of  $\bb{T}_m^\infty$ with respect to a \emph{classifying map} $\varphi : X \to G_m(\C^\infty)$, that is $\bb{E}\simeq \varphi^\ast \bb{T}_m^\infty$.
Since pullbacks of homotopic maps yield isomorphic vector bundles (\emph{homotopy property}), the classification of  $\bb{E}$ only depends on the homotopy class  of $\varphi$.  This leads to the fundamental result 
\beql{eq:class_compl_VB1}
{\rm Vec}^m_\C(X)\; \simeq\;[X , G_m(\C^\infty)]
\eeq
where in the right-hand side  there is the set of the equivalence classes of homotopic maps between $X$ and $G_m(\C^\infty)$.

\medskip

The space $G_m(\C^\infty)$ which classifies rank $m$ complex vector bundles is sometimes denoted by $B\n{U}(m)$. The maps  $G_m(\C^n)\hookrightarrow G_{m+1}(\C^{n+1})$ which consist of adding the subspace generated by the last vector ${\rm e}_{n+1}=(0,\ldots,0,1)$ induce  maps between the Grassmann manifolds $G_m(\C^\infty)\hookrightarrow G_{m+1}(\C^\infty)$. Let $B\n{U}:=\bigcup_{m=0}^\infty G_m(\C^\infty)$ with the inductive limit topology. One can prove that
\beql{eq:class_compl_VB2}
{\rm Vec}_\C(X)\;:=\;\bigcup_{m\in\N}{\rm Vec}^m_\C(X)\; \simeq\;[X , B\n{U}]
\eeq
\cite[Chapter 4, Section 4]{husemoller-94}, \ie
the space  $B\n{U}$ classifies the full category of complex vector bundle of any rank.

\medskip

In the case of spheres equation \eqref{eq:class_compl_VB1} provides
$$
{\rm Vec}^m_\C\big(\n{S}^d\big)\; \simeq\;[\n{S}^d , G_m(\C^\infty)]\;=\; \pi_d\big(G_m(\C^\infty)\big)/\pi_1\big(G_m(\C^\infty)\big)
$$
where $\pi_d(\cdot)$ denotes the $d$-th homotopy group. To complete the classification of complex vector bundles over $\n{S}^d$ it
is enough to use the formula
$$
\pi_d\big(G_m(\C^\infty)\big)\;=\;\pi_{d-1}\big(\n{U}(m)\big)
$$
and  the homotopy of the unitary groups (\cf Table \ref{tab:2}). This provides the classification showed in Table 1.\ref{tab:01}
for dimensions $d=1,2,3,4$.

 \begin{table}[htp]
 \label{tab:2}
 \begin{tabular}{|c||c|c|c|c|c|}
\hline
$\pi_k(\n{U}(m))$   & $k=0$ & $k=1$&$k=2$&$k=3$&$k=4$\\
\hline
 \hline
 \rule[-2mm]{0mm}{6mm}
 $m=1$&   0 & $\Z$ & $\blu{0}$ & \blu{0} &\blu{0}\\
\hline
 \rule[-2mm]{0mm}{6mm}
$m=2$&  0 &  $\Z$ &0& $\Z$  &$\blu{\Z_2}$\\
\hline
 \rule[-2mm]{0mm}{6mm}
$m\geqslant3$ &   0 & $\Z$ &0&$\Z$&$0$\\
\hline
\end{tabular}\vspace{2mm}
 \caption{The homotopy of unitary groups. The groups in the \emph{unstable regime} $k\geqslant 2m$ are listed in blue (\cf \cite{kervaire-60}).}
 \end{table}

\subsection{Picard group and line bundles}
\label{ssec:comp-Picard}
The computation of the homotopy classes needed for the
classification of complex vector bundles according to \eqref{eq:class_compl_VB1} is usually non accessible, as with most things in homotopy theory. For this reason, one tries to reduce the problem of the classification to a cohomolgy theory which is more easily accessible. 

\medskip  

We start with the case of line bundles, \ie vector bundles of rank $m=1$.
The space ${\rm Vec}^1_\C(X)$  carries a natural structure of abelian group given by the tensor product. This is know as \emph{Picard} group (in algebraic geometry) and it is usually denoted by ${\rm Pic}(X)$. More precisely,  ${\rm Pic}(X)\equiv \check{H}^1(X,\bb{O}_\C^*)$ is the first  \emph{\v{C}ech} cohomology group of $X$ \cite{griffiths-harris-78,bredon-97}. Here $\bb{O}_\C$ is the \emph{sheaf of germs}
of continuous complex-valued functions on $X$
and $\bb{O}_\C^\ast\subset \bb{O}_\C$ is the subsheaf of germs
of invertible functions.
The identification  between
${\rm Vec}^1_\C(X)$ and $\check{H}^1(X,\bb{O}_\C^*)$ requires an interpretation of  line bundles  in the jargon of sheaves. 
Given an open cover $\{U_\alpha\}$ for  $X$, a complex  line bundle $\pi:\bb{L}\to X$ is completely specified by a family of \emph{transition  functions} $g_{\alpha\beta}:U_\alpha\cap U_\beta\to \n{U}(1)$ such that: i) $g_{\alpha\alpha}=1$; ii) $g_{\alpha\beta}\cdot g_{\beta\gamma}\cdot g_{\gamma\alpha}=1$ (cocycle condition). Moreover two systems of {transition  functions} $\{g_{\alpha\beta}\}$ and $\{g'_{\alpha\beta}\}$ give rise to
isomorphic line bundles if and only if $g'_{\alpha\beta}=\lambda^{-1}_\alpha \cdot g_{\alpha\beta}\cdot \lambda_\beta$ (coboundary condition) for some system $\lambda_\alpha:U_\alpha\to\C^\times$ of invertible functions, where  $\C^\times:=\C\setminus\{0\}$ according to a standard notation. In therms of {\v{C}ech cohomology}
a collection of {transition  functions} $\{g_{\alpha\beta}\}$ is a \v{C}ech 1-cocycle and isomorphic line bundles are identified up to  \v{C}ech 1-coboundary. This proves that ${\rm Vec}^1_\C(X)\simeq \check{H}^1(X,\bb{O}_\C^*)$.
When $X$ is a compact space the exponential map ${\rm exp}:\bb{O}_\C\to\bb{O}_\C^\ast$ induces an
exact sequence $0\to\underline{\Z}\to \bb{O}_\C\to\bb{O}_\C^\ast\to1$ (here $\underline{\Z}$ is the constant sheaf with stalks $\Z$). The related boundary map in cohomology
provides the isomorphism $\check{H}^1(X,\bb{O}_\C^*)\simeq H^2(X,\Z)$ and the resulting classifying isomorphism
\begin{equation}\label{eq:clasLin_bun}
c_1\;:\; {\rm Vec}^1_\C(X)\;\longrightarrow\;H^2(X,\Z)
\end{equation}
is called \emph{(first) Chern class}.

\medskip

The cohomology ring of spheres is well known \cite[Example 3.16]{hatcher-02}:
\begin{equation}\label{eq:cohom_sph}
H^\bullet(\n{S}^d,\s{R})\;\simeq\;{\s{R}}[\alpha]/(\alpha^2)
\end{equation}
where $\s{R}$ is a commutative ring, $\alpha$ is a generator in 
$H^d(\n{S}^d,\s{R})$ and ${\s{R}}[\alpha]$ is the polynomial ring with coefficient in $\s{R}$. Since the only non-trivial groups are $H^0(\n{S}^d,\s{R})\simeq H^d(\n{S}^d,\s{R})\simeq\s{R}$ one concludes 
from \eqref{eq:clasLin_bun} that
\begin{equation}\label{eq:comp_lin1}
{\rm Vec}^1_\C(\n{S}^d)\;=\;
\left\{
\begin{aligned}
&\Z&&\text{if}\ \ d=2\\
&0&&\text{if}\ \ d\neq2
\end{aligned}
\right.
\end{equation}
For tori the cohomology ring is  \cite[Example 3.16]{hatcher-02}
\begin{equation}\label{eq:cohom_torus}
H^\bullet(\n{T}^d,\s{R})\;\simeq\;\Lambda_{\s{R}}[\alpha_1,\ldots,\alpha_d]
\end{equation}
where the right-hand side denotes the exterior algebra over the ring $\s{R}$ with basis the degree one elements $\alpha_j$ such that $\alpha_j^2=0$ and $\alpha_i\alpha_j=-\alpha_j\alpha_i$ if $i\neq j$. Since $H^k(\n{T}^d,\s{R})\simeq \s{R}^{\frac{d!}{k!(d-k)!}}$ one deduces from \eqref{eq:clasLin_bun} that
\begin{equation}\label{eq:comp_lin2}
{\rm Vec}^1_\C(\n{T}^d)\;=\;\Z^{\frac{d(d-1)}{2}}\;.
\end{equation}
Equations \eqref{eq:comp_lin1} and \eqref{eq:comp_lin2} provide the classification for $m=1$ showed in Table 1.\ref{tab:01}.

\subsection{Chern classes and cohomology classification for complex vector bundles}
\label{ssec:chern_classes}
For rank $m>1$ the classification of complex vector bundles requires a generalization of the map \eqref{eq:clasLin_bun}
to higher cohomology groups. 
This is exactly what \emph{Chern classes} do.

\medskip

An important result in the theory of vector bundles is the computation of 
the cohomology ring of the {Grassmann manifold}  \cite[Theorem 14.5]{milnor-stasheff-74}:
\begin{equation}\label{eq:univ_chern_class}
H^\bullet\big(G_m(\C^\infty),\Z\big)\;\simeq\;\Z[\rr{c}_1,\ldots,\rr{c}_m]
\end{equation}
is the ring of polynomials with integer coefficients and $m$ generators $\rr{c}_j\in H^{2j}\big(G_m(\C^\infty),\Z\big)$. 
These generators $\rr{c}_j$ are called \emph{universal} Chern classes and there are no polynomial relationships between them.
The Chern classes of a general vector bundle $\bb{E}\in {\rm Vec}^m_\C(X)$ are constructed as follows: let $\varphi\in[X,G_m(\C^\infty)]$ be the map which represents $\bb{E}$ according to \eqref{eq:class_compl_VB1}, then 
$\varphi^\ast:H^j(G_m(\C^\infty),\Z)\to H^j(X,\Z)$ is a homomorphism of cohomology groups for all $j$.
The \emph{$j$-th} Chern class of $\bb{E}$ is by definition
$$
c_j(\bb{E})\;:=\;\varphi^\ast(\rr{c}_j)\;\in\;H^{2j}(X,\Z)\qquad\quad j=1,2,3,\ldots\;.
$$
Since the homomorphism $\varphi^\ast$ only depends on the homotopy class of $\varphi$,  isomorphic vector bundles possess
 the same family of Chern classes. Because \emph{trivial} vector bundles (vector bundles isomorphic to  $X\times\C^m$) are pullbacks of constant maps, the associated cohomology homomorphisms are necessarily  trivial. Hence, Chern classes of trivial vector bundles are trivial. In this sense, Chern classes can be used as a measure of the \emph{non-triviality} of a vector bundle. An important property of these objects is that $c_j(\bb{E})=0$ for all $j>m$ if $m$ is the rank of $\bb{E}$.

\medskip 

Chern classes are usually a set of incomplete invariants for the classification of vector bundles unless certain additional assumptions on the base space $X$ and on the rank of the fibers are introduced. First of all, let us recall a fundamental result: let $X$ be a CW-complex of dimension $d$
and $\bb{E}\to X$  a complex vector bundle of rank $m$ such that $d< 2m$. Under these assumptions  there exists an isomorphism \cite[Theorem 1.2, Chapter 9]{husemoller-94}
\beql{eq:stab_rank}
\bb{E}\;\simeq\;\bb{E}_0\;\oplus\;(X\times\C^{m-\sigma})
\eeq
where $\bb{E}_0\to X$ is a complex vector bundle of rank $\sigma:=[d/2]$ (here $[x]$ denotes the integer part of $x\in\R$). The decomposition \eqref{eq:stab_rank} states that the topology of vector bundles with fibers of rank sufficiently bigger compared with the dimension of the base space is entirely contained in the non trivial summand  $\bb{E}_0$ which has a \virg{minimal rank} $\sigma$. Accordingly,  a rank $m$ vector bundle $\bb{E}\to X$ has \emph{stable rank} if $m\geqslant\sigma$.  Decomposition \eqref{eq:stab_rank} is extremely useful in the context of complex vector bundles and we will show that a similar decomposition holds true also in the context  of \virg{Real} vector bundles (\cf Theorem \ref{theo:stab_ran_R}).
 
\medskip

In the case of low spatial dimension $d=1,2,3,4$ the {stable rank} condition is certainly verified if $m\geqslant 2$. Moreover a
classical results  \cite{peterson-59, cadek-vanzura-93} states that if $X$ is a CW-complex of dimension $d\leqslant 4$ with no torsion in the even cohomology groups then there is an isomorphism
\begin{equation}\label{eq:clasVect_bun}
(c_1,c_2)\;:\; {\rm Vec}^m_\C(X)\;\longrightarrow\;H^2(X,\Z)\;\oplus\;H^4(X,\Z)\;,\qquad\quad m\geqslant 2
\end{equation}
induced by the \emph{first} and \emph{second}  Chern class.
If one applies \eqref{eq:clasVect_bun} to the special cases $X=\n{T}^d$ or $X=\n{S}^d$ one recovers the classification displayed in Table 1.\ref{tab:01}.

\section{\virg{Real} vector bundles}
\label{sect:Rs-vb}

This Section is devoted to the description of the category of {\virg{Real}} vector bundles
 introduced by  M.~F. Atiyah  in the seminal paper \cite{atiyah-66}.  
Through the paper we often use the shorter expression \emph{$\rr{R}$-bundle} to denote a {\virg{Real}} vector bundle.

\subsection{Involutive space}\label{sect:Rs-vb_inv}
An \emph{involution} $\tau$ on a topological space $X$ is a homeomorphism of period 2, \ie  $\tau^2={\rm Id}_X$.
An \emph{involutive space} is a pair $(X,\tau)$ given by a topological space and an involution.
An involutive space is a particular case of {$G$-space} with $G=\{\tau,{\rm Id}_X\}\simeq\Z_2$. Involutive spaces are also called \emph{\virg{Real}} spaces. This is in analogy with the case of the  algebraic geometry: if $X$ is the set of complex points of an algebraic variety then there is a natural involution given by the complex conjugation.
In this case the fixed points are  just the real points of the variety. We refer to
\begin{equation}
X^{\tau}\;:=\; \{x\in X\ |\ \tau(x)=x\}
\end{equation}
as the \emph{fixed point} set of $(X,\tau)$.
We now describe  some examples of involutive spaces which will be of some utility in the rest of the paper.

\begin{example}[Antipodal action on spheres]\label{ex:sphere_antipodal}{\upshape
 The \emph{antipodal} involution $\vartheta$ on the $d$-sphere $
\n{S}^{d}=\{k\in\R^{d+1}\,|\; \|k\|= 1\}$ is given by the  reflection around the origin:
 $\vartheta(k)=-k$.  
 The pair $(\n{S}^{d},\vartheta)$
is an involutive space and the orbit space $\n{S}^{d}/\vartheta\simeq \R P^{d}$ coincides with the real projective space, namely the set of   lines through the origin of $\R^{d+1}$. The involutive space $(\n{S}^{d},\vartheta)$ has no fixed points since the action of $\vartheta$  is free.
Let $\mathbb{S}^\infty:=\bigcup_{d=0}^\infty \mathbb{S}^{d}$ and $\R P^\infty:=\bigcup_{d=0}^\infty \R P^{d}$.
Since the antipodal maps $\vartheta$  is  compatible with the inclusions $\mathbb{S}^d \hookrightarrow\mathbb{S}^{d+1}$, there exists an induced involution
$\vartheta:\mathbb{S}^\infty\to \mathbb{S}^\infty$ which is free and has orbit space $\mathbb{S}^\infty/\vartheta=\R P^\infty$. 
 Let us recall that  $\mathbb{S}^\infty$ is a contractible space \cite[Example 1B.3, pg. 88]{hatcher-02}. To simplify the notation, we sometimes use  $\hat{\n{S}}^{d}$ for the pair $(\n{S}^{d},\vartheta)$. 
}\hfill $\blacktriangleleft$
\end{example}

\begin{example}[TR-involution on spheres]\label{ex:homot_spher}{\upshape
According to Definition \ref{def:free_ferm}, the \emph{TR-involution} on $\n{S}^d$ is given  by  $\tau(k_0,k_1,\ldots,k_d)=(k_0,-k_1,\ldots,-k_d)$.
The  space $(\n{S}^d,\tau)$ has  2 fixed  points $(\pm1,0,\ldots,0)$, independently of the dimension. 
It is useful to introduce the short notation $\tilde{\n{S}}^d$ for the pair $(\n{S}^d,\tau)$.
We can identify $\tilde{\n{S}}^d\simeq \s{S}(\hat{\n{S}}^{d-1})$ with  the \emph{suspension} of $\hat{\n{S}}^{d-1}$
along the direction $k_0\in[-1,1]$. Indeed,  $\tau$ acts as the antipodal map on $\n{S}^{d-1}$ preserving the coordinate $k_0$. This implies  that $\tilde{\n{S}}^d/\tau\simeq \s{S}(\R P^{d-1})$ and, in particular, $\tilde{\n{S}}^1/\tau\simeq\{\ast\}$ and $\tilde{\n{S}}^2/\tau\simeq\s{S}(\n{S}^1)\simeq\n{S}^2$. In the jargon of the {orbifold theory} these orbit spaces are examples of \emph{effective global quotients} \cite[Definition 1.7]{adem-leida-ruan-07}.

As suggested in \cite{atiyah-66}, TR-involution on $\n{S}^d$ can be related to the usual complex involution. Let us introduce the space $\R^{p,q}:=\R^p\oplus\,{\rm i}\R^q$ and the related unit sphere  
$\n{S}^{p,q}:=\{k\in \R^{p,q}\ |\ \|k\|=1\}$. We observe that $\R^{p,p}\simeq\C^p$ and $\n{S}^{p,q}$ has (real) dimension $p+q-1$. On $\n{R}^{p,q}$ and $\n{S}^{p,q}$ the complex conjugation acts reversing only the sign of the last $q$ coordinates. With this notation the involutive space $\tilde{\n{S}}^d\equiv(\n{S}^d,\tau)$ is represented by $\n{S}^{1,d}$ endowed with the  complex conjugation. This 
notation is particularly convenient for $K$-theoretic computations (\cf Appendix \ref{sect:KR-theory}).
   }\hfill $\blacktriangleleft$
\end{example}

\begin{example}[TR-involution on tori]\label{ex:tori_inv}{\upshape
The \emph{TR-involution} $\tau:\n{T}^d\to\n{T}^d$ has been defined in Definition \ref{def:per_ferm} as the product of 
TR-involutions on $\n{S}^1$. The pair $(\n{T}^d, \tau)$ is an involutive space
with fixed point set of cardinality $2^d$. Introducing the short notation $\tilde{\n{T}}^d$ for the pair $(\n{T}^d,\tau)$ we have the obvious relation
$\tilde{\n{T}}^d=\tilde{\n{S}}^d\times\ldots\times\tilde{\n{S}}^1$.
The orbit spaces $\tilde{\n{T}}^d/\tau$ are particular examples of {effective global quotients} called
\emph{toroidal orbifolds} \cite{adem-duman-gomez-11}: $\tilde{\n{T}}^2/\tau\simeq\n{S}^2$ is know as \emph{pillow} (see \eg \cite{easther-greene-jackson-02}) while  $\tilde{\n{T}}^4/\tau$  is the famous \emph{Kummer surface} $K3$ (see \eg \cite[Section 3.3]{scorpan-05}).
 }\hfill $\blacktriangleleft$
\end{example}

\begin{example}[Involution on Grassmann manifold]
\label{ex:grassman_inv}
{\upshape
The {complex Grassmann manifold} $G_m(\C^\infty)$ described in Section \ref{sec:hom_vecc_compx}
can be endowed with the structure of an involutive space in a natural way.
 If $\Sigma=\langle{\rm v}_1,\ldots,{\rm v}_m\rangle_\C$ is
the $m$-plane in $G_m(\C^n)$  generated by the basis $\{{\rm v}_1,\ldots,{\rm v}_m\}$ then we define $\overline{\Sigma}\in G_m(\C^n)$ as  $\overline{\Sigma}:=\langle\overline{\rm v}_1,\ldots,\overline{\rm v}_m\rangle_{\C}$ where $\overline{\rm v}_j$ is the complex conjugated of  ${\rm v}_j$. Clearly, the definition of 
$\overline{\Sigma}$ does not depend on the choice of a particular basis. The map $\varrho:G_m(\C^n)\to G_m(\C^n)$ given by $\varrho({\Sigma}):=\overline{\Sigma}$ is an involution which makes the pair $(G_m(\C^n),\varrho)$ an involutive space. Since all the inclusions $G_m(\C^n)\hookrightarrow G_m(\C^{n+1})\hookrightarrow\ldots$ are equivariant, the involution extends to 
 the infinite Grassmann manifold in such a way that $(G_m(\C^\infty),\varrho)$ becomes an involutive space. 

The description of the fixed point set of $(G_m(\C^\infty),\varrho)$
requires the real version of the Grassmann manifold.
  Let $G_m(\R^n)\simeq\n{O}(n)/\big(\n{O}(m)\times \n{O}(n-m)\big)$ be  the set of 
$m$-dimensional real hyperplanes passing through the origin of
$\R^n$. The \emph{real} Grassmann manifold is defined as in the complex case by the inductive limit $G_m(\R^\infty):=\bigcup_{n=m}^{\infty}\;G_m(\R^n)$. We recall that
$G_m(\R^n)$ is a closed manifold  of (real) dimension $m(n- m)$ which is path-connected. 
Let $\Sigma=\langle{\rm v}_1,\ldots,{\rm v}_m\rangle_\C$ be a fixed point of $G_m(\C^\infty)$ under $\varrho$.
This means that $\overline{\rm v}_j\in \Sigma$ for all $j=1,\ldots,m$ and the vectors ${\rm u}_j:={\rm v}_j+\overline{\rm v}_j$
and ${\rm w}_j:=\ii({\rm v}_j-\overline{\rm v}_j)$ 
provide a \emph{real} basis which span $\Sigma$ over $\R$. Let 
$\Sigma\in G_m(\C^\infty)$ be a fixed point, then
$\Sigma=\langle{\rm u}_1,\ldots,{\rm u}_m\rangle_\C$ 
where the ${\rm u}_j$'s are real and  the map $G_{m}(\R^\infty)\to G_{m}(\C^\infty)$ given  by
$\langle{\rm u}_1,\ldots,{\rm u}_m\rangle_\R\to \langle{\rm u}_1,\ldots,{\rm u}_m\rangle_\C$ is an embedding of 
$G_{m}(\R^\infty)$ onto the fixed point set of $G_{m}(\C^\infty)$.
}\hfill $\blacktriangleleft$
\end{example}

\subsection{\virg{Real} structure on vector bundles}
\label{sect:Rs-vb_main}
Following \cite{atiyah-66} we call {\virg{Real}} vector bundle, or $\rr{R}$-bundle, over the involutive space $(X,\tau)$
a complex vector bundle $\pi:\bb{E}\to X$ endowed with a (topological) homeomorphism $\Theta:\bb{E}\to \bb{E}$
 such that:
\begin{itemize}

\item[$(\rr{R}_1)$] the projection $\pi$ is \emph{real} in the sense that $\pi\circ \Theta=\tau\circ \pi$;

\item[$(\rr{R}_2)$] $\Theta$ is \emph{anti-linear} on each fiber, \ie $\Theta(\lambda p)=\overline{\lambda}\ \Theta(p)$ for all $\lambda\in\C$ and $p\in\bb{E}$ where $\overline{\lambda}$ denotes the complex conjugate of $\lambda$;

\item[$(\rr{R}_3)$] $\Theta^2={\rm Id}_{\bb{E}}$.

\end{itemize}
It is always possible to endow
$\bb{E}$ with
a Hermitian metric with respect to which $\Theta$ is an \emph{anti}-unitary map between conjugate fibers
(\cf Remark \ref{rk:eq_metric}).
 We remark that an $\rr{R}$-bundle differs from an \emph{equivariant} vector bundle which is a complex vector bundle in the category of $\Z_2$-spaces (see \cite[Section 1.6]{atiyah-67}). Indeed in the first case the map $\bb{E}_x \to \bb{E}_{\tau(x)}$ induced by $\Theta$ is anti-linear,  while in the second case it is linear.
 
 \medskip
 
A vector bundle \emph{morphism} $f$ between two vector bundles  $\pi:\bb{E}\to X$ and $\pi':\bb{E}'\to X$ 
over the same base space
is a  continuous map $f:\bb{E}\to \bb{E}'$ which is \emph{fiber preserving} in the sense that  $\pi=\pi'\circ f$
 and that restricts to a \emph{linear} map on each fiber $\left.f\right|_x:\bb{E}_x\to \bb{E}'_x$. Complex (resp. real) vector bundles over  $X$ together with vector bundle morphisms define a category and we use ${\rm Vec}^m_\C(X)$
 (resp. ${\rm Vec}^m_\R(X)$) to denote the set of equivalence classes of isomorphic vector bundles of rank $m$.
    Also 
 $\rr{R}$-bundles define a category with respect to  \emph{$\rr{R}$-morphisms}. An $\rr{R}$-morphism $f$ between two  $\rr{R}$-bundles
 $(\bb{E},\Theta)$ and $(\bb{E}',\Theta')$ over the same involute space $(X,\tau)$ 
  is a vector bundle morphism  commuting with the involutions, \ie $f\circ\Theta\;=\;\Theta'\circ f$ \cite[Section 1]{atiyah-66}. The set of equivalence classes of isomorphic $\rr{R}$-bundles of  rank $m$ over $(X,\tau)$ 
  is denoted with ${\rm Vec}_{\rr{R}}^m(X,\tau)$. Since the sum of a product line $\rr{R}$-bundle (see below) defines ${\rm Vec}_{\rr{R}}^m(X,\tau) \to {\rm Vec}_{\rr{R}}^{m+1}(X,\tau)$
  one considers the inductive limit ${\rm Vec}_{\rr{R}}(X,\tau):=\bigcup_{m\in\N}{\rm Vec}_{\rr{R}}^m(X,\tau)$
  when the rank of the fiber is not crucial.

\medskip

The set ${\rm Vec}^m_\C(X)$ is non-empty since it contains at least the
\emph{product} vector bundle $X\times\C^m\to X$ with canonical projection $(x,{\rm v})\mapsto x$. Similarly, in the real case one has that 
$X\times\R^m\to X$ provides  an element of ${\rm Vec}^m_\R(X)$. A complex (resp. real) vector bundle is 
called \emph{trivial}  if it is isomorphic to the  complex (resp. real) product vector bundle. For $\rr{R}$-bundles the situation is similar. First of all the set ${\rm Vec}_{\rr{R}}^m(X,\tau)$ is non-empty since it contains at least the \emph{\virg{Real} product  bundle} 
 $X\times\C^m\to X$ endowed with the \emph{product} $\rr{R}$-structure
$\Theta_0 (x,{\rm v})=(\tau(x),\overline{{\rm v}})$ 
given by the complex conjugation ${\rm v}\mapsto \overline{{\rm v}}$.
We say that an $\rr{R}$-bundle is \emph{$\rr{R}$-trivial}  if it is isomorphic  to the product $\rr{R}$-bundle in the category of $\rr{R}$-bundles.

\medskip

At  fixed point of $(X,\tau)$  the homeomorphism $\Theta$ acts fiberwise as an anti-linear map $\Theta_x:\bb{E}_x\to \bb{E}_x$ such that $\Theta_x^2={\rm Id}_{\bb{E}_x}$. This means that $\bb{E}_x$ is, in a natural way, the \emph{complexification} of the real vector space  $\bb{E}^\Theta_{x}$ spanned by the +1-eigenvectors of $\Theta_x$. 
Moreover, each topological space $X$ can be interpreted as an involutive space with respect to the trivial involution ${\rm Id}_X$. This leads to the following fact:

\begin{proposition}\label{prop:Rv1}
${\rm Vec}^m_{\R}(X)\simeq{\rm Vec}^m_{\rr{R}}(X,{\rm Id}_X)$ for all $m\in\N$.
\end{proposition}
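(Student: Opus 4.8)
The plan is to exhibit two mutually inverse constructions relating real vector bundles over $X$ and $\rr{R}$-bundles over the trivially involutive space $(X,{\rm Id}_X)$, implementing at the bundle level the classical fact that a real vector space is recovered from its complexification as the fixed set of complex conjugation. Since $\tau={\rm Id}_X$ every point of $X$ is fixed, so by the observation preceding the statement each fiber $\bb{E}_x$ of an $\rr{R}$-bundle is the complexification of the real subspace $\bb{E}^\Theta_x$ of $\Theta$-fixed vectors; this is exactly the structure I want to globalize.

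First I would set up the \emph{complexification} functor. Given $F\to X$ in ${\rm Vec}^m_\R(X)$, put $\bb{E}:=F\otimes_\R\C$, a rank-$m$ complex vector bundle, and equip it with the fiberwise antilinear involution $\Theta_0(v\otimes z):=v\otimes\bar z$. A real trivialization $F|_U\simeq U\times\R^m$ induces $\bb{E}|_U\simeq U\times\C^m$ carrying $\Theta_0$ into the product $\rr{R}$-structure $(x,w)\mapsto(x,\bar w)$, so $(\bb{E},\Theta_0)$ is an $\rr{R}$-bundle over $(X,{\rm Id}_X)$: properties $(\rr{R}_1)$--$(\rr{R}_3)$ are immediate, and a real isomorphism $F\simeq F'$ complexifies to an $\rr{R}$-isomorphism. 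This yields a well-defined map ${\rm Vec}^m_\R(X)\to{\rm Vec}^m_{\rr{R}}(X,{\rm Id}_X)$.

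In the opposite direction I would associate to an $\rr{R}$-bundle $(\bb{E},\Theta)$ its fixed-point set $\bb{E}^\Theta:=\{p\in\bb{E}\ |\ \Theta(p)=p\}$, with fibers the real $m$-dimensional spaces $\bb{E}^\Theta_x$. The key step, and the main technical obstacle, is local triviality of $\bb{E}^\Theta$ as a real vector bundle. Here I would fix $x_0\in X$, choose a basis $e_1,\ldots,e_m$ of $\bb{E}^\Theta_{x_0}$, pick local sections $s_1,\ldots,s_m$ of $\bb{E}$ near $x_0$ with $s_i(x_0)=e_i$ (using local triviality of $\bb{E}$ as a complex bundle), and symmetrize. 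Because $\tau={\rm Id}_X$, $\Theta\circ s_i$ is again a local section ($\Theta(s_i(x))\in\bb{E}_{\tau(x)}=\bb{E}_x$), so $\sigma_i:=\tfrac12(s_i+\Theta\circ s_i)$ makes sense; since $\Theta^2={\rm Id}$ each $\sigma_i$ is $\Theta$-fixed, $\sigma_i(x_0)=\tfrac12(e_i+\Theta(e_i))=e_i$, and by continuity the $\sigma_i$ stay $\R$-linearly independent on a neighborhood of $x_0$, furnishing a local real frame. This exhibits $\bb{E}^\Theta\in{\rm Vec}^m_\R(X)$, and functoriality is clear since any $\rr{R}$-morphism commutes with $\Theta$ and hence maps fixed points to fixed points.

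Finally I would verify the two constructions are mutually inverse up to isomorphism. On one hand $(F\otimes_\R\C)^{\Theta_0}\simeq F$, the fixed set of conjugation being $F\otimes_\R\R$. On the other hand the complex-linear fiberwise map $\bb{E}^\Theta\otimes_\R\C\to\bb{E}$, $p\otimes z\mapsto z\cdot p$, is an isomorphism (each $\bb{E}_x$ is the complexification of $\bb{E}^\Theta_x$) and intertwines the $\rr{R}$-structures, since $\Theta(z\,p)=\bar z\,\Theta(p)=\bar z\,p$ for $p$ fixed, matching $\Theta_0(p\otimes z)=p\otimes\bar z$. Hence the induced maps on equivalence classes are inverse bijections, proving ${\rm Vec}^m_\R(X)\simeq{\rm Vec}^m_{\rr{R}}(X,{\rm Id}_X)$. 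I expect the only genuine work to be the symmetrization/local-triviality argument; everything else is formal.
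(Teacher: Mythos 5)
Your proof is correct and takes essentially the same route as the paper's own (sketched) argument: complexification $F\mapsto F\otimes_\R\C$ and realification $(\bb{E},\Theta)\mapsto\bb{E}^\Theta$ as mutually inverse constructions. The only difference is that you supply the local-triviality detail for $\bb{E}^\Theta$ via the symmetrized frame $\sigma_i=\tfrac12(s_i+\Theta\circ s_i)$, which the paper leaves implicit in its sketch; that argument is sound.
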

\proof[{proof} (sketch of)]
Let $\rr{c}:{\rm Vec}_{\R}^m(X)\to{\rm Vec}_{\rr{R}}^m(X,{\rm Id}_X)$ be the \emph{complexification} of 
a real vector bundle $\rr{c}:\bb{E}\mapsto\bb{E}\otimes_\R\C$ with  $\C$ endowed with the standard complex conjugation. For the opposite direction let us define the \emph{realification} $\rr{r}:{\rm Vec}^m_{\rr{R}}(X,{\rm Id}_X)\to{\rm Vec}^m_{\R}(X)$ by the map
 $\rr{r}:\bb{E}\mapsto\bb{E}^\Theta$ where $\bb{E}^\Theta$ is the set of fixed points of $\bb{E}$ with respect $\Theta$ which has a real linear structure in each fiber.
The identity $\bb{E}^\Theta\otimes_\R\C\equiv\bb{E}$ proves the equality $\rr{c}\circ\rr{r}={\rm Id}$.
The inverse equality $\rr{r}\circ\rr{c}={\rm Id}$ follows from $(\bb{E}\otimes_\R\C)^\Theta\equiv \bb{E}$.
\qed

\medskip

This argument was initially sketched in  \cite[Section 1]{atiyah-66} and a $K$-theoretic version of it is
proved in \cite[Section 3.6]{luke-mishchenko-98}. 
We will provide a different (but equivalent) justification  in Remark \ref{rk:iso_categ}. Let us point out that
this result justifies  the name \virg{Real} vector bundle (in the category of involutive spaces) for elements in ${\rm Vec}_{\rr{R}}(X,\tau)$.

\medskip

Given a \virg{Real} bundle $(\bb{E},\Theta)$ over the involutive space $(X,\tau)$ we can \virg{forget} the $\rr{R}$-structure
and consider only the complex vector bundle $\bb{E}\to X$. This forgetting procedure goes through equivalence classes.
\begin{proposition}\label{prop:Rv101}
The process of forgetting the \virg{Real} structure defines a map
$$
\jmath\;:\;{\rm Vec}^m_{\rr{R}}(X,\tau)\;\longrightarrow \; {\rm Vec}^m_{\C}(X)
$$
such that $\jmath:0\to0$ where  $0$  denotes the trivial class.
\end{proposition}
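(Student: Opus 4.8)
The plan is to verify exactly two things: that $\jmath$ is well defined as a map of \emph{sets} of isomorphism classes, and that it carries the distinguished trivial class to the trivial class. First I would define $\jmath$ on representatives by sending a \virg{Real} bundle $(\bb{E},\Theta)$ over $(X,\tau)$ to the underlying complex vector bundle $\bb{E}\to X$, simply discarding the involutive datum $\Theta$. The substantive point is then to check that this assignment descends to equivalence classes, i.e.\ that $\rr{R}$-isomorphic bundles have isomorphic underlying complex bundles. This follows directly from the definitions recalled in Section~\ref{sect:Rs-vb_main}: an $\rr{R}$-morphism $f\colon(\bb{E},\Theta)\to(\bb{E}',\Theta')$ is by definition an ordinary vector bundle morphism (continuous, fiber-preserving, and fiberwise complex-linear) which in addition satisfies the intertwining relation $f\circ\Theta=\Theta'\circ f$. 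Forgetting this last condition leaves precisely a morphism in ${\rm Vec}^m_\C(X)$; and since the inverse of an $\rr{R}$-isomorphism is again an $\rr{R}$-morphism, an $\rr{R}$-isomorphism is in particular a complex vector bundle isomorphism. Hence $(\bb{E},\Theta)\simeq(\bb{E}',\Theta')$ as $\rr{R}$-bundles implies $\bb{E}\simeq\bb{E}'$ as complex bundles, and $\jmath$ is well defined.

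For the second assertion $\jmath\colon 0\to 0$, I would just unwind the definition of the trivial class in each category. The zero element of ${\rm Vec}^m_{\rr{R}}(X,\tau)$ is, by the convention fixed in Section~\ref{sect:Rs-vb_main}, the class of the \virg{Real} product bundle $X\times\C^m\to X$ equipped with the product $\rr{R}$-structure $\Theta_0(x,{\rm v})=(\tau(x),\overline{{\rm v}})$. Applying $\jmath$ discards $\Theta_0$ and returns the complex vector bundle $X\times\C^m\to X$, which is by definition the trivial element $0\in{\rm Vec}^m_\C(X)$. Thus the trivial $\rr{R}$-class is sent to the trivial complex class, as claimed.

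I do not expect any serious obstacle in this argument: the whole content is that the forgetful assignment is functorial with respect to the two notions of morphism, and this is essentially built into the definition of an $\rr{R}$-morphism as a complex vector bundle morphism carrying one additional compatibility constraint. The only step deserving explicit care is the passage to equivalence classes, which is why I would phrase the well-definedness exactly as above — stressing that both $f$ and $f^{-1}$ remain vector bundle morphisms after forgetting $\Theta$. Everything else is a direct reading of the definitions, with no homotopy-theoretic or cohomological input required.
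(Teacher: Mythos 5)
Your proposal is correct and follows essentially the same route as the paper's own (very brief) proof: an $\rr{R}$-isomorphism is in particular an isomorphism of the underlying complex vector bundles, so $\jmath$ descends to equivalence classes. Your additional explicit check that the product $\rr{R}$-bundle $(X\times\C^m,\Theta_0)$ forgets to the trivial complex bundle is the obvious unwinding of the definitions that the paper leaves implicit.
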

\proof
The map $\jmath$ is well defined since  an $\rr{R}$-isomorphism between two $\rr{R}$-bundles is, in particular, an isomorphism of complex vector bundles plus an extra condition of equivariance which is lost under the process of forgetting the \virg{Real} structure. 	
\qed

\medskip

\begin{remark}\label{rk:Jmap}{\upshape
In general the map $\jmath$ is neither injective nor surjective depending on the nature of the involutive space $X$. In the case of TR-involution on $\n{S}^2$ a look at the Table \ref{tab:01}.1 shows that ${\rm Vec}^m_{\rr{R}}(\n{S}^2,\tau)=0$ and ${\rm Vec}^m_{\C}(\n{S}^2)\simeq\Z$ and so the map
$\jmath:{\rm Vec}^m_{\rr{R}}(\n{S}^2,\tau)\to {\rm Vec}^m_{\C}(\n{S}^2)$ cannot be surjective. On the other hand we know that
\beql{eq:comp_VB_S1}
{\rm Vec}^m_{\C}(\n{S}^1)\;=\;0\qquad\quad\forall\ m\in\N\;.
\eeq
Moreover, the study of the Picard group for real vector bundles provides the isomorphism
\begin{equation}\label{eq:clasLin_bunR}
w_1\;:\; {\rm Vec}^1_\R(X)\;\longrightarrow\;H^1(X,\Z_2)
\end{equation}
given by the \emph{Stiefel-Whitney
class}  $w_1$. Using the identification in {Proposition} \ref{prop:Rv1} one has
$$
{\rm Vec}^1_{\rr{R}}(\n{S}^1,{\rm Id}_X)\;\simeq\;{\rm Vec}^1_\R(\n{S}^1)\;\simeq\;H^1(\n{S}^1,\Z_2)\;\simeq\;\Z_2
$$
where the non-trivial element is the \emph{M\"{o}bius bundle}. A comparison with \eqref{eq:comp_VB_S1} shows that the map
$\jmath:{\rm Vec}^1_{\rr{R}}(\n{S}^1,{\rm Id}_X)\to {\rm Vec}^1_{\C}(\n{S}^1)$ cannot be injective in this case.
}\hfill $\blacktriangleleft$
\end{remark}

\subsection{\virg{Real} sections}\label{sect:Rs-vb_sect}
Let $\Gamma(\bb{E})$ be the set of  \emph{sections}
of an $\rr{R}$-bundle $(\bb{E},\Theta)$ over the involute space $(X,\tau)$. We recall that a section $s$
is a continuous map $s:X\to\bb{E}$ such that $\pi\circ s={\rm Id}_X$. The set $\Gamma(\bb{E})$ has the structure of a module over the algebra $C(X)$  \cite[Chapter 3, Proposition 1.6]{husemoller-94} 
and inherits from the \virg{Real} structure of $(\bb{E},\Theta)$  an \emph{anti}-linear involution $\tau_\Theta: \Gamma(\bb{E})\to \Gamma(\bb{E})$ defined by 
$$
\tau_\Theta(s)\;:=\;\Theta\;\circ\; s\;\circ\;\tau\;.
$$
 A {\virg{Real} section}, or
{$\rr{R}$-section}, is an element  $s\in\Gamma(\bb{E})$ such that $\tau_\Theta(s)=s$, namely it is a fixed point of $\tau_\Theta$. For this reason we denote  by $\Gamma(\bb{E})^{\Theta}$ the subset of  $\rr{R}$-sections. For a given $s\in \Gamma(\bb{E})$  the two combinations $s+\tau_\Theta(s)$ and ${\rm i}(s-\tau_\Theta(s))$ are in $\Gamma(\bb{E})^{\Theta}$ and they are independent over $C(X,\R)$ (the algebra of real-valued continuous function on $X$).  
Therefore, the full $C(X)$-module $\Gamma(\bb{E})$ coincides with the complexification of the $C(X,\R)$-module $\Gamma(\bb{E})^{\Theta}$.

\medskip

The product $\rr{R}$-bundle $(X\times \C^m,\Theta_0)$ over   $(X, \tau)$ has a special family of sections $\{r_1,\ldots,r_m\}$ given by $r_j:x\mapsto (x,{\rm e}_j)$ with ${\rm e}_j:=(0,\ldots,0,1,0,\ldots,0)$ the $j$-th vector of the canonical  basis of $\C^m$. Clearly, these are $\rr{R}$-sections since 
$$
\tau_{\Theta_{0}}(r_j)(x)\;=\;\Theta_{0}\big(\tau(x),{\rm e}_j\big)\;=\;(x,\overline{{\rm e}}_j)=\; r_j(x)
$$ 
where we exploited the reality  of the canonical basis ${\rm e}_j=\overline{{\rm e}}_j$. Moreover, 
for all $x\in X$ the vectors $\{r_1(x),\ldots,r_m(x)\}$ are a complete basis for the fiber $\{x\}\times\C^m$ over $x$. 
A set of $\rr{R}$-sections which provides a complete system of vectors in each fiber is called a \emph{global $\rr{R}$-frame}. 
For complex (resp. real) vector bundles the presence of a global frame of complex (resp. real) sections is equivalent to the 
triviality of the vector bundle \cite[Theorem 2.2]{milnor-stasheff-74}. A similar fact holds true also for $\rr{R}$-bundles.

\begin{theorem}[$\rr{R}$-triviality]
\label{theo:triv_glob}
An $\rr{R}$-bundle $(\bb{E},\Theta)$ over $(X,\tau)$ is $\rr{R}$-trivial if and only if it  admits a global $\rr{R}$-frame.
\end{theorem}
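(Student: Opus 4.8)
The plan is to mimic the classical proof that a vector bundle is trivial if and only if it admits a global frame (\cf \cite[Theorem 2.2]{milnor-stasheff-74}), reducing the statement to the underlying complex vector bundle and then checking that the relevant trivialization is compatible with the \virg{Real} structures. The only genuinely new ingredient, compared with the complex or real case, is the bookkeeping of the anti-linearity of $\Theta$ together with the $\tau$-twist hidden in the definition of $\tau_\Theta$.

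For the direction \virg{$\rr{R}$-trivial $\Rightarrow$ global $\rr{R}$-frame}, I would start from an $\rr{R}$-isomorphism $\phi:(\bb{E},\Theta)\to(X\times\C^m,\Theta_0)$ and pull back the canonical $\rr{R}$-frame $\{r_1,\ldots,r_m\}$ of the product $\rr{R}$-bundle, setting $s_j:=\phi^{-1}\circ r_j$. Since $\phi$ restricts to a linear isomorphism on each fiber, the $s_j(x)$ form a basis of $\bb{E}_x$ for every $x$, so $\{s_1,\ldots,s_m\}$ is a global frame. It remains to check that each $s_j$ is an $\rr{R}$-section. From $\phi\circ\Theta=\Theta_0\circ\phi$ one gets $\Theta\circ\phi^{-1}=\phi^{-1}\circ\Theta_0$, whence
$$\tau_\Theta(s_j)=\Theta\circ\phi^{-1}\circ r_j\circ\tau=\phi^{-1}\circ(\Theta_0\circ r_j\circ\tau)=\phi^{-1}\circ r_j=s_j,$$
using that the $r_j$ are $\rr{R}$-sections of the product bundle, \ie $\tau_{\Theta_0}(r_j)=r_j$.

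For the converse, suppose $\{s_1,\ldots,s_m\}$ is a global $\rr{R}$-frame and define $\psi:X\times\C^m\to\bb{E}$ by $\psi(x,(\lambda_1,\ldots,\lambda_m)):=\sum_{j=1}^m\lambda_j\,s_j(x)$. By the classical argument this $\psi$ is a continuous, fiber-preserving, fiberwise-linear map which is a fiberwise isomorphism (because the $s_j(x)$ form a basis), hence a vector bundle isomorphism; continuity of the inverse is the standard local computation of \cite[Theorem 2.2]{milnor-stasheff-74}. The key point is the equivariance $\Theta\circ\psi=\psi\circ\Theta_0$. The $\rr{R}$-section condition $\tau_\Theta(s_j)=s_j$, that is $\Theta\circ s_j\circ\tau=s_j$, evaluated at $\tau(x)$ gives the pointwise identity $\Theta\big(s_j(x)\big)=s_j(\tau(x))$. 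Using anti-linearity of $\Theta$ on fibers one then computes
$$\Theta\big(\psi(x,\lambda)\big)=\sum_{j=1}^m\overline{\lambda_j}\,\Theta\big(s_j(x)\big)=\sum_{j=1}^m\overline{\lambda_j}\,s_j(\tau(x))=\psi\big(\tau(x),\overline{\lambda}\big)=\psi\big(\Theta_0(x,\lambda)\big),$$
so $\psi$ is an $\rr{R}$-isomorphism and $(\bb{E},\Theta)$ is $\rr{R}$-trivial.

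I expect no serious obstacle here: the construction of $\psi$ and of the frame $\{s_j\}$ is exactly the classical one, and the whole content is the equivariance verification. The one step that needs care is translating the fixed-point condition $\tau_\Theta(s_j)=s_j$, which involves both $\Theta$ and a precomposition with $\tau$, into the pointwise relation $\Theta(s_j(x))=s_j(\tau(x))$; getting the arguments $x$ versus $\tau(x)$ right, and remembering that $\Theta$ is anti-linear (so scalars get conjugated, matching the complex conjugation built into $\Theta_0$), is the only place where a sign or an argument could be misplaced.
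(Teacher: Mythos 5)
Your proof is correct and follows essentially the same route as the paper's own argument: both directions mimic the classical frame-triviality correspondence of \cite[Theorem 2.2]{milnor-stasheff-74}, pulling back the canonical $\rr{R}$-frame $\{r_j\}$ for one implication and building the fiberwise-linear map from the frame for the other, with the same equivariance computations. The only (harmless) differences are that you orient the isomorphism as $\bb{E}\to X\times\C^m$ rather than the reverse, and your verification of $\Theta\circ\psi=\psi\circ\Theta_0$ for arbitrary scalars $\lambda$ is slightly more complete than the paper's check on the basis vectors ${\rm e}_j$ alone, since it makes the anti-linearity bookkeeping explicit.
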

\proof
Let us assume that  $(\bb{E},\Theta)$ is $\rr{R}$-trivial. This means that there is an $\rr{R}$-isomorphism  $f:X\times \C^m\to\bb{E}$ between $(\bb{E},\Theta)$ and the product $\rr{R}$-bundle $(X\times \C^m,\Theta_0)$. Let us define sections  $s_j\in \Gamma(\bb{E})$ by $s_j:=f\circ r_j$ where $\{r_1,\ldots,r_m\}$ is the global $\rr{R}$-frame of $(X\times \C^m,\Theta_0)$. The fact that $f$ is an isomorphism implies that $\{{s}_1,\ldots,{s}_m\}$ spans each fiber of $\bb{E}$. Moreover, these are $\rr{R}$-sections since $\Theta\circ f=f\circ \Theta_{0}$ and so
$
\tau_{\Theta}(s_j)=\Theta\circ f\circ r_j\circ \tau=f\circ\Theta_{0}\circ r_j\circ\tau=f\circ r_j=s_j
$.

Conversely, let us assume that $(\bb{E},\Theta)$ has a  global $\rr{R}$-frame $\{{s}_1,\ldots,{s}_m\}$. For each $x\in X$ we can set the linear isomorphism $f_x:\{x\}\times\C^m\to \bb{E}_x$ defined by $f_x(x,{\rm e}_j):=s_j(x)$. The collection of $f_x$ defines an isomorphism $f:X\times\C^m\to \bb{E}$ between complex vector bundles \cite[Theorem 2.2]{milnor-stasheff-74}. Moreover, 
$\Theta\circ f(x,{\rm e}_j)=\Theta\circ s_j(x)=s_j(\tau(x))=f(\tau(x),{\rm e}_j)=f\circ\Theta_{0}(x,{\rm e}_j)$ for all $x\in X$
and all $j=1,\ldots,m$, namely  $f$ is an $\rr{R}$-isomorphism.
\qed

\subsection{Homotopy classification of \virg{Real} vector bundles}\label{sect:Rs-vb_homotopy}
The compactness assumption for the base space $X$ (\cf Assumption \ref{ass:1})   allows us to extend  usual homotopy properties valid  for complex vector bundles  to the category of $\rr{R}$-bundles.
\begin{lemma}[Extension]\label{lemma:ext}
Let $(X,\tau)$ be an involutive space and assume that $X$ verifies Assumption \ref{ass:1}. Let
 $(\bb{E},\Theta)$  be an $\rr{R}$-bundle over $(X,\tau)$ and $Y\subset X$  a closed subset such that $\tau(Y)=Y$.
 Then each  $\rr{R}$-section $\tilde{s}:Y\to \left.\bb{E}\right|_Y$ extends to an $\rr{R}$-section $s:X\to\bb{E}$.
\end{lemma}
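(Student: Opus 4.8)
The plan is to reduce the statement to the corresponding (non-equivariant) extension property for complex vector bundles and then to \emph{symmetrize} the resulting section so as to restore compatibility with the \virg{Real} structure. Concretely, I would proceed in two steps: first forget the involution $\Theta$ and extend $\tilde{s}$ to an ordinary continuous section $s_0 \in \Gamma(\bb{E})$; then average $s_0$ against the anti-linear involution $\tau_\Theta$ to produce a genuine $\rr{R}$-section.

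For the first step, recall that $X$ is compact Hausdorff (hence normal and paracompact) by Assumption \ref{ass:1} and that $Y \subset X$ is closed. Under these hypotheses any continuous section of the underlying complex vector bundle defined over $Y$ extends to all of $X$. A clean way to see this is to use compactness of $X$ to realize $\bb{E}$ as a subbundle of a trivial bundle $X \times \C^N$, with fiberwise (orthogonal) projection $P(x) : \C^N \to \bb{E}_x$; one then regards $\tilde{s}$ as a continuous $\C^N$-valued map on $Y$, extends it componentwise by the Tietze theorem to a map $f : X \to \C^N$, and sets $s_0(x) := P(x)\,f(x)$. Since $P(y)\,\tilde{s}(y) = \tilde{s}(y)$ for $y \in Y$, the section $s_0$ restricts to $\tilde{s}$ on $Y$; at this stage no compatibility with $\Theta$ is claimed.

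The heart of the argument is the second step. I would define
\begin{equation}
s \; := \; \tfrac{1}{2}\big(s_0 + \tau_\Theta(s_0)\big) , \qquad \tau_\Theta(s_0) = \Theta \circ s_0 \circ \tau .
\end{equation}
This is well posed fiberwise: by property $(\rr{R}_1)$ the map $\Theta$ sends $\bb{E}_{\tau(x)}$ to $\bb{E}_{x}$, so $\tau_\Theta(s_0)(x) = \Theta\big(s_0(\tau(x))\big)$ lies in the same fiber $\bb{E}_x$ as $s_0(x)$, and the fiberwise average is a continuous section because $\Theta$, $\tau$ and the vector-space operations are continuous. Using that $\tau_\Theta$ is anti-linear and satisfies $\tau_\Theta^2 = {\rm Id}$, I would compute
\begin{equation}
\tau_\Theta(s) \; = \; \tfrac{1}{2}\big(\tau_\Theta(s_0) + \tau_\Theta^2(s_0)\big) \; = \; \tfrac{1}{2}\big(\tau_\Theta(s_0) + s_0\big) \; = \; s ,
\end{equation}
so that $s \in \Gamma(\bb{E})^{\Theta}$ is an $\rr{R}$-section. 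Finally, on the invariant closed set $Y$ the Reality of $\tilde{s}$ gives $\tau_\Theta(s_0)|_Y = \tau_\Theta(\tilde{s}) = \tilde{s} = s_0|_Y$, whence $s|_Y = \tilde{s}$, as required.

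The only genuinely technical input is the classical extension property invoked in the first step, which is entirely standard; the conceptual point — and the step I would present most carefully — is the symmetrization, whose success hinges on the fact that $\tau_\Theta$ is an anti-linear involution of $\Gamma(\bb{E})$ and that division by $2$ is available in the complex fibers. I do not expect any serious obstacle beyond checking that the averaged section remains continuous and fiber-preserving, which follows immediately from continuity of $\Theta$ and $\tau$.
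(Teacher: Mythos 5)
Your proof is correct and takes essentially the same route as the paper: extend the section non-equivariantly (the paper delegates this step to \cite[Lemma 1.1]{atiyah-bott-64} or \cite[Chapter 2, Theorem 7.1]{husemoller-94}, where you instead spell it out via embedding in a trivial bundle and Tietze) and then symmetrize by the average $s=\tfrac{1}{2}\big(s_0+\tau_\Theta(s_0)\big)$, which is exactly the paper's construction. The verification that $s$ is an $\rr{R}$-section and that $s|_Y=\tilde{s}$ (using $\tau(Y)=Y$ and the Reality of $\tilde{s}$) matches the paper's argument, only with more detail.
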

\proof
The proof follows from the analogous statement for complex vector bundles  \cite[Lemma 1.1]{atiyah-bott-64} or \cite[Chapter 2, Theorem 7.1]{husemoller-94} which assures that $\tilde{s}$ can be extended to a section $t:X\to\bb{E}$. Now, in order to get an $\rr{R}$-section it is enough to consider $s:=\frac{1}{2}(t+\tau_\Theta(t))$. Evidently   $\left.s\right|_Y=\tilde{s}$.
\qed

\medskip

An $\rr{R}$-bundle is locally trivial in the category  of complex vector bundles by definition. Less obvious is that  an $\rr{R}$-bundle is also locally trivial in the category of vector bundles over an involutive space.
\begin{proposition}[Local $\rr{R}$-triviality]
\label{prop:loc_triv}
Let $(\bb{E},\Theta)$ be an $\rr{R}$-bundle
over the involutive space $(X,\tau)$ such that $X$ verifies Assumption \ref{ass:1}.
 Then $\pi:\bb{E}\to X$ is \emph{locally $\rr{R}$-trivial}, meaning that for all $x\in X$ there exists a $\tau$-invariant neighborhood $U$ of $x$ and an $\rr{R}$-isomorphism $h:\pi^{-1}(U)\to U\times\C^m$ where the product bundle $U\times\C^m\to U$ is endowed with the trivial $\rr{R}$-structure given by the complex conjugation. Moreover, if $x=\tau(x)$  the neighborhood $U$  can be chosen connected. If $x\neq\tau(x)$ then $U$ can be taken as the union of two disjoint open sets $U:={U}'\cup{U}''$ with $x\in {U}'$
and $\tau:{U}'\to {U}''$ a homeomorphism.
\end{proposition}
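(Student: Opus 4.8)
The plan is to reduce local $\rr{R}$-triviality to the existence of a local $\rr{R}$-frame, so that the $\rr{R}$-isomorphism $h$ is furnished directly by the $\rr{R}$-triviality criterion of Theorem~\ref{theo:triv_glob}. The construction of such a frame is uniform in the two cases and rests on the Extension Lemma~\ref{lemma:ext}; the two distinct shapes of the neighbourhood ($U$ connected versus $U=U'\cup U''$) are then extracted by elementary point-set arguments.

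First I would build an $\rr{R}$-frame over a closed $\tau$-invariant finite set $Y$ containing $x$. If $x=\tau(x)$, then $\bb{E}_x$ is the complexification of the real space $\bb{E}^\Theta_x$ of $+1$-eigenvectors of $\Theta_x$, so I may pick $v_1,\ldots,v_m$ with $\Theta_x v_j=v_j$ forming a complex basis of $\bb{E}_x$, and set $\tilde s_j(x):=v_j$ over $Y:=\{x\}$. If $x\neq\tau(x)$, I take any basis $v_1,\ldots,v_m$ of $\bb{E}_x$ and set $\tilde s_j(x):=v_j$, $\tilde s_j(\tau(x)):=\Theta(v_j)$ over $Y:=\{x,\tau(x)\}$; since $\Theta$ restricts to an anti-linear bijection $\bb{E}_x\to\bb{E}_{\tau(x)}$, the vectors $\{\Theta(v_j)\}$ are a basis of $\bb{E}_{\tau(x)}$, and the identity $\tau_\Theta(\tilde s_j)=\tilde s_j$ follows at once from $\Theta^2=\mathrm{Id}$. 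In both cases Lemma~\ref{lemma:ext} (where Assumption~\ref{ass:1} enters) extends each $\tilde s_j$ to a global $\rr{R}$-section $s_j\colon X\to\bb{E}$.

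The key observation is that the \emph{frame locus} $W:=\{y\in X\mid s_1(y),\ldots,s_m(y)\ \text{are linearly independent}\}$ is open and automatically $\tau$-invariant. Indeed, the defining relation $\tau_\Theta(s_j)=s_j$ reads $s_j(\tau(y))=\Theta(s_j(y))$, and an anti-linear isomorphism of fibres sends a basis to a basis, so $y\in W$ implies $\tau(y)\in W$. By construction $x\in W$, and on any $\tau$-invariant open subset of $W$ the sections $\{s_j\}$ restrict to a global $\rr{R}$-frame, to which Theorem~\ref{theo:triv_glob} associates the sought $\rr{R}$-isomorphism onto the product $\rr{R}$-bundle.

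It remains to carve out of $W$ a neighbourhood of the prescribed shape, which I regard as the only genuinely delicate bookkeeping. When $x=\tau(x)$, I let $U$ be the connected component of $W$ containing $x$; as $\tau$ is a homeomorphism preserving $W$, it permutes the components of $W$, and since it fixes $x$ it must fix $U$, so $U$ is connected and $\tau$-invariant. When $x\neq\tau(x)$, I use that $X$ is Hausdorff (Assumption~\ref{ass:1}) to find an open $V'\ni x$ with $V'\cap\tau(V')=\emptyset$ (for instance $V':=P\cap\tau(Q)$ with $P\ni x$, $Q\ni\tau(x)$ disjoint open), and set $U':=W\cap V'$ and $U'':=\tau(U')=W\cap\tau(V')$. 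Then $U:=U'\cup U''$ is a $\tau$-invariant open subset of $W$ with $U'\cap U''=\emptyset$, $x\in U'$, and $\tau\colon U'\to U''$ a homeomorphism. In either case a final application of Theorem~\ref{theo:triv_glob} on $(U,\tau)$ completes the proof.
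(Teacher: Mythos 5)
Your proof is correct, and its engine is the same one the paper borrows from Atiyah: extend fiberwise data over the closed $\tau$-invariant set $\{x\}$ or $\{x,\tau(x)\}$ to global equivariant data via the Extension Lemma \ref{lemma:ext}, then restrict to the open set where that data is nondegenerate. The packaging differs, though. The paper's sketch (following \cite[pg.~374]{atiyah-66}) extends a single $\rr{R}$-section of the homomorphism bundle ${\rm Hom}_\C(X\times\C^m,\bb{E})$ --- that is, an $\rr{R}$-morphism $X\times\C^m\to\bb{E}$ --- and restricts to its invertibility locus, whereas you extend the $m$ sections of $\bb{E}$ separately and finish with the frame criterion of Theorem \ref{theo:triv_glob}. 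The two are equivalent data (an $\rr{R}$-morphism out of the product $\rr{R}$-bundle is precisely an $m$-tuple of $\rr{R}$-sections, and its invertibility locus is your frame locus $W$), but your route buys a little: it never needs the $\rr{R}$-structure on the Hom-bundle, it recycles Theorem \ref{theo:triv_glob} which the paper has already proved, and openness of $W$ becomes a bare determinant argument in a local trivialization. You also make explicit the point-set steps producing the two neighbourhood shapes, which the paper leaves inside the citation; your separation argument for $x\neq\tau(x)$ and the component argument for $x=\tau(x)$ are both sound. The one justification you elide is that the connected component $U$ of $W$ containing a fixed point $x$ is actually open: components of open sets are open only in locally connected spaces, so you should note that Assumption \ref{ass:1} supplies this, since a CW complex is locally contractible and hence locally connected.
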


The proof of this result is explained in  \cite[pg. 374]{atiyah-66} and uses the Extension Lemma \ref{lemma:ext} together with
 the fact that each morphism $f:\bb{E}\to\bb{E}'$ between complex vector bundles over the same base space  $X$ can be identified with a section of the \emph{homomorphism bundle} ${\rm Hom}_\C(\bb{E},\bb{E}')\to X$.

\begin{remark}[Equivariant metric]\label{rk:eq_metric}{\upshape
Under the assumption of 
Proposition \ref{prop:loc_triv} let $\{U_i\}$ be  a $\tau$-invariant open covering of $(X,\tau)$ which trivializes locally the $\rr{R}$-bundle
$(\bb{E},\Theta)$. Let $\{{\rho}'_i\}$ be any \emph{partition of unity} subordinate to $\{U_i\}$. The collection $\{{\rho}_i\}$
given by ${\rho}_i(x):=\frac{1}{2}\big[{\rho}'_i(x)+{\rho}'_i(\tau(x))\big]$ provides a $\tau$\emph{-invariant} partition of unity subordinate to the same covering. Let $\bb{E}\times_X \bb{E}:=\{(p_1,p_2)\in\bb{E}\times \bb{E}\ |\ \pi(p_1)=\pi(p_2)\}$.
With the help of $\{{\rho}_i\}$ one can define a \emph{Hermitian metric} 
$\rr{m}:\bb{E}\times_X \bb{E}\to\C$
on the complex vector bundle $\pi:\bb{E}\to X$ following the standard construction in \cite[Chapter I, Theorem 8.7]{karoubi-97}. The $\tau$-invariance of the partition of unit implies $\rr{m}\big(\Theta(p_1),\Theta(p_2)\big)=\rr{m}\big(p_2,p_1\big)$ for all $(p_1,p_2)\in\bb{E}\times_X \bb{E}$. In this sense the involution $\Theta$ acts as an \virg{anti-unitary} map. The choice of such a metric is essentially unique up to isomorphisms (compare with \cite[Chapter I, Theorem 8.8]{karoubi-97}).
}\hfill $\blacktriangleleft$
\end{remark}

Let $(X_1,\tau_1)$ and $(X_2,\tau_2)$ be two involutive spaces. A map $\varphi:X_1\to X_2$ is called \emph{equivariant} if $\varphi\circ\tau_1=\tau_2\circ \varphi$. An \emph{equivariant homotopy} between  equivariant maps $\varphi_0$ and $\varphi_1$ is a continuous map $F:[0,1]\times X_1\to X_2$ such that each $\varphi_t(\cdot):=F(t,\cdot)$ is an equivariant map for all $t\in[0,1]$. The set of the equivalence classes of equivariant homotopic maps between $(X_1,\tau_1)$ and $(X_2,\tau_2)$ will be denoted by $[X_1,X_2]_{\rm eq}$.
\begin{theorem}[Homotopy property]
\label{theo:equi_homot1}
Let $(X_1,\tau_1)$ and $(X_2,\tau_2)$ be two involutive spaces with  $X_1$ verifying  Assumption \ref{ass:1}. Let $(\bb{E},\Theta)$ be a rank $m$ $\rr{R}$-bundle over $(X_2,\tau_2)$ and $F:[0,1]\times X_1\to X_2$ an equivariant homotopy between the equivariant maps $\varphi_0$ and $\varphi_1$. Then, the pullback bundles $\varphi_t^\ast\bb{E}\to X_1$ have an induced $\rr{R}$-structure for all $t\in[0,1]$ and
$\varphi_0^\ast\bb{E}\simeq \varphi_1^\ast\bb{E}$ in  ${\rm Vec}_{\rr{R}}^m(X_1,\tau_1)$.
\end{theorem}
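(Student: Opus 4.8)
The plan is to reduce everything to a single ``end-to-end'' isomorphism over the cylinder $X_1\times[0,1]$ and then to prove that by an equivariant reworking of the classical homotopy argument, in which the Extension Lemma \ref{lemma:ext} is the one genuinely new ingredient. First I would record the induced $\rr{R}$-structure. For any equivariant map $\psi\colon(X_1,\tau_1)\to(X_2,\tau_2)$ the pullback $\psi^\ast\bb{E}=\{(x,p)\in X_1\times\bb{E}\mid \psi(x)=\pi(p)\}$ carries the fibrewise map $(x,p)\mapsto(\tau_1(x),\Theta(p))$; equivariance of $\psi$ together with property $(\rr{R}_1)$ of $\Theta$ guarantees that the image again lies in $\psi^\ast\bb{E}$, and $(\rr{R}_1)$--$(\rr{R}_3)$ are inherited verbatim. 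Applied to each $\varphi_t$ this gives the first assertion. Now give $I:=[0,1]$ the trivial involution, so that $(X_1\times I,\ \tau_1\times{\rm Id})$ is an involutive space satisfying Assumption \ref{ass:1}, and observe that $F$ is equivariant precisely because each $\varphi_t$ is. Hence $\bb{W}:=F^\ast\bb{E}$ is an $\rr{R}$-bundle over the cylinder with $\bb{W}|_{X_1\times\{t\}}=\varphi_t^\ast\bb{E}$ as $\rr{R}$-bundles, and the theorem becomes the claim that $\bb{W}|_{X_1\times\{0\}}$ and $\bb{W}|_{X_1\times\{1\}}$ are $\rr{R}$-isomorphic.

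The engine is the homomorphism bundle. Given $\rr{R}$-bundles $(\bb{A},\Theta_{\bb{A}})$ and $(\bb{B},\Theta_{\bb{B}})$ over a common involutive space, the complex bundle ${\rm Hom}_\C(\bb{A},\bb{B})$ acquires an $\rr{R}$-structure from $\Xi(\phi):=\Theta_{\bb{B}}\circ\phi\circ\Theta_{\bb{A}}$: this composite of anti-linear, linear, anti-linear maps is $\C$-linear, the assignment $\phi\mapsto\Xi(\phi)$ is anti-linear in $\phi$, and $\Xi^2={\rm Id}$ since the $\Theta$'s square to the identity. A section of ${\rm Hom}_\C(\bb{A},\bb{B})$ is exactly a vector-bundle morphism $\bb{A}\to\bb{B}$, and the fixed-point condition $\tau_\Xi(s)=s$ is exactly the $\rr{R}$-morphism condition $s\circ\Theta_{\bb{A}}=\Theta_{\bb{B}}\circ s$; if such an $\rr{R}$-section is moreover fibrewise invertible it is an $\rr{R}$-isomorphism.

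With this I would prove the local statement: for each $c\in I$ there is $\varepsilon>0$ such that $\bb{W}|_{X_1\times\{c'\}}\simeq_{\rr{R}}\bb{W}|_{X_1\times\{c\}}$ for all $c'\in(c-\varepsilon,c+\varepsilon)\cap I$. Let $q_c\colon X_1\times I\to X_1\times\{c\}$ be the (equivariant) collapse, form the $\rr{R}$-bundle $q_c^\ast(\bb{W}|_{X_1\times\{c\}})$, and consider the $\rr{R}$-bundle $\bb{H}:={\rm Hom}_\C\big(q_c^\ast(\bb{W}|_{X_1\times\{c\}}),\ \bb{W}\big)$ over the cylinder. Over the closed $(\tau_1\times{\rm Id})$-invariant slice $X_1\times\{c\}$ both factors restrict to $\bb{W}|_{X_1\times\{c\}}$, so the identity is an $\rr{R}$-section of $\bb{H}$ there; by the Extension Lemma \ref{lemma:ext} it extends to an $\rr{R}$-section $s$ of $\bb{H}$ over all of $X_1\times I$. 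Invertibility is an open fibrewise condition holding on $X_1\times\{c\}$, and since $X_1$ is compact the tube lemma provides a slab $X_1\times(c-\varepsilon,c+\varepsilon)$ on which $s$ stays invertible; restricting $s$ to $X_1\times\{c'\}$ and using the previous paragraph yields the required $\rr{R}$-isomorphism. A clean connectedness argument then finishes: the set of $c\in I$ with $\bb{W}|_{X_1\times\{c\}}\simeq_{\rr{R}}\bb{W}|_{X_1\times\{1\}}$ is open and closed by the local statement and contains $1$, so by connectedness of $I$ it equals $I$; taking $c=0$ gives $\varphi_0^\ast\bb{E}\simeq_{\rr{R}}\varphi_1^\ast\bb{E}$.

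I expect the main obstacle to be exactly the step where the complex proof would extend a section of the homomorphism bundle: one must produce an extension commuting with the involutions, not an arbitrary one. This is precisely what Lemma \ref{lemma:ext} supplies, since its symmetrisation $s=\tfrac12\big(t+\tau_\Theta(t)\big)$ forces equivariance while preserving the boundary values; thus the seemingly crucial ``Real'' constraint is absorbed into an already-proved lemma. Everything else, openness of invertibility, the tube lemma, and the clopen/connectedness dichotomy, is insensitive to the involution because the latter acts trivially on the interval direction, so the argument is formally parallel to the complex case once the homomorphism bundle is seen to be $\rr{R}$.
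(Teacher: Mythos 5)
Your proof is correct and is essentially the paper's own argument: the paper disposes of this theorem by citing the equivariant generalization of Atiyah--Bott's Proposition 1.3, whose classical proof is precisely the scheme you carry out --- pull back over the cylinder, put the $\rr{R}$-structure $\phi\mapsto\Theta_{\bb{B}}\circ\phi\circ\Theta_{\bb{A}}$ on the homomorphism bundle, extend the identity $\rr{R}$-section off a slice via Lemma \ref{lemma:ext}, and conclude by openness of invertibility, compactness of $X_1$, and connectedness of $[0,1]$. The only difference is that the paper leaves this as a citation (recording, as you do, the induced $\rr{R}$-structure on the pullbacks), whereas you write the argument out in full.
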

This theorem can be proved by a suitable equivariant generalization of \cite[Proposition 1.3]{atiyah-bott-64} (see also \cite[Chapter 1, Theorem 8.15]{tom-dieck-87} for a more general context). We remark only that by definition  $\varphi_t^\ast\bb{E}|_x:= \{x\}\times\bb{E}|_{\varphi_t(x)}$ for all $x\in X_1$. Hence, the equivariance of $\varphi_t$  implies  $\varphi_t^\ast\bb{E}|_{\tau_1(x)}:= \{\tau_1(x)\}\times \bb{E}|_{\tau_2(\varphi_t(x))}$. Then, an $\rr{R}$-structure on $\varphi_t^\ast\bb{E}$ is induced via  pullback by the map $\Theta_t^\ast$ which acts anti-linearly between the fibers $\varphi_t^\ast\bb{E}|_x$ and $\varphi_t^\ast\bb{E}|_{\tau_1(x)}$ as the  product $\tau_1\times\Theta$.

\medskip

Theorem \ref{theo:equi_homot1} is the starting point for a 
homotopy classification   of $\rr{R}$-bundles in the spirit of \eqref{eq:class_compl_VB1}.  
In Example \ref{ex:grassman_inv} we showed how  the complex conjugation endows the Grassmann manifold
 with the structure of an involutive space. Similarly,  the {tautological $m$-plane bundle} $\pi:\bb{T}_m^\infty\to G_m(\C^\infty)$ introduced in Section \ref{sec:hom_vecc_compx}
 can be endowed 
  with the structure of an $\rr{R}$-bundle compatible with the involution $\varrho$. The \emph{anti}-linear map $\Xi:\bb{T}_m^\infty\to\bb{T}_m^\infty$ defined by $\Xi:(\Sigma,{\rm v})\mapsto \big(\varrho(\Sigma),\overline{\rm v}\big)$  verifies
$\pi\circ\Xi=\varrho\circ{\pi}$ and so $(\bb{T}_m^\infty,\Xi)$ becomes an $\rr{R}$-bundle over the involutive space  $(G_m(\C^\infty),\varrho)$. This  is the  \emph{universal} object for a  homotopy classification of $\rr{R}$-bundles.

\begin{theorem}[Homotopy classification \cite{edelson-71}]\label{theo:honotopy_class}
Let $(X,\tau)$ be an involutive space with $X$ verifying  Assumption \ref{ass:1}.
 Each $\rr{R}$-bundle $(\bb{E},\Theta)$ over $(X,\tau)$ can be obtained, up to isomorphisms, as a pullback $\bb{E}\simeq \varphi^\ast \bb{T}_m^\infty$ with respect to a map $\varphi:X\to G_m(\C^\infty)$ which is equivariant $\varphi\circ\tau=\varrho\circ \varphi$. Moreover, the homotopy property implies that $(\bb{E},\Theta)$ depends only on the equivariant homotopy class of $\varphi$, \ie one has  the isomorphism
$$
{\rm Vec}^m_\rr{R}(X,\tau)\;\simeq\;[X,G_m(\C^\infty)]_{\rm eq}\;.
$$
\end{theorem}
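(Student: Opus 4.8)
The plan is to prove the asserted bijection by exhibiting the assignment $[\varphi]\mapsto[\varphi^\ast\bb{T}_m^\infty]$ as well defined, surjective, and injective. Well-definedness is already supplied by the Homotopy property (Theorem \ref{theo:equi_homot1}): equivariantly homotopic maps pull $(\bb{T}_m^\infty,\Xi)$ back to $\rr{R}$-isomorphic bundles. What remains is to produce, for an arbitrary $\rr{R}$-bundle, an equivariant classifying map (surjectivity) and to show that two such maps for $\rr{R}$-isomorphic bundles are equivariantly homotopic (injectivity). Both are equivariant refinements of the classical Gauss-map argument, and the guiding principle throughout is to keep every auxiliary datum compatible with the involutions $\tau$ and $\varrho$.

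For surjectivity I would first manufacture a finite family of global $\rr{R}$-sections spanning every fiber. By the Local $\rr{R}$-triviality (Proposition \ref{prop:loc_triv}) each point admits a $\tau$-invariant neighbourhood over which $(\bb{E},\Theta)$ carries a local $\rr{R}$-frame; compactness of $X$ reduces these to a finite cover $\{U_i\}$, and the construction in Remark \ref{rk:eq_metric} provides a $\tau$-invariant partition of unity $\{\rho_i\}$ subordinate to it. Multiplying each local $\rr{R}$-frame by the real, $\tau$-invariant function $\rho_i$ and extending by zero yields global sections $s_1,\ldots,s_N$ which remain $\rr{R}$-sections (reality of $\rho_i$ is what preserves the fixed-point condition) and which span each fiber. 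Fixing the equivariant Hermitian metric $\rr{m}$ of Remark \ref{rk:eq_metric}, conjugate-linear in its first slot, I would then set $g(v):=\big(\rr{m}(s_1(\pi(v)),v),\ldots,\rr{m}(s_N(\pi(v)),v)\big)\in\C^N$. This map is complex-linear and injective on each fiber, and the identities $s_j(\tau(x))=\Theta(s_j(x))$ together with $\rr{m}(\Theta p_1,\Theta p_2)=\overline{\rr{m}(p_1,p_2)}$ give $g(\Theta(v))=\overline{g(v)}$. Hence $\varphi(x):=g(\bb{E}_x)\in G_m(\C^N)\subset G_m(\C^\infty)$ is equivariant, $\varphi(\tau(x))=\overline{\varphi(x)}=\varrho(\varphi(x))$, and $g$ realizes the $\rr{R}$-isomorphism $\bb{E}\simeq\varphi^\ast\bb{T}_m^\infty$.

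For injectivity, suppose $\varphi_0,\varphi_1$ are equivariant with $\varphi_0^\ast\bb{T}_m^\infty\simeq\varphi_1^\ast\bb{T}_m^\infty$ in ${\rm Vec}^m_\rr{R}(X,\tau)$. Through this isomorphism each $\varphi_i$ corresponds to an $\rr{R}$-Gauss map $g_i:\bb{E}\to\C^\infty$, that is, a fiberwise-injective complex-linear map with $g_i(\Theta(v))=\overline{g_i(v)}$, whose induced classifying map $x\mapsto g_i(\bb{E}_x)$ recovers $\varphi_i$. The goal is to connect $g_0$ and $g_1$ through $\rr{R}$-Gauss maps, which then induces an equivariant homotopy of classifying maps. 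I would run the standard three-step deformation: push $g_0$ into the odd coordinates and $g_1$ into the even coordinates of $\C^\infty$, interpolate by the straight-line homotopy $t\mapsto(1-t)g_0'+tg_1'$, which is fiberwise injective because the two images lie in complementary subspaces, and finally undo the coordinate shifts. The decisive point is that every linear map used—the coordinate shuffles, the rotation joining the two inclusions of $\C^\infty$, and the real convex combinations—has real coefficients, hence commutes with complex conjugation, so each stage preserves the relation $g(\Theta(v))=\overline{g(v)}$ and stays among $\rr{R}$-Gauss maps.

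The step I expect to be the main obstacle is exactly this last compatibility check: the classical contraction of the space of Gauss maps is insensitive to the internal structure of $\C^\infty$, whereas here one must verify that the entire deformation can be effected by transformations defined over $\R$, so that the intertwining with complex conjugation, and thus the $\rr{R}$-structure, is maintained at every instant. Once that is secured, the constructions of the two preceding paragraphs assemble into the required equivalence ${\rm Vec}^m_\rr{R}(X,\tau)\simeq[X,G_m(\C^\infty)]_{\rm eq}$.
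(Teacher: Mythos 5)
Your proposal is correct, but note that the paper itself does not prove this theorem at all: it only equips the tautological bundle $\bb{T}_m^\infty$ with the $\rr{R}$-structure $\Xi:(\Sigma,{\rm v})\mapsto(\varrho(\Sigma),\overline{\rm v})$ and then quotes the classification from Edelson's paper \cite{edelson-71}, so any comparison is with the literature rather than with an in-text argument. What you supply is the standard Gauss-map proof, systematically equivariantized, and it is assembled exactly from the ingredients the paper has already established: Proposition \ref{prop:loc_triv} for $\tau$-invariant local $\rr{R}$-frames, the $\tau$-invariant partition of unity and equivariant metric $\rr{m}$ of Remark \ref{rk:eq_metric} (whose property $\rr{m}(\Theta p_1,\Theta p_2)=\rr{m}(p_2,p_1)$ is precisely what makes your map $g$ satisfy $g(\Theta(v))=\overline{g(v)}$, hence $\varphi\circ\tau=\varrho\circ\varphi$), and Theorem \ref{theo:equi_homot1} for well-definedness. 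Your injectivity argument is also sound, and you correctly isolate the only point where the equivariant setting could cause trouble: the shift operators into odd/even coordinates and the convex interpolations all have real coefficients, hence commute with the fiberwise complex conjugation on $\C^\infty$, so every intermediate map remains an $\rr{R}$-Gauss map and the induced homotopy of classifying maps stays equivariant (the fiberwise injectivity of the interpolations follows from the usual largest-nonzero-coordinate argument, respectively from the complementarity of the odd and even coordinate subspaces). In short: the paper's route is a citation, which is economical and historically accurate; your route makes the statement self-contained, shows that compactness of $X$ (Assumption \ref{ass:1}) suffices, and makes explicit which equivariant structures carry the proof — a genuine addition rather than a duplication.
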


\medskip

As for the complex case (\cf Section \ref{sec:hom_vecc_compx}), one can consider the inductive limit $B\n{U}:=\bigcup_{m=0}^\infty G_m(\C^\infty)$ endowed with the induced involution $\varrho$. This space classifies \emph{stable} \virg{Real} vector bundles independently of the dimension of the fiber, \ie 
\beql{eq:class_compl_RB_stab}
{\rm Vec}_\rr{R}(X,\tau)\;\simeq\;[X,B\n{U}]_{\rm eq}\;.
\eeq
This result, proved in \cite{nagata-nishida-toda-82}, is relevant for the $KR$-theory (\cf Appendix \ref{sect:KR-theory}).

\medskip

\begin{remark}\label{rk:iso_categ}{\upshape
If one consider the trivial involutive space $(X,{\rm Id}_X)$, then the equivariant maps $\varphi:X\to G_m(\C^\infty)$ are characterized  by $\varphi(x)=\overline{\varphi(x)}$ for all $x\in X$.
Since the fixed point set of $G_m(\C^\infty)$ is parameterized by $G_m(\R^\infty)$ (\cf Example \ref{ex:grassman_inv}) one has that $[X,G_m(\C^\infty)]_{\rm eq}\simeq[X,G_m(\R^\infty)]$. 
However, $G_m(\R^\infty)$ is the classifying space for real vector bundles \cite[Chapter 5]{milnor-stasheff-74} hence
${\rm Vec}^m_{\rr{R}}(X,{\rm Id}_X)\simeq{\rm Vec}^m_{\R}(X)$ in agreement with
Proposition \ref{prop:Rv1}.
}\hfill $\blacktriangleleft$
\end{remark}

The homotopy classification provided by  Theorem \ref{theo:honotopy_class} 
can be directly used to classify $\rr{R}$-bundles over the TR-involutive space $\tilde{\n{S}}^1\equiv(\n{S}^1,\tau)$.   
\begin{proposition}
\label{prop_(i)}
$
{\rm Vec}^m_{\rr{R}}(\n{S}^1, \tau)\;=\;0\;.
$
\end{proposition}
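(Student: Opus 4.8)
The plan is to invoke the homotopy classification of Theorem~\ref{theo:honotopy_class}, which identifies ${\rm Vec}^m_{\rr{R}}(\n{S}^1,\tau)$ with the equivariant homotopy set $[\n{S}^1,G_m(\C^\infty)]_{\rm eq}$, and then to show that this set consists of a single class, necessarily that of the product $\rr{R}$-bundle. I would work with the description $\tilde{\n{S}}^1\simeq \s{S}(\hat{\n{S}}^0)$ from Example~\ref{ex:homot_spher}: parametrizing $\n{S}^1$ by $\theta\in[0,2\pi)$ with $k_0=\cos\theta$ and $k_1=\sin\theta$, the TR-involution becomes $\tau:\theta\mapsto-\theta$, the two fixed points are $\theta=0$ and $\theta=\pi$, and the upper arc $A:=\{\theta\in[0,\pi]\}$ is a fundamental domain whose boundary is exactly the fixed-point set.

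The central observation is that an equivariant map $\varphi:(\n{S}^1,\tau)\to(G_m(\C^\infty),\varrho)$ is the same datum as a path $\gamma:=\varphi|_A:[0,\pi]\to G_m(\C^\infty)$ whose two endpoints $\gamma(0)$ and $\gamma(\pi)$ lie in the fixed-point set of $\varrho$, which by Example~\ref{ex:grassman_inv} is the real Grassmannian $G_m(\R^\infty)$. Indeed, the values of $\varphi$ on the lower arc are forced by $\varphi(-\theta)=\varrho(\varphi(\theta))$, and the only compatibility conditions are $\varrho(\gamma(0))=\gamma(0)$ and $\varrho(\gamma(\pi))=\gamma(\pi)$; conversely any such $\gamma$ glues to a well-defined continuous equivariant $\varphi$. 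In the same way an equivariant homotopy restricts to a homotopy of paths keeping both endpoints in $G_m(\R^\infty)$, and every such path-homotopy extends equivariantly. Hence $[\n{S}^1,G_m(\C^\infty)]_{\rm eq}$ is in bijection with the set of homotopy classes of paths in $G_m(\C^\infty)$ with free endpoints constrained to $G_m(\R^\infty)$.

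Computing the latter set is then elementary. The space $G_m(\C^\infty)\simeq B\n{U}(m)$ is path-connected and simply connected, since $\pi_1(G_m(\C^\infty))\simeq\pi_0(\n{U}(m))=0$, while the subspace $G_m(\R^\infty)$ is path-connected (Example~\ref{ex:grassman_inv}). Given two admissible paths, I would first use the path-connectedness of $G_m(\R^\infty)$ to homotope their endpoints, staying inside $G_m(\R^\infty)$, to a common pair, and then use $\pi_1(G_m(\C^\infty))=0$ to homotope the resulting two loops to one another rel endpoints. This yields a single class, represented by a constant map at a point of $G_m(\R^\infty)$, i.e. by the product $\rr{R}$-bundle, so that ${\rm Vec}^m_{\rr{R}}(\n{S}^1,\tau)=0$.

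The step I expect to require the most care is the translation between the equivariant category and the category of constrained paths: one must verify that the fundamental-domain decomposition interacts correctly with continuity at both fixed points, so that a path with endpoints in $G_m(\R^\infty)$ really does glue to a globally continuous equivariant map, and likewise for homotopies. An alternative, more hands-on route avoids the classifying space entirely: by Theorem~\ref{theo:triv_glob} it suffices to produce a global $\rr{R}$-frame. Trivializing $\bb{E}$ over the contractible arc $A$, choosing a $\Theta$-adapted (real) frame at each of the two fixed points, and using the Extension Lemma~\ref{lemma:ext}, one can arrange the transition data across the fixed points to be trivial; the anti-linear map $\Theta$ then glues the frame on $A$ with its $\Theta$-image on the lower arc into a global $\rr{R}$-frame. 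In that approach the delicate point is instead that the frame on $A$ be simultaneously real at both endpoints, which is possible precisely because ${\rm GL}_m(\C)$ is connected.
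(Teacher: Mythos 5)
Your proof is correct, and it reaches the paper's conclusion by a different implementation of the same overall strategy. The paper also begins with Theorem~\ref{theo:honotopy_class}, but then it invokes the general $\Z_2$-homotopy reduction Lemma~\ref{lemma:Z2-reduct}: using the $\Z_2$-CW structure of $\tilde{\n{S}}^1$ (two fixed $0$-cells and one free $1$-cell, Example~\ref{ex:sphere_TR_CW}), an equivariant map is first made constant on the fixed points via $\pi_0\big(G_m(\C^\infty)^\varrho\big)\simeq\pi_0\big(G_m(\R^\infty)\big)=0$, and then on the free $1$-cell via $\pi_0\big(G_m(\C^\infty)\big)=\pi_1\big(G_m(\C^\infty)\big)=0$, with the $\Z_2$-homotopy extension property used at each stage to extend homotopies from subcomplexes to all of $X$. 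Your fundamental-domain argument consumes exactly the same homotopy-theoretic inputs (path-connectedness of the real Grassmannian, simple connectedness of the complex one), but it packages the reduction as an explicit bijection between $[\tilde{\n{S}}^1, G_m(\C^\infty)]_{\rm eq}$ and homotopy classes of paths with endpoints constrained to $G_m(\R^\infty)$; the gluing of the two closed arcs replaces the appeal to the equivariant homotopy extension property, so your argument is self-contained and more elementary. What you give up is reusability: Lemma~\ref{lemma:Z2-reduct} is applied again by the paper in higher dimensions (e.g.\ in Proposition~\ref{prop:eq_clut_constr} and Proposition~\ref{prop:S_4}), and Remark~\ref{rk:gen_prof_int} notes that the paper's proof covers verbatim any $1$-dimensional involutive complex with fixed cells only in dimension $0$, whereas your constrained-path picture is tied to the circle's explicit fundamental domain. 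Your second, sketched route via Theorem~\ref{theo:triv_glob} is genuinely different from the paper's proof: it bypasses the classifying space entirely, and the point you isolate --- that connectedness of ${\rm GL}_m(\C)$ allows a trivialization over the arc which is simultaneously real at both fixed endpoints, after which $\Theta$ propagates the frame to the other arc --- is exactly the crux; this is close in spirit to the Helffer--Sj\"{o}strand construction of a global $\rr{R}$-section recalled in Remark~\ref{triv_lin_bun}.
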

\proof
In view of Theorem \ref{theo:honotopy_class} it is enough to  show that $[\tilde{\n{S}}^1,G_m(\C^\infty)]_{\rm eq}$ reduces to the equivariant homotopy class  of the constant map.  This fact follows directly from a more general result proved in Lemma
\ref{lemma:Z2-reduct} which is based on the $\Z_2$-skeleton decomposition of $\tilde{\n{S}}^1$.  The involutive space $\tilde{\n{S}}^1$ has 2 fixed cells of dimension $0$ and 1 free cell of dimension 1 (\cf Example \ref{ex:sphere_TR_CW}). Moreover, we know that $\pi_0(G_m(\C^\infty))\simeq \pi_1(G_m(\C^\infty))\simeq 0$ and the identification between the fixed point set $G_m(\C^\infty)^\varrho$ with $G_m(\R^\infty)$ (\cf Example \ref{ex:grassman_inv}) implies also $\pi_0(G_m(\C^\infty)^\varrho)\simeq0$.
These data are sufficient for the application of Lemma
\ref{lemma:Z2-reduct}.
\qed

\begin{remark}\label{rk:gen_prof_int}{\upshape
The same proof works exactly in the same way if in Proposition \ref{prop_(i)} we replace the TR-involutive sphere
$(\n{S}^1, \tau)$ with any other involutive space $(X, \tau)$ with cells of maximal dimension 1 and fixed cells only in dimension $0$.
}\hfill $\blacktriangleleft$
\end{remark}

In the case of  higher dimensional spheres with TR-involution $\tilde{\n{S}}^d\equiv({\n{S}}^d,\tau)$ we can generalize  \cite[Lemma 1.4.9]{atiyah-67} to the \virg{Real} category. We recall that $\hat{\n{S}}^d\equiv({\n{S}}^d,\vartheta)$ denotes the sphere with free antipodal action (\cf Example \ref{ex:sphere_antipodal}).

\begin{proposition}[Equivariant clutching construction]
\label{prop:eq_clut_constr}
For any $m,d\in\N$ there is a natural bijection:
$$
{\rm Vec}^m_{\rr{R}}(\n{S}^d, \tau)
\;\simeq\;[\hat{\n{S}}^{d-1}, {\rm S}\n{U}(m)]_{\rm eq}\;/\; \Z_2.
$$
In the above, the set of homotopy classes of $\Z_2$-equivariant maps is defined with respect to the involution on the special unitary group ${\rm S}\n{U}(m)$ given by the complex conjugation  and the  $\Z_2$-action on $[\hat{\n{S}}^{d-1}, {\rm S}\n{U}(m)]_{\rm eq}$ is defined by $[\varphi] \mapsto [\epsilon \varphi \epsilon]$, where $\epsilon = \mathrm{diag}(-1, 1 \ldots, 1)$.
\end{proposition}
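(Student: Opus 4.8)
The plan is to reduce the classification of rank-$m$ $\rr{R}$-bundles over $\tilde{\n{S}}^d$ to a clutching description, exactly mimicking the classical (non-equivariant) clutching construction of \cite[Lemma 1.4.9]{atiyah-67} but carrying the $\rr{R}$-structure through at every stage. First I would use the presentation of the TR-involutive sphere from Example \ref{ex:homot_spher}, namely $\tilde{\n{S}}^d\simeq \s{S}(\hat{\n{S}}^{d-1})$, the suspension of the \emph{antipodal} sphere $\hat{\n{S}}^{d-1}$ along the coordinate $k_0\in[-1,1]$. Concretely, split $\n{S}^d$ into the two closed caps $D_\pm:=\{k\in\n{S}^d\ |\ \pm k_0\geqslant 0\}$ with common boundary the equator $\{k_0=0\}$. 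The key observation is that the TR-involution $\tau$ fixes the coordinate $k_0$, hence maps each cap $D_\pm$ to itself and restricts on the equatorial $\n{S}^{d-1}=\{k_0=0\}$ precisely to the free antipodal action, i.e. the equator is $\hat{\n{S}}^{d-1}$. The two fixed points $(\pm1,0,\ldots,0)$ are the apexes of the caps.

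The second step is to trivialize the bundle over each cap in the $\rr{R}$-category. Each cap $D_\pm$ is $\tau$-invariant and equivariantly contractible to its (fixed) apex; by the Homotopy Property (Theorem \ref{theo:equi_homot1}) together with Local $\rr{R}$-triviality (Proposition \ref{prop:loc_triv}), the restriction $\bb{E}|_{D_\pm}$ is $\rr{R}$-trivial, so we may fix $\rr{R}$-isomorphisms $h_\pm:\bb{E}|_{D_\pm}\to D_\pm\times\C^m$ onto the product $\rr{R}$-bundle. Comparing the two trivializations on the overlap $\hat{\n{S}}^{d-1}$ yields a \emph{clutching function} $\varphi=h_+\circ h_-^{-1}:\hat{\n{S}}^{d-1}\to\n{U}(m)$. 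The constraint that $h_\pm$ both intertwine $\Theta$ with the product complex conjugation $\Theta_0$ forces $\varphi$ to be equivariant with respect to the complex-conjugation involution on $\n{U}(m)$; that is, $\varphi(\vartheta(k))=\overline{\varphi(k)}$ for $k\in\hat{\n{S}}^{d-1}$. One then checks, as in the classical case, that $\rr{R}$-isomorphism classes of such bundles correspond bijectively to equivariant-homotopy classes $[\varphi]\in[\hat{\n{S}}^{d-1},\n{U}(m)]_{\rm eq}$, the homotopy coming from the freedom in the choice of $h_\pm$ (again via Theorem \ref{theo:equi_homot1} and the Extension Lemma \ref{lemma:ext}).

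Two residual reductions finish the identification. First, I would reduce the target group from $\n{U}(m)$ to ${\rm S}\n{U}(m)$: the determinant $\det\varphi:\hat{\n{S}}^{d-1}\to\n{U}(1)$ is equivariant for complex conjugation, and since $\hat{\n{S}}^{d-1}$ carries the free antipodal action, the classification of such equivariant maps into $\n{U}(1)$ is governed by ${\rm Vec}^1_{\rr{R}}(\n{S}^d,\tau)$, which vanishes (this is the $m=1$ computation recorded in Theorem \ref{theo:AI_class}(ii), i.e. $H^2_{\Z_2}(\tilde{\n{S}}^d,\Z(1))=0$); hence $\varphi$ may be equivariantly deformed into ${\rm S}\n{U}(m)$. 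Second, the splitting $\n{S}^d=D_+\cup D_-$ is not canonical: interchanging the roles of $D_+$ and $D_-$ replaces $\varphi$ by a conjugate, and tracking the orientation reversal on the equator produces the $\Z_2$-action $[\varphi]\mapsto[\epsilon\varphi\epsilon]$ with $\epsilon=\mathrm{diag}(-1,1,\ldots,1)$. Quotienting by this ambiguity gives the asserted bijection. The main obstacle is the bookkeeping of the third step — verifying that the equivariance of $\varphi$ under complex conjugation is \emph{exactly} the condition induced by the $\rr{R}$-structures $h_\pm$, and that the residual $\Z_2$ arising from the cap-swap is genuinely $[\varphi]\mapsto[\epsilon\varphi\epsilon]$ and not some other conjugation; both amount to careful but routine diagram chasing once the suspension picture $\tilde{\n{S}}^d\simeq\s{S}(\hat{\n{S}}^{d-1})$ is in hand.
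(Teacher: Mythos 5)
Your geometric setup (the suspension decomposition $\tilde{\n{S}}^d\simeq \s{C}_+(\hat{\n{S}}^{d-1})\cup\s{C}_-(\hat{\n{S}}^{d-1})$, $\rr{R}$-triviality over each cap, and an equivariant clutching function on the equator) is the same as the paper's, but your second step contains a genuine error that propagates through the rest of the argument. You claim that $\rr{R}$-isomorphism classes correspond bijectively to equivariant homotopy classes $[\varphi]\in[\hat{\n{S}}^{d-1},\n{U}(m)]_{\rm eq}$. This is false. What the clutching construction actually gives (Lemma \ref{lemma:doub_coset} in the paper) is the \emph{double coset}
$$
[\s{C}_+(\hat{\n{S}}^{d-1}), \n{U}(m)]_{\rm eq}\;\backslash\;
[\hat{\n{S}}^{d-1}, \n{U}(m)]_{\rm eq}\;/\;
[\s{C}_-(\hat{\n{S}}^{d-1}), \n{U}(m)]_{\rm eq}\;,
$$
because the trivializations $h_\pm$ are only determined up to equivariant automorphisms $\psi_\pm:\s{C}_\pm(\hat{\n{S}}^{d-1})\to\n{U}(m)$ of the product $\rr{R}$-bundle on the caps. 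In the classical case this coset freedom is invisible: the caps are contractible and $\n{U}(m)$ is connected, so every $\psi_\pm$ is homotopic to the identity. In the equivariant case it is \emph{not}: a cap contracts equivariantly to a fixed point, where equivariant maps must take values in the fixed-point set $\n{O}(m)\subset\n{U}(m)$, which is disconnected; hence $[\s{C}_\pm(\hat{\n{S}}^{d-1}),\n{U}(m)]_{\rm eq}\simeq\pi_0(\n{O}(m))\simeq\Z_2$, generated by the constant map $\epsilon$. A concrete counterexample to your bijection: the constant clutching function $\varphi\equiv\epsilon$ produces the trivial $\rr{R}$-bundle (re-trivialize one cap by the constant $\epsilon$), yet it is \emph{not} equivariantly null-homotopic, since $\det\varphi\equiv -1$ represents the nonzero class in $[\hat{\n{S}}^{d-1},\n{U}(1)]_{\rm eq}\simeq H^1(\R P^{d-1},\Z(1))\simeq\Z_2$ --- exactly the subtlety the paper records in the remark placed right after this proposition.

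This error invalidates both of your ``residual reductions''. First, since $[\hat{\n{S}}^{d-1},\n{U}(1)]_{\rm eq}\neq 0$, the determinant of a clutching function need not be equivariantly null-homotopic, so $\varphi$ cannot in general be ``equivariantly deformed into ${\rm S}\n{U}(m)$'': the vanishing of ${\rm Vec}^1_{\rr{R}}(\n{S}^d,\tau)$ (Proposition \ref{prob:R_linS}) says only that the \emph{double coset} of $[\hat{\n{S}}^{d-1},\n{U}(1)]_{\rm eq}$ is trivial, and the reduction to ${\rm S}\n{U}(m)$ requires using the coset freedom, i.e.\ changing the trivialization of \emph{one} cap by the constant $\epsilon$ to flip the class of $\det\varphi$. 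Second, the residual $\Z_2$-action $[\varphi]\mapsto[\epsilon\varphi\epsilon]$ does not come from interchanging $D_+$ and $D_-$ --- the decomposition is fixed once and for all, and no such relabeling quotient appears in the classical clutching construction either --- but from changing the trivializations of \emph{both} caps by $\epsilon$, which is the part of the coset freedom surviving after the determinant has been normalized. The clean way to do this bookkeeping is the paper's route: establish the double-coset description (Lemma \ref{lemma:doub_coset}), split $[X,\n{U}(m)]_{\rm eq}\simeq[X,{\rm S}\n{U}(m)]_{\rm eq}\rtimes[X,\n{U}(1)]_{\rm eq}$ via the determinant (Lemma \ref{lemma:semidirect_product}), compute $[\s{C}_\pm(\hat{\n{S}}^{d-1}),\n{U}(m)]_{\rm eq}\simeq 1\rtimes\Z_2$, and then use Proposition \ref{prob:R_linS} to dispose of the $\n{U}(1)$ factor.
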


For the proof we need the suspension relation $\tilde{\n{S}}^d\simeq \s{S}(\hat{\n{S}}^{d-1})$ discussed in Example \ref{ex:homot_spher} which is the same as the decomposition
$$
\tilde{\n{S}}^d\;\simeq \;\s{C}_+(\hat{\n{S}}^{d-1})\;\cup\; \s{C}_-(\hat{\n{S}}^{d-1})
$$
where $\s{C}_+(\hat{\n{S}}^{d-1}):=(\hat{\n{S}}^{d-1}\times[0,\frac{1}{2}])/(\hat{\n{S}}^{d-1}\times\{0\})$ and 
 $\s{C}_-(\hat{\n{S}}^{d-1}):=(\hat{\n{S}}^{d-1}\times[\frac{1}{2},1])/(\hat{\n{S}}^{d-1}\times\{1\})$
are the two cones which  
form the suspension of $\hat{\n{S}}^{d-1}=\s{C}_+(\hat{\n{S}}^{d-1})\cap\s{C}_-(\hat{\n{S}}^{d-1})$. 
The group of equivariant homotopic maps $[\s{C}_\pm(\hat{\n{S}}^{d-1}), \n{U}(m)]_{\rm eq}$ and $[\hat{\n{S}}^{d-1}, \n{U}(m)]_{\rm eq}$ are understood with respect to 
 the involution on $\n{U}(m)$ given by the complex conjugation. 
We prove first two preliminary results.

\begin{lemma}
\label{lemma:doub_coset}
There is a natural bijection 
$$
{\rm Vec}^m_{\rr{R}}(\n{S}^d, \tau)
\;\simeq\;
[\s{C}_+(\hat{\n{S}}^{d-1}), \n{U}(m)]_{\rm eq}\; \backslash\;
[\hat{\n{S}}^{d-1}, \n{U}(m)]_{\rm eq}\; /\;
[\s{C}_-(\hat{\n{S}}^{d-1}), \n{U}(m)]_{\rm eq}
$$
where on the right-side there is the 
 double coset of $[\hat{\n{S}}^{d-1}, \n{U}(m)]_{\rm eq}$ given by  the equivalence relation $[\varphi] \sim [\psi_+\cdot \varphi \cdot\psi_-^{-1}]$
for  some pair
$[\psi_\pm] \in [\s{C}_\pm(\hat{\n{S}}^{d-1}), \n{U}(m)]_{\rm eq}$. 
\end{lemma}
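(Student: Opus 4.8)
The plan is to carry out the standard clutching construction for vector bundles over a suspension, refined so as to keep track of the \virg{Real} structure. By Example \ref{ex:homot_spher} we have the equivariant suspension decomposition $\tilde{\n{S}}^d\simeq\s{C}_+(\hat{\n{S}}^{d-1})\cup\s{C}_-(\hat{\n{S}}^{d-1})$ with overlap $\hat{\n{S}}^{d-1}$, and each cone $\s{C}_\pm(\hat{\n{S}}^{d-1})$ is $\tau$-invariant and equivariantly contractible to its apex, which is a fixed point of $\tau$. First I would show that the restriction of any $\rr{R}$-bundle $(\bb{E},\Theta)$ to each cone is $\rr{R}$-trivial: applying the homotopy property (Theorem \ref{theo:equi_homot1}) to the equivariant contraction identifies $\left.\bb{E}\right|_{\s{C}_\pm}$ with the pullback of its fibre over the apex, and since at a fixed point the fibre is the complexification of the $+1$-eigenspace of $\Theta$ it is $\rr{R}$-trivial. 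By Theorem \ref{theo:triv_glob} we may therefore fix global $\rr{R}$-frames $\{s_1^\pm,\ldots,s_m^\pm\}$ over $\s{C}_\pm$.

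Next I would extract the clutching datum. On the overlap $\hat{\n{S}}^{d-1}$ the two $\rr{R}$-frames are related by a transition map $\varphi\colon\hat{\n{S}}^{d-1}\to\n{U}(m)$ via $s_j^-(x)=\sum_i\varphi_{ij}(x)\,s_i^+(x)$. The crucial point is its equivariance. On $\hat{\n{S}}^{d-1}$ the involution $\tau$ restricts to the antipodal map $\vartheta$ (Example \ref{ex:homot_spher}), and each frame being an $\rr{R}$-frame means $\Theta(s_j^\pm(x))=s_j^\pm(\tau(x))$. Applying $\Theta$ to the relation above and using its anti-linearity gives
\[
\sum_i\varphi_{ij}(\vartheta(x))\,s_i^+(\vartheta(x))=s_j^-(\vartheta(x))=\Theta\big(s_j^-(x)\big)=\sum_i\overline{\varphi_{ij}(x)}\,s_i^+(\vartheta(x)),
\]
so that $\varphi(\vartheta(x))=\overline{\varphi(x)}$; that is, $\varphi$ is equivariant for the antipodal action on $\hat{\n{S}}^{d-1}$ and the complex-conjugation involution on $\n{U}(m)$, hence defines a class in $[\hat{\n{S}}^{d-1},\n{U}(m)]_{\rm eq}$.

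Finally I would organize this correspondence into the claimed bijection. The freedom in the choice of $\rr{R}$-frame over $\s{C}_\pm$ is precisely an $\rr{R}$-automorphism of the trivial $\rr{R}$-bundle, i.e. an equivariant map $\psi_\pm\in[\s{C}_\pm(\hat{\n{S}}^{d-1}),\n{U}(m)]_{\rm eq}$, under which $\varphi$ transforms on the overlap as $\varphi\mapsto\psi_+\cdot\varphi\cdot\psi_-^{-1}$; this produces exactly the stated double coset. The equivariant homotopy property (Theorem \ref{theo:equi_homot1}) guarantees that equivariantly homotopic clutching functions yield $\rr{R}$-isomorphic bundles, so the assignment is well defined on homotopy classes and injective. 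Surjectivity follows by reversing the construction: given any equivariant $\varphi$ one glues two trivial $\rr{R}$-bundles $\s{C}_\pm\times\C^m$ along $\varphi$, and the identity $\varphi(\vartheta(x))=\overline{\varphi(x)}$ is exactly what is needed for the two product $\rr{R}$-structures $\Theta_0$ to match across the seam, so the result is a genuine $\rr{R}$-bundle over $\tilde{\n{S}}^d$ with the prescribed clutching class.

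The main obstacle will be the equivariance bookkeeping rather than any deep topology: one must check carefully that the conjugation on $\n{U}(m)$, the antipodal action on the equatorial sphere $\hat{\n{S}}^{d-1}$ and the product $\rr{R}$-structures on the two cones are consistently matched, and that the passage to the double coset of \emph{homotopy} classes is compatible with $\rr{R}$-isomorphism in both directions. Everything else is the classical suspension clutching construction of \cite[Lemma 1.4.9]{atiyah-67} carried through the \virg{Real} category.
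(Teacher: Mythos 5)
Your proposal is correct and follows essentially the same route as the paper: the equivariant clutching construction over the two cones $\s{C}_\pm(\hat{\n{S}}^{d-1})$, with $\rr{R}$-triviality of the restrictions coming from the equivariant contractibility of the cones, and the double coset arising from the freedom in the choice of $\rr{R}$-trivializations. The only detail to add is the one the paper fixes at the outset via Remark \ref{rk:eq_metric}: choose an equivariant Hermitian metric so that the $\rr{R}$-frames can be taken orthonormal, ensuring the transition function takes values in $\n{U}(m)$ rather than ${\rm GL}(m,\C)$.
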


\begin{proof}
According to Remark 
\ref{rk:eq_metric} we can assume without loss of generality that equivariant  Hermitian metrics are given on $\rr{R}$-bundles.
First we construct a map
$$
\xi: \
[\s{C}_+(\hat{\n{S}}^{d-1}), \n{U}(m)]_{\rm eq}\; \backslash\;
[\hat{\n{S}}^{d-1}, \n{U}(m)]_{\rm eq} \;/\;
[\s{C}_-(\hat{\n{S}}^{d-1}), \n{U}(m)]_{\rm eq}\;
\longrightarrow\;
{\rm Vec}^m_{\rr{R}}(\n{S}^d, \tau)\;,
$$
mimicking the standard clutching construction \cite[Lemma 1.4.9]{atiyah-67}. Let $\varphi : \hat{\n{S}}^{d-1} \to \n{U}(m)$ be an equivariant map representing an element in the double coset. Then we glue together the product $\rr{R}$-bundles on $\s{C}_\pm(\hat{\n{S}}^{d-1})$ (with the standard Hermitian metric)  to get an $\rr{R}$-bundle on $\tilde{\n{S}}^d$
\beql{eq:clut_constr_01}
\big(\s{C}_+(\hat{\n{S}}^{d-1})\times\C^m\big)\;\cup_\varphi\; \big(\s{C}_-(\hat{\n{S}}^{d-1})\times\C^m\big)\;.
\eeq
If we replace $\varphi$ by $\varphi' = \psi_+\cdot \varphi\cdot \psi_-^{-1}$ using equivariant maps $\psi_\pm : \s{C}_\pm(\hat{\n{S}}^{d-1}) \to \n{U}(m)$ we obtain an isomorphism of $\rr{R}$-bundles
$$
\big(\s{C}_+(\hat{\n{S}}^{d-1})\times\C^m\big)\;\cup_\varphi\; \big(\s{C}_-(\hat{\n{S}}^{d-1})\times\C^m\big) \to
\big(\s{C}_+(\hat{\n{S}}^{d-1})\times\C^m\big)\;\cup_{\varphi'}\; \big(\s{C}_-(\hat{\n{S}}^{d-1})\times\C^m\big)\;.
$$
Moreover, due to the homotopy property of $\rr{R}$-bundles, the isomorphism class of the $\rr{R}$-bundle \eqref{eq:clut_constr_01}  is not altered under the changes of $\psi_\pm$ and $\varphi$ by homotopy. Hence $\xi$ is well defined. 

On the other hand, we can as well construct a map
$$
\nu\;: \;
{\rm Vec}^m_{\rr{R}}(\n{S}^d, \tau)\; \longrightarrow\;
[\s{C}_+(\hat{\n{S}}^{d-1}), \n{U}(m)]_{\rm eq}\; \backslash\;
[\hat{\n{S}}^{d-1}, \n{U}(m)]_{\rm eq} \;/\;
[\s{C}_-(\hat{\n{S}}^{d-1}), \n{U}(m)]_{\rm eq}\;.
$$
 Let $\bb{E} \to \tilde{\n{S}}^d$ be a given $\rr{R}$-bundle. Since the cones $\s{C}_\pm(\hat{\n{S}}^{d-1})$ are $\tau$-invariantly contractible, the restrictions $\bb{E}|_{\s{C}_\pm(\hat{\n{S}}^{d-1})}$ are $\rr{R}$-trivial. Then a choice of trivializations $h_\pm : \bb{E}|_{\s{C}_\pm(\hat{\n{S}}^{d-1})} \simeq \s{C}_\pm(\hat{\n{S}}^{d-1}) \times \C^m$ induces an equivariant map $\varphi : \hat{\n{S}}^{d-1} \to \n{U}(m)$ as well as an isomorphism of $\rr{R}$-bundles
$$
\bb{E}\;\simeq\;\big(\s{C}_+(\hat{\n{S}}^{d-1})\times\C^m\big)\;\cup_\varphi\; \big(\s{C}_-(\hat{\n{S}}^{d-1})\times\C^m\big).
$$
If we choose other trivializations $h'_\pm$, they differ from $h_\pm$ only by equivariant maps $\psi_\pm : \s{C}_\pm(\hat{\n{S}}^{d-1}) \to \n{U}(m)$, and the resulting clutching function $\varphi'$ is expressed as $\varphi' = \psi_+\cdot \varphi\cdot \psi_-^{-1}$. Thus the assignment $\nu:[\bb{E}] \mapsto [\varphi]$ gives rise to a well-defined map. By the constructions, $\xi$ and $\nu$ are inverse to each other. 
\end{proof}

\begin{lemma}
\label{lemma:semidirect_product}
For any involutive space $(X,\tau)$, there is an isomorphism of groups
$$
[X, \n{U}(m)]_{\rm eq}\;\simeq\;[X, {\rm S}\n{U}(m)]_{\rm eq}\;\rtimes\;[X, \n{U}(1)]_{\rm eq}
$$
where $\rtimes$ denotes the \emph{semidirect} product.
\end{lemma}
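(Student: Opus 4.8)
The plan is to construct the splitting explicitly using the determinant homomorphism. The natural equivariant map that separates the special unitary part from the phase is
\[
\det : \n{U}(m) \longrightarrow \n{U}(1),
\]
which is equivariant with respect to the complex conjugation involutions on both sides (since $\det(\overline{A}) = \overline{\det(A)}$). Post-composition with $\det$ therefore sends an equivariant map $\varphi : X \to \n{U}(m)$ to an equivariant map $\det \circ \varphi : X \to \n{U}(1)$, and this descends to a group homomorphism
\[
\Delta : [X, \n{U}(m)]_{\rm eq} \longrightarrow [X, \n{U}(1)]_{\rm eq}\;.
\]
The kernel of $\Delta$ consists (after the usual homotopy-theoretic identification) of those classes represented by maps landing in ${\rm S}\n{U}(m) = \ker(\det)$, so the first task is to identify $\ker(\Delta) \simeq [X, {\rm S}\n{U}(m)]_{\rm eq}$.

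\medskip

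\noindent
\textbf{Key steps.} First I would verify that $[X, \n{U}(m)]_{\rm eq}$ is indeed a group: the multiplication is pointwise, $(\varphi \cdot \psi)(x) := \varphi(x)\psi(x)$, and this is equivariant because complex conjugation is an automorphism of $\n{U}(m)$, so $\overline{\varphi(\tau(x))\psi(\tau(x))} = \overline{\varphi(\tau(x))}\,\overline{\psi(\tau(x))}$ matches the equivariance of each factor; homotopies multiply likewise. Second, I would exhibit an equivariant splitting of the inclusion ${\rm S}\n{U}(m) \hookrightarrow \n{U}(m)$ along $\det$. Choose the standard section
\[
s : \n{U}(1) \longrightarrow \n{U}(m), \qquad s(z) := \mathrm{diag}(z, 1, \ldots, 1),
\]
which is equivariant since $s(\overline{z}) = \overline{s(z)}$, and satisfies $\det \circ s = {\rm id}$. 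Post-composition with $s$ then gives a homomorphism $\sigma : [X, \n{U}(1)]_{\rm eq} \to [X, \n{U}(m)]_{\rm eq}$ splitting $\Delta$, so the extension is split and we obtain
\[
[X, \n{U}(m)]_{\rm eq} \;\simeq\; \ker(\Delta) \;\rtimes\; [X, \n{U}(1)]_{\rm eq}\;.
\]
The semidirect (rather than direct) structure arises because the conjugation action of the image of $\sigma$ on $\ker(\Delta)$ need not be trivial, exactly as in the non-equivariant decomposition $\n{U}(m) \simeq {\rm S}\n{U}(m) \rtimes \n{U}(1)$.

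\medskip

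\noindent
\textbf{The main obstacle} will be the identification $\ker(\Delta) \simeq [X, {\rm S}\n{U}(m)]_{\rm eq}$. At the level of \emph{maps} it is immediate that a map into ${\rm S}\n{U}(m)$ has trivial determinant, but at the level of \emph{homotopy classes} one must rule out the possibility that a map $\varphi : X \to \n{U}(m)$ with $\det \circ \varphi$ equivariantly null-homotopic is itself not equivariantly homotopic into ${\rm S}\n{U}(m)$. The clean way around this is to use the section $s$ directly: given $\varphi$ with $[\det \circ \varphi] = 0$, one modifies $\varphi$ by the equivariant map $s(\det \circ \varphi)^{-1}$, producing $\varphi \cdot \big(s \circ (\det \circ \varphi)^{-1}\big)$, which has determinant identically $1$ and hence lands in ${\rm S}\n{U}(m)$; tracking that this operation respects equivariant homotopy classes shows $\ker(\Delta)$ is precisely the image of $[X, {\rm S}\n{U}(m)]_{\rm eq}$. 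One should also check injectivity of this last inclusion on homotopy classes, which follows because an equivariant homotopy in $\n{U}(m)$ between two maps into ${\rm S}\n{U}(m)$ can be corrected to one staying in ${\rm S}\n{U}(m)$ by dividing out the (null-homotopic) determinant path via $s$. I expect the verification of the semidirect-product \emph{action}—confirming that conjugation by $\sigma$ is a well-defined automorphism at the level of equivariant homotopy classes—to be the only genuinely delicate bookkeeping, while everything else reduces to the standard fibration argument transported verbatim into the $\Z_2$-equivariant category.
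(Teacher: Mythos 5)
Your proposal is correct and follows essentially the same route as the paper: both arguments use the determinant homomorphism, the equivariant diagonal section $s(z)=\mathrm{diag}(z,1,\ldots,1)$, and the splitting lemma to obtain the semidirect product. The only difference is presentational — the paper first writes the split exact sequence at the level of mapping groups $\bb{M}(X,\cdot)$ and asserts that it descends to equivariant homotopy classes, whereas you work directly with homotopy classes and explicitly verify the exactness there (the correction of a map, and of a homotopy, by $s\circ(\det\circ\varphi)^{-1}$), which is precisely the step the paper leaves implicit.
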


\begin{proof}
Let us start with the sequence of groups
\beql{eq:exact_seq_00}
1\;\to\; \bb{M}\big(X, {\rm S}\n{U}(m)\big)\;\stackrel{\imath}{\hookrightarrow}\; \bb{M}\big(X, \n{U}(m)\big)\;\stackrel{\jmath}{\to}\; \bb{M}\big(X, \n{U}(1)\big)\;\to\;1
\eeq
 where $\bb{M}\big(X, Y\big)$ is the group of maps from $X$ to $Y$, $\imath$ is the inclusion induced by ${\rm S}\n{U}(m)\hookrightarrow
\n{U}(m)$ and $\jmath$ is induced by ${\rm det}:\n{U}(m)\to \n{U}(1)$. This sequence is exact and splits on $\jmath$ since there is a map $s:  \bb{M}\big(X, \n{U}(1)\big)\to  \bb{M}\big(X, \n{U}(m)\big)$ such that $\jmath\circ s={\rm Id}$. Such a map is explicitly realized by
$$
s(u)\;:=\;
\left(\begin{array}{cc}u & 0 \\0 & \n{1}_{m-1}\end{array}\right)\;\qquad \quad u\in\bb{M}\big(X, \n{U}(1)\big)\;.
$$
The sequence \eqref{eq:exact_seq_00} induces a similar sequence at level of equivalence classes of homotopy maps. Moreover, the complex conjugation on $\n{U}(m)$ preserves the group product, hence the sequence can be generalized to equivariant maps. This means that we have an exact sequence of groups
\beql{eq:exact_seq_01}
1\;\to\; [X, {\rm S}\n{U}(m)]_{\rm eq}\;\stackrel{\imath}{\hookrightarrow}\; [X, \n{U}(m)]_{\rm eq}\;\stackrel{\jmath}{\to}\; [X, \n{U}(1)]_{\rm eq}\;\to\;1
\eeq
which is splitting exact in $\jmath$. Then a general argument (splitting Lemma) completes the proof.
\end{proof}

\proof[Proof of Proposition \ref{prop:eq_clut_constr}]
We have
\begin{align*}
[\s{C}_\pm(\hat{\n{S}}^{d-1}), {\rm S}\n{U}(m)]_{\rm eq}
&\;\simeq\; 
[\{\ast\}, {\rm S}\n{U}(m)]_{\rm eq}
\;\simeq\;
[\{\ast\}, {\rm S}\n{O}(m)] \;\simeq\; 1, \\
[\s{C}_\pm(\hat{\n{S}}^{d-1}), \n{U}(1)]_{\rm eq}
&\;\simeq\; 
[\{\ast\}, \n{U}(1)]_{\rm eq}
\;\simeq\;
[\{\ast\}, \Z_2] \simeq \Z_2
\end{align*}
which, in view of Lemma \ref{lemma:semidirect_product}, means that we can replace $[\s{C}_\pm(\hat{\n{S}}^{d-1}), \n{U}(m)]_{\rm eq}$ by the semidirect product
 $1 \rtimes[\s{C}_\pm(\hat{\n{S}}^{d-1}), \n{U}(1)]_{\rm eq}$ 
in the definition of the double coset introduced in Lemma \ref{lemma:doub_coset}.

Using also the splitting 
$[\hat{\n{S}}^{d-1}, \n{U}(m)]_{\rm eq}\simeq[\hat{\n{S}}^{d-1}, {\rm S}\n{U}(m)]_{\rm eq}\rtimes[\hat{\n{S}}^{d-1}, \n{U}(1)]_{\rm eq}$ we can rewrite the  coset as
\begin{multline*}
\Big([\hat{\n{S}}^{d-1}, {\rm S}\n{U}(m)]_{\rm eq}\;/\;\Z_2\Big)
 \;{\times}\;\Big([\s{C}_+(\hat{\n{S}}^{d-1}), \n{U}(1)]_{\rm eq}\; \backslash\;
[\hat{\n{S}}^{d-1}, \n{U}(1)]_{\rm eq}\; /\;
[\s{C}_-(\hat{\n{S}}^{d-1}), \n{U}(1)]_{\rm eq}\Big)\;
\end{multline*}
where the $\Z_2$ action on $[\hat{\n{S}}^{d-1}, {\rm S}\n{U}(m)]_{\rm eq}$ is given by the  non-trivial elements $-1 \in [\s{C}_\pm(\hat{\n{S}}^{d-1}), \n{U}(1)]_{\rm eq} $ according to  $[\varphi] \mapsto [s(-1) \varphi s(-1)^{-1}] = [\epsilon \varphi \epsilon]$. The double coset in the direct product above is isomorphic to ${\rm Vec}^1_{\rr{R}}(\n{S}^d, \tau)$ in view Lemma \ref{lemma:doub_coset}. Moreover, in Proposition \ref{prob:R_linS} is proved that ${\rm Vec}^1_{\rr{R}}(\n{S}^d, \tau)\simeq 0$ and this concludes the proof.
\qed

\begin{remark}{\upshape
One is tempted to derive Proposition \ref{prop:eq_clut_constr}
from a direct application of Lemma \ref{lemma:semidirect_product}.
Unfortunately (and surprisingly!)
the group $[\hat{\n{S}}^{d-1}, \n{U}(1)]_{\rm eq}$ is non-trivial. Concretely  $[\hat{\n{S}}^{d-1}, \n{U}(1)]_{\rm eq} \simeq \Z_2$ and a proof of this fact follows from the identification between $[\hat{\n{S}}^{d-1}, \n{U}(1)]_{\rm eq}$ and the equivariant cohomology $H^1_{\Z_2}(\hat{\n{S}}^{d-1}, \Z(1))$ 
which will be introduced in Section \ref{ssec:borel_constr}.
For the computation of the cohomology one uses the isomorphism 
$H^1_{\Z_2}(\hat{\n{S}}^{d-1}, \Z(1))\simeq H^1(\R P^{d-1}, \Z(1))$ and equation \eqref{eq:cohomPR_Z1} or one can employ the \emph{Gysin sequence} as explained in \cite{gomi-13}. 
}\hfill $\blacktriangleleft$
\end{remark}


\subsection{Stable range condition and equivariant CW-decomposition}
\label{ssec:stab_reng_equi_constr}

In topology  spaces which are homotopy equivalent to CW-complexes are very important. A similar notion can be extended to $\Z_2$-spaces like
 $(\n{S}^d,\tau)$ and $(\n{T}^d,\tau)$.
These involutive spaces have the structure of a $\Z_2$- CW-complex according to  \cite[Definition 1.1.1]{allday-puppe-93}. We sketch briefly this notion:
\begin{itemize}
\item[-] {\bf $\Z_2$-cell.} The group $\Z_2\equiv\{+1,-1\}$, which is relevant for spaces with involution,  has only two subgroups: the unit $\{+1\}$ and the full group $\Z_2$ itself.
Let
$
\n{D}^n:=\{k\in\R^n\,|\; \|k\|\leqslant 1\}
$
 be the closed unit ball (or \emph{disk}) in $\R^n$.
A \emph{fixed} cell of dimension $n$ is a  $\Z_2$-space
${\bf e}^n:=\{+1\}\times \n{D}^n\simeq\n{D}^n$ on which the action of the involution is  trivial.
A \emph{free} cell of dimension $n$ is a  $\Z_2$-space
$\tilde{{\bf e}}^n:=\Z_2\times \n{D}^n$ on which the involution acts trivially on the ball $\n{D}^n$ and by permutation on $\Z_2$. The boundary of $\n{D}^n$ is the sphere $\n{S}^{n-1}$. 
The \emph{$\Z_2$-boundary} of a fixed cell ${\bf e}^n$ is $\partial{\bf e}^n:=\{+1\}\times \n{S}^{n-1}\simeq\n{S}^{n-1}$. Similarly for a free cell $\tilde{{\bf e}}^n$ the  {$\Z_2$-boundary} is $\partial\tilde{{\bf e}}^n:=\Z_2\times \n{S}^{n-1}$.\vspace{0.8mm}

\item[-] {\bf $0$-skeleton.} A disjoint union of $\Z_2$-cells of dimension $0$
\beql{eq:zero_skel}
X^0\;:=\;\left(\coprod_{j=1}^{N_0}\; {\bf e}^0_j\right)\; \amalg\; \left(\coprod_{j=1}^{\tilde{N}_0}\; \tilde{{\bf e}}^0_j\right)
\eeq
is a $0$-skeleton with $N_0$ fixed $0$-cells ${\bf e}^0_j\simeq\{\ast\}$ and $\tilde{N}_0$ free
$0$-cells $\tilde{{\bf e}}^0_j\simeq\Z_2$. The numbers $N_0$ and $\tilde{N}_0$ are positive integers
which can also take the value $\infty$. We use $N_0=0$ or $\tilde{N}_0=0$ as a convention to indicate the absence 
of fixed cells or free cells, respectively. 
\vspace{0.8mm}

\item[-] {\bf $\Z_2$-CW-complex.} A $\Z_2$-space $X$ has a \emph{$\Z_2$-skeleton decomposition} if it can be written as a {colimit} of inclusions $X^0\subset X^1\subset\ldots\subset X^{n-1}\subset X^{n}\subset\ldots$, where $X^0$ is a $0$-skeleton and the \emph{$n$-skeleton} $X^{n}$ is obtained from the {$n-1$-skeleton} $X^{n-1}$ by attaching a disjoint union  of $\Z_2$-cells of dimension $n$ along the boundaries via \emph{equivariant} attaching  maps $\phi_j:\{+1\}\times \n{S}^{n-1}\to X^{n-1}$ (for fixed cells) and
$\tilde{\phi}_j:\Z_2\times \n{S}^{n-1}\to X^{n-1}$ (for free cells). More precisely one defines
$$
X^{n}\;:=\; X^{n-1}\;\bigcup_{\phi_j}\;\left(\coprod_{j=1}^{N_n}\; {\bf e}^n_j\right)\;\bigcup_{\tilde{\phi_j}}\;\left(\coprod_{j=1}^{\tilde{N}_n}\; \tilde{{\bf e}}^n_j\right) 
$$
where the symbols $\bigcup_{\phi_j}$ and $\bigcup_{\tilde{\phi}_j}$ denote the union modulo  the identification
$\phi_j\big(a\big)\sim a$ for all $a\in\partial{{\bf e}}^n_j$  and 
$\tilde{\phi}_j\big(a\big)\sim a$ for all $a\in\partial{\tilde{\bf e}}^n_j$.
A \emph{finite} $\Z_2$-CW-complex is a $\Z_2$-space $X$ with a  skeleton decomposition  given by a finite number of $\Z_2$-cells.
\vspace{0.8mm}

\item[-] {\bf Orbit space.} The orbit space $X/\Z_2$ of a $\Z_2$-CW-complex inherits, in obvious way, the  CW-complex structure. The (non-equivariant) cells of $X/\Z_2$ are just the orbit spaces of the $\Z_2$-cells of $X$ and the attaching maps of $X/\Z_2$ are those induced from the equivariant maps of $X$ by forgetting the $\Z_2$-action.
\end{itemize}

 \medskip
 
The above construction is modeled after the usual definition of CW-complex, replacing the \virg{point} by \virg{$\Z_2$-point}. For this reason many topological and homological properties of CW-complexes have their \virg{natural} counterparts in the equivariant setting. In the following examples we describe explicitly  $\Z_2$-CW-complex structures of the involutive spaces $(\n{S}^d,\tau)$ and $(\n{T}^d,\tau)$.

\begin{example}[$\Z_2$-CW decomposition for  spheres with TR-involution]\label{ex:sphere_TR_CW}{\upshape
We recall the notation $\tilde{\n{S}}^d\equiv(\n{S}^{d},\tau)$.
Let us start with the one-dimensional case  $\tilde{\n{S}}^1$ which will be relevant also for the torus. In this case the $0$-skeleton 
of $\tilde{\n{S}}^1$ is made by two fixed $0$-cells ${\bf e}^0_\pm$ identified with the two fixed point $(\pm 1,0)\in\n{S}^1$, respectively. Attaching to the $0$-skeleton the free 1-cell $\tilde{\bf e}^1$ identified with the pair $\{\ell_+,\ell_-\}$ of conjugate (1-dimensional) hemispheres $\ell_\pm:=\{(k_0,k_1)\in\n{S}^1\ |\ \pm k_1\geqslant0\}$, one obtains $\tilde{\n{S}}^1$.
A $\Z_2$-skeleton decomposition of $\tilde{\n{S}}^d$ can be defined inductively observing that 
$\tilde{\n{S}}^d$ can be obtained from $\tilde{\n{S}}^{d-1}$ adding a free $d$-cell $\tilde{\bf e}^d$ 
 identified with the pair $\{\ell^d_+,\ell^d_-\}$ of conjugate $d$-dimensional hemispheres $\ell^d_\pm:=\{(k_0,k_1,\ldots,k_d)\in\n{S}^d\ |\ \pm k_d\geqslant0\}$. Summarizing, $\tilde{\n{S}}^d$ has two fixed cells in $0$-dimension  (\ie $N_0=2$ and $\tilde{N}_0=0$) and one free cell in any dimension $1\leqslant n\leqslant d$ (\ie $N_n=0$ and $\tilde{N}_n=1$).
}\hfill $\blacktriangleleft$
\end{example}

\medskip

\begin{example}[$\Z_2$-CW decomposition for tori with TR-involution]\label{ex:tori_TR_CW}{\upshape
The construction of a $\Z_2$-skeleton decomposition of $\tilde{\n{T}}^d\equiv(\n{T}^{d},\tau)$ can be derived from that of 
$\tilde{\n{S}}^1$ and the Cartesian product structure $\tilde{\n{T}}^d=\tilde{\n{S}}^1\times\ldots\times\tilde{\n{S}}^1$.
The $0$-cells are products of the two fixed $0$-cells ${\bf e}^0_\pm$ of $\tilde{\n{S}}^1$, namely
${\bf e}^0_{\sigma_1,\ldots,\sigma_d}:=({\bf e}^0_{\sigma_1},\ldots,{\bf e}^0_{\sigma_d})={\bf e}^0_{\sigma_1}\times\ldots\times{\bf e}^0_{\sigma_d}$ where $\sigma_j\in\{+,-\}$ for all $1\leqslant j\leqslant d$. Hence, one has $N_0=2^d$ fixed $0$-cells and $\tilde{N}_0=0$ free $0$-cells.
The $1$-cells are obtained by replacing the free 1-cell $\tilde{\bf e}^1$ of $\tilde{\n{S}}^1$ in place of one of the fixed $0$-cells, \ie
$$
\tilde{\bf e}^1_{\sigma_1,\ldots,\underline{\sigma_j},\ldots,\sigma_d}\;:=\;({\bf e}^0_{\sigma_1},\;\ldots\;,{\bf e}^0_{\sigma_{j-1}},\tilde{\bf e}^1,{\bf e}^0_{\sigma_{j+1}},\;\ldots\;,{\bf e}^0_{\sigma_d})\;.
$$
Then, one has $N_1=0$ fixed $1$-cells and $\tilde{N}_1=d\;2^{d-1}$ free $1$-cells. For the $2$-cells one has that
for each pair $i<j$ the cells
$$
\begin{aligned}
\tilde{\bf e}^2_{\sigma_1,\ldots,\underline{\sigma_i},\ldots,\underline{\sigma_j},\ldots,\sigma_d}\;&:=\;({\bf e}^0_{\sigma_1},\;\ldots\;,{\bf e}^0_{\sigma_{i-1}},\tilde{\bf e}^1,{\bf e}^0_{\sigma_{i+1}},\;\ldots\;,{\bf e}^0_{\sigma_{j-1}},\tilde{\bf e}^1,{\bf e}^0_{\sigma_{j+1}},\;\ldots\;,{\bf e}^0_{\sigma_d})\\
{{\tilde{\bf e}}'}{^2}_{\sigma_1,\ldots,\underline{\sigma_i},\ldots,\underline{\sigma_j},\ldots,\sigma_d}\;&:=\;({\bf e}^0_{\sigma_1},\;\ldots\;,{\bf e}^0_{\sigma_{i-1}},\tilde{\bf e}^1,{\bf e}^0_{\sigma_{i+1}},\;\ldots\;,{\bf e}^0_{\sigma_{j-1}},\tau(\tilde{\bf e}^1),{\bf e}^0_{\sigma_{j+1}},\;\ldots\;,{\bf e}^0_{\sigma_d})
\end{aligned}
$$
behave differently under the $\Z_2$-action of the involution $\tau$ and a simple combinatorial computation leads to 
$N_2=0$ and $\tilde{N}_2=d(d-1)\;2^{d-2}$.
The construction of the higher dimensional cells follow similarly and one obtains that $\tilde{\n{T}}^d$ has no fixed cells of dimension bigger than $0$ (\ie $N_n=0$ for all $n>0$) and $\tilde{N}_n=\binom {d} {n}\;2^{d-1}$ free $n$-cells for all $1\leqslant n\leqslant d$.
}\hfill $\blacktriangleleft$
\end{example}

 \medskip

In both the examples of interest $(\n{S}^d,\tau)$ and $(\n{T}^d,\tau)$ the $\Z_2$-CW-complex structure has fixed cells only in dimension 0. 
This fact  plays an important role for the next result which is the equivariant generalization of  \cite[Chapter 2, Theorem 7.1]{husemoller-94}.
\begin{proposition}[existence of a global $\rr{R}$-section]
\label{prop:stab_ran_R}
Let $(X,\tau)$ be an involutive space such that $X$ has a finite $\Z_2$-CW-complex decomposition with fixed cells only in dimension 0. 
Let us denote with $d$ the dimension of $X$ (\ie the maximal dimension of the cells in the decomposition of $X$). Let $(\bb{E},\Theta)$ be an $\rr{R}$-vector bundle over $(X,\tau)$ with fiber of rank $m$. If $d<2m$ there exists an $\rr{R}$-section $s:X\to\bb{E}$ which is global in the sense that $s(x)\neq 0$ for all $x\in X$.  
\end{proposition}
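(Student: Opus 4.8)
The plan is to mimic the classical argument that a rank $m$ complex vector bundle over a CW-complex of dimension $d<2m$ admits a nowhere-vanishing section (\cite[Chapter 2, Theorem 7.1]{husemoller-94}), carried out by induction over the $\Z_2$-skeleta $X^0\subset X^1\subset\ldots\subset X^d=X$ while keeping track of the \virg{Real} structure. The crucial numerical input is that the fiber of the unit-sphere bundle of $\bb{E}$ is $\n{S}^{2m-1}$, which is $(2m-2)$-connected, so that the obstruction to extending a nowhere-vanishing section across a cell of dimension $n$ lives in $\pi_{n-1}(\n{S}^{2m-1})$ and vanishes whenever $n-1\leqslant 2m-2$, i.e. $n\leqslant 2m-1$. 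Since $d<2m$ forces $n\leqslant d\leqslant 2m-1$ for every cell, this obstruction is never an issue. Note also that $d<2m$ together with $d\geqslant 0$ gives $m\geqslant 1$.

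First I would build a nowhere-vanishing $\rr{R}$-section on the $0$-skeleton $X^0$. On a fixed $0$-cell ${\bf e}^0_j$ with base point $x=\tau(x)$, an $\rr{R}$-section value must be a fixed vector of $\Theta_x$, i.e. a point of $\bb{E}^\Theta_x\simeq\R^m$; since $m\geqslant1$ I may pick any nonzero such vector. On a free $0$-cell $\tilde{\bf e}^0_j\simeq\{x,\tau(x)\}$ with $x\neq\tau(x)$ I would freely choose a nonzero vector $s(x)\in\bb{E}_x$ and then set $s(\tau(x)):=\Theta(s(x))$, which is again nonzero because $\Theta$ is a bijection; this enforces $\tau_\Theta(s)=s$ on the pair. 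The result is a nowhere-vanishing $\rr{R}$-section over $X^0$.

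Next I would run the inductive step: assuming a nowhere-vanishing $\rr{R}$-section $s$ on $X^{n-1}$ with $1\leqslant n\leqslant d$, I extend it over the $n$-cells, which by hypothesis are all \emph{free}, say $\tilde{\bf e}^n=\Z_2\times\n{D}^n$ attached by an equivariant map $\tilde\phi:\Z_2\times\n{S}^{n-1}\to X^{n-1}$. Since $\tau$ swaps the two copies $\{\pm1\}\times\n{D}^n$, it suffices to define $s$ on $\{+1\}\times\n{D}^n$ and then propagate by $\Theta$, setting $s$ on $\{-1\}\times\n{D}^n$ equal to $\tau_\Theta(s)=\Theta\circ s\circ\tau$. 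Over the contractible disk $\{+1\}\times\n{D}^n$ the bundle $\bb{E}$ is trivial, so after a choice of trivialization the nowhere-vanishing boundary data define a map $\n{S}^{n-1}\to\C\setminus\{0\}^{\oplus m}\simeq\n{S}^{2m-1}$; by the connectivity estimate above this map extends to a nowhere-vanishing map on the whole disk. Propagating by $\Theta$ yields a nowhere-vanishing section on $\{-1\}\times\n{D}^n$, and the two pieces agree with the existing section on $X^{n-1}$: since $s$ is already an $\rr{R}$-section there and $\tilde\phi$ is equivariant, the boundary values on $\{-1\}\times\n{S}^{n-1}$ are exactly $\Theta$ applied to those on $\{+1\}\times\n{S}^{n-1}$, which is the compatibility I need. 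Thus $s$ extends to a nowhere-vanishing $\rr{R}$-section on $X^n$, and at $n=d$ I obtain the desired global $\rr{R}$-section on $X=X^d$.

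The main obstacle is the equivariant bookkeeping in the inductive step, namely verifying that the disk-extension on the $+1$ copy together with its $\Theta$-image on the $-1$ copy glue to a genuine continuous $\rr{R}$-section matching the inductive data. This is exactly where the hypothesis that fixed cells occur only in dimension $0$ is essential: for a free cell the relevant fiber is the full $\C^m\setminus\{0\}\simeq\n{S}^{2m-1}$ with its generous connectivity $2m-2$, whereas a fixed cell of positive dimension would force the section to take values in the real part $\bb{E}^\Theta\simeq\R^m$, reducing the relevant fiber to $\n{S}^{m-1}$ and the available range to the far more restrictive bound $n<m$. Avoiding positive-dimensional fixed cells is precisely what lets the single inequality $d<2m$ drive the entire induction.
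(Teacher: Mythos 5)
Your proposal is correct and follows essentially the same route as the paper's own proof: induction over the $\Z_2$-skeleton, with the base case split into fixed and free $0$-cells exactly as you do, and the inductive step extending the section over one copy of each free $n$-cell (using triviality of the pullback over the disk and the $(2m-2)$-connectedness of $\C^m\setminus\{0\}$) and then propagating to the conjugate copy via $\Theta$. The paper phrases the extension through the pullback bundle along the equivariant attaching map and records continuity via the weak topology of the CW-complex, but these are presentational differences, not mathematical ones.
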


\proof

 The \emph{zero} section $s_0(x)=0\in\bb{E}_x$ for all
$x\in X$  is an  $\rr{R}$-section since $\Theta$ is an anti-linear isomorphism in each fiber. Let $\bb{E}^{\times}\subset\bb{E}$ be the subbundle of nonzero vectors. The fibers $\bb{E}^{\times}_x$ are all isomorphic to $(\C^m)^\times:=\C^m\setminus\{0\}$
which is a $2(m-1)$-connected space (\ie the first $2(m-1)$  homotopy groups $\pi_i$ vanish identically). The involution $\Theta$ endows $\bb{E}^{\times}$ with an $\rr{R}$-structure over  $(X,\tau)$. An $\rr{R}$-section of $\bb{E}^{\times}$ can be seen as an 
everywhere-nonzero (\ie global) $\rr{R}$-section for the $\rr{R}$-vector bundle $(\bb{E},\Theta)$. We will show that if $d<2m$ such a section of $\bb{E}^{\times}$ always exists. We notice that each fiber $\bb{E}^{\times}_x$ is $(j-1)$-connected for all $j\leqslant d$.

We prove the claim by induction on the dimension of the skeleton.
 This is the case for $X^0$ which is a finite collection of fixed points $\{x_j\}_{j=1,\ldots,N_0}$ and conjugated pairs $\{x_j, \tau(x_j)\}_{j=1,\ldots,\tilde{N}_0}$.
  In this case a global $\rr{R}$-section is given by setting $s'(x_j)\in\bb{E}^{\Theta}_x\cap \bb{E}^{\times}_x\simeq\R^m\setminus\{0\}$ for the fixed points $x_j$ and  $s'(x_j)\in\bb{E}^{\Theta}_x\simeq\C^m\setminus\{0\}$ for the free pairs $\{x_j, \tau(x_j)\}$
 together with the equivariant constraints $s'(\tau(x_j)):=\Theta(s'(x_j))$. 
Assume now that, for $j$ such that $1 \leqslant j \leqslant d$, the claim is true for the $\Z_2$-CW-subcomplex $X^{j-1}$ of dimension $j-1$. By the inductive hypothesis we have an $\rr{R}$-section $s'$ of the restricted bundle $\bb{E}^{\times}|_{X^{j-1}}$.
Let $Y\subset X$ be a free $j$-cell of $X$ with equivariant attaching map $\phi:\Z_2\times \n{D}^j\to Y\subset X$.
The pullback bundle $\phi^\ast(\bb{E}^{\times})\to \Z_2\times \n{D}^j$ has an $\rr{R}$-structure since $\phi$ is equivariant
and it is locally $\rr{R}$-trivial.
 The $\rr{R}$-section $s'$ defines an $\rr{R}$-section $\sigma':=s'\circ\phi$ on $\phi^\ast(\bb{E}^{\times})|_{\Z_2\times\partial \n{D}^j}$. 
Since $\n{D}^j$ is contractible we know that
$\phi^\ast(\bb{E})$ is trivial as a complex vector bundle. This implies that $\phi^\ast(\bb{E}^{\times})|_{\Z_2\times \n{D}^j}$ is isomorphic to $(\Z_2\times \n{D}^j)\times (\C^m)^\times$ with involution induced by certain anti-linear maps $\{ \pm 1 \} \times \C^m \to \{ \mp 1 \} \times \C^m$. Then, 
the $\rr{R}$-section $\sigma'$
 defined on $\Z_2\times\partial \n{D}^j$ can be  identified with an equivariant map  $\Z_2\times\partial \n{D}^j\to (\C^m)^\times$.
Because $j - 1 \le 2(m-1)$ and $\pi_{j-1}((\C^m)^\times) \simeq 0$, its restriction $\{ 1 \} \times \partial \n{D}^j \to (\C^m)^\times$ prolongs to a map $\{ 1 \} \times \n{D}^j \to (\C^m)^\times$, which, by the equivariant constraint again, prolongs to an equivariant map $\Z_2\times \n{D}^j \to (\C^m)^\times$. 
This prolonged map yields an $\rr{R}$-section $\sigma$ of $\phi^\ast(\bb{E}^{\times})$. Using the natural morphism $\hat{\phi}:\phi^\ast(\bb{E}^{\times})\to \bb{E}^{\times}$ over $\phi$ (defined by the pullback construction) we have a unique $\rr{R}$-section $s_Y$ of $\bb{E}^{\times}|_{\overline{Y}}$ defined by $\hat{\phi}\circ\sigma=s_Y\circ \phi$
such that $s_Y=s'$ on $X^{j-1}\cap \overline{Y}$. 
Now, one defines a global section $s$ of $\bb{E}^{\times}|_{X^{j-1} \cup Y}$
 by the requirements that $s|_{X^{j-1}}\equiv s'$  and $s|_Y \equiv s_Y$ for the free $j$-cell $Y$.   By the weak topology property of CW-complex, $s$ is also continuous. This argument applies to other free $j$-cells, and the claim is true on $X^j$ and eventually on $X^d = X$.
  \qed

\medskip

\begin{remark}{\upshape
The condition that the involutive space $(X,\tau)$ must have fixed cells only in dimension 0 cannot be relaxed. Indeed, as a matter of fact, in the case of trivial involution $(X,{\rm Id}_X)$ the fibers of an $\rr{R}$-bundle are real vector space (\cf Proposition \ref{prop:Rv1}) and 
 $\bb{E}^{\times}_x\simeq\R^m\setminus\{0\}$  is only $(m-2)$-connected.
}\hfill $\blacktriangleleft$
\end{remark}

\medskip

The next theorem generalizes equation \eqref{eq:stab_rank} for the case of \virg{Real} vector bundles.

\begin{theorem}[Stable range]
\label{theo:stab_ran_R}
Let $(X,\tau)$ be an involutive space such that $X$ has a finite $\Z_2$-CW-complex decomposition of dimension $d$ with fixed cells only in dimension 0. 
Each rank $m$ $\rr{R}$-vector bundle $(\bb{E},\Theta)$  over $(X,\tau)$ such that $d<2m$
splits as
\beql{eq:stab_rank_R}
\bb{E}\;\simeq\;\bb{E}_0\;\oplus\;(X\times\C^{m-\sigma})
\eeq
where $\bb{E}_0$ is an $\rr{R}$-vector bundle  over $(X,\tau)$, $X\times\C^{m-\sigma}\to X$ is the trivial $\rr{R}$-vector bundle  over $(X,\tau)$  and $\sigma:=[d/2]$ (here $[x]$ denotes the integer part of $x\in\R$).
\end{theorem}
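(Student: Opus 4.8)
The plan is to proceed by induction on the rank, peeling off one trivial $\rr{R}$-line at a time by means of the nowhere-vanishing $\rr{R}$-sections produced in Proposition \ref{prop:stab_ran_R}; this mirrors the classical argument behind \eqref{eq:stab_rank}, adapted to the equivariant category.

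First I would observe that a single global $\rr{R}$-section yields a trivial $\rr{R}$-line subbundle. Since $d<2m$, Proposition \ref{prop:stab_ran_R} furnishes an $\rr{R}$-section $s:X\to\bb{E}$ with $s(x)\neq 0$ for all $x$. Let $\bb{L}\subset\bb{E}$ be the line subbundle whose fiber over $x$ is the complex span $\C\,s(x)$. Because $s$ is an $\rr{R}$-section one has $\Theta(s(x))=s(\tau(x))$, and the anti-linearity of $\Theta$ gives $\Theta(\bb{L}_x)=\C\,s(\tau(x))=\bb{L}_{\tau(x)}$; hence $\bb{L}$ is $\Theta$-invariant, i.e. an $\rr{R}$-subbundle. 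Since $s$ is then a global $\rr{R}$-frame for $\bb{L}$, Theorem \ref{theo:triv_glob} gives $\bb{L}\simeq X\times\C$ in the $\rr{R}$-category.

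Next I would split off $\bb{L}$ using the equivariant Hermitian metric $\rr{m}$ of Remark \ref{rk:eq_metric}. Let $\bb{L}^\perp$ be the fiberwise orthogonal complement of $\bb{L}$ with respect to $\rr{m}$, a complex subbundle of rank $m-1$ with $\bb{E}\simeq\bb{L}\oplus\bb{L}^\perp$. The key point is that $\bb{L}^\perp$ is again $\Theta$-invariant: if $\rr{m}(v,w)=0$ for all $w\in\bb{L}_x$, then any $w'\in\bb{L}_{\tau(x)}$ can be written $w'=\Theta(w)$ with $w\in\bb{L}_x$, and the relation $\rr{m}(\Theta(v),\Theta(w))=\rr{m}(w,v)$ shows $\rr{m}(\Theta(v),w')=0$, so $\Theta(v)\in\bb{L}^\perp_{\tau(x)}$. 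As the internal direct sum commutes with $\Theta$, this exhibits $\bb{E}$, in the $\rr{R}$-category, as the trivial $\rr{R}$-line $X\times\C$ plus the rank $(m-1)$ $\rr{R}$-bundle $\bb{L}^\perp$.

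Finally I would iterate. Writing $\sigma=[d/2]$, a rank $r$ $\rr{R}$-bundle admits a global $\rr{R}$-section whenever $d<2r$, equivalently $r\geqslant\sigma+1$, so the construction above may be repeated as long as the surviving complement has rank at least $\sigma+1$. After $m-\sigma$ steps one reaches an $\rr{R}$-bundle $\bb{E}_0$ of rank $\sigma$, from which no further trivial line is guaranteed, and the accumulated trivial summands assemble into $X\times\C^{m-\sigma}$, giving \eqref{eq:stab_rank_R}. The one point requiring genuine care is the $\Theta$-invariance of the orthogonal complement, which rests entirely on the anti-unitarity identity $\rr{m}(\Theta(p_1),\Theta(p_2))=\rr{m}(p_2,p_1)$ recorded in Remark \ref{rk:eq_metric}; the bookkeeping of the stable-range inequality $d<2r$ at each stage is then routine.
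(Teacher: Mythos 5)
Your proposal is correct and follows essentially the same route as the paper: both extract a nowhere-vanishing $\rr{R}$-section from Proposition \ref{prop:stab_ran_R}, use it to split off a trivial $\rr{R}$-line, and iterate as long as the stable-range inequality $d<2r$ holds, terminating at rank $\sigma=[d/2]$. The only difference is in implementation: the paper realizes the splitting by forming the cokernel of the equivariant monomorphism $f(x,a)=a\,s(x)$ and citing the splitting theorem for bundles over compact spaces, whereas you take the orthogonal complement with respect to the equivariant metric of Remark \ref{rk:eq_metric} and verify its $\Theta$-invariance explicitly --- two equivalent mechanisms for the same step.
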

\proof
By Proposition \ref{prop:stab_ran_R} there is a global $\rr{R}$-section $s:X\to\bb{E}$ such that $s(x)\neq0$
This section determines a monomorphism $f:X\times\C\to \bb{E}$ given by $f(x,a):=as(x)$. This mono-morphism is equivariant, \ie
$f(\tau(x),\bar{a})=\bar{a}s(\tau(x))=\Theta(as(x))$. Let $\bb{E}'$ be the \emph{coker} of $f$ in $\bb{E}$, namely
$\bb{E}'$ is the quotient of $\bb{E}$ by the following relation $p\sim p'$ if $p$ and $p'$ are in the same fiber of $\bb{E}$ and if $p-p'\in{\rm Im}(f)$. The map $\bb{E}'\to X$ is a vector bundle of rank $m-1$ (\cf \cite[Chapter 3, Corollary 8.3]{husemoller-94}) which inherits an $\rr{R}$-structure from $\bb{E}$ and the equivariance of $f$.
Since $X$ is  compact, by \cite[Chapter 3, Theorem 9.6]{husemoller-94} there is an isomorphism of $\rr{R}$-bundles between 
$\bb{E}$ and $\bb{E}'\;\oplus\;(X\times\C)$. If $d<2(m-1)$ one can repeat the argument for $\bb{E}'$ and iterating this procedure  one gets \eqref{eq:stab_rank_R}.
\qed

\medskip

As  immediate consequence we have the following important result:

\begin{corollary}
\label{corol_(iii)}
Let $(X,\tau)$ be an involutive space such that $X$ has a finite $\Z_2$-CW-complex decomposition of dimension $2\leqslant d\leqslant 3$  with fixed cells only in dimension 0. 
Then the condition ${\rm Vec}^1_{\rr{R}}(X, \tau)=0$ implies ${\rm Vec}^m_{\rr{R}}(X, \tau)=0$ for all $m\in\N$.
\end{corollary}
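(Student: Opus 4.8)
The plan is to deduce this statement directly from the stable range decomposition of Theorem \ref{theo:stab_ran_R}, reducing the vanishing for arbitrary rank to the rank-one hypothesis. The key observation is purely numerical: for $2 \leqslant d \leqslant 3$ one has $\sigma := [d/2] = 1$, and the stable range condition $d < 2m$ is satisfied precisely when $m \geqslant 2$ (since then $2m \geqslant 4 > d$).

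For the main case $m \geqslant 2$ I would argue as follows. Let $(\bb{E}, \Theta)$ be any rank $m$ $\rr{R}$-bundle over $(X,\tau)$. Because $d < 2m$, Theorem \ref{theo:stab_ran_R} applies and yields an $\rr{R}$-isomorphism
\[
\bb{E} \;\simeq\; \bb{E}_0 \;\oplus\; (X \times \C^{m-1})\;,
\]
where $\bb{E}_0$ is a rank one $\rr{R}$-bundle over $(X,\tau)$ and $X \times \C^{m-1}$ carries the product $\rr{R}$-structure. By the hypothesis ${\rm Vec}^1_{\rr{R}}(X,\tau) = 0$, the summand $\bb{E}_0$ is $\rr{R}$-trivial, that is $\bb{E}_0 \simeq X \times \C$ in the $\rr{R}$-category. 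Combining the two isomorphisms gives $\bb{E} \simeq X \times \C^m$, so $\bb{E}$ is $\rr{R}$-trivial and ${\rm Vec}^m_{\rr{R}}(X,\tau) = 0$.

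The remaining case $m = 1$ is nothing but the hypothesis itself, so the proof would be complete. There is in fact no genuine obstacle here: all the substantial work has been carried out in Proposition \ref{prop:stab_ran_R} and Theorem \ref{theo:stab_ran_R}, and the only point that needs checking is the numerology above — namely that in the window $d \in \{2,3\}$ the minimal rank $\sigma$ equals one, so that a single application of the stable range splitting reduces every bundle to a trivial summand plus a rank one $\rr{R}$-bundle which is governed by the assumption.
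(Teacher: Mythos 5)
Your proposal is correct and follows essentially the same route as the paper: a single application of the stable range splitting of Theorem \ref{theo:stab_ran_R}, noting that $\sigma=[d/2]=1$ for $d\in\{2,3\}$, so that $\bb{E}\simeq\bb{E}_0\oplus(X\times\C^{m-1})$ with $\bb{E}_0$ of rank one, which is $\rr{R}$-trivial by hypothesis. Your write-up is in fact slightly more careful than the paper's one-line proof, since you verify the stable range condition $d<2m$ for $m\geqslant 2$ and dispose of the case $m=1$ separately as being the hypothesis itself.
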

\proof
According to Theorem \ref{theo:stab_ran_R} each $\rr{R}$-vector bundle $(\bb{E},\Theta)$ splits   as the sum $\bb{E}_0\oplus (X\times\C^{m-1})$
and $\bb{E}_0$ has rank 1.
\qed

\medskip

We end this section with the proof of a useful result 
based on the notion of $\Z_2$-CW-complex decomposition
that is  repeatedly applied in this work.

\begin{lemma}[$\Z_2$-homotopy reduction]
\label{lemma:Z2-reduct}
Let $(X_1,\tau_1)$ and $(X_2,\tau_2)$ be two involutive spaces. Assume that: 
\begin{enumerate}
\item[a)] $X_1$ is a finite $\Z_2$-CW complex;
\item[b)] The fixed point set $X_2^{\tau_2}\subset X_2$
is non empty;
\item[c)] There exist $d_1,d_2$ such that 
$\pi_k(X_2^{\tau_2})=0$ for all $0\leqslant k\leqslant d_1$ and $\pi_k(X_2)=0$ for all $0\leqslant k\leqslant d_2$.
\end{enumerate}
Let $Y\subset X_1$ be a $\Z_2$-subcomplex which contains fixed cells of 
dimension at most $d_1$ and free cells of dimension at most $d_2$
Then, each equivariant map $\varphi:X_1\to X_2$ is
$\Z_2$-equivariantly homotopic to a map $\varphi'$ which has a constant value on $Y$, \ie $\varphi'(Y)=\ast\in X_2^{\tau_2}$.
\end{lemma}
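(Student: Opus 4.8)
The plan is to fix a base fixed point $\ast\in X_2^{\tau_2}$ (available by hypothesis (b)) and to deform $\varphi$ to the constant value $\ast$ on $Y$ by an equivariant homotopy constructed skeleton by skeleton over the $\Z_2$-subcomplex $Y$. Writing $Y^n$ for the $n$-skeleton of $Y$, I would prove by induction on $n$ the assertion that $\varphi$ is $\Z_2$-equivariantly homotopic, as a map on all of $X_1$, to a map which is constantly equal to $\ast$ on $Y^n$. Since $X_1$, and hence $Y$, is a finite $\Z_2$-CW complex, this induction terminates at $n=\dim Y$ and produces the desired $\varphi'$.

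For the base step over $Y^0$ I would argue as follows. A fixed $0$-cell is a single fixed point $x$, and equivariance forces $\varphi(x)\in X_2^{\tau_2}$; since $\pi_0(X_2^{\tau_2})=0$ there is a path in $X_2^{\tau_2}$ from $\varphi(x)$ to $\ast$, yielding an equivariant homotopy on the cell that stays inside the fixed-point set. A free $0$-cell is a conjugate pair $\{x,\tau_1(x)\}$; using $\pi_0(X_2)=0$ I would pick a path from $\varphi(x)$ to $\ast$ in $X_2$ and transport it by $\tau_2$ over $\tau_1(x)$, obtaining an equivariant homotopy ending at $\ast$ (recall $\tau_2(\ast)=\ast$).

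For the inductive step, assume $\varphi\equiv\ast$ on $Y^{n-1}$. Each $n$-cell of $Y$ has its attaching map landing in $Y^{n-1}$, so $\varphi$ restricted to the cell sends the boundary sphere to $\ast$. On a free $n$-cell $\Z_2\times\n{D}^n$ the restriction to $\{+1\}\times\n{D}^n$ represents an element of $\pi_n(X_2,\ast)$, which vanishes because $n\leqslant d_2$; hence this restriction is null-homotopic rel boundary, and transporting the homotopy by $\tau_2$ to $\{-1\}\times\n{D}^n$ produces an equivariant homotopy, rel boundary, to the constant $\ast$. On a fixed $n$-cell (necessarily with $n\leqslant d_1$) equivariance sends the whole cell into $X_2^{\tau_2}$, so the restriction represents an element of $\pi_n(X_2^{\tau_2},\ast)=0$ and is null-homotopic rel boundary \emph{inside} $X_2^{\tau_2}$, which makes the homotopy automatically equivariant. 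These cell-wise homotopies are stationary at $\ast$ along the common boundary in $Y^{n-1}$, so they glue to an equivariant homotopy $h$ on $Y^n$ with $h_0=\varphi|_{Y^n}$ and $h_1\equiv\ast$.

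Finally I would invoke the \emph{equivariant homotopy extension property} for the $\Z_2$-cofibration $Y^n\hookrightarrow X_1$ — the inclusion of a $\Z_2$-subcomplex into a $\Z_2$-CW complex is a $\Z_2$-cofibration, see \cite{tom-dieck-87} — to extend $h$ to a $\Z_2$-equivariant homotopy $H$ on all of $X_1$ with $H_0=\varphi$. The new map $H_1$ is equivariantly homotopic to $\varphi$, equals $\ast$ on $Y^n$, and, because $h$ was stationary on $Y^{n-1}$, still equals $\ast$ there, completing the inductive step. The homotopy-theoretic input reduces entirely to the vanishing of the low-dimensional homotopy groups in hypothesis (c), applied to $X_2$ for free cells and to $X_2^{\tau_2}$ for fixed cells; the genuinely delicate points, and the part I expect to require the most care, are the equivariant homotopy extension property together with the bookkeeping that keeps every cell-wise deformation simultaneously compatible with $\tau_1$ and $\tau_2$.
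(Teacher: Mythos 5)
Your proposal is correct and follows essentially the same route as the paper's proof: induction over the skeleta of $Y$, treating fixed cells via the vanishing of $\pi_k(X_2^{\tau_2})$ and free cells via the vanishing of $\pi_k(X_2)$ with the homotopy on one half of a free cell transported by $\tau_2$, then extending to all of $X_1$ by the $\Z_2$-homotopy extension property (the paper cites Matumoto for this, applying it cell by cell rather than once per skeleton, which is only a bookkeeping difference). No gaps to report.
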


\begin{proof}
Throughout the proof, we will write $c$ for the constant map with value $\ast \in X_2^{\tau_2}$ defined on any space. The lemma is proved by an inductive argument. 
For this aim, let $Y^0 \subset Y^1 \subset \ldots \subset Y^n = Y$ be a $\Z_2$-skeleton
decomposition of $Y$, so that $Y^{j+1}$ is given by attaching $(j+1)$-dimensional
$\Z_2$-cells to $Y^j$ and fix $n={\rm max}\{d_1,d_2\}$.

As  initial step we prove that there exists an equivariant homotopy from $\varphi$ to a map $\varphi_0$ such that $\varphi_0(Y^0) = \ast$.
The 0-skeleton of $Y$
 can be decomposed according to \eqref{eq:zero_skel}
in a disjoint union of fixed 0-cells ${\bf e}^0_j$ and free 0-cells $\tilde{\bf e}^0_j$. Because $\varphi|_{{\bf e}_j^0} \in X_2^{\tau_2}$ and $\pi_0(X_2^{\tau_2}) = 0$, there is a $\Z_2$-equivariant homotopy from $\varphi|_{{\bf e}_j^0}$ to $c$. If we express a free $\Z_2$-cell as $\tilde{\bf e}^0_j = \{ \pm 1 \} \times \n{D}^0_j$, then $\pi_0(X_2) = 0$ gives a homotopy from $\varphi|_{ \{ 1 \} \times \n{D}^0}$ to $c$. By means of the involution $\tau_2$, this homotopy extends to a  $\Z_2$-equivariant homotopy from $\varphi|_{\tilde{\bf e}^0_j}$ to $c$. Assembling these equivariant homotopies on $0$-cells together we get an equivariant homotopy from $\varphi|_{Y^0}$ to $c$. We can extend this to a $\Z_2$-equivariant homotopy from $\varphi$ to $\varphi_0 : X_1 \to X_2$ such that $\varphi_0(Y^0) = \ast$, applying the $\Z_2$-homotopy extension property \cite{matumoto-71} to the subcomplex $Y^0 \subset X_1$.

Now, let us assume that there exists an equivariant homotopy from $\varphi$ to $\varphi_j : X_1 \to X_2$ such that $\varphi_j(Y^j) = \ast$. By definition, $Y^{j+1}$ is obtained from $Y^j$ by attaching 
fixed and free $\Z_2$-cells of dimension $j+1$. Let ${\bf e}^{j+1}$ be a fixed cell and consider the  subcomplex $Y^j \cup {\bf e}^{j+1}$.
The restriction of $\varphi_j$ to ${\bf e}^{j+1}$ defines an element $[\varphi_j|_{{\bf e}^{j+1}}] \in \pi_{j+1}(X_2^{\tau_2})$. Then, if $j+1 \le d_1$ from the hypothesis it follows that $\pi_{j+1}(X_2^{\tau_2}) = 0$ and there is a $\Z_2$-equivariant homotopy from $\varphi_j|_{{\bf e}^{j+1}}$ to $c$. This defines an equivariant homotopy from $\varphi_j|_{Y^j \cup {\bf e}^{j+1}}$ to $c$. Applying the $\Z_2$-homotopy extension property, this homotopy further extends to an equivariant homotopy from $\varphi_j$ to a map $\varphi_{j}' : X_1 \to X_2$ such that $\varphi_{j}'(Y^j \cup {\bf e}^{j+1}) = \ast$.
This construction can be repeated verbatim adding to $Y^j$ all other fixed cells needed to build $Y^{j+1}$.
Now, we can deal with the case of a
  free cell $\tilde{{\bf e}}^{j+1} = \{ \pm 1 \} \times \n{D}^{j+1}$.
 Let ${Y^j}':=Y^j\cup{\bf e}^{j+1}\cup\ldots\cup {\bf e}^{j+1}$ be
  the $j$-skeleton plus all the fixed cells of the $j+1$ skeleton.
  Let us consider the subcomplex ${Y^j}'\cup \tilde{{\bf e}}^{j+1}$. 
  The restriction of $\varphi'_j$ to $\{ 1 \} \times \n{D}^{j+1}$ defines an element $[\varphi'_j|_{\{ 1 \} \times \n{D}^{j+1}}] \in \pi_{j+1}(X_2)$. Hence,
  if $j+1 \le d_2$ from the hypothesis it follows that $\pi_{j+1}(X_2) = 0$ and there is a homotopy from $\varphi'_j|_{\{ 1 \} \times \n{D}^{j+1}}$ to $c$.
   We use the involution $\tau_2$ to extend this homotopy to an equivariant homotopy from $\varphi'_j|_{\tilde{\bf e}^{j+1}}$ to $c$
 and consequently we get an equivariant homotopy between  $\varphi'_j|_{{Y^j}'\cup \tilde{{\bf e}}^{j+1}}$ to $c$. Certainly 
we can repeat this construction for all the other $(j+1)$-dimensional free cells in $Y^{j+1}$ showing that there exists an equivariant homotopy between 
$\varphi'_j|_{Y^{j+1}}$ and $c$. By means of the $\Z_2$-homotopy extension property adapted to $Y^{j+1} \subset X_1$
we can construct an equivariant homotopy from $\varphi'_j$ to $\varphi_{j+1} : X_1 \to X_2$ such that $\varphi_{j+1}(Y^{j+1}) = \ast$.

An iterated application of this construction gives  finally an equivariant homotopy from $\varphi$ to $\varphi' := \varphi_n$ such that $\varphi'(Y) = \ast$.
\end{proof}

\section{Cohomology classification for \virg{Real} vector bundles}
\label{sec:coomol_class-Rbund}
Chern classes are an important tool for the study of complex vector bundles. In a similar way, for $\rr{R}$-vector bundles one can define a proper theory of characteristic classes: the \emph{\virg{Real} Chern classes}. This cohomology theory has been introduced for the first time by B. Kane in \cite{kahn-59} and subsequently studied in \cite{krasnov-92,pitsch-scherer-13}.

\subsection{Equivariant cohomology and Borel construction}
\label{ssec:borel_constr}
In this section we recall the 
 \emph{Borel construction} for the equivariant cohomology
 \cite{borel-60}.  For a more technical introduction to this subject  we refer to
\cite[Chapter 3]{hsiang-75} and \cite[Chapter 1]{allday-puppe-93}.

\medskip

The \emph{homotopy quotient} of an involutive space   $(X,\tau)$ is the orbit space
$$
{X}_{\sim\tau}\;:=\;X\times\hat{\n{S}}^\infty /( \tau\times \vartheta)\;.
$$
Here $\vartheta$ is the antipodal map on the infinite sphere $\n{S}^\infty$ (\cf Example \ref{ex:sphere_antipodal}).
We point out that the product space $X\times{\n{S}}^\infty$ (forgetting for a moment the $\Z_2$-action) has the \emph{same} homotopy type of $X$ 
since $\n{S}^\infty$ is contractible. Moreover, since $\vartheta$ is a free involution,  also the composed involution $\tau\times\vartheta$ is free, independently of $\tau$.
Let $\s{R}$ be any commutative ring (\eg, $\R,\Z,\Z_2,\ldots$). The \emph{equivariant cohomology} ring 
of $(X,\tau)$
with coefficients
in $\s{R}$ is defined as
$$
H^\bullet_{\Z_2}(X,\s{R})\;:=\; H^\bullet({X}_{\sim\tau},\s{R})\;.
$$
More precisely each equivariant cohomology group $H^j_{\Z_2}(X,\s{R})$ is given by the
 singular cohomology group  $H^j({X}_{\sim\tau},\s{R})$ of the  homotopy quotient ${X}_{\sim\tau}$ with coefficients in $\s{R}$ and the ring structure is given as usual by the {cup product}.

\medskip

Since the $\Z_2$-action  is free,  the canonical  projection 
$
X\times \hat{\n{S}}^\infty /( \tau\times \vartheta)\to\hat{\n{S}}^\infty / \vartheta\simeq \R P^\infty
$
has fiber  $X$ and the sequence
\begin{equation}\label{eq:ex_hom1}
X\;\hookrightarrow\;{X}_{\sim\tau}\;\longrightarrow\;\R P^\infty
\end{equation}
is  known as \emph{canonical} fibration. The projection ${X}_{\sim\tau}\to\R P^\infty$ makes $H^\bullet_{\Z_2}(X,\s{R})$ into a $\s{K}_{\s{R}}$-module where
\beql{eq:KR-coeffic}
\s{K}_{\s{R}}\;:=\;H^\bullet_{\Z_2}(\{\ast\},\s{R})\simeq H^\bullet(\R P^\infty,\s{R})
\eeq
is  the \emph{coefficient ring} associated to the group action of $\Z_2$. 
A space $X$ with a $\Z_2$-action is called \emph{equivariantly formal} if the 
 associated equivariant cohomology ring with coefficient in $\s{R}$ can be expressed as
\begin{equation}\label{eq:eq_torus_1}
H^\bullet_{\Z_2}(X,\s{R})\; =\; H^\bullet(X,\s{R})\;\otimes \;\s{K}_{\s{R}}\;. 
\end{equation}
The cases $\s{R}=\Z$ and $\s{R}=\Z_2$ are of particular interest for us. From \cite[Proposition 1.3.1 and Proposition 1.3.2]{allday-puppe-93} one has
\begin{equation}\label{eq_proj_hom}
H^k(\R P^\infty,\Z)\;\simeq\;
\left\{
\begin{aligned}
&\Z&&\text{if}\ \ k=0\\
&\{0\}&&\text{if}\ \ k\ \ \text{odd}\\
&\Z_2&&\text{if}\ \ k\ \ \text{even}
\end{aligned}
\right.\qquad\text{and}\qquad H^k(\R P^\infty,\Z_2)\;\simeq\;\Z_2\;,\quad\forall\; k\geqslant0\;\;.
\end{equation}
In particular, the ring structures induced by the cup product are described by
$$
\s{K}_{\Z}\;:=\;H^\bullet(\R P^\infty,\Z)\; =\; \Z[\kappa]/(2\kappa)\;\qquad\text{and}\qquad \s{K}_{\Z_2}\;:=\;H^\bullet(\R P^\infty,\Z_2)\; =\; \Z_2[\xi]
$$
where 
$\kappa\in H^2(\R P^\infty,\Z)$ such that $2\kappa=0$ and $\xi\in H^1(\R P^\infty,\Z_2)$.

\medskip

This {canonical fibration} \eqref{eq:ex_hom1} induces a long exact sequence of homotopy groups:
\begin{equation}\label{eq:ex_hom2}
\ldots \;\longrightarrow\;\pi_k(X)\;\longrightarrow\;\pi_k({X}_{\sim\tau})\;\longrightarrow\;\pi_k(\R P^\infty)\;\longrightarrow\;
\pi_{k-1}(X)\;\longrightarrow\;\ldots\;.
\end{equation}
Since $\pi_1(\R P^\infty)=\Z_2$ and $\pi_k(\R P^\infty)=0$ if $k\geqslant 2$ one has
$\pi_k(X)=\pi_k({X}_{\sim\tau})$ for $k\geqslant 2$ and
\begin{equation}\label{eq:ex_hom3}
1\;\longrightarrow\;\pi_1(X)\;\stackrel{\imath}{\longrightarrow}\;\pi_1({X}_{\sim\tau})\;\stackrel{w}{\longrightarrow}\;\Z_2\;\longrightarrow\;1\;.
\end{equation}
Hence, $\pi_1({X}_{\sim\tau})$ is a \emph{group extension} of $\Z_2$ by $\pi_1(X)$.

\medskip

As the coefficients of
the usual singular cohomology are generalized to \emph{local coefficients} (see \eg \cite[Section 3.H]{hatcher-02} or
\cite[Section 5]{davis-kirk-01}), the coefficients of Borel's equivariant cohomology are also
generalized to local coefficients. Given an involutive space $(X,\tau)$ let us consider the homotopy group $\pi_1({X}_{\sim\tau})$
and the associated  \emph{group ring} $\Z[\pi_1({X}_{\sim\tau})]$. Each module $\s{Z}$ over the group $\Z[\pi_1({X}_{\sim\tau})]$ is, by definition,
a \emph{local system} on $X_{\sim\tau}$.  Using this local system one defines, as usual, the equivariant cohomology with local coefficients in $\s{Z}$:
$$
H^\bullet_{\Z_2}(X,\s{Z})\;:=\; H^\bullet({X}_{\sim\tau},\s{Z})\;.
$$
We are particularly interested in modules $\s{Z}$ whose underlying groups are identifiable with $\Z$.
In
this case, the module structure on $\s{Z}$ defines a homomorphism of groups $\pi_1({X}_{\sim\tau})\to{\rm Aut}_\Z(\Z)=\Z_2$, and vice verse. Such homomorphisms are in one to one correspondence
with elements in $H^1_{\Z_2}(X,\Z_2):=H^1({X}_{\sim\tau},\Z_2)\simeq{\rm Hom}(\pi_1({X}_{\sim\tau}),\Z_2)$
(see the proof of \cite[Corollary 5.6]{davis-kirk-01}). Thus, we can  identify each local system $\s{Z}$ (with underlying group $\Z$) with an
element $w\in H^1_{\Z_2}(X,\Z_2)$. This group classifies also \emph{$\Z_2$-equivariant} real line bundles.

\medskip

On a space  $(X,\tau)$, there always exists a particular local system $\Z(m)$
for each $m\in\Z$. This  is defined
to be the group $\Z$ on that $\pi_1(X_{\sim\tau})$ acts by
the natural homomorphism
in the homotopy exact sequence \eqref{eq:ex_hom3}. More precisely,  one has that 
$w[\gamma]$ is the identity over $\Z$ if $\gamma=\imath(\gamma')$ for a loop $\gamma'\in\pi_1(X)$, otherwise
$w[\gamma]$ is the multiplication  $(-1)^m:\Z\to\Z$. Similarly, $\Z(m)$ can be identified with the {$\Z_2$-equivariant line bundle}  $X\times\R\to X$ with $\Z_2$-action $(x,r)\mapsto(\tau(x),(-1)^mr)$.
Because the module structure depends only on the parity of $m$, we consider only the $\Z_2$-modules ${\Z}(0)$ and ${\Z}(1)$. Since ${\Z}(0)$ corresponds to the case of the trivial action of $\pi_1(X_{\sim\tau})$ on $\Z$ (equivalently to the trivial element in $H^1_{\Z_2}(X,\Z_2)$) one has $H^k_{\Z_2}(X,\Z(0))\simeq H^k_{\Z_2}(X,\Z)$ \cite[Section 5.2]{davis-kirk-01}. An important ingredient for the calculation of the equivariant cohomology with local coefficients $\Z(1)$ is the determination of the {coefficient modules} $\s{K}_{\Z(1)}\;:=\;H^\bullet_{\Z_2}(\{\ast\},\Z(1))$. In this case the local system is associated with a non-trivial  element of $H^1_{\Z_2}(\{\ast\},\Z_2)=H^1(\R P^\infty,\Z_2)$ called the \emph{orientation character} of $\R P^\infty$ and 
one has \cite[Chapter II, Example 11.3]{bredon-97}
\beql{eq:cohomPR_Z1}
H^k\big(\R P^\infty,\Z(1)\big)\;\simeq\;
\left\{
\begin{aligned}
&\{0\}&&\text{if}\ \ k=0\ \ \text{or}\ \ k\ \ \text{even}\\
&\Z_2&&\text{if}\ \ k\ \ \text{odd}\\
\end{aligned}
\right.
\eeq
with module structure (over $\s{K}_{\Z}$) given by 
\beql{eq:KR-coefficZ1}
\s{K}_{\Z(1)}\;:=\; H^\bullet(\R P^\infty,\Z(1))\;=\;\alpha_1\ \Z[\alpha_2]/(2\alpha_1,2\alpha_2)\;,
\eeq
where $\alpha_j\in H^j(\R P^\infty,\Z(1))$ such that $2\alpha_j=0$, $j=1,2$. 

\medskip

The following recursive formulas are extremely useful for the computation of equivariant cohomology groups: 
\beql{eq:recurs01}
\begin{aligned}
H^k_{\Z_2}(X\times\tilde{\n{S}}^1,\Z(1))&\;\simeq\;H^k_{\Z_2}(X,\Z(1))\;\oplus\; H^{k-1}_{\Z_2}(X,\Z) \\
H^k_{\Z_2}(X\times\tilde{\n{S}}^1,\Z)&\;\simeq\;H^k_{\Z_2}(X,\Z)\;\oplus\; H^{k-1}_{\Z_2}(X,\Z(1))\;.
\end{aligned}
\eeq
Here, $X$ is any space  with a $\Z_2$-involution and $\tilde{\n{S}}^1$ is the circle with  TR-involution as defined in Example \ref{ex:homot_spher}. These formulas can be obtained by a suitable application of the \emph{Gysin sequence}. The details of the proof can be found in \cite[Section 2.6]{gomi-13}. We point out that under the conditions stated in Assumption \ref{ass:1} the isomorphisms \eqref{eq:recurs01}  are valid for all $k\in\Z$
with $H^k_{\Z_2}(X,\Z)=H^k_{\Z_2}(X,\Z(1))=0$ for all $k<0$,
$H^0_{\Z_2}(X,\Z)=\Z$ (one path connected component) and $H^0_{\Z_2}(X,\Z(1))=0$.

\medskip

Because of its definition via  Borel construction, the cohomology ring $H^\bullet_{\Z_2}(X,\Z(m))$ defines a $\Z_2$-equivariant cohomology theory  \cite[Section 2.4]{gomi-13}. This means that the homotopy, excision, exactness and additivity axioms are satisfied. There also exists the corresponding reduced theory $\tilde{H}^\bullet_{\Z_2}(X,\Z(m))$ defined as the kernel of the homomorphism
${H}^\bullet_{\Z_2}(X,\Z(m))\to {H}^\bullet_{\Z_2}(\{\ast\},\Z(m))$ induced from the inclusion $\ast\to X$ where $\ast$ is a fixed point of $(X,\tau)$. Then one has the usual decomposition
\beql{eq:susp1}
{H}^k_{\Z_2}(X,\Z(m))\;\simeq\;\tilde{H}^k_{\Z_2}(X,\Z(m))\;\oplus\; {H}^k_{\Z_2}(\{\ast\},\Z(m))\;.
\eeq
Let us denote with $\tilde{I}:=[-1,1]$ the involutive space endowed with the $\Z_2$-action $x\mapsto -x$. The space
\beql{eq:susp2}
\tilde{\Sigma}X\;:=\;(X\times\tilde{I})/(X\times \partial\tilde{I}\;\cup\;\{\ast\}\times\tilde{I})\;,
\eeq
called 
{equivariant} reduced suspension, has a well defined $\Z_2$-action and the \emph{equivariant suspension formula}
\beql{eq:susp3}
\tilde{H}^k_{\Z_2}(\tilde{\Sigma}X,\Z(m))\;\simeq\;\tilde{H}^{k-1}_{\Z_2}(X,\Z(m-1))
\eeq
holds true.

\subsection{The Picard group of \virg{Real} line bundles}\label{ssect:Picard-R}
As for the complex case (\cf Section \ref{ssec:comp-Picard}), 
 the tensor product induces a group structure  on the set ${\rm Vec}^1_\rr{R}(X,\tau)$ \cite[Section 2]{edelson-71_2}. Moreover,  ${\rm Vec}^1_\rr{R}(X,\tau)$ can be identified with ${\rm Pic}(X,\tau)$, the \emph{Picard group} of isomorphism classes of $\tau$-equivariant {invertible sheaves} on $X$. This leads to a classification of ${\rm Vec}^1_\rr{R}(X,\tau)$ of the type \eqref{eq:clasLin_bun}.  

\begin{theorem}[Classification of  $\rr{R}$-line bundles \cite{kahn-59,krasnov-92}]
\label{theo:clasR-lin}
Let $(X,\tau)$ an involutive space with $X$ that verifies Assumption \ref{ass:1}. There is an isomorphism of groups
\begin{equation}\label{eq:clasLin_bun2}
\tilde{c}_1\;:\; {\rm Vec}^1_\rr{R}(X,\tau)\;\longrightarrow\;H^2_{\Z_2}\big(X,\Z(1)\big)
\end{equation}
and the map $\tilde{c}_1$ is called first \emph{\virg{Real} Chern class}.
\end{theorem}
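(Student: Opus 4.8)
The plan is to imitate, in the equivariant category, the computation of the complex Picard group carried out in Section \ref{ssec:comp-Picard}, replacing the exponential sheaf sequence by a \emph{Real} version whose kernel carries the twist $\Z(1)$. First I would represent an $\rr{R}$-line bundle $(\bb{E},\Theta)$ by \emph{Real transition functions}. By Proposition \ref{prop:loc_triv} one may choose a $\tau$-invariant trivializing cover $\{U_\alpha\}$ together with $\rr{R}$-trivializations $h_\alpha$, so that the resulting cocycle $\{g_{\alpha\beta}\}$, with $g_{\alpha\beta}:U_\alpha\cap U_\beta\to\n{U}(1)$, satisfies in addition the reality constraint
\[
\overline{g_{\alpha\beta}(\tau(x))}\;=\;g_{\alpha\beta}(x),
\]
which is forced by the anti-linearity $(\rr{R}_2)$ of $\Theta$ once the charts are chosen $\tau$-compatibly as in the two cases of Proposition \ref{prop:loc_triv}. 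Exactly as in the complex case, isomorphism classes of such cocycles modulo Real coboundaries form a group under tensor product, and this identifies ${\rm Vec}^1_\rr{R}(X,\tau)\simeq{\rm Pic}(X,\tau)$ with the first equivariant cohomology $H^1_{\Z_2}(X,\bb{O}_\rr{R}^\ast)$ of the sheaf $\bb{O}_\rr{R}^\ast$ of germs of invertible \emph{Real} continuous functions.

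Next I would set up the \emph{equivariant exponential sequence}
\[
0\;\longrightarrow\;\Z(1)\;\longrightarrow\;\bb{O}_\rr{R}\;\stackrel{{\rm exp}}{\longrightarrow}\;\bb{O}_\rr{R}^\ast\;\longrightarrow\;1,
\]
where $\bb{O}_\rr{R}$ is the sheaf of Real $\C$-valued continuous functions. The decisive point — and the genuine source of the twist — is the identification of the kernel: $\ker({\rm exp})=2\pi\ii\,\Z$, and the complex conjugation built into the Real structure sends $2\pi\ii\mapsto-2\pi\ii$, so that $\Z_2$ acts on this copy of $\Z$ by multiplication by $-1$. Hence the kernel is precisely the local system $\Z(1)$ of Section \ref{ssec:borel_constr}, and not the untwisted $\Z(0)$; this is exactly why the target of $\tilde{c}_1$ is $H^2_{\Z_2}(X,\Z(1))$.

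The long exact sequence in equivariant cohomology then reads
\[
\cdots\to H^1_{\Z_2}(X,\bb{O}_\rr{R})\to H^1_{\Z_2}(X,\bb{O}_\rr{R}^\ast)\stackrel{\delta}{\longrightarrow}H^2_{\Z_2}(X,\Z(1))\to H^2_{\Z_2}(X,\bb{O}_\rr{R})\to\cdots,
\]
so it remains to show that $\bb{O}_\rr{R}$ is equivariantly acyclic in positive degrees. This I would obtain from fineness: the $\tau$-invariant partitions of unity constructed in Remark \ref{rk:eq_metric} exhibit $\bb{O}_\rr{R}$ as a soft (indeed fine) sheaf in the equivariant sense, whence $H^k_{\Z_2}(X,\bb{O}_\rr{R})=0$ for $k\geqslant1$. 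Consequently $\delta$ is an isomorphism, and $\tilde{c}_1:=\delta$ composed with the identification of the first paragraph is the desired isomorphism; that it intertwines $\otimes$ with $+$ is automatic from the connecting-homomorphism formalism, precisely as for the classical $c_1$ in \eqref{eq:clasLin_bun}.

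The main obstacle is the technical interface between the \emph{sheaf} picture on $(X,\tau)$ and the \emph{Borel} cohomology $H^\bullet_{\Z_2}(X,-)=H^\bullet(X_{\sim\tau},-)$ used throughout the paper: one must verify that the equivariant cohomology of the Real sheaves actually computes Borel cohomology with the local system $\Z(1)$. The safest way is to transport the entire exponential sequence to the homotopy quotient $X_{\sim\tau}$, where $\Z(1)$ is a genuine local system and $\bb{O}_\rr{R}$ stays fine, and to check that the reality constraint on cocycles produces the $-1$ monodromy rather than the trivial one. A cleaner but less self-contained alternative is purely homotopy-theoretic: by Theorem \ref{theo:honotopy_class} one has ${\rm Vec}^1_\rr{R}(X,\tau)\simeq[X,\C P^\infty]_{\rm eq}$ for the conjugation involution $\varrho$ on $\C P^\infty=G_1(\C^\infty)$, and one verifies that $(\C P^\infty,\varrho)$ is an equivariant Eilenberg--MacLane space representing $H^2_{\Z_2}(-,\Z(1))$, so that both routes yield the same $\tilde{c}_1$.
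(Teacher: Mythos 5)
Your overall strategy (Real cocycles, an exponential sequence whose kernel carries the sign twist, acyclicity of the additive sheaf, connecting isomorphism) is the same one the paper sketches, following \cite{kahn-59,krasnov-92} and \cite{gomi-13}. Your first paragraph is also fine: by Proposition \ref{prop:loc_triv}, Real cocycles on $\tau$-invariant covers modulo Real coboundaries do describe ${\rm Vec}^1_{\rr{R}}(X,\tau)$. The genuine gap appears the moment you pass to the sheaves $\bb{O}_{\rr{R}}$, $\bb{O}^*_{\rr{R}}$ of \emph{invariant} (Real) functions and take their sheaf cohomology: in that category your exponential sequence is not exact. At a fixed point $x\in X^\tau$ a Real germ $f$ satisfies $f(x)=\overline{f(x)}\in\R$, so $\exp(f)(x)>0$, whereas an invertible Real germ may be negative there (the constant $-1$ is Real); hence $\exp:\bb{O}_{\rr{R}}\to\bb{O}^*_{\rr{R}}$ fails to be surjective on stalks over $X^\tau$, and its cokernel is a nontrivial sheaf with stalks $\Z_2$ along the fixed-point set. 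Likewise the kernel, in your picture, is the sheaf of locally constant integer-valued functions with $n\circ\tau=-n$, whose stalks \emph{vanish} at fixed points: this is not the local system $\Z(1)$, and its cohomology (a sheaf cohomology on $X/\tau$) does not compute Borel cohomology. A point makes this stark: for $X=\{\ast\}$ with the trivial involution, $H^k_{\Z_2}(\{\ast\},\Z(1))\simeq\Z_2$ for all odd $k$ by \eqref{eq:cohomPR_Z1}, while every sheaf on a one-point space has vanishing higher cohomology. So the \virg{interface} you flagged as a final verification is in fact false for invariant sheaves: all the torsion produced by $X^\tau$ — which is precisely what the equivariant theory must see — is invisible in the quotient-sheaf picture, and your spaces of interest do have fixed points.

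The repair, which is the route of the paper, is to refrain from taking invariants at the level of sheaves: keep $\bb{O}_\C$ and $\bb{O}^*_\C$ as the ordinary sheaves of \emph{all} continuous functions on $X$, remembered together with the $\Z_2$-action $f\mapsto\overline{f\circ\tau}$, and use equivariant (Galois-Grothendieck) cohomology, whose cochains have both a \v{C}ech direction and a group direction. There ${\rm Vec}^1_{\rr{R}}(X,\tau)\simeq\check{H}^1(X;\Z_2,\bb{O}^*_\C)$ (this is \cite[Proposition 1.1.1]{krasnov-92}), the sequence $0\to\underline{\Z}(1)\to\bb{O}_\C\to\bb{O}^*_\C\to 1$ is exact because exactness is now the classical stalkwise statement on $X$, and the kernel really is $\underline{\Z}(1)$ — your observation that conjugation sends $2\pi\ii\mapsto -2\pi\ii$ is correct, but it only lives in this equivariant category. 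Acyclicity of $\bb{O}_\C$ then requires \emph{two} inputs: fineness (your invariant partitions of unity) kills the \v{C}ech direction, but one must also kill the group direction, \ie show $H^p\big(\Z_2,C(X,\C)\big)=0$ for $p\geqslant 1$; this holds because every Real function $g$ is the norm $g=\tfrac{g}{2}+\overline{\tfrac{g}{2}\circ\tau}$ and every anti-Real function is similarly in the image of $1-\sigma$. This divide-by-two argument has no counterpart in your fineness claim and cannot be skipped. What remains is the identification of the Galois-Grothendieck groups with the Borel groups $H^\bullet_{\Z_2}(X,\Z(1))$, which is the content of \cite{kahn-59} and \cite[Appendix A]{gomi-13}. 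Finally, your \virg{cleaner alternative} — that $(\C P^\infty,\varrho)$ is an equivariant Eilenberg-MacLane space for $H^2_{\Z_2}(-,\Z(1))$ — is essentially a restatement of the theorem itself, so asserting it without proof does not close the gap.
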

We do not give the details of the proof of this theorem that was firstly  proved in \cite{kahn-59} and \cite{krasnov-92}. A more detailed (and modern) proof can be found in \cite[Appendix A]{gomi-13}.
Nevertheless, we point out that the strategy of the proof is very close to the analogous result
 for  complex line bundles. Also in this case one first proves 
${\rm Vec}^1_\rr{R}(X,\tau)\simeq \check{H}^1(X;\Z_2,\bb{O}_\C^*)
$ where the right-hand side
is the 
\emph{equivariant  Galois-Grothendieck cohomology} group with coefficients in the  {invertible sheaf} $\bb{O}^*_\C$ over $X$ (\cf \cite[Proposition 1.1.1]{krasnov-92}). This sheaf supports a canonical $\Z_2$-action associated with $\tau$ given by $f\mapsto \overline{f\circ \tau}$ for each  $f\in \bb{O}^*_\C$. Following \cite[Example 3.4]{bredon-97}, let $\underline{\Z}:=X\times \Z$ be the  \emph{constant integral} sheaf over $X$ with stalks $\Z$. For each integer $m\in\Z$ one can define a \emph{twist} of this sheaf by the $\Z_2$-action  $(x,n)\mapsto(\tau(x),(-1)^m n)$ induced by $\tau$. This produces a  $\Z_2$-sheaves with stalks $\Z$  over  the involutive space $(X,\tau)$
denoted by $\underline{\Z}(m)$.
The exact sequence of $\Z_2$-sheaves 
 $0\to\underline{\Z}(1)\to \bb{O}_\C\to\bb{O}_\C^\ast\to1$
given by the exponential map induces an isomorphism
$\check{H}^1(X;\Z_2,\bb{O}_\C^*)\simeq\check{H}^2(X;\Z_2,\underline{\Z}(1))
$. The last step is to prove   the equivalence between the  
Galois-Grothendieck cohomology group $\check{H}^2(X;\Z_2,\underline{\Z}(1))$ and the equivariant cohomology with local coefficients $H^2_{\rm eq}(X,\Z(1))$ defined in Section \ref{ssec:borel_constr}. This identification is explained in detail in \cite{kahn-59} and  \cite{gomi-13}.

\subsection{\virg{Real} line bundle over TR-tori}
\label{sect:eq_cohom_tor}
The aim of this section is to classify ${\rm Vec}^1_\rr{R}(\T^d,\tau)$. In view of Theorem \ref{theo:clasR-lin} we can achieve this goal by calculating $H^2_{\Z_2}(\T^d,\Z(1))$.

\medskip

In the one-dimensional case $(\T^1,\tau)=(\n{S}^1,\tau)\equiv \tilde{\n{S}}^1$,  replacing $\{\ast\}\times\tilde{\n{S}}^1\simeq\tilde{\n{S}}^1$ in the recursive equations \eqref{eq:recurs01} one obtains
\begin{align}
H^k_{\Z_2}(\tilde{\n{S}}^1,\Z(1))&\;\simeq\;H^k(\R P^\infty,\Z(1))\;\oplus\; H^{k-1}(\R P^\infty,\Z)\label{eq:recurs1b}\\
H^k_{\Z_2}(\tilde{\n{S}}^1,\Z)&\;\simeq\;H^k(\R P^\infty,\Z)\;\oplus\; H^{k-1}(\R P^\infty,\Z(1))\label{eq:recurs2b}\;
\end{align}
where we used $H^k_{\Z_2}(\{\ast\},\Z)=H^k(\R P^\infty,\Z)$ and $H^k_{\Z_2}(\{\ast\},\Z(1))=H^k(\R P^\infty,\Z(1))$. From \eqref{eq_proj_hom} and \eqref{eq:cohomPR_Z1} one gets
\beql{eq:equi_cohom1dA}
H^k_{\Z_2}(\tilde{\n{S}}^1,\Z)\;\simeq\;
\left\{
\begin{aligned}
&\Z&&\text{if}\ \ k=0\\
&\{0\}&&\text{if}\ \ k\ \ \text{odd}\\
&\Z_2\oplus\Z_2&&\text{if}\ \ k\ \ \text{even}
\end{aligned}
\right.
\eeq
and 

\beql{eq:equi_cohom1dB}
H^k_{\Z_2}(\tilde{\n{S}}^1,\Z(1))\;\simeq\;
\left\{
\begin{aligned}
&\Z\oplus\Z_2&&\text{if}\ \ k=1\\
&\Z_2\oplus\Z_2&&\text{if}\ \ k\ \ \text{odd}\ \ k>1\\
&\{0\}&&\text{if}\ \ k=0\ \ \text{or}\ \ k\ \ \text{even}
\end{aligned}
\right.
\eeq
\begin{remark}\label{equi_cohom1d}{\upshape
The cohomology groups in \eqref{eq:equi_cohom1dA} and \eqref{eq:equi_cohom1dB} can be computed also 
directly without the use of the recursive formulas \eqref{eq:recurs01}. In fact the {homotopy quotient} $\n{S}^1_{\sim\tau}$ can be explicitly computed: 
\beql{eq:homo_quotS1}
\n{S}^1_{\sim\tau}\;:=\;\tilde{\n{S}}^1\times\hat{\n{S}}^\infty /( \tau\times \vartheta)\;\simeq\;\R P^\infty\;\vee\;\R P^\infty
\eeq
where $\vee$ denotes the \emph{wedge sum} of topological spaces. More explicitly,  the {homotopy quotient} $\n{S}^1_{\sim\tau}$
 is topologically isomorphic to two copies of
$\R P^\infty$ glued together at a common point identifiable with their zero dimensional cell. The identification \eqref{eq:homo_quotS1}
 can be proved in two different ways. The first way  uses the topological identification $\tilde{\n{S}}^1\times\hat{\n{S}}^n /( \tau\times \vartheta)\simeq \R P^{n+1}\;\sharp\; \R P^{n+1}$ (see \cite{jahren-kwasik-11} and references therein). Here $\sharp$ denotes the \emph{connected sum}, namely the two copies of $\R P^{n+1}$ are identified along a copy of  $\n{S}^n$. When $n\to \infty$ the identification becomes along a copy of $\n{S}^\infty$ which is contractible to a point. The second way is to consider the canonical fibration
\eqref{eq:ex_hom1} which says that  $\n{S}_{\sim\tau}\to \R P^\infty$ is a bundle with typical fiber $\tilde{\n{S}}^1$. We can think of $\tilde{\n{S}}^1$ as the union of the two $\tau$-invariant
 hemispheres 
$\ell'_\pm:=\{(k_0,k_1)\in\n{S}^1\ |\ \pm k_0\geqslant0\}$, glued together at $p_-=(0,-1)$ and $p_+=(0,+1)$.
Let us consider the  subbundles $\s{L}_\pm\to \R P^\infty$ with fibers given by the respective intervals $\ell'_\pm$. Since the fibers are contractible one has 
 $\s{L}_\pm\simeq \R P^\infty$. The subbundle $\s{I}\to \R P^\infty$ with fibers given by the respective intersections  $\ell'_-\cap\ell'_+=\{p_-,p_+\}$ is a $\Z_2$-bundle over $\R P^\infty$. Thus, $\s{I}$ has to be contractible by definition since $\R P^\infty$ is the classifying space for the cyclic group $\Z_2$. Therefore, $\n{S}^1_{\sim\tau}$ is topologically equivalent to the two copies $\s{L}_\pm\simeq \R P^\infty$ identified along $\s{I}\simeq\{\ast\}$. Once formula \eqref{eq:homo_quotS1} has been established, one can compute $H^k_{\Z_2}(\tilde{\n{S}}^1,\Z)=H^k(\R P^\infty\vee\R P^\infty,\Z)$ using \cite[Corollary 2.25]{hatcher-02}. This gives rise to
 the same result displayed in equation \eqref{eq:equi_cohom1dA}. The computation of the cohomology with local coefficients $\Z(1)$ can be done with the application of the \emph{Poincar\'{e} duality} and of the \emph{universal coefficient Theorem} \cite[Section 5.1]{jahren-kwasik-11} and produces the same of  \eqref{eq:equi_cohom1dB}.
 }\hfill $\blacktriangleleft$
\end{remark}

Now we are in position to provide the classification of $\rr{R}$-line bundles over $(\T^d,\tau)$.

\begin{proposition}[Classification of $\rr{R}$-line bundle over TR-tori]
\label{prob:R_linT}
Let $(\T^d,\tau)$ be  the involutive space described in 
Definition \ref{def:per_ferm}. Then
$$
{\rm Vec}^1_\rr{R}(\T^d,\tau)\;=\; 0
$$
in any dimension $d\in\N$.
\end{proposition}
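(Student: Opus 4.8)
The plan is to pass through the cohomological classification of Theorem \ref{theo:clasR-lin}. The first \virg{Real} Chern class furnishes an isomorphism $\tilde{c}_1 : {\rm Vec}^1_\rr{R}(\n{T}^d,\tau) \to H^2_{\Z_2}(\n{T}^d,\Z(1))$, so the assertion reduces entirely to the cohomological identity $H^2_{\Z_2}(\tilde{\n{T}}^d,\Z(1)) = 0$ for every $d \in \N$. To compute this group I would exploit the product structure $\tilde{\n{T}}^d = \tilde{\n{T}}^{d-1} \times \tilde{\n{S}}^1$ together with the recursive formulas \eqref{eq:recurs01}, running an induction on $d$.

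First I would observe that the recursion for the degree-two $\Z(1)$ group,
\[
H^2_{\Z_2}(\tilde{\n{T}}^d,\Z(1)) \;\simeq\; H^2_{\Z_2}(\tilde{\n{T}}^{d-1},\Z(1)) \;\oplus\; H^1_{\Z_2}(\tilde{\n{T}}^{d-1},\Z),
\]
does not close on its own: besides the inductive term it drags in the degree-one $\Z$ group. Hence the natural move is to track the pair $\big(H^1_{\Z_2}(\cdot,\Z),\,H^2_{\Z_2}(\cdot,\Z(1))\big)$ simultaneously along the induction. The second recursion in \eqref{eq:recurs01} gives
\[
H^1_{\Z_2}(\tilde{\n{T}}^d,\Z) \;\simeq\; H^1_{\Z_2}(\tilde{\n{T}}^{d-1},\Z) \;\oplus\; H^0_{\Z_2}(\tilde{\n{T}}^{d-1},\Z(1)),
\]
and here the crucial simplification enters: since each $\tilde{\n{T}}^{d-1}$ is path-connected, the remark following \eqref{eq:recurs01} gives $H^0_{\Z_2}(\tilde{\n{T}}^{d-1},\Z(1)) = 0$. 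Therefore the two recursions decouple into $H^1_{\Z_2}(\tilde{\n{T}}^d,\Z) \simeq H^1_{\Z_2}(\tilde{\n{T}}^{d-1},\Z)$ and, once $H^1_{\Z_2}(\tilde{\n{T}}^{d-1},\Z)$ is known to vanish, $H^2_{\Z_2}(\tilde{\n{T}}^d,\Z(1)) \simeq H^2_{\Z_2}(\tilde{\n{T}}^{d-1},\Z(1))$.

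It then remains to feed the induction with a base case. Taking $d = 1$, the explicit computations \eqref{eq:equi_cohom1dA} and \eqref{eq:equi_cohom1dB} give $H^1_{\Z_2}(\tilde{\n{S}}^1,\Z) = 0$ and $H^2_{\Z_2}(\tilde{\n{S}}^1,\Z(1)) = 0$ (one could equally start from the point $\tilde{\n{T}}^0 = \{\ast\}$, where the two groups reduce to $H^1(\R P^\infty,\Z) = 0$ and $H^2(\R P^\infty,\Z(1)) = 0$ by \eqref{eq_proj_hom} and \eqref{eq:cohomPR_Z1}). Propagating both vanishings through the decoupled recursions yields $H^1_{\Z_2}(\tilde{\n{T}}^d,\Z) = 0$ and $H^2_{\Z_2}(\tilde{\n{T}}^d,\Z(1)) = 0$ for all $d$, and the latter is exactly what the reduction demanded.

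The computation is essentially mechanical once the recursive formulas are in hand; the only genuine subtlety --- which I would flag as the real content of the argument --- is recognising that one must carry along the auxiliary group $H^1_{\Z_2}(\cdot,\Z)$, and that its own recursion terminates precisely because $H^0_{\Z_2}(\cdot,\Z(1))$ vanishes for connected spaces. Without that decoupling observation a naive induction on $H^2_{\Z_2}(\cdot,\Z(1))$ alone would fail to close.
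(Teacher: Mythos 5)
Your proof is correct and takes essentially the same route as the paper's: reduction via Theorem \ref{theo:clasR-lin} to showing $H^2_{\Z_2}(\tilde{\n{T}}^d,\Z(1))=0$, then induction on $d$ through the recursive formulas \eqref{eq:recurs01}, closed by the vanishing of $H^0_{\Z_2}(\cdot,\Z(1))$ and the base cases $H^1_{\Z_2}(\tilde{\n{S}}^1,\Z)=0$, $H^2_{\Z_2}(\tilde{\n{S}}^1,\Z(1))=0$. The only difference is presentational: the paper unrolls the recursion for $H^1_{\Z_2}(\tilde{\n{T}}^d,\Z)$ inline all the way down to $\tilde{\n{S}}^1$, whereas you package the same computation as a simultaneous induction on the pair $\big(H^1_{\Z_2}(\cdot,\Z),\,H^2_{\Z_2}(\cdot,\Z(1))\big)$.
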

\proof
We use the short notation $\tilde{\n{T}}^d\equiv(\T^d,\tau)$ introduced in Example \ref{ex:tori_inv}.
Since Theorem \ref{theo:clasR-lin}, it suffices to show that $H^2_{\Z_2}(\tilde{\n{T}}^d,\Z(1))=0$ for all $d$. Equation \eqref{eq:equi_cohom1dB} shows that the claim is true for $d=1$. Assume that $H^2_{\Z_2}(\tilde{\n{T}}^j,\Z(1))=0$ for all $1\leqslant j\leqslant d$. Since $\tilde{\n{T}}^{j+1}=\tilde{\n{T}}^j\times\tilde{\n{S}}^1$ we can use the recursive formulas \eqref{eq:recurs01} which provide
$$
\begin{aligned}
H^2_{\Z_2}(\tilde{\n{T}}^{d+1},\Z(1))\;=\;H^{1}_{\Z_2}(\tilde{\n{T}}^{d},\Z) \;&=\;H^{1}_{\Z_2}(\tilde{\n{T}}^{d-1},\Z)\;\oplus\; H^{0}_{\Z_2}(\tilde{\n{T}}^{d-1},\Z(1))\\
 &=\;H^{1}_{\Z_2}(\tilde{\n{T}}^{d-2},\Z)\;\oplus\; H^{0}_{\Z_2}(\tilde{\n{T}}^{d-2},\Z(1))\;\oplus\; H^{0}_{\Z_2}(\tilde{\n{T}}^{d-1},\Z(1))\\
 &=\;\ldots\\
 &=\;H^{1}_{\Z_2}(\tilde{\n{S}}^{1},\Z)\;\oplus\left(\bigoplus_{j=1}^{d-1}H^{0}_{\Z_2}(\tilde{\n{T}}^{j},\Z(1))\right)\;=\;0
\end{aligned}
$$
since
$H^{1}_{\Z_2}(\tilde{\n{S}}^{1},\Z)=0$ (see equation \eqref{eq:equi_cohom1dA}) and $H^0_{\Z_2}(\tilde{\n{T}}^{j},\Z(1))=0$ for all $j\in\N$. The last fact follows again from  
 \eqref{eq:recurs01} which implies  
$H^0_{\Z_2}(\tilde{\n{T}}^{j+1},\Z(1))=H^0_{\Z_2}(\tilde{\n{T}}^{j},\Z(1))$ and equation \eqref{eq:equi_cohom1dB} where   $H^0_{\Z_2}(\tilde{\n{S}}^{1},\Z(1))=0$.
\qed

\begin{remark}\label{triv_lin_bun}{\upshape
Proposition \ref{prob:R_linT} generalizes a previous result by B.~Helffer and J.~Sj\"{o}strand. In \cite[Lemme 1.1]{helffer-sostrand-90} the authors proved the existence of a \emph{single} global $\rr{R}$-section under the assumption  of time reversal symmetry.
 }\hfill $\blacktriangleleft$
\end{remark}

\begin{remark}\label{even_odd_torus}{\upshape
With an inductive argument similar to that in the proof of  Proposition \ref{prob:R_linT} one can prove also
\beql{eq:even_odd_torus}
H^{\rm even}_{\Z_2}(\tilde{\n{T}}^d,\Z(1))\;\simeq\;0\;,\qquad\quad H^{\rm odd}_{\Z_2}(\tilde{\n{T}}^d,\Z)\;\simeq\;0
\eeq
for all $d\in\N$. When $d=1$ the relations \eqref{eq:even_odd_torus} agree with  \eqref{eq:equi_cohom1dA} and \eqref{eq:equi_cohom1dB}. When $d>1$ the \eqref{eq:even_odd_torus} can be proved iterating equations \eqref{eq:recurs01}.
 }\hfill $\blacktriangleleft$
\end{remark}

\subsection{\virg{Real} line bundle over TR-spheres}\label{sect:eq_cohom_sphe}
In Example \ref{ex:homot_spher} we explained that the  space $\tilde{\n{S}}^d\equiv(\n{S}^d,\tau)$ can be identified with  the {suspension} of $\hat{\n{S}}^{d-1}$
along the $\tau$-invariant direction $k_0\in[-1,1]$. Since the antipodal action on  $\hat{\n{S}}^{d-1}$ acts freely, the absence of fixed points prevents the possibility to identify $\tilde{\n{S}}^d$ with the reduced suspension  of 
$\hat{\n{S}}^{d-1}$ as $\Z_2$-spaces. On the other hand, $\tilde{\n{S}}^d$ can be see also as the {suspension} of $\tilde{\n{S}}^{d-1}$
along the direction $k_d\in\tilde{I}$ where $\tilde{I}:=[-1,1]$
inherits the $\Z_2$-action $k_d\mapsto -k_d$. Since $\tilde{\n{S}}^{d-1}$ has two fixed points, the equivariant reduced suspension \eqref{eq:susp2} makes sense and one has the topological equivalence
\beql{eq:sphere_susp}
\tilde{\n{S}}^{d}\;\simeq\;\tilde{\Sigma}\tilde{\n{S}}^{d-1}\;:=\; \big(\tilde{\n{S}}^{d-1}\times\tilde{I}\big)/\big(\tilde{\n{S}}^{d-1}\times \partial\tilde{I}\;\cup\;\{\ast\}\times\tilde{I}\big)
\eeq
as spaces endowed with the TR-involution $\tau$. 
Equation \eqref{eq:sphere_susp} is a key formula for the computation of the equivariant cohomology groups for spheres with TR-involution.
\begin{lemma}
Let $\tilde{\n{S}}^d\equiv(\n{S}^d,\tau)$ be the $d$ dimensional sphere with the TR-involution described in Definition \ref{def:free_ferm}. Then:
\begin{itemize}
\item  If $d$ is even and for all $k\in\Z$
\begin{align}
H^k_{\Z_2}(\tilde{\n{S}}^d,\Z(1))&\;\simeq\;H^k_{\Z_2}(\{\ast\},\Z(1))\;\oplus\; H^{k-d}_{\Z_2}(\{\ast\},\Z(1)) \label{eq:recurs1ev}\\
H^k_{\Z_2}(\tilde{\n{S}}^d,\Z)&\;\simeq\;H^k_{\Z_2}(\{\ast\},\Z)\;\oplus\; H^{k-d}_{\Z_2}(\{\ast\},\Z)\label{eq:recurs2ev}\;
\end{align}

\item  If $d$ is odd and for all $k\in\Z$
\begin{align}
H^k_{\Z_2}(\tilde{\n{S}}^d,\Z(1))&\;\simeq\;H^k_{\Z_2}(\{\ast\},\Z(1))\;\oplus\; H^{k-d}_{\Z_2}(\{\ast\},\Z) \label{eq:recurs1odd}\\
H^k_{\Z_2}(\tilde{\n{S}}^d,\Z)&\;\simeq\;H^k_{\Z_2}(\{\ast\},\Z)\;\oplus\; H^{k-d}_{\Z_2}(\{\ast\},\Z(1))\label{eq:recurs2odd}\;.
\end{align}
\end{itemize}
Here $\{\ast\}$ is any fixed point of $\tilde{\n{S}}^d$.
\end{lemma}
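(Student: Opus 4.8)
The plan is to reduce everything to the cohomology of a single point by iterating the equivariant suspension isomorphism. First I would use the splitting \eqref{eq:susp1}, which for any $m$ gives
\[
H^k_{\Z_2}(\tilde{\n{S}}^d,\Z(m))\;\simeq\;\tilde{H}^k_{\Z_2}(\tilde{\n{S}}^d,\Z(m))\;\oplus\; H^k_{\Z_2}(\{\ast\},\Z(m)).
\]
The second summand is already the \emph{first} direct summand appearing in each of the four claimed formulas, so the whole problem is to identify the reduced group $\tilde{H}^k_{\Z_2}(\tilde{\n{S}}^d,\Z(m))$. For this I would invoke the topological equivalence \eqref{eq:sphere_susp}, namely $\tilde{\n{S}}^d\simeq\tilde{\Sigma}\tilde{\n{S}}^{d-1}$, together with the equivariant suspension formula \eqref{eq:susp3}, obtaining
\[
\tilde{H}^k_{\Z_2}(\tilde{\n{S}}^d,\Z(m))\;\simeq\;\tilde{H}^{k-1}_{\Z_2}(\tilde{\n{S}}^{d-1},\Z(m-1)).
\]
At each stage the hypothesis of \eqref{eq:susp3} is satisfied because $\tilde{\n{S}}^{j}$ carries the two fixed points $(\pm 1,0,\ldots,0)$, so the equivariant reduced suspension makes sense (as remarked just before \eqref{eq:sphere_susp}). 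Iterating $d$ times then lowers the sphere dimension to zero, shifts the cohomological degree by $-d$, and twists the local system by $\Z(m)\mapsto\Z(m-d)$:
\[
\tilde{H}^k_{\Z_2}(\tilde{\n{S}}^d,\Z(m))\;\simeq\;\tilde{H}^{k-d}_{\Z_2}(\tilde{\n{S}}^0,\Z(m-d)).
\]

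Next I would settle the base of the iteration by a direct computation. The space $\tilde{\n{S}}^0=\{-1,+1\}$ carries the \emph{trivial} involution (both points are fixed), so its homotopy quotient is a disjoint union of two copies of $\R P^\infty$; choosing one of the two points as basepoint collapses one copy and yields $\tilde{H}^\ell_{\Z_2}(\tilde{\n{S}}^0,\Z(m'))\simeq H^\ell_{\Z_2}(\{\ast\},\Z(m'))$ for every $\ell$ and every $m'$. Substituting $\ell=k-d$ and $m'=m-d$ and recombining with the point summand from \eqref{eq:susp1} gives the unified formula
\[
H^k_{\Z_2}(\tilde{\n{S}}^d,\Z(m))\;\simeq\; H^k_{\Z_2}(\{\ast\},\Z(m))\;\oplus\; H^{k-d}_{\Z_2}(\{\ast\},\Z(m-d)).
\]

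Finally I would read off the four cases according to the parity of $d$. Since the local system $\Z(m-d)$ depends only on $m-d$ modulo $2$, one has $\Z(m-d)=\Z(m)$ when $d$ is even and $\Z(m-d)=\Z(m-1)$ (the opposite parity) when $d$ is odd; specializing to $m=1$ and to $m=0$ (recall $\Z=\Z(0)$) then reproduces exactly \eqref{eq:recurs1ev}, \eqref{eq:recurs2ev}, \eqref{eq:recurs1odd} and \eqref{eq:recurs2odd}. The main delicate point is the parity bookkeeping of the coefficient system through the $d$ applications of \eqref{eq:susp3}, together with pinning down the base case $\tilde{\n{S}}^0$ correctly—specifically the observation that its involution is trivial, so that its reduced Borel cohomology collapses onto that of a single point. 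Everything else is a mechanical iteration of the suspension isomorphism.
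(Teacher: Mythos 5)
Your proof is correct, and it runs on the same engine as the paper's: the splitting \eqref{eq:susp1} combined with the iterated equivariant suspension isomorphism \eqref{eq:susp3}, applied through the identification $\tilde{\n{S}}^{j}\simeq\tilde{\Sigma}\tilde{\n{S}}^{j-1}$ of \eqref{eq:sphere_susp}. The one place you deviate is the base of the iteration: the paper stops after $d-1$ suspensions, at $\tilde{\n{S}}^1$, and identifies $\tilde{H}^{k-d+1}_{\Z_2}(\tilde{\n{S}}^1,\Z(m-d+1))\simeq H^{k-d}_{\Z_2}(\{\ast\},\Z(m-d))$ by quoting the circle computations \eqref{eq:recurs1b}--\eqref{eq:recurs2b}, which rest on the Gysin-sequence recursion \eqref{eq:recurs01} from \cite{gomi-13}; you instead suspend once more, down to $\tilde{\n{S}}^0$, and settle the base case by hand, noting that the TR-involution on $\n{S}^0$ is trivial, so the homotopy quotient is $\R P^\infty \sqcup \R P^\infty$ and the reduced Borel cohomology of $\tilde{\n{S}}^0$ collapses onto that of a single point. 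This variant is legitimate: $\tilde{\n{S}}^1\simeq\tilde{\Sigma}\tilde{\n{S}}^0$ holds as involutive spaces precisely because both points of $\tilde{\n{S}}^0$ are fixed, so the reduced suspension \eqref{eq:susp2} and the formula \eqref{eq:susp3} still apply at that last step. What it buys is a slightly more self-contained argument — your base case is elementary and independent of the Gysin-sequence input — whereas the paper economizes by reusing machinery it had already set up for the torus computations; your parity bookkeeping ($\Z(m-d)=\Z(m)$ for $d$ even, the opposite twist for $d$ odd, then specializing $m=0,1$) reproduces the four displayed formulas exactly as in the paper.
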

\proof
After iterating equation \eqref{eq:sphere_susp} in the  {equivariant suspension formula} \eqref{eq:susp3} one gets
$$
H^k_{\Z_2}(\tilde{\n{S}}^d,\Z(m))\;\simeq\;H^k_{\Z_2}(\{\ast\},\Z(m))\;\oplus\;\tilde{H}^{k-d+1}_{\Z_2}(\tilde{\n{S}}^1,\Z(m-d+1))\;.
$$
The isomorphism $\tilde{H}^{k-d+1}_{\Z_2}(\tilde{\n{S}}^1,\Z(m-d+1))\simeq {H}^{k-d}_{\Z_2}(\{\ast\},\Z(m-d))$, which follows from \eqref{eq:recurs1b} and \eqref{eq:recurs2b}, concludes the proof.
\qed

\medskip

Since $H^k_{\Z_2}(\{\ast\},\Z)=H^k(\R P^\infty,\Z)$ and $H^k_{\Z_2}(\{\ast\},\Z(1))=H^k(\R P^\infty,\Z(1))$ one can combine equations
\eqref{eq:recurs1ev} - \eqref{eq:recurs2odd} with \eqref{eq_proj_hom} and \eqref{eq:cohomPR_Z1} to obtain  a complete description for the equivariant cohomology of $\tilde{\n{S}}^d\equiv(\n{S}^d,\tau)$. More in detail one has
\beql{eq:equi_cohom_spher_d_int}
H^{\rm even}_{\Z_2}(\tilde{\n{S}}^d,\Z)\;\simeq\;
\left\{
\begin{aligned}
&\Z&&\text{if}\ \ k=0\\
&\Z_2\oplus\Z&&\text{if}\ \ k=d\\
&\Z_2&&\text{if}\ \  k< d\\
&\Z_2\oplus \Z_2&&\text{if}\ \ k>d\\
\end{aligned}
\right.\;,
\qquad\quad H^{\rm odd}_{\Z_2}(\tilde{\n{S}}^d,\Z)\;\simeq\;0
\eeq
for the cohomology with integer coefficients and

\beql{eq:equi_cohom_spher_d_loc}
H^{\rm even}_{\Z_2}(\tilde{\n{S}}^d,\Z(1))\;\simeq\;0\qquad\quad H^{\rm odd}_{\Z_2}(\tilde{\n{S}}^d,\Z(1))\;\simeq\;
\left\{
\begin{aligned}
&\Z_2\oplus\Z&&\text{if}\ \ k=d\\
&\Z_2&&\text{if}\ \  k< d\\
&\Z_2\oplus \Z_2&&\text{if}\ \ k>d\
\end{aligned}
\right.\;
 \eeq
when the coefficients are twisted.
The immediate consequence is:

\begin{proposition}[Classification of $\rr{R}$-line bundle over TR-spheres]
\label{prob:R_linS}
Let $(\n{S}^d,\tau)$ be  the involutive space described in 
Definition \ref{def:free_ferm}. Then
$$
{\rm Vec}^1_\rr{R}(\n{S}^d,\tau)\;=\; 0
$$
in any dimension $d\in\N$.
\end{proposition}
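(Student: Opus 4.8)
The plan is to reduce the statement entirely to a cohomological computation by invoking the classification of $\rr{R}$-line bundles. Theorem \ref{theo:clasR-lin} provides a group isomorphism
$$
\tilde{c}_1\;:\; {\rm Vec}^1_\rr{R}(\n{S}^d,\tau)\;\longrightarrow\;H^2_{\Z_2}\big(\tilde{\n{S}}^d,\Z(1)\big)\;,
$$
so the vanishing of the Picard group ${\rm Vec}^1_\rr{R}(\n{S}^d,\tau)$ is equivalent to the statement that the single equivariant cohomology group $H^2_{\Z_2}(\tilde{\n{S}}^d,\Z(1))$ is trivial for every $d\in\N$. This is the only thing that needs to be checked, and it has in fact already been recorded in the excerpt.

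Concretely, I would simply read off the result from equation \eqref{eq:equi_cohom_spher_d_loc}, which asserts $H^{\rm even}_{\Z_2}(\tilde{\n{S}}^d,\Z(1))\simeq 0$ for all $d$. Since the degree $k=2$ is even, applying this at $k=2$ gives $H^2_{\Z_2}(\tilde{\n{S}}^d,\Z(1))=0$ directly, and the isomorphism $\tilde{c}_1$ then forces ${\rm Vec}^1_\rr{R}(\n{S}^d,\tau)=0$. Thus the proposition is an immediate corollary of the preceding computation; no separate induction or further geometric input is required at this stage.

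The genuine work underlying this vanishing does not lie in the proposition itself but in the cohomological Lemma established just above, so the main obstacle was already overcome there. That computation rests on the equivariant suspension identification $\tilde{\n{S}}^d\simeq\tilde{\Sigma}\tilde{\n{S}}^{d-1}$ of equation \eqref{eq:sphere_susp}, iterated through the equivariant suspension formula \eqref{eq:susp3}. The decisive structural feature is the parity shift built into that formula: each suspension lowers the twist $\Z(m)\to\Z(m-1)$ together with the degree, so that the twisted cohomology of $\tilde{\n{S}}^d$ decomposes into contributions of the fixed point $\{\ast\}$ governed by $H^\bullet(\R P^\infty,\Z(1))$ and $H^\bullet(\R P^\infty,\Z)$ as in \eqref{eq:recurs1ev}--\eqref{eq:recurs2odd}. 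Because the relevant building block $H^{\rm even}(\R P^\infty,\Z(1))$ vanishes by \eqref{eq:cohomPR_Z1}, every even-degree twisted group of the sphere collapses. Had one attempted a direct attack, the delicate bookkeeping of how the local coefficient system $\Z(1)$ interacts with the antipodal suspension coordinate would have been the hard part; routing through Theorem \ref{theo:clasR-lin} and the suspension calculation makes the final step essentially trivial.
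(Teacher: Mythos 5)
Your proof is correct and coincides with the paper's own argument: the paper likewise deduces the statement by combining the isomorphism $\tilde{c}_1:{\rm Vec}^1_\rr{R}(\n{S}^d,\tau)\to H^2_{\Z_2}(\tilde{\n{S}}^d,\Z(1))$ of Theorem \ref{theo:clasR-lin} with the vanishing $H^{\rm even}_{\Z_2}(\tilde{\n{S}}^d,\Z(1))\simeq 0$ recorded in \eqref{eq:equi_cohom_spher_d_loc}. Your account of where the real work lies (the suspension-based computation behind \eqref{eq:equi_cohom_spher_d_loc}) also matches the paper's route.
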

\proof
We need only to use the first in \eqref{eq:equi_cohom_spher_d_loc} and 
${\rm Vec}^1_\rr{R}(\n{S}^d,\tau)\simeq H^2_{\Z_2}(\tilde{\n{S}}^d,\Z(1))$ from Theorem \ref{theo:clasR-lin}.
\qed

\subsection{The $4$-dimensional case}
\label{sect:4-d}
The first observation for the classification of \virg{Real} vector bundles over $\tilde{\n{T}}^4\equiv(\n{T}^4,\tau)$ or $\tilde{\n{S}}^4\equiv(\n{S}^4,\tau)$ and with rank bigger than $m\geqslant 2$ (the line bundle case has already been discussed in Section \ref{sect:eq_cohom_tor} and Section \ref{sect:eq_cohom_sphe})
is a consequence of 
Theorem \ref{theo:stab_ran_R} which implies that
\beql{eq:R_vec_m_4_01}
{\rm Vec}^m_{\rr{R}}(\n{T}^4,\tau)\;\simeq\; {\rm Vec}^2_{\rr{R}}(\n{T}^4,\tau)\;,\quad\quad{\rm Vec}^m_{\rr{R}}(\n{S}^4,\tau)\;\simeq\; {\rm Vec}^2_{\rr{R}}(\n{S}^4,\tau)\;,\qquad\forall\ m\geqslant 2\;;
\eeq
namely, we need to consider only the case of rank 2 \virg{Real} vector bundles.

\medskip

To classify  $\rr{R}$-bundles over the sphere  we can invoke Proposition \ref{prop:eq_clut_constr} which provides
\beql{eq:R_vec_m_4_02}
{\rm Vec}^m_{\rr{R}}(\n{S}^4,\tau)\;\simeq\; {\rm Vec}^2_{\rr{R}}(\n{S}^4,\tau)\;\simeq\;[\hat{\n{S}}^3,{\rm S}\n{U}(2)]_{\rm eq}\;/\;\Z_2\;,\qquad\forall\ m\geqslant 2\;.
\eeq
\begin{remark}\label{rk:deg_eq}{\upshape
Just for convenience of the reader, let us recall some useful well established facts: there is an isomorphism of groups
\beql{eq:homo_deg}
\pi_3\big({\rm S}\n{U}(2)\big)\;\simeq\; [{\n{S}}^3,{\rm S}\n{U}(2)]\;\stackrel{{\rm deg}}{\simeq}\;\Z
\eeq
where the group structure on $[{\n{S}}^3,{\rm S}\n{U}(2)]$ is induced from the pointwise multiplication of maps, while the group structure on 
$\pi_3\big({\rm S}\n{U}(2)\big)$ is the standard one on the homotopy groups. From its very definition  $\pi_3\big({\rm S}\n{U}(2)\big)$ is made by equivalence classes of maps preserving the base points $\ast\in{\n{S}}^3$ and $\n{1}\in {\rm S}\n{U}(2)$. Then, there is a homomorphism  
$\pi_3\big({\rm S}\n{U}(2)\big)\to [{\n{S}}^3,{\rm S}\n{U}(2)]$ forgetting the base points which is actually bijective. Indeed, for any map $\varphi:{\n{S}}^3\to{\rm S}\n{U}(2)$ the map $\varphi'$, defined by $\varphi'(k):=\frac{\varphi(k)}{\varphi(\ast)}$, preserves the base points.
An application of the splitting Lemma provides
$$
[{\n{S}}^3,{\rm S}\n{U}(2)]\;\simeq\;\pi_3\big({\rm S}\n{U}(2)\big)\;\rtimes\;\pi_0\big({\rm S}\n{U}(2)\big)\;\simeq\;\pi_3\big({\rm S}\n{U}(2)\big)
$$
where we used the identification $[\{\ast\},{\rm S}\n{U}(2)]=\pi_0\big({\rm S}\n{U}(2)\big)=0$. The second isomorphism in \eqref{eq:homo_deg} is given by the Brouwer's notion of \emph{degree} for maps ${\n{S}}^3\to {\n{S}}^3$ \cite{bott-tu-82,	hatcher-02} and the (diffeomorphic) identification ${\rm S}\n{U}(2)\simeq {\n{S}}^3$. If $\varphi:{\n{S}}^3\to{\rm S}\n{U}(2)$ is a $C^1$-map we can compute ${\rm deg}(\varphi)$ by the help of the differential geometric formula
\beql{eq:homo_deg01}
{\rm deg}(\varphi)\;=\;\frac{-1}{24\pi^2}\;\int_{\n{S}^3}\Omega_\varphi\;,\qquad\quad \Omega_\varphi:={\rm Tr}_{\C^2}\big[(\varphi^{-1}\dd\varphi)^{\wedge 3}\big]\;.
\eeq
The quantity $\Omega_\varphi$ is  known as \emph{Cartan $3$-form} (see \eg \cite[Chapter 22]{frankel-97}).	 
Let $\bb{E}_\varphi\to {\n{S}}^4$ be the rank 2 complex vector bundle associated with $\varphi\in [{\n{S}}^3,{\rm S}\n{U}(2)]$ under the identification ${\rm Vec}^2_\C({\n{S}}^4)\simeq [{\n{S}}^3,{\rm S}\n{U}(2)]$ (see \cite[Lemma 1.4.9]{atiyah-67}). From Section \ref{ssec:chern_classes}
we know that $\bb{E}_\varphi$ is classified (up to isomorphisms) by the second Chern class $c_2(\bb{E}_\varphi)\in H^4({\n{S}}^4,\Z)$
which can be identified also with a differential form. 
An application of the Stokes theorem provides the equality
\beql{eq:chern_num}
C_2(\bb{E}_\varphi)\;:=\;\int_{\n{S}^4}c_2(\bb{E}_\varphi)\;=\;{\rm deg}(\varphi)
\eeq
between the \emph{second Chern number} of $\bb{E}_\varphi$ and the degree of $\varphi$.
}\hfill $\blacktriangleleft$
\end{remark}

\medskip

We are now in position to prove the following fundamental result:

\begin{proposition}
\label{prop:S_4}
For each $m\geqslant 2$ there exists a bijection
$$
{\rm Vec}^m_{\rr{R}}(\n{S}^4,\tau)\;\simeq\;2\Z
$$
given by the map $C_2\circ\jmath$ where $\jmath$  forgets the $\rr{R}$-structure (\cf Proposition \ref{prop:Rv101}) and 
$C_2:{\rm Vec}^m_{\C}(\n{S}^4)\to\Z$ is the second Chern number.
\end{proposition}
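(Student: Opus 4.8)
The plan is to settle the case $m=2$ and then transport the result to every $m\geqslant 2$ by the stable-range isomorphism \eqref{eq:R_vec_m_4_01}. The starting point is the equivariant clutching description \eqref{eq:R_vec_m_4_02}, which identifies ${\rm Vec}^2_{\rr{R}}(\n{S}^4,\tau)$ with the orbit set $[\hat{\n{S}}^3,{\rm S}\n{U}(2)]_{\rm eq}/\Z_2$, the involutions being the antipodal map on $\hat{\n{S}}^3$ and complex conjugation on ${\rm S}\n{U}(2)\simeq\n{S}^3$. I would first analyse the set $[\hat{\n{S}}^3,{\rm S}\n{U}(2)]_{\rm eq}$ and only afterwards dispose of the residual $\Z_2$-action.

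Since the antipodal involution on $\hat{\n{S}}^3$ is free, an equivariant map $\varphi:\hat{\n{S}}^3\to{\rm S}\n{U}(2)$ is the same datum as a section of the fibre bundle $E:=\hat{\n{S}}^3\times_{\Z_2}{\rm S}\n{U}(2)\to\R P^3$ with fibre ${\rm S}\n{U}(2)\simeq\n{S}^3$ and monodromy given by the conjugation $\sigma$. The fibre is $2$-connected and $\dim\R P^3=3$, so obstruction theory applies in its cleanest form: a section exists (the obstruction lies in $H^4(\R P^3;\pi_3)=0$, and in any case the constant map at a point of the fixed circle ${\rm S}\n{U}(2)^\sigma\simeq\n{S}^1$ provides one), and the homotopy classes of sections form a torsor over the top difference group $H^3\big(\R P^3;\pi_3(\n{S}^3)\big)$. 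The coefficient system is untwisted, because the monodromy is $\sigma_\ast$ on $\pi_3(\n{S}^3)\simeq\Z$ and $\sigma$, being a conjugation, reverses two coordinates of $\n{S}^3$ and thus has degree $+1$, so $\sigma_\ast=+\mathrm{id}$. Hence $[\hat{\n{S}}^3,{\rm S}\n{U}(2)]_{\rm eq}$ is a torsor over $H^3(\R P^3;\Z)\simeq\Z$, with the constant map as a natural base point of degree $0$.

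The decisive point, and the one I expect to be the genuine obstacle, is to read off the ordinary degree (i.e. $C_2\circ\jmath$) along the double cover $p:\hat{\n{S}}^3\to\R P^3$. The difference class of two equivariant maps lives in $H^3(\R P^3;\Z)$, and its pullback $p^\ast$ to $H^3(\hat{\n{S}}^3;\Z)\simeq\Z$ computes the difference of their ordinary degrees. But $p$ is a two-sheeted orientable covering, so $p^\ast$ on top cohomology is multiplication by $2$ (equivalently $\int_{\n{S}^3}p^\ast\omega=2\int_{\R P^3}\omega$). Measured against the base point of degree $0$, this shows simultaneously that every equivariant map has even degree, that all even degrees occur (the difference class ranges over all of $H^3(\R P^3;\Z)\simeq\Z$), and that distinct difference classes yield distinct degrees. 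Thus the degree is a bijection $[\hat{\n{S}}^3,{\rm S}\n{U}(2)]_{\rm eq}\simeq 2\Z$; the factor $2$ in $p^\ast$ is exactly the mechanism forcing an even second Chern number, in agreement with Theorem \ref{theo:AI_class4D}.

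Finally I would dispose of the residual action $[\varphi]\mapsto[\epsilon\varphi\epsilon]$ and match the invariant with $C_2$. This action preserves degree: $\epsilon\varphi\epsilon=c_\epsilon\circ\varphi$, where $c_\epsilon$ is the conjugation of ${\rm S}\n{U}(2)$ by $\epsilon=\mathrm{diag}(-1,1)$, and $c_\epsilon$ again reverses two coordinates of $\n{S}^3$, so $\deg(c_\epsilon)=+1$ and $\deg(\epsilon\varphi\epsilon)=\deg\varphi$. Since the degree is injective on $[\hat{\n{S}}^3,{\rm S}\n{U}(2)]_{\rm eq}$, the $\Z_2$-action is trivial and the quotient in \eqref{eq:R_vec_m_4_02} changes nothing, giving ${\rm Vec}^2_{\rr{R}}(\n{S}^4,\tau)\simeq 2\Z$. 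To see that the identifying map is $C_2\circ\jmath$, note that forgetting the \virg{Real} structure turns the equivariant clutching of $\bb{E}_\varphi$ into the ordinary clutching of the underlying complex bundle by $\varphi$ regarded as a map $\n{S}^3\to{\rm S}\n{U}(2)$; by Remark \ref{rk:deg_eq} and the identity \eqref{eq:chern_num} its second Chern number equals $\deg\varphi$. Transporting along \eqref{eq:R_vec_m_4_01} then yields the statement for all $m\geqslant 2$.
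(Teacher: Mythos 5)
Your proposal is correct, and it reaches the conclusion by a genuinely different route at the decisive step. Both you and the paper reduce to $m=2$ via \eqref{eq:R_vec_m_4_01}, start from the clutching description \eqref{eq:R_vec_m_4_02}, kill the residual action $[\varphi]\mapsto[\epsilon\varphi\epsilon]$ by the same orientation argument, and match the invariant with $C_2\circ\jmath$ through \eqref{eq:chern_num}. Where you diverge is in computing $[\hat{\n{S}}^3,{\rm S}\n{U}(2)]_{\rm eq}$: the paper deforms an equivariant map to be constant on the equator $\hat{\n{S}}^2\subset\hat{\n{S}}^3$ using Lemma \ref{lemma:Z2-reduct}, splits it into hemisphere classes $[\varphi_\pm]$ with $\imath([\varphi])=2[\psi]$, and deduces that the forgetful homomorphism is injective with image $2n\Z$; but to pin down $n=1$ it must invoke the existence of an equivariant map of degree exactly $2$, which is only produced later by the explicit Clifford-algebra model of Section \ref{sect:non-trivial_ex} (Remark \ref{rk:deg2}). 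You instead exploit freeness of the antipodal action to identify equivariant maps with sections of the associated $\n{S}^3$-bundle over $\R P^3$, and then obstruction theory (fiber $2$-connected, base $3$-dimensional, coefficient system untwisted since complex conjugation on ${\rm S}\n{U}(2)$ has degree $+1$) makes the homotopy classes a torsor over $H^3(\R P^3;\Z)\simeq\Z$; the transfer fact that $p^\ast$ is multiplication by $2$ on top cohomology of the orientable double cover then gives evenness, injectivity \emph{and} surjectivity of the degree map in one stroke. This is the real gain of your approach: surjectivity comes for free from the torsor structure, so you need no explicit realization and no forward reference to Section \ref{sect:non-trivial_ex}, and the factor $2$ acquires a conceptual explanation (covering degree) rather than emerging from a hemisphere-doubling count. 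What the paper's route buys in exchange is economy of machinery — it uses only tools already developed internally (the $\Z_2$-homotopy reduction lemma and elementary homotopy group arguments, no obstruction theory or local coefficients) — and the explicit degree-$2$ model it leans on is needed anyway for the physical constructions of Section \ref{sect:non-trivial_ex}, so the paper pays no extra cost for it.
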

\proof
Because of the bijection \eqref{eq:R_vec_m_4_02} we start with showing that $[\hat{\n{S}}^3,{\rm S}\n{U}(2)]_{\rm eq}\simeq2\Z$. 
Let us consider the sequence of maps
$$
[\hat{\n{S}}^3,{\rm S}\n{U}(2)]_{\rm eq}\;\stackrel{\imath}{\to}\;\pi_3\big({\rm S}\n{U}(2)\big)\;\simeq\;[{\n{S}}^3,{\rm S}\n{U}(2)]\;\stackrel{{\rm deg}}{\simeq}\;\Z
$$
where $\imath$ forgets  the $\Z_2$-action  and the last two isomorphisms have been
discussed in Remark \ref{rk:deg_eq}
In order to study the homomorphism $\imath$ we need to associate to each $\Z_2$-equivariant map $\varphi: \hat{\n{S}}^3\to{\rm S}\n{U}(2)$, the restrictions $\varphi_\pm:\ell^3_\pm\to {\rm S}\n{U}(2)$ on the hemispheres
$$
\ell^3_\pm\;:=\;\{(k_1,\ldots,k_4)\in\hat{\n{S}}^3\ |\ \pm k_4\geqslant 0\}\;.
$$
Since $\ell^3_+\cap \ell^3_-=\hat{\n{S}}^2$ is a $\Z_2$-subcomplex of $\hat{\n{S}}^3$ which has only free cells up to dimension 2 and $\pi_j({\rm S}\n{U}(2))\simeq0$ for $j=0,1,2$ 
one can apply Lemma \ref{lemma:Z2-reduct} which assures that 
$\varphi$ is $\Z_2$-homotopy equivalent to a map such that
$\varphi|_{\ell^3_+\cap \ell^3_-}=\n{1}_2$.
Shrinking the equator $\ell^3_+\cap \ell^3_-$ to a point we can regard $\varphi_+$ and $\varphi_-$ as elements of $\pi_3\big({\rm S}\n{U}(2)\big)$. Moreover, since $\varphi_+=\varphi_-\circ\vartheta$ and the antipodal map  $\vartheta$ on the sphere as well as the complex conjugation on ${\rm S}\n{U}(2)$ preserve the orientations one concludes that
both $\varphi_+$ and $\varphi_-$ are representatives for the same element  $[\psi]\in\pi_3\big({\rm S}\n{U}(2)\big)$.
Then, we can define the homomorphism $\imath$ as follows: $\imath([\varphi]):=[\varphi_-]+[\varphi_+]=2[\psi]$. Clearly,
 the image of $f:={\rm deg}\circ\imath$ is contained in the subgroup $2\Z\subset\Z$ of even integers.

The map $f$ is injective. In fact, let us assume that $f([\varphi])=0$. Since $\pi_3\big({\rm S}\n{U}(2)\big)\simeq\Z$ is torsion free, we get $[\psi]=0$ which means that both $\varphi_+$ and $\varphi_-$ can be homotopy deformed to constant maps. This allows us to construct an equivariant homotopy between $\varphi$ and the constant map with value $\n{1}_2$, \ie $[\varphi]=0$. 
This suffices to prove the injectivity, since $f$ is a homomorphism of groups. 
The injectivity of $f$ implies $[\hat{\n{S}}^3,{\rm S}\n{U}(2)]_{\rm eq}\simeq2n\Z$ for some $n\in \N\cup\{0\}$. 
Since  $\imath$ is the  homomorphism $[\hat{\n{S}}^3,{\rm S}\n{U}(2)]_{\rm eq}\to[{\n{S}}^3,{\rm S}\n{U}(2)]$ which forgets the $\Z_2$-action, in view of  \eqref{eq:chern_num} the map $f$ can be identified with the second Chern number of the complex vector bundle $\jmath(\bb{E}_\varphi)=\bb{E}_{\imath(\varphi)}$ associated with the $\rr{R}$-bundle $\bb{E}_\varphi$. 
This implies $n=1$ since there exists a non-trivial element in ${\rm Vec}^2_{\rr{R}}(\n{S}^4,\tau)$ with second Chern number equals to 2 
  (see Remark \ref{rk:deg2}  for the explicit construction).

To complete the proof, we examine the $\Z_2$-action on $[\hat{\n{S}}^3,{\rm S}\n{U}(2)]_{\rm eq}$. The isomorphism $[\hat{\n{S}}^3,{\rm S}\n{U}(2)]_{\rm eq} \to 2\Z$ is realized as $[\varphi] \mapsto \mathrm{deg} (\varphi)$. The fact that $g \mapsto \epsilon g \epsilon$ is an orientation preserving map on ${\rm S}\n{U}(2)$ implies $\mathrm{deg} (\epsilon \varphi \epsilon) = \mathrm{deg}(\varphi)$. Thus the $\Z_2$-action is trivial, and hence ${\rm Vec}^m_{\rr{R}}(\n{S}^4,\tau)\;\simeq\;2\Z$.
\qed

\medskip

In order to classify \virg{Real} vector bundle over $\tilde{\T}^4\equiv(\T^4,\tau)$ let us introduce the $\tau$-invariant sub-tori $\tilde{\T}^3_i\subset \tilde{\T}^4$, $i=1,2,3,4$ defined by
\beql{eq:T3subT4}
\begin{aligned}
\tilde{\T}^3_1\;&:=\;\{\ast\}\;\times\;\tilde{\n{S}}^1\;\times\;\tilde{\n{S}}^1\;\times\;\tilde{\n{S}}^1&\qquad\tilde{\T}^3_2\;&:=\;\tilde{\n{S}}^1\;\times\;\{\ast\}\;\times\;\tilde{\n{S}}^1\;\times\;\tilde{\n{S}}^1\\
\tilde{\T}^3_3\;&:=\;\tilde{\n{S}}^1\;\times\;\tilde{\n{S}}^1\;\times\;\{\ast\}\;\times\;\tilde{\n{S}}^1&\qquad\tilde{\T}^3_4\;&:=\;\tilde{\n{S}}^1\times\;\tilde{\n{S}}^1\;\times\;\tilde{\n{S}}^1\;\times\;\{\ast\}\;
\end{aligned}
\eeq
where $\{\ast\}\in \tilde{\n{S}}^1$ is one of the two fixed points $(\pm1,0)$. Let $\s{T}_4:=\bigcup_{i=1}^4\tilde{\T}^3_i$.
 In view of the (equivariant) reduced suspension construction \eqref{eq:sphere_susp}, we have $\tilde{\n{S}}^4\simeq \tilde{\n{S}}^1\wedge\tilde{\n{S}}^1\wedge\tilde{\n{S}}^1\wedge\tilde{\n{S}}^1$ (here $\wedge$ denotes the \emph{smash product}) and we can construct  $\tilde{\n{S}}^4$ from $\tilde{\n{T}}^4$ as
$$
\tilde{\n{S}}^4\;\simeq\;\tilde{\n{T}}^4/\s{T}_4
$$
where the notation denotes that the $\Z_2$ sub-complex $\s{T}_4$ is collapsed to  a (fixed) point \cite[Chapter 0]{hatcher-02}. The projection $\upsilon:\tilde{\n{T}}^4\to\tilde{\n{T}}^4/\s{T}_4\simeq \tilde{\n{S}}^4$ is an equivariant map which defines via pullback a map
\beql{eq:mapS4toT4}
\upsilon^\ast\;:\;{\rm Vec}^m_{\rr{R}}(\n{S}^4,\tau)\;\to\; {\rm Vec}^m_{\rr{R}}(\n{T}^4,\tau)\;.
\eeq
The description of ${\rm Vec}^m_{\rr{R}}(\n{T}^4,\tau)$ is a consequence of the fact that  $\upsilon^\ast$ is a bijection. In order to prove this claim we need a preliminary result:

\begin{lemma}
\label{lemm:S4T4}
For all $m\in\N$
\beql{eq:S4toT4_01}
{\rm Vec}^m_{\rr{R}}(\s{T}_4,\tau)\;\simeq\; H^2_{\Z_2}(\s{T}_4,\Z(1))\;\simeq\;0\;.
\eeq
\end{lemma}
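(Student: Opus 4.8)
The plan is to treat the two claimed isomorphisms in turn. The identification ${\rm Vec}^m_{\rr{R}}(\s{T}_4,\tau)\simeq H^2_{\Z_2}(\s{T}_4,\Z(1))$ will be a formal consequence of the classification machinery already developed, whereas the vanishing $H^2_{\Z_2}(\s{T}_4,\Z(1))\simeq0$ is a genuine computation that I would extract from the cofibration that collapses $\s{T}_4$ inside $\tilde{\n{T}}^4$.

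Before anything else I would record the structural fact that $\s{T}_4=\bigcup_{i=1}^4\tilde{\T}^3_i$ is a compact, path-connected, finite $\Z_2$-CW-complex of dimension $3$ whose fixed cells occur only in dimension $0$. This is immediate from Example \ref{ex:tori_TR_CW}: each $\tilde{\T}^3_i$ carries the product $\Z_2$-cell structure of a TR-torus, hence has fixed cells only in dimension $0$, and the four copies are glued along their mutual intersections, which are lower-dimensional TR-tori sharing the same property (and all containing a common fixed point, so $\s{T}_4$ is connected). With this in hand the first isomorphism follows: for $m=1$ it is exactly Theorem \ref{theo:clasR-lin}, while for $m\geqslant2$ the stable range Theorem \ref{theo:stab_ran_R} (with $\sigma=[3/2]=1$ and $3<2m$) splits every rank $m$ $\rr{R}$-bundle as $\bb{E}_0\oplus(\s{T}_4\times\C^{m-1})$ with $\bb{E}_0$ of rank $1$, reducing ${\rm Vec}^m_{\rr{R}}(\s{T}_4,\tau)$ to ${\rm Vec}^1_{\rr{R}}(\s{T}_4,\tau)$ and hence to $H^2_{\Z_2}(\s{T}_4,\Z(1))$ via $\tilde{c}_1$.

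For the vanishing I would use the equivalence $\tilde{\n{T}}^4/\s{T}_4\simeq\tilde{\n{S}}^4$ recalled just above the lemma. Since $\s{T}_4$ is a $\Z_2$-subcomplex of $\tilde{\n{T}}^4$, the pair is good and the Borel equivariant theory of Section \ref{ssec:borel_constr}, which satisfies the exactness and excision axioms, yields a long exact sequence whose relative groups are identified with the reduced cohomology of the cofiber $\tilde{\n{S}}^4$:
\begin{equation*}
\cdots\longrightarrow H^2_{\Z_2}(\tilde{\n{T}}^4,\Z(1))\longrightarrow H^2_{\Z_2}(\s{T}_4,\Z(1))\longrightarrow \tilde{H}^3_{\Z_2}(\tilde{\n{S}}^4,\Z(1))\longrightarrow\cdots.
\end{equation*}
The left neighbour vanishes by Proposition \ref{prob:R_linT}, and from the even-dimensional sphere decomposition \eqref{eq:recurs1ev} one has $\tilde{H}^3_{\Z_2}(\tilde{\n{S}}^4,\Z(1))\simeq H^{-1}_{\Z_2}(\{\ast\},\Z(1))=0$. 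Exactness then traps $H^2_{\Z_2}(\s{T}_4,\Z(1))$ between two trivial groups, so it vanishes; together with the first isomorphism this also gives ${\rm Vec}^m_{\rr{R}}(\s{T}_4,\tau)=0$ (equivalently, once the cohomology is known to vanish one may invoke Corollary \ref{corol_(iii)} to pass from rank $1$ to all $m$).

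The step requiring the most care is the evaluation of the relative term, where the \emph{reduced} group $\tilde{H}^3_{\Z_2}(\tilde{\n{S}}^4,\Z(1))$ must be used in place of the full one: indeed $H^3_{\Z_2}(\tilde{\n{S}}^4,\Z(1))\simeq\Z_2$ by \eqref{eq:equi_cohom_spher_d_loc}, and it is precisely the removal of the basepoint contribution $H^3_{\Z_2}(\{\ast\},\Z(1))\simeq\Z_2$ through the splitting \eqref{eq:susp1} that renders the relevant group trivial. Verifying that $(\tilde{\n{T}}^4,\s{T}_4)$ is a genuine good $\Z_2$-pair, so that the relative cohomology may legitimately be replaced by $\tilde{H}^\bullet_{\Z_2}(\tilde{\n{S}}^4,\Z(1))$, together with this reduced-versus-unreduced bookkeeping, is the crux of the argument.
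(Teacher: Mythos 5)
Your proof is correct, and its first half (the isomorphism ${\rm Vec}^m_{\rr{R}}(\s{T}_4,\tau)\simeq H^2_{\Z_2}(\s{T}_4,\Z(1))$ obtained from Theorem \ref{theo:stab_ran_R} for $m\geqslant 2$ and Theorem \ref{theo:clasR-lin}) is exactly the paper's argument. Where you genuinely diverge is the vanishing of $H^2_{\Z_2}(\s{T}_4,\Z(1))$: the paper computes it bottom-up by an iterated Mayer--Vietoris scheme, first over $\s{J}_1=\tilde{\T}^1_{i_1,i_2,i_3}\cup\tilde{\T}^1_{j_1,j_2,j_3}$, then over $\s{J}_2$ and $\s{J}_3$, then over $\s{Y}_2=\s{J}_2\cup\s{J}'_2$ and finally over $\s{T}_4=\s{J}_3\cup\s{J}'_3$, at each stage inspecting the connecting map $\delta_1$ explicitly on the generators $[\pi_j]$ and $[\epsilon]$ of the relevant $H^1$-groups. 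You instead go top-down: you use the cofibration $\s{T}_4\hookrightarrow\tilde{\n{T}}^4\to\tilde{\n{T}}^4/\s{T}_4\simeq\tilde{\n{S}}^4$ (an identification the paper itself records just before the lemma), excision and exactness for the Borel theory with local coefficients, and the two already-established computations $H^2_{\Z_2}(\tilde{\n{T}}^4,\Z(1))=0$ (Proposition \ref{prob:R_linT}) and $\tilde{H}^3_{\Z_2}(\tilde{\n{S}}^4,\Z(1))\simeq H^{-1}_{\Z_2}(\{\ast\},\Z(1))=0$, where you correctly separate the reduced summand of \eqref{eq:recurs1ev} from the basepoint contribution $H^3_{\Z_2}(\{\ast\},\Z(1))\simeq\Z_2$ via the splitting \eqref{eq:susp1}; exactness then forces the middle group to vanish, and there is no circularity since neither input depends on the lemma. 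Your route is shorter and avoids the delicate analysis of connecting homomorphisms, at the price of invoking the relative/cofiber machinery for equivariant cohomology (which the paper does license, citing the excision and exactness axioms in Section \ref{ssec:borel_constr}); the paper's Mayer--Vietoris route is more laborious but yields, as a by-product, the groups $H^0_{\Z_2}$ and $H^1_{\Z_2}$ of $\s{T}_4$ and of all intermediate subcomplexes together with explicit generators, information the collapsing argument does not provide.
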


Since the proof of this Lemma is  quite technical  we postpone it at the end of this section.

\medskip

\begin{proposition}
\label{prop:T_4}
For each $m\geqslant 2$ there exists a bijection
$$
{\rm Vec}^m_{\rr{R}}(\n{T}^4,\tau)\;\simeq\;2\Z
$$
provided by the map $C_2\circ\jmath$ where $\jmath$  forgets the $\rr{R}$-structure (\cf Proposition \ref{prop:Rv101}) and 
$C_2:{\rm Vec}^m_{\C}(\n{T}^4)\to\Z$ is the second Chern number.
\end{proposition}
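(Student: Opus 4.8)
The plan is to reduce the whole statement to the claim that the collapse map $\upsilon$ induces a \emph{bijection} $\upsilon^\ast$ in \eqref{eq:mapS4toT4}, and then to transport the classification of Proposition \ref{prop:S_4} along it. First I would invoke \eqref{eq:R_vec_m_4_01} (a consequence of the stable range Theorem \ref{theo:stab_ran_R}) to restrict attention to rank $m=2$, so that the target becomes ${\rm Vec}^2_{\rr{R}}(\n{T}^4,\tau)$. Once $\upsilon^\ast$ is known to be bijective, composing the bijection of Proposition \ref{prop:S_4} with $(\upsilon^\ast)^{-1}$ yields ${\rm Vec}^m_{\rr{R}}(\n{T}^4,\tau)\simeq 2\Z$; identifying the resulting invariant with $C_2\circ\jmath$ will then follow from naturality of Chern classes together with the computation of $\upsilon$ on top cohomology described below.

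For \emph{surjectivity} of $\upsilon^\ast$ I would work at the level of classifying maps via Theorem \ref{theo:honotopy_class}. Let $\bb{E}\to\tilde{\n{T}}^4$ be an $\rr{R}$-bundle classified by an equivariant map $\varphi:\tilde{\n{T}}^4\to G_m(\C^\infty)$. By Lemma \ref{lemm:S4T4} the restriction $\bb{E}|_{\s{T}_4}$ is $\rr{R}$-trivial, so $\varphi|_{\s{T}_4}$ is equivariantly homotopic to the constant map at a fixed point $\ast\in G_m(\C^\infty)^\varrho$. Applying the $\Z_2$-homotopy extension property \cite{matumoto-71} to the subcomplex $\s{T}_4\subset\tilde{\n{T}}^4$, this homotopy extends to an equivariant homotopy from $\varphi$ to a map $\varphi'$ with $\varphi'|_{\s{T}_4}=\ast$. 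Being constant on $\s{T}_4$, the map $\varphi'$ factors through $\upsilon:\tilde{\n{T}}^4\to\tilde{\n{T}}^4/\s{T}_4\simeq\tilde{\n{S}}^4$, exhibiting $[\bb{E}]$ in the image of $\upsilon^\ast$. This step uses exactly the mechanism of Lemma \ref{lemma:Z2-reduct} and should be routine.

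For \emph{injectivity} I would avoid computing the equivariant homotopy of the suspension of $\s{T}_4$ and instead use the second Chern number as a separating invariant. The key geometric fact is that $\s{T}_4$ is precisely the fat wedge of $\tilde{\n{T}}^4=\tilde{\n{S}}^1\times\cdots\times\tilde{\n{S}}^1$ with respect to the chosen fixed point, so that $\upsilon$ is the standard collapse $\n{T}^4\to(\n{S}^1)^{\wedge 4}=\n{S}^4$; consequently $\upsilon^\ast:H^4(\n{S}^4,\Z)\to H^4(\n{T}^4,\Z)$ sends the generator to $\alpha_1\alpha_2\alpha_3\alpha_4$ in the exterior algebra \eqref{eq:cohom_torus}, hence is an isomorphism of degree $\pm 1$. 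Since $\jmath$ commutes with the ordinary pullback along $\upsilon$ and Chern classes are natural, this gives $C_2\big(\jmath(\upsilon^\ast\bb{E})\big)=\pm\,C_2(\jmath\bb{E})$ for every $\rr{R}$-bundle $\bb{E}\to\tilde{\n{S}}^4$. Thus $\upsilon^\ast\bb{E}\simeq\upsilon^\ast\bb{E}'$ forces $C_2(\jmath\bb{E})=C_2(\jmath\bb{E}')$, and since $C_2\circ\jmath$ is injective on ${\rm Vec}^m_{\rr{R}}(\n{S}^4,\tau)$ by Proposition \ref{prop:S_4}, we conclude $\bb{E}\simeq\bb{E}'$.

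The hard part will be this injectivity direction, and more precisely pinning down that $\upsilon$ realizes the top-cohomology isomorphism; the cleanest route is to identify $\upsilon$ with the smash-product collapse and quote the exterior-algebra description of $H^\bullet(\n{T}^4,\Z)$. With both directions in hand $\upsilon^\ast$ is a bijection, and the same compatibility $C_2\circ\jmath\circ\upsilon^\ast=\pm\,C_2\circ\jmath$ (note $\pm\,2\Z=2\Z$) shows that $C_2\circ\jmath$ maps ${\rm Vec}^m_{\rr{R}}(\n{T}^4,\tau)$ bijectively onto $2\Z$, which is exactly the assertion.
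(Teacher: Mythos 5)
Your proposal is correct and follows essentially the same route as the paper: reduce to the bijectivity of $\upsilon^\ast$ in \eqref{eq:mapS4toT4}, get surjectivity from the $\rr{R}$-triviality of restrictions to $\s{T}_4$ (Lemma \ref{lemm:S4T4}), and get injectivity from the commutative diagram comparing $C_2\circ\jmath$ over $\tilde{\n{S}}^4$ and $\tilde{\n{T}}^4$ together with Proposition \ref{prop:S_4} and the isomorphism $\upsilon^\ast:H^4(\n{S}^4,\Z)\to H^4(\n{T}^4,\Z)$. The only cosmetic differences are in implementation: the paper descends the bundle itself to the quotient via a \virg{Real} version of Atiyah's clutching lemma where you instead factor the classifying map using the $\Z_2$-homotopy extension property, and the paper justifies the top-cohomology isomorphism by the fact that $\s{T}_4$ has cells only in dimension $\leqslant 3$ rather than by your (equally valid) fat-wedge/smash-product identification.
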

\proof
Let us start proving that the 
map \eqref{eq:mapS4toT4} induced by the projection  $\upsilon:\tilde{\n{T}}^4\to\tilde{\n{T}}^4/\s{T}_4$ is surjective. Let $\bb{E}\to \tilde{\T}^4$ be an $\rr{R}$-bundle of rank $m$. Since Lemma \ref{lemm:S4T4} we know that the restriction $\bb{E}|_{\s{T}_4}\to \s{T}_4$
is trivial. Hence, we can choose a trivialization $h:\s{T}_4\times\C^m\to \bb{E}|_{\s{T}_4}$ as  $\rr{R}$-bundles. Then the \virg{Real} generalization of \cite[Lemma 1.4.7]{atiyah-67} provides us a rank $m$ $\rr{R}$-bundle $\bb{E}/\upsilon\to \tilde{\n{T}}^4/\s{T}_4\simeq \tilde{\n{S}}^4$ such that $\upsilon^*(\bb{E}/\upsilon)=\bb{E}$.

To prove the injectivity of $\upsilon^\ast$ let us consider the following diagram
\begin{equation}\label{eq:diag1}
\begin{diagram}
{\rm Vec}^m_{\rr{R}}(\n{S}^4,\tau) &&\rTo^{\upsilon^\ast}{} &&{\rm Vec}^m_{\rr{R}}(\n{T}^4,\tau)\\
 \dTo^{c_2\circ{\jmath}}&&&&\dTo_{c_2\circ{\jmath}}\\
H^4(\n{S}^4,\Z)&&\rTo_{\upsilon^\ast}{} &&H^4(\n{T}^4,\Z)
\end{diagram}
\end{equation}
which is commutative because of the \emph{naturality property} of the Chern classes
$c_j\circ \upsilon^\ast=\upsilon^\ast\circ c_j$. The left vertical arrow is injective as a consequence of 
Proposition \ref{prop:S_4}. The low horizontal homomorphism  $\upsilon^\ast:H^4(\n{S}^4,\Z)\to H^4(\n{T}^4,\Z)$ induced by $\upsilon$
is a bijection since $\s{T}_4$ consists of cells of dimension not greater than 3. Then $(c_2\circ{\jmath})\circ\upsilon^\ast:{\rm Vec}^m_{\rr{R}}(\n{S}^4,\tau)\to H^4(\n{T}^4,\Z)$ has to be injective and this is possible only if the up horizontal map $\upsilon^\ast$ is injective.

The isomorphism \eqref{eq:mapS4toT4} and the commutativity of diagram \eqref{eq:diag1} imply that classes in ${\rm Vec}^m_{\rr{R}}(\n{T}^4,\tau)$ are uniquely specified by the second Chern class of the underlying complex vector bundle. The isomorphism 
$\upsilon^\ast:H^4(\n{S}^4,\Z)\to H^4(\n{T}^4,\Z)$ and Proposition \ref{prop:S_4} imply the isomorphism $C_2\circ\jmath:{\rm Vec}^m_{\rr{R}}(\n{R}^4,\tau)\to 2\Z$.
\qed

\medskip

Before the proof of Lemma \ref{lemm:S4T4} let us recall some basic facts and introduce some notations. First of all from an iterated use of equations \eqref{eq:recurs01} one gets
$$
H^0_{\Z_2}\big(\tilde{\n{T}}^d,\Z(1)\big)\;=\;0\;,\qquad H^1_{\Z_2}\big(\tilde{\n{T}}^d,\Z(1)\big)\;\simeq\;\Z^d\oplus\Z_2\;,\qquad H^2_{\Z_2}\big(\tilde{\n{T}}^d,\Z(1)\big)\;=\;0\;\;.
$$
Note that  the summand $\Z_2$ in $H^1_{\Z_2}$ comes from 
$H^1_{\Z_2}\big(\{\ast\},\Z(1)\big)= H^{1}\big(\R P^\infty,\Z(1)\big)$ generated by the class $\alpha_1$ introduced in \eqref{eq:cohomPR_Z1} and \eqref{eq:KR-coefficZ1}. Under the isomorphism $H^1_{\Z_2}\big(\tilde{\n{T}}^d,\Z(1)\big)\simeq[\tilde{\n{T}}^d,\tilde{\n{S}}^1]_{\rm eq}$ showed in \cite[Proposition A.2]{gomi-13} we have that $\alpha_1\simeq[\epsilon]$
where $\epsilon:\tilde{\n{T}}^d\to \tilde{\n{S}}^1$ is the equivariant map that carries all the points of $\tilde{\n{T}}^d$ to $\{\ast\}\equiv(-1,0)\in\tilde{\n{S}}^1$. The second summand $\Z^d$ is generated by the classes $[\pi_j]$, with $j=1,\ldots,d$, where $\pi_j:\tilde{\n{T}}^d\to \tilde{\n{S}}^1$ are the projections onto the $j$-th component of $\tilde{\n{T}}^d=\tilde{\n{S}}^1\times\ldots\times\tilde{\n{S}}^1$.

\medskip

\medskip

Now, for $q=1,2,3$ let $i_1,\ldots ,i_{4-q}$ be a set of  indices  such that $1\leqslant i_1<\ldots <i_{4-q}\leqslant 4$. Generalizing \eqref{eq:T3subT4} we define the $\binom{4}{4-q}$ sub-tori of $\tilde{\T}^4$ of dimension $d$
$$
\tilde{\T}^q_{i_1,\ldots ,i_{4-q}}\;:=\;\big\{(k_1,k_2,k_3,k_4)\in\tilde{\T}^4\;|\; k_{i_1}=\ldots=k_{i_{4-q}}=\{\ast\}\big\}\;.
$$
We notice that each $\tilde{\T}^q_{i_1,\ldots ,i_{4-q}}$ carries the $\Z_2$-action induced by $\tau$.

\medskip

\proof[Proof of Lemma \ref{lemm:S4T4}]
The space $\s{T}_4$ has the structure of a $3$-dimensional $\Z_2$-CW-complex with fixed $Z_2$-cells only in dimension zero and
 free $\Z_2$-cells in positive dimension. Then we arrive at the first isomorphism in \eqref{eq:S4toT4_01} applying 
first Theorem \ref{theo:stab_ran_R} 
 which gives ${\rm Vec}^m_{\rr{R}}(\s{T}_4,\tau)\simeq{\rm Vec}^1_{\rr{R}}(\s{T}_4,\tau)$ and then
Theorem \ref{theo:clasR-lin}.

\medskip

Let us introduce the $\Z_2$-subcomplex $\s{J}_1=\tilde{\T}^1_{i_1,i_2 ,i_{3}}\cup\tilde{\T}^1_{j_1,j_2 ,j_3}$
where the index  sets $(i_1,i_2 ,i_{3})$ and $(j_1,j_2 ,j_{3})$ differ only by one index. This is equivalent to  $\tilde{\T}^1_{i_1,i_2 ,i_{3}}\cap\tilde{\T}^1_{j_1,j_2 ,j_3}\simeq\{\ast\}$. The identifications $\tilde{\T}^1_{i_1,i_2 ,i_{3}}\simeq\tilde{\T}^1_{j_1,j_2 ,j_3}\simeq\tilde{\n{S}}^1$ 
leads to the the Meyer-Vietoris exact sequence 
$$
\ldots\stackrel{\delta_{k-1}}{\to} H^{k-1}\big(\R P^\infty,\Z(1)\big)\to H^k_{\Z_2}\big(\s{J}_1,\Z(1)\big)\to H^k_{\Z_2}\big(\tilde{\n{S}}^1,\Z(1)\big)^{\oplus 2}\;\stackrel{\delta_k}{\to}\;H^k\big(\R P^\infty,\Z(1)\big)\to\ldots
$$
By construction the map $\delta_1$ sends the class $\big(a[\pi]+a'[\epsilon],b[\pi]+b'[\epsilon]\big)$ in $H^1_{\Z_2}\big(\tilde{\n{S}}^1,\Z(1)\big)^{\oplus 2}$ to the class $(a'-b')[\epsilon]$ in $H^1\big(\R P^\infty,\Z(1)\big)$, hence $\delta_1$ is surjective with kernel described by the condition $a'=b'$. This provides
$$
H^0_{\Z_2}\big(\s{J}_1,\Z(1)\big)\;=\;0\;,\qquad H^1_{\Z_2}\big(\s{J}_1,\Z(1)\big)\;\simeq\;\Z^2\oplus\Z_2\;,\qquad H^2_{\Z_2}\big(\s{J}_1,\Z(1)\big)\;=\;0\;\;.
$$
and $H^1_{\Z_2}\big(\s{J}_1,\Z(1)\big)$ is generated by two distinct copies of $[\pi]$ (one for each different 1-torus in the definition of $\s{J}_1$)  for the summand $\Z^2$ and by $[\epsilon]$ for the summand $\Z_2$.

\medskip

This argument can be generalized  for the case of a complex 
$\tilde{\T}^2_{i_1,i_2}\cup\tilde{\T}^2_{j_1,j_2}$
where the index sets $(i_1,i_2 )$ and $(j_1,j_2)$ differ only by one index or for $\s{J}_3=\tilde{\T}^3_{i}\cup\tilde{\T}^3_{j}$ with $i\neq j$. In the first case we have $\tilde{\T}^2_{i_1,i_2}\cap\tilde{\T}^2_{j_1,j_2}\simeq \tilde{\n{S}}^1$ and in the second 
$\tilde{\T}^3_{i}\cap\tilde{\T}^3_{j}\simeq\tilde{\n{T}}^2$.
An iterated use of the Meyer-Vietoris exact sequence 
and a direct inspection  
of the connecting map $\delta_1$ provide
$$
\begin{aligned}
&H^0_{\Z_2}\big(\s{J}_2,\Z(1)\big)\;=\;0\;,\qquad H^1_{\Z_2}\big(\s{J}_2,\Z(1)\big)\;\simeq\;\Z^3\oplus\Z_2\;,\qquad H^2_{\Z_2}\big(\s{J}_2,\Z(1)\big)\;=\;0\;\;,\\
&H^0_{\Z_2}\big(\s{J}_3,\Z(1)\big)\;=\;0\;,\qquad H^1_{\Z_2}\big(\s{J}_3,\Z(1)\big)\;\simeq\;\Z^4\oplus\Z_2\;,\qquad H^2_{\Z_2}\big(\s{J}_3,\Z(1)\big)\;=\;0\;\;.
\end{aligned}
$$
The $\Z_2$ summand of the non trivial groups $H^1_{\Z_2}\big(\s{J}_2,\Z(1)\big)$ and $H^1_{\Z_2}\big(\s{J}_3,\Z(1)\big)$ is generated by the class $[\epsilon]$.
The torsionless part is generated by a number of distinct copies of $[\pi]$ which equals the number of  different 1-tori contained as subcomplex in $\s{J}_2$ or $\s{J}_3$.

\medskip

Now, we want to apply the  Meyer-Vietoris exact sequence to   $\s{T}_4=\s{J}_{3}\cup\s{J}'_{3}$ with 
$\s{J}_{3}=\tilde{\T}^3_{1}\cup\tilde{\T}^3_{2}$ and $\s{J}'_{3}=\tilde{\T}^3_{3}\cup\tilde{\T}^3_{4}$. For that we need first the 
equivariant cohomology of the intersection 
$\s{Y}_2=\s{J}_{3}\cap\s{J}'_{3}=\s{J}_{2}\cup\s{J}'_{2}$ with $\s{J}_{2}=\tilde{\T}^2_{1,3}\cup\tilde{\T}^2_{1,4}$ and $\s{J}'_{2}=\tilde{\T}^2_{2,3}\cup\tilde{\T}^2_{2,4}$. Since $\s{J}_{2}\cap\s{J}'_{2}=\tilde{\T}^1_{1,2,3}\cup\tilde{\T}^2_{1,2,4}\simeq\s{J}_{1}$ we get
$$
\ldots \stackrel{\delta_{k-1}}{\to} H^{k-1}_{\Z_2}\big( \s{J}_1,\Z(1)\big)\to H^k_{\Z_2}\big(\s{Y}_2,\Z(1)\big)\to H^k_{\Z_2}\big(\s{J}_{2},\Z(1)\big)^{\oplus 2}\;\stackrel{\delta_k}{\to}\;H^k_{\Z_2}\big(\s{J}_1,\Z(1)\big)\to\ldots
$$
which provides, after the usual  analysis of the connecting map $\delta_1$,  
$$
H^0_{\Z_2}\big(\s{Y}_2,\Z(1)\big)\;=\;0\;,\qquad H^1_{\Z_2}\big(\s{Y}_2,\Z(1)\big)\;\simeq\;\Z^4\oplus\Z_2\;,\qquad H^2_{\Z_2}\big(\s{Y}_2,\Z(1)\big)\;=\;0\;\;.
$$
As usual the $\Z_2$ summand of $H^1_{\Z_2}\big(\s{Y}_2,\Z(1)\big)$ is generated $[\epsilon]$ while the torsionless part is generated by the four distinct copies of $[\pi]$ associated with the different 1-tori contained in $\s{Y}_2$  as subcomplex.
 Coming back to the Meyer-Vietoris exact sequence for 
$\s{T}_4$ we finally obtain
$$
\ldots \stackrel{\delta_{k-1}}{\to} H^{k-1}_{\Z_2}\big( \s{Y}_2,\Z(1)\big)\to H^k_{\Z_2}\big(\s{T}_4,\Z(1)\big)\to H^k_{\Z_2}\big(\s{J}_{3},\Z(1)\big)^{\oplus 2}\;\stackrel{\delta_k}{\to}\;H^k_{\Z_2}\big(\s{Y}_2,\Z(1)\big)\to\ldots
$$
and the determination of the groups
$$
H^0_{\Z_2}\big(\s{T}_4,\Z(1)\big)\;=\;0\;,\qquad H^1_{\Z_2}\big(\s{T}_4,\Z(1)\big)\;\simeq\;\Z^4\oplus\Z_2\;,\qquad H^2_{\Z_2}\big(\s{T}_4,\Z(1)\big)\;=\;0\;
$$
can be obtained again by inspecting the  map $\delta_{1}$.\qed

\subsection{\virg{Real} Chern classes}
\label{sect:eq_chern_class}
The theory of characteristic classes for $\rr{R}$-vector bundles is formulated in \cite{kahn-59,krasnov-92,pitsch-scherer-13}.
According to Theorem \ref{theo:honotopy_class}, the involutive space $(G_m(\C^\infty),\varrho)$ 
classifies  ${\rm Vec}^m_\rr{R}(X,\tau)$. To each  involutive space $(X,\tau)$ we can associate the graded cohomology ring
\beql{eq:ring_eq_cohom1}
\s{A}(X,\tau)\;:=\;H^\bullet_{\Z_2}(X,\Z(0))\;\oplus\; H^\bullet_{\Z_2}(X,\Z(1))
\eeq
where the ring structure is given by the cup product. Equation \eqref{eq:ring_eq_cohom1}, restricted to a single point $\{\ast\}$ provides a ring
\beql{eq:ring_eq_cohom1'}
\n{A}:=\s{A}(\{\ast\})\;=\;\s{K}_{\Z}\;\oplus\; \s{K}_{\Z(1)}\;\simeq\; \Z[\alpha]/(2\alpha)
\eeq
where  $\s{K}_{\Z}$ and $\s{K}_{\Z(1)}$ have been introduced in Section \ref{ssec:borel_constr}. As explained in \cite[Proposition 2.4]{gomi-13},   $\alpha$ corresponds to the additive generator  $\alpha_1\in H^1(\R P^\infty,\Z(1))=\Z_2$ (\cf equation \eqref{eq:KR-coefficZ1}) 
and it is subject to the only relation $2\alpha$. Moreover, $\alpha^2$ agrees with the generator $\kappa\in H^2(\R P^\infty,\Z)$ and this justifies the more 	
suggestive
 notation $\alpha=\sqrt{\kappa}$ used in \cite{gomi-13}. The ring $\n{A}$ is crucial in the description of 
$\s{A}(G_m(\C^\infty),\varrho)$, in fact one has \cite[Th\'{e}or\`{e}me 3]{kahn-59}
\begin{equation}\label{eq:equi_univ_chern_class}
\s{A}(G_m(\C^\infty),\varrho)\;=\; \n{A}[\tilde{\rr{c}}_1,\ldots,\tilde{\rr{c}}_m]\;=\; \n{Z}[\alpha,\tilde{\rr{c}}_1,\ldots,\tilde{\rr{c}}_m]/(2\alpha)
\end{equation}
where $\tilde{\rr{c}}_j\in H^{2j}_{\Z_2}(G_m(\C^\infty),\Z(j))$ is called $j$-th \emph{universal  \virg{Real} Chern class}.
The cohomology ring $\s{A}(G_m(\C^\infty),\varrho)$ is the polynomial algebra over the ring $\n{A}$ on $m$ generators $\tilde{\rr{c}}_1,\ldots,\tilde{\rr{c}}_m$ and a comparison with equation \eqref{eq:univ_chern_class} shows that 
equation \eqref{eq:equi_univ_chern_class} provides the equivariant generalization of the cohomology ring $H^\bullet\big(G_m(\C^\infty),\Z\big)$ generated by the usual Chern classes. In fact, the process of forgetting the \virg{Real} structure of the pair $(G_m(\C^\infty),\varrho)$ defines a canonical homomorphism $\jmath:H^{k}_{\Z_2}(G_m(\C^\infty),\Z(j))\to H^{k}(G_m(\C^\infty),\Z)$ such that $\jmath(\tilde{\rr{c}}_j)={\rr{c}}_j$ for all $j=1,\ldots,m$.

\medskip

The universal ring $\s{A}(G_m(\C^\infty),\varrho)$ and the homotopy classification of Theorem \ref{theo:honotopy_class} allow us to define equivariant Chern classes for each element in ${\rm Vec}^m_\rr{R}(X,\tau)$. Let $(\bb{E},\Theta)$ be an $\rr{R}$-bundle over $(X,\tau)$
classified by the equivariant map $\varphi\in[X,G_m(\C^\infty)]_{\rm eq}$, then the 
\emph{$j$-th \virg{Real} Chern class} of $\bb{E}$ is by definition
$$
\tilde{c}_j(\bb{E})\;:=\;\varphi^\ast(\tilde{\rr{c}}_j)\;\in\;H^{2j}_{\Z_2}(X,\Z(j))\qquad\quad j=1,2,3,\ldots\;.
$$
where $\varphi^\ast:H^j_{\Z_2}(G_m(\C^\infty),\Z(i))\to H^j_{\Z_2}(X,\Z(i))$ is the homomorphism of cohomology groups induced by $\varphi$.
\virg{Real} Chern classes (or {$\rr{R}$-Chern classes} in short) verify all the {Hirzebruch axioms} of Chern classes (up to a change in the normalization)
and are uniquely specified by these axioms \cite[Th\'{e}or\`{e}me 2]{kahn-59} or \cite[Theorem 4.2]{pitsch-scherer-13}.
An important property  is that $\tilde{c}_j(\bb{E})=0$ for all $j>m$ if $\bb{E}$ has rank $m$. Finally, under the map $\jmath:{\rm Vec}^m_\rr{R}(X,\tau)\to {\rm Vec}^m_\C(X)$ that forgets the \virg{Real} structure one has the identification $\jmath\tilde{c}_j(\bb{E})=c_j(\jmath(\bb{E}))$. 

\medskip

The notion of $\rr{R}$-Chern classes can be quite useful to explore the structure of certain complex vector bundles.
For a given topological space $X$ let us introduce the following notation
$$
{\rm Ch}^m_{\rm ev}(X)\;:=\;\{\bb{E}\in{\rm Vec}^m_{\C}(X)\ |\ c_{2j+1}(\bb{E})=0\;,\ \ j=0,1,\ldots\}\;.
$$
We notice that the definition  of ${\rm Ch}^m_{\rm ev}(X)$ does not depend on the particular choice of the representative $\bb{E}$ in ${\rm Vec}^m_{\C}(X)$. Combining the notions of $\rr{R}$-Chern classes and the forgetting map $\jmath$, we can give a different (and extended) proof of 
\cite[Theorem 5.4]{denittis-lein-11}.
\begin{proposition}
\label{prop:zero_odd_chern}
Let us consider the torus with {TR-involution} $(\n{T}^d,\tau)$ defined in Definition \ref{def:per_ferm} or the 
 sphere with {TR-involution} $(\n{S}^d,\tau)$ defined in Definition \ref{def:free_ferm}. Then,
$$
\jmath\;:\;{\rm Vec}^m_{\rr{R}}(X,\tau)\;\longrightarrow \; {\rm Ch}^m_{\rm ev}(X)\;,\qquad\quad X\;=\;\n{T}^d,\n{S}^d
$$
where $\jmath$ is the map of Proposition \ref{prop:Rv101} which forgets the \virg{Real} structure. 
\end{proposition}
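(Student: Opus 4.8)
The plan is to reduce everything to the compatibility of the $\rr{R}$-Chern classes with the forgetting map, combined with the vanishing of the even-degree equivariant cohomology with twisted coefficients that has already been computed. First I would recall from Section \ref{sect:eq_chern_class} that for an $\rr{R}$-bundle $\bb{E}$ over $(X,\tau)$ the $j$-th \virg{Real} Chern class lives in $H^{2j}_{\Z_2}(X,\Z(j))$, and that the homomorphism induced by forgetting the \virg{Real} structure satisfies $\jmath\tilde{c}_j(\bb{E})=c_j(\jmath(\bb{E}))$. Consequently, to show $\jmath(\bb{E})\in{\rm Ch}^m_{\rm ev}(X)$ it suffices to control the odd-indexed classes through the identity $c_{2l+1}(\jmath(\bb{E}))=\jmath\big(\tilde{c}_{2l+1}(\bb{E})\big)$ for all $l\geqslant 0$.

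The crucial observation is that the local system $\Z(m)$ depends only on the parity of $m$ (as recalled in Section \ref{ssec:borel_constr}), so for odd index $2l+1$ one has $\tilde{c}_{2l+1}(\bb{E})\in H^{2(2l+1)}_{\Z_2}(X,\Z(2l+1))=H^{4l+2}_{\Z_2}(X,\Z(1))$, that is, an \emph{even}-degree group with twisted coefficients $\Z(1)$. Here I would invoke the explicit computations already in hand: equation \eqref{eq:even_odd_torus} in Remark \ref{even_odd_torus} gives $H^{\rm even}_{\Z_2}(\tilde{\n{T}}^d,\Z(1))\simeq 0$, while the first isomorphism in \eqref{eq:equi_cohom_spher_d_loc} gives $H^{\rm even}_{\Z_2}(\tilde{\n{S}}^d,\Z(1))\simeq 0$. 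In both cases $X=\n{T}^d$ and $X=\n{S}^d$ the group containing $\tilde{c}_{2l+1}(\bb{E})$ is therefore trivial, so the odd \virg{Real} Chern classes vanish already at the equivariant level.

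It then follows immediately that $c_{2l+1}(\jmath(\bb{E}))=\jmath\big(\tilde{c}_{2l+1}(\bb{E})\big)=\jmath(0)=0$ for every $l\geqslant 0$, which is precisely the condition defining ${\rm Ch}^m_{\rm ev}(X)$. Since $\jmath$ is already known by Proposition \ref{prop:Rv101} to be a well-defined map into ${\rm Vec}^m_\C(X)$, the argument simply refines its target to ${\rm Ch}^m_{\rm ev}(X)$, as claimed.

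I do not expect a genuine obstacle here: the entire content is packaged into the two cohomology computations, both of which are available and were proved precisely so that the even-degree $\Z(1)$-groups are killed in all relevant degrees. The only points requiring a word of care are the parity identification $\Z(2l+1)=\Z(1)$ and the remark that the \virg{even} statements cover exactly the degrees $4l+2$ in which the odd classes sit; both are transparent from the constructions of Section \ref{ssec:borel_constr}, so the verification is essentially formal once the compatibility $\jmath\tilde{c}_j=c_j\circ\jmath$ is in place.
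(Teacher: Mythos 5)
Your proposal is correct and follows exactly the paper's own argument: it reduces the claim to the identity $\jmath\,\tilde{c}_j(\bb{E})=c_j(\jmath(\bb{E}))$ together with the vanishing of $H^{2(2l+1)}_{\Z_2}(X,\Z(1))$, quoting the same two computations (equation \eqref{eq:even_odd_torus} for $\n{T}^d$ and the first isomorphism in \eqref{eq:equi_cohom_spher_d_loc} for $\n{S}^d$). The only difference is that you spell out the parity identification $\Z(2l+1)\simeq\Z(1)$, which the paper uses implicitly; no gap remains.
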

\proof
Since the map $\jmath$ induces the identification
$\jmath\tilde{c}_j(\bb{E})=c_j(\jmath(\bb{E}))$, it is enough to show  the triviality of the groups  $H^{2(2j+1)}_{\Z_2}(X,\Z(1))=0$. For $X=\n{S}^d$ this follows from the first equation in \eqref{eq:equi_cohom_spher_d_loc} and for $X=\T^d$ this is given by \eqref{eq:even_odd_torus}.
\qed

\medskip

Combining together Corollary \ref{corol_(iii)} with Theorem \ref{theo:clasR-lin} we know that the classification of ${\rm Vec}^m_{\rr{R}}(\n{T}^d,\tau)$
and ${\rm Vec}^m_{\rr{R}}(\n{S}^d,\tau)$ with $d\leqslant 3$
is completely equivalent to the fact that $\tilde{c}_1=0$.
On the other hand, when $d=4$ and $m\geqslant 2$
 the only non trivial
$\rr{R}$-{Chern class} is $\tilde{c}_2$ which is an element of
\beql{eq:cohom_4d}
H^{4}_{\Z_2}(\tilde{\n{S}}^4,\Z)\;\simeq\;\Z_2\;\oplus\;\Z\;,\qquad\quad H^{4}_{\Z_2}(\tilde{\n{T}}^4,\Z)\;\simeq\;\Z_2^{15}\;\oplus\;\Z\;.
\eeq
The first computation follows from
\eqref{eq:equi_cohom_spher_d_int} while the second can be derived from 
$$
H^{4}_{\Z_2}(\tilde{\n{T}}^4,\Z)\;\simeq\; H^{4}_{\Z_2}(\tilde{\n{S}}^1,\Z)\;\oplus\;H^{3}_{\Z_2}\big(\tilde{\n{S}}^1,\Z(1)\big)^{\oplus3}\;\oplus\;H^{2}_{\Z_2}\big(\tilde{\n{S}}^1,\Z\big)^{\oplus 3}\;\oplus\;H^{1}_{\Z_2}\big(\tilde{\n{S}}^1,\Z(1)\big)
$$
which is a consequence of \eqref{eq:recurs01}. A comparison between \eqref{eq:cohom_4d} and the classification given in 
Proposition \ref{prop:S_4} and Proposition \ref{prop:T_4}
shows that the cohomology groups $H^{4}_{\Z_2}(\tilde{\n{S}}^4,\Z)$ and $H^{4}_{\Z_2}(\tilde{\n{T}}^4,\Z)$ are \virg{redundant} for the classification of \virg{Real} vector bundles.
This redundancy is a manifestation of the fact that these cohomology groups contain also information about 
other topological objects like
\emph{$\Z_2$-equivariant} vector bundles  (\cf \cite[Section 1.6]{atiyah-67}). Nevertheless, at least in our case of interest, the second $\rr{R}$-{Chern class} provides a complete classification:

\begin{proposition}
\label{prop:c_2-classific}
Let $(\bb{E}_1,\Theta_1)$ and $(\bb{E}_2,\Theta_2)$ be
two \virg{Real} vector bundles over  $(\n{S}^4,\tau)$ or $(\n{T}^4,\tau)$. Then $(\bb{E}_1,\Theta_1)\simeq(\bb{E}_2,\Theta_2)$
if and only if $\tilde{c}_2(\bb{E}_1)=\tilde{c}_2(\bb{E}_2)$.
\end{proposition}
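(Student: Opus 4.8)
The plan is to reduce the statement to the classification $C_2\circ\jmath:{\rm Vec}^m_{\rr{R}}(X,\tau)\xrightarrow{\sim}2\Z$ already established in Proposition \ref{prop:S_4} (for $\n{S}^4$) and Proposition \ref{prop:T_4} (for $\n{T}^4$), using the compatibility $\jmath\tilde{c}_2(\bb{E})=c_2(\jmath(\bb{E}))$ of the second \virg{Real} Chern class with the forgetting map recalled in Section \ref{sect:eq_chern_class}. Throughout I assume $(\bb{E}_1,\Theta_1)$ and $(\bb{E}_2,\Theta_2)$ have the same rank $m$, which is of course necessary for an $\rr{R}$-isomorphism. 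The case $m=1$ is immediate: by Proposition \ref{prob:R_linS} and Proposition \ref{prob:R_linT} one has ${\rm Vec}^1_{\rr{R}}(X,\tau)=0$ for $X=\n{S}^4,\n{T}^4$, so any two $\rr{R}$-line bundles are $\rr{R}$-trivial and hence mutually isomorphic, while $\tilde{c}_2$ vanishes identically on line bundles; thus the equivalence holds trivially. The substantive case is $m\geqslant 2$.

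The \virg{only if} direction is a formal consequence of the homotopy classification. By Theorem \ref{theo:honotopy_class} an $\rr{R}$-isomorphism $(\bb{E}_1,\Theta_1)\simeq(\bb{E}_2,\Theta_2)$ forces the classifying maps $\varphi_1,\varphi_2:X\to G_m(\C^\infty)$ to be equivariantly homotopic, so the induced homomorphisms agree, $\varphi_1^\ast=\varphi_2^\ast$, on Borel cohomology. Evaluating both on the universal class $\tilde{\rr{c}}_2$ gives $\tilde{c}_2(\bb{E}_1)=\varphi_1^\ast(\tilde{\rr{c}}_2)=\varphi_2^\ast(\tilde{\rr{c}}_2)=\tilde{c}_2(\bb{E}_2)$.

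For the \virg{if} direction I would assemble the composite
$$
{\rm Vec}^m_{\rr{R}}(X,\tau)\;\xrightarrow{\ \tilde{c}_2\ }\;H^4_{\Z_2}(X,\Z)\;\xrightarrow{\ \jmath\ }\;H^4(X,\Z)\;\xrightarrow{\ C_2\ }\;\Z
$$
where I use that $\Z(2)=\Z(0)$ as local systems, so indeed $\tilde{c}_2(\bb{E})\in H^4_{\Z_2}(X,\Z)$, and where the last arrow is integration over $X$. By the identity $\jmath\tilde{c}_2(\bb{E})=c_2(\jmath(\bb{E}))$ this composite sends $\bb{E}$ to $C_2(c_2(\jmath(\bb{E})))=(C_2\circ\jmath)(\bb{E})$, that is, it coincides exactly with the classification bijection onto $2\Z$ of Proposition \ref{prop:S_4} and Proposition \ref{prop:T_4}; in particular it is injective. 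Consequently $\tilde{c}_2(\bb{E}_1)=\tilde{c}_2(\bb{E}_2)$ implies $(C_2\circ\jmath)(\bb{E}_1)=(C_2\circ\jmath)(\bb{E}_2)$, and the injectivity of $C_2\circ\jmath$ yields $(\bb{E}_1,\Theta_1)\simeq(\bb{E}_2,\Theta_2)$.

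The only auxiliary point needing verification is that $C_2:H^4(X,\Z)\to\Z$ is itself injective: for $X=\n{S}^4$ this follows from \eqref{eq:cohom_sph} (so $H^4(\n{S}^4,\Z)\simeq\Z$ with $C_2$ the evaluation on the fundamental class), and for $X=\n{T}^4$ from \eqref{eq:cohom_torus} (so $H^4(\n{T}^4,\Z)\simeq\Z$); in both cases integration is an isomorphism. Given this, there is essentially no real obstacle—the argument is a bookkeeping of three compatible maps, the genuine content residing in the previously proved Proposition \ref{prop:S_4} and Proposition \ref{prop:T_4}. The one subtlety I would flag is that $\tilde{c}_2$ a priori carries strictly more information than its integer image, since the groups $H^4_{\Z_2}(\tilde{\n{S}}^4,\Z)$ and $H^4_{\Z_2}(\tilde{\n{T}}^4,\Z)$ in \eqref{eq:cohom_4d} both contain torsion; the commuting composite above is precisely what guarantees that the $\rr{R}$-bundle invariant lands in the torsion-free part detected by $C_2\circ\jmath$, so that no information is lost in passing from $\tilde{c}_2(\bb{E})$ to the integer.
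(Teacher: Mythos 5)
Your proposal is correct and follows essentially the same route as the paper's proof: the \virg{only if} direction is formal (naturality of $\tilde{c}_2$ under $\rr{R}$-isomorphism), and the \virg{if} direction passes through the identity $\jmath\,\tilde{c}_2(\bb{E})=c_2(\jmath(\bb{E}))$ and then invokes the injectivity of $C_2\circ\jmath$ from Proposition \ref{prop:S_4} and Proposition \ref{prop:T_4}. Your additional remarks (the trivial rank-one case via Proposition \ref{prob:R_linS} and Proposition \ref{prob:R_linT}, and the observation that the torsion in $H^4_{\Z_2}$ is invisible to the composite) are sound refinements of the same argument rather than a different approach.
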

\proof
If $(\bb{E}_1,\Theta_1)$ are $(\bb{E}_2,\Theta_2)$ equivalent then $\tilde{c}_2(\bb{E}_1)=\tilde{c}_2(\bb{E}_2)$ by construction.
Conversely if 
$\tilde{c}_2(\bb{E}_1)=\tilde{c}_2(\bb{E}_2)$ 
then also ${c}_2(\jmath(\bb{E}_1))={c}_2(\jmath(\bb{E}_2))$ 
where $\jmath$ is the map
  which forgets the \virg{Real} structure and $c_2$ is the second Chern class of the associated complex vector bundle. But, according to 
Proposition \ref{prop:S_4} and Proposition \ref{prop:T_4} this is enough to affirm that $(\bb{E}_1,\Theta_1)$ are $(\bb{E}_2,\Theta_2)$ equivalent. 
\qed

\section{Non-trivial \virg{Real} vector bundles in $d=4$}
\label{sect:non-trivial_ex}
Let $\{\Sigma_0,\ldots,\Sigma_4\}\in{\rm Mat}_4(\C)$ be an irreducible representation of the \emph{Clifford algebra} $C\ell_\C(4)$, namely $\Sigma_i\Sigma_j+\Sigma_j\Sigma_i=2\delta_{i,j}\n{1}_4$. An explicit realization is given by
$$
\begin{aligned}
\Sigma_0\;:=\;\sigma_1\otimes\sigma_3&&\Sigma_1\;:=\;\sigma_2\otimes\sigma_3&&\Sigma_2\;:=\;\n{1}_2\otimes\sigma_1\\
&&\Sigma_3\;:=\;\n{1}_2\otimes\sigma_2&&\Sigma_4\;:=\;\sigma_3\otimes\sigma_3\\
\end{aligned}
$$
where
$$
\sigma_1\;:=\;\left(\begin{array}{cc}0 & 1 \\1 & 0\end{array}\right)\;,\qquad\sigma_2\;:=\;\left(\begin{array}{cc}0 & -\ii \\\ii & 0\end{array}\right)\;, \qquad\sigma_3\;:=\;\left(\begin{array}{cc}1 & 0 \\0 & -1\end{array}\right)
$$
are the Pauli matrices. With this special choice  the following relations hold true:  
$$
\Sigma_j^\ast=\Sigma_j\;,\qquad\quad\overline{\Sigma}_j=(-1)^{j}\;{\Sigma}_j\;,\qquad\quad {\Sigma}_0\;{\Sigma}_1\;{\Sigma}_2\;{\Sigma}_3\;{\Sigma}_4\;=\;-\n{1}_4\;.
$$
We need also the following relations for the trace
\beql{eq:trac_form}
{\rm Tr}_{\C^4}(\Sigma_j)\;=\;0\;,\qquad\quad {\rm Tr}_{\C^4}(\Sigma_i\Sigma_j)\;=\;4\delta_{i,j}\;.
\eeq

\medskip

The Clifford matrices  allow us to  construct \virg{Dirac-like} Hamiltonians on  $L^2(\R^4)\otimes\C^4$ of the type
\beql{eq:hamilt_Sigma}
\hat{H}\;:=\;\sum_{j=0}^4F_j(-\ii\partial_{x_1},\ldots,-\ii\partial_{x_4})\otimes\Sigma_j
\eeq
where the  reality  $F_j:\R^4\to\R$ of these functions assures that $\hat{H}$ is
 self-adjoint. In order to establish the desired  symmetry for system in class {\bf AI}, we also set 
\beql{eq:AppA1}
F_j(-\kappa)=(-1)^{j}\; F_j(\kappa)\;,\qquad\quad \kappa:=(\kappa_1,\ldots,\kappa_4)\in\R^4\;.
\eeq
In fact, if
 $\hat{C}(\psi\otimes{\rm v})=\overline{\psi}\otimes \overline{\rm v}$ is the {complex conjugation} on $L^2(\R^4)\otimes\C^4$, 
one easily verifies from the above definitions that $\hat{H}\;=\;\hat{C}\;\hat{H}\;\hat{C}$. Then, using the jargon introduced in 
Section \ref{sect:bloc_bund}, we can say that
 $\hat{H}$ is an operator in class {\bf AI} with respect to the (trivial) unitary  $\hat{J}=\hat{\n{1}}$. As an example of functions which verify \eqref{eq:AppA1} we can set
\beql{eq:AppA5-0}
\begin{aligned}
F_0(\kappa)\;:=\;\frac{\|\kappa\|^2-1}{\|\kappa\|^2+1}&&F_1(\kappa)\;:=\;\frac{2(\kappa_1+\kappa_2)}{\|\kappa\|^2+1}&&F_2(\kappa)\;:=\;\frac{4(\kappa_1\kappa_2-\kappa_3\kappa_4)}{(\|\kappa\|^2+1)^2}&&\\
                       &&F_3(\kappa)\;:=\;\frac{2(\kappa_3+\kappa_4)}{\|\kappa\|^2+1}&&F_4(\kappa)\;:=\;\frac{4(\kappa_1\kappa_4+\kappa_2\kappa_3)}{(\|\kappa\|^2+1)^2}&&\\
\end{aligned}
\eeq
where $\|\kappa\|^2=\sum_{i=1}^4\kappa_i^2$.

\medskip

A simple computation provides
$$
\hat{H}^2\;=\;{Q}\otimes\n{1}_4\;,\qquad\quad {Q}\;:=\;\sum_{j=0}^4F_j(-\ii\partial_{x_1},\ldots,-\ii\partial_{x_4})^2
$$
and assuming that ${Q}>0$ we can define two spectral projections of $\hat{H}$ by the formula
$$
\hat{P}_\pm\;:=\;\frac{1}{2}\left(\hat{\n{1}}\;\pm\;\sum_{j=0}^4\frac{F_j}{\sqrt{Q}}\;\otimes\;\Sigma_j\right)\;.
$$
After simple algebraic computations one can verify the spectral relations
$$
[\hat{P}_\pm\;\hat{H}]\;=\;0,\qquad\quad\hat{P}_\pm\;\hat{H}\;\hat{P}_\pm\;=\;\pm(\sqrt{Q}\oplus\n{1}_4)\; \hat{P}_\pm\;.
$$
together with the projection properties
 $\hat{P}_\pm^2=\hat{P}_\pm$, $\hat{P}_\pm\hat{P}_\mp=0$, $\hat{P}_+\oplus\hat{P}_-=\hat{\n{1}}$.
 
\medskip

The invariance under translations of $\hat{H}$ and $\hat{P}_\pm$ allows us to apply the Fourier transform which
maps
 $\hat{P}_\pm$  into a family of projection-valued  matrices $\R^4\ni\kappa\mapsto {P}_\pm(\kappa)\in{\rm Mat}_4(\C)$  given by
\beql{eq:proj_sphe1}
{P}_\pm(\kappa)\;:=\;\frac{1}{2}\left(\n{1}_4\;\pm\;\sum_{j=0}^4\frac{F_j(\kappa)}{\sqrt{Q(\kappa)}}\;\otimes\;\Sigma_j\right)\;.
\eeq
 This last formula is well defined since the strictly positivity of the operator ${Q}>0$ is equivalent to
 $Q(\kappa):=\sum_{j=0}^4F_j(\kappa)^2 >0$ for all $\kappa\in\R^4$ (for instance, the ansatz \eqref{eq:AppA5-0} verify this positivity  condition). Since $
{\rm Tr}_{\C^4}{P}_\pm(\kappa)=2$
and $\overline{{P}_\pm(\kappa)}={P}_\pm(-\kappa)$
 the collection of ${P}_\pm(\kappa)$ defines a \virg{Real} vector bundle of rank $2$ over $\R^4$. Assuming the existence of the limits
$$
\qquad\lim_{|\kappa|\to\infty}\frac{F_j(\kappa)}{\sqrt{Q(\kappa)}}\;=\;a_j\;,\qquad\quad \sum_{j=0}^4a^2_j=1	
$$
one can define $P_\pm(\infty):=\lim_{|\kappa|\to\infty}{P}_\pm(\kappa)$ and
 the associated vector bundle can be extended over the one-point compactification $\R^4\cup\{\infty\}\simeq\n{S}^4$ (the ansatz \eqref{eq:AppA5-0} verifies also this limit property with $a_0=1$ and $a_1=\ldots=a_4=0$). This extension can be  realized by means of  the  stereographic coordinates $k:=(k_0,k_1,\ldots,k_4)\in \n{S}^4$ given by the (inverse) transformations
\beql{eq:AppA2}
\kappa_j(k)\;:=\;\frac{k_j}{1-k_0}\;,\qquad\quad j=1,\ldots,4\;.
\eeq
This leads to a map 
of projection-valued matrices $\n{S}^4\ni k\mapsto {P}_\pm(k)\in{\rm Mat}_4(\C)$ where ${P}_\pm(k):={P}_\pm(\kappa(k))$ endowed with the symmetry
$\overline{{P}_\pm(k)}={P}_\pm(\tau(k))$  where $\tau:\n{S}^4\to \n{S}^4$ is the {TR-involution} defined by \eqref{eq:inv_sphe}. In accordance with the discussion in Section \ref{sect:bloc_bund} this construction produces a  \virg{Real} Bloch-bundle
$\pi:\bb{E}_\pm\to\n{S}^4$ of  rank 2 and with fibers given by $\pi^{-1}(k)={\rm Ran}\; P_\pm(k)$.

\medskip

We can write a differential representative for the  second Chern class of $\bb{E}_\pm\to\n{S}^4$ by the formula (\cf \cite[Section 8.3]{{gracia-varilly-figueroa-01}})
$$
c_2(\bb{E}_\pm)\;=\;\frac{1}{8\pi^2}\;{\rm Tr}_{\C^4}\big[{P}_\pm \cdot(\dd{P}_\pm\big)^{\wedge4}\big]
$$
where
$$
\dd{P}_\pm \;:=\;\pm\frac{1}{2}\sum_{j=0}^4\dd\left(\frac{F_j }{\sqrt{Q }}\right)\;\Sigma_j
$$
is the exterior derivative of the function $ k\mapsto {P}_\pm(k)$
and $(\dd{P}_\pm\big)^{\wedge4}:=\dd{P}_\pm \wedge\ldots\wedge\dd{P}_\pm $. Since
$$
\dd\left(\frac{F_i}{\sqrt{Q}}\right)\;\Sigma_i\;\wedge\; \dd\left(\frac{F_j}{\sqrt{Q}}\right)\;\Sigma_j\;=\;\dd\left(\frac{F_j}{\sqrt{Q}}\right)\;\Sigma_j\;\wedge\; \dd\left(\frac{F_i}{\sqrt{Q}}\right)\;\Sigma_i\;,\qquad\quad \left(\dd\left(\frac{F_i}{\sqrt{Q}}\right)\;\Sigma_i\right)^{\wedge2}\;=\;0
$$
 a simple algebraic computation gives
 $$
(\dd{P}_\pm\big)^{\wedge4}\;=\;\frac{3}{2}\;\sum_{j=0}^4 (-1)^{j+1}\left[\dd\left(\frac{F_0}{\sqrt{Q}}\right)\;\wedge\;\ldots\;\wedge\;\underline{\dd\left(\frac{F_j}{\sqrt{Q}}\right)}\;\wedge\;\ldots\;\wedge\;\dd\left(\frac{F_4}{\sqrt{Q}}\right)\right]\;\Sigma_j
$$
where the underlined term in the product is omitted. With the help of  formulas \eqref{eq:trac_form} one finally gets
$$
\begin{aligned}
c_2(\bb{E}_\pm)\;&=\; \frac{\pm1}{{\rm Vol}(\n{S}^4)}\;\sum_{j=0}^4 (-1)^{j+1}\left(\frac{F_j}{\sqrt{Q}}\right)\left[\dd\left(\frac{F_0}{\sqrt{Q}}\right)\;\wedge\;\ldots\;\wedge\;\underline{\dd\left(\frac{F_j}{\sqrt{Q}}\right)}\;\wedge\;\ldots\;\wedge\;\dd\left(\frac{F_4}{\sqrt{Q}}\right)\right]\\
&=\frac{\pm1}{{\rm Vol}(\n{S}^4)}\;Q^{-\frac{5}{2}}\;\;\sum_{j=0}^4 (-1)^{j+1}F_j\;\left[\dd F_0\;\wedge\;\ldots\;\wedge\;\underline{\dd F_j}\;\wedge\;\ldots\;\wedge\;\dd\ F_4\right]
\end{aligned}
$$
where ${\rm Vol}(\n{S}^4)=\frac{8}{3}\pi^2$ is the volume of the $4$-sphere.

\begin{remark}[Hopf bundle]\label{rk:hopf_bundle}{\upshape
Let us denote with $\bb{E}_{\rm Hopf}\to \n{S}^4$ the vector bundle associated with the  projections-valued   map $\n{S}^4\ni k\mapsto {P}_{\rm Hopf}(k)\in{\rm Mat}_4(\C)$ given by
\beql{eq:proj_sphe2}
{P}_{\rm Hopf}(k_0,k_1,\ldots,k_4)\;:=\;\frac{1}{2}\left(\n{1}_4\;+\;\sum_{j=0}^4k_j\;\otimes\;\Sigma_j\right)\;.
\eeq
From the above computation one immediately gets
 $c_2(\bb{E}_{\rm Hopf})=\frac{\omega_{\n{S}^4}}{{\rm Vol}(\n{S}^4)}$ where
$$
\omega_{\n{S}^4}\;:=\;\sum_{j=0}^4 (-1)^{j+1}k_j\left[\dd k_0\;\wedge\;\dd k_1\;\wedge\;\ldots\;\wedge\;\underline{\dd k_j}\;\wedge\;\ldots\;\wedge\;\dd k_4\right]
$$
is the \emph{volume form} of $\n{S}^4$. Under the isomorphism $H^4(\n{S}^4,\Z)\to\Z$ provided by the integration one compute the second Chern number
$$
C_2(\bb{E}_{\rm Hopf})\;:=\;\langle c_2(\bb{E}_{\rm Hopf});\n{S}^4\rangle\;=\;\frac{1}{{\rm Vol}(\n{S}^4)}\int_{\n{S}^4}\omega_{\n{S}^4}\;=\;1\;.
$$
Hence $\bb{E}_{\rm Hopf}$ can be chosen as the non-trivial generator of ${\rm Vec}^2_\C(\n{S}^4)$. Finally, let us point out that $\bb{E}_{\rm Hopf}$ has a real structure with respect to the involution $\varpi:\n{S}^4\to \n{S}^4$ defined in \eqref{eq:AppA3}.
 }\hfill $\blacktriangleleft$
\end{remark}

\medskip

A comparison between \eqref{eq:proj_sphe1}
and \eqref{eq:proj_sphe2} immediately provides the following relationship
$$
\bb{E}_\pm\;\simeq\;\varphi_{F,\pm}^*(\bb{E}_{\rm Hopf})
$$
where the map $\varphi_{F,\pm}:\n{S}^4\to \n{S}^4$ is explicitly given by
\beql{eq:AppA3"'}
\varphi_{F,\pm}\;:\:(k_0,k_1,\ldots,k_4)\;\longmapsto\;\frac{\pm1}{\sqrt{Q(k)}}\big(F_0(k),F_1(k),\ldots,F_4(k)\big)\;.
\eeq
This map is well defined since $\sum_{j=1}^5\left(\frac{F_j(k)}{\sqrt{Q(k)}}\right)^2=1$, for all $k\in\n{S}^4$
and from \eqref{eq:AppA1} and  \eqref{eq:AppA2} one deduces that $\varphi_{F,\pm}$ is subjected to the equivariant relation
$$
\varphi_{F,\pm}\big(\tau(k)\big)\;=\;\varpi\big(\varphi_{F,\pm}(k)\big)
$$
where the two involution $\tau$ and $\varpi$ over $\n{S}^4$ are
\beql{eq:AppA3}
\begin{aligned}
\tau:(k_0,k_1,k_2,k_3,k_4)&\mapsto (k_0,-k_1,-k_2,-k_3,-k_4)\\ 
\varpi:(k_0,k_1,k_2,k_3,k_4)&\mapsto (k_0,-k_1,k_2,-k_3,k_4)\;.
\end{aligned}
\eeq
The construction of non-trivial \virg{Real} vector bundle over $\n{S}^4$ is consequence of the following topological result:
\begin{lemma}
\label{lem:AppA}
Let $[\tilde{\n{S}}^4,\check{\n{S}}^4]_{\rm eq}$ be the set of homotopy equivalence classes of equivariant maps
between the involutive spheres $\tilde{\n{S}}^4\equiv(\n{S}^4,\tau)$ and $\check{\n{S}}^4\equiv(\n{S}^4,\varpi)$.
 Then
$$
[\tilde{\n{S}}^4,\check{\n{S}}^4]_{\rm eq}\;\simeq\;2\Z
$$
and the isomorphism is given by the {topological degree}.
\end{lemma}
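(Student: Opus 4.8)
The plan is to mimic the hemisphere/clutching argument of Proposition \ref{prop:S_4}, exploiting the fact that both involutions reverse an \emph{even} number of coordinates. Concretely, $\tau$ flips $k_1,k_2,k_3,k_4$ so ${\rm deg}(\tau)=(-1)^4=+1$, while $\varpi$ flips $k_1,k_3$ so ${\rm deg}(\varpi)=(-1)^2=+1$; both are orientation preserving, and this parity is exactly what will force the degree of an equivariant map to be even. Throughout I regard $\tilde{\n{S}}^4$ and $\check{\n{S}}^4$ as the representation spheres $\n{S}^{1,4}$ and $\n{S}^{3,2}$, so that $(\check{\n{S}}^4)^\varpi\simeq\n{S}^2$ is the fixed-point set of the target.

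First I would decompose $\tilde{\n{S}}^4$ along the coordinate $k_4$ reversed by $\tau$, writing $\tilde{\n{S}}^4=\ell^4_+\cup_{\tilde{\n{S}}^3}\ell^4_-$ with $\ell^4_\pm:=\{\pm k_4\geqslant0\}$ and equator $\{k_4=0\}\simeq\tilde{\n{S}}^3$. By Example \ref{ex:sphere_TR_CW} this equator is a $\Z_2$-subcomplex carrying fixed cells only in dimension $0$ and free cells in dimensions $1,2,3$, and $\tau$ interchanges the two hemispheres $\ell^4_\pm$. Since $\pi_k(\n{S}^2)=0$ for $k\leqslant1$ and $\pi_k(\n{S}^4)=0$ for $k\leqslant3$, Lemma \ref{lemma:Z2-reduct} applies with $d_1=1$, $d_2=3$ and $Y=\tilde{\n{S}}^3$, so any equivariant $\varphi:\tilde{\n{S}}^4\to\check{\n{S}}^4$ is equivariantly homotopic to a map $\varphi'$ that is constant, equal to a fixed point $\ast\in\n{S}^2$, on the equator. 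Such a $\varphi'$ then factors through the equivariant collapse $c:\tilde{\n{S}}^4\to A\vee B$, where $A:=\ell^4_+/\tilde{\n{S}}^3$ and $B:=\ell^4_-/\tilde{\n{S}}^3$ are copies of $\n{S}^4$ interchanged by the induced involution $\bar\tau$; write $\varphi'=\bar\varphi\circ c$.

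The heart of the argument --- and the step where a sign error would be fatal --- is the degree computation. Orienting $A,B$ through $c$ so that $c_\ast[\n{S}^4]=[A]+[B]$, the identity ${\rm deg}(\tau)=+1$ forces $\bar\tau_\ast[A]=[B]$ in $H_4(A\vee B)$. Feeding this together with ${\rm deg}(\varpi)=+1$ into the equivariance relation $\bar\varphi\circ\bar\tau=\varpi\circ\bar\varphi$, evaluated on $H_4$, yields ${\rm deg}(\bar\varphi|_B)={\rm deg}(\bar\varphi|_A)$; hence ${\rm deg}(\varphi')={\rm deg}(\bar\varphi|_A)+{\rm deg}(\bar\varphi|_B)=2\,{\rm deg}(\bar\varphi|_A)$ is even. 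Had the two involutions reversed an odd number of coordinates the two contributions would cancel instead, so this parity is the real content of the statement. This shows the topological degree takes values in $2\Z$.

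It remains to see that degree is a bijection onto $2\Z$. For surjectivity I would simply prescribe $\bar\varphi|_A$ to be a based map $\n{S}^4\to\check{\n{S}}^4$ of arbitrary degree $k$, define $\bar\varphi|_B:=\varpi\circ\bar\varphi|_A\circ(\bar\tau|_A)^{-1}$ to enforce equivariance, and pull back along $c$, producing an equivariant map of degree $2k$. For injectivity, if ${\rm deg}(\varphi)={\rm deg}(\psi)$ then after the above reduction ${\rm deg}(\bar\varphi|_A)={\rm deg}(\bar\psi|_A)$, so the two based restrictions are homotopic in $\pi_4(\n{S}^4)\simeq\Z$; transporting this homotopy to $B$ via $\varpi$ and $\bar\tau$ glues to an equivariant homotopy $\bar\varphi\simeq\bar\psi$, which pulls back to an equivariant homotopy $\varphi\simeq\psi$. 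Since $\pi_4(\n{S}^4)$ is detected by the integer degree, this establishes that ${\rm deg}:[\tilde{\n{S}}^4,\check{\n{S}}^4]_{\rm eq}\to2\Z$ is a bijection, as claimed.
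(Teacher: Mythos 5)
Your proof is correct, and its core---the reduction via Lemma \ref{lemma:Z2-reduct} to a map constant on the equator $\tilde{\n{S}}^3$, the splitting into two hemisphere classes, and the use of the orientation-preserving character of both $\tau$ and $\varpi$ to force the two hemisphere degrees to coincide (whence evenness, and injectivity by torsion-freeness of $\pi_4(\n{S}^4)$)---is exactly the argument the paper uses. Where you genuinely depart from the paper is surjectivity. The paper embeds the problem in a commutative square with the equivariant suspension $\s{S}^{\rm eq}_\ast:[\hat{\n{S}}^3,\check{\n{S}}^3]_{\rm eq}\to[\tilde{\n{S}}^4,\check{\n{S}}^4]_{\rm eq}$, identifies $\check{\n{S}}^3$ with ${\rm S}\n{U}(2)$ carrying the complex conjugation, and quotes Proposition \ref{prop:S_4}, whose own surjectivity rests on the explicit degree-$2$ example computed in Remark \ref{rk:deg2}; you instead realize every even degree directly, prescribing an arbitrary based map on the wedge summand $A$ and extending to $B$ by equivariance before pulling back along the collapse. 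Your route is therefore self-contained in dimension four: it needs neither the ${\rm S}\n{U}(2)$ identification nor any explicit model map. What the paper's route buys in exchange is the byproduct that $\s{S}^{\rm eq}_\ast$ is a degree-preserving bijection, a fact it reuses in Remark \ref{rk:deg2} to compute ${\rm deg}\,\varphi_{F,+}$ by descending to the $3$-sphere map $\phi_{F,+}$; your argument does not deliver that statement. One small correction to an aside in your second paragraph: if both involutions reversed an \emph{odd} number of coordinates, both would be orientation-reversing, the product of the two resulting signs would again be $+1$, and the hemisphere contributions would still add rather than cancel; the cancellation forcing degree $0$ occurs precisely when the source and target involutions have \emph{opposite} orientation behavior.
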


We postpone the proof of Lemma \ref{lem:AppA} at the end of this section. For the moment let us argue that if we choose an equivariant map 
$\varphi\in [\tilde{\n{S}}^4,\check{\n{S}}^4]_{\rm eq}$ such that $[\varphi]=2n$ with $n\in\Z$ then $\bb{E}_\varphi:=\varphi^*(\bb{E}_{\rm Hopf})$ provides a non-trivial element of ${\rm Vec}^2_{\rr{R}}(\n{S}^4,\tau)$ characterized by a second Chern number $C_2(\bb{E}_\varphi)=2n$. As  a concrete non-trivial example we can consider a map $\varphi_{F,+}$ of type \eqref{eq:AppA3"'} with
\beql{eq:AppA5}
\begin{aligned}
F_0(k)\;:=\;k_0&&F_1(k)\;:=\;k_1+k_2&&F_2(k)\;:=\;k_1k_2-k_3k_4&&\\
                       &&F_3(k)\;:=\;k_3+k_4&&F_4(k)\;:=\;k_1k_4+k_2k_3&\;.&\\
\end{aligned}
\eeq
These are the functions on the sphere $\n{S}^4$ associated with \eqref{eq:AppA5-0} via the stereographic coordinates \eqref{eq:AppA2}.
With the ansatz \eqref{eq:AppA5} one has $[\varphi_{F,+}]=2$.
We will justify this last claim in Remark \ref{rk:deg2}
just after the details of the proof.

\proof[Proof of Lemma \ref{lem:AppA}]
Let us start with 
$$
[\tilde{\n{S}}^4,\check{\n{S}}^4]_{\rm eq}\;\stackrel{\imath_4}{\longrightarrow}\; [{\n{S}}^4,{\n{S}}^4]\;\simeq\;\pi_4({\n{S}}^4)\;\stackrel{\rm deg}{\simeq}\;\Z
$$
where $\imath_4$ is the map forgetting the involution, the second isomorphism is a consequence of $\pi_1({\n{S}}^4)=\pi_1({\n{S}}^3)=0$
and the third isomorphism is given by the topological degree (\cf Remark \ref{rk:deg_eq}).
From Remark \ref{ex:homot_spher} we know that $\tilde{\n{S}}^4\simeq \s{S}(\hat{\n{S}}^{3})$ where $\s{S}$ is the (unreduced) 
suspension along the fixed direction $k_0$ and $\hat{\n{S}}^{3}$ is the sphere endowed with the antipodal action. In the same way we can write
also $\check{\n{S}}^4\simeq \s{S}(\check{\n{S}}^3)$ where $\check{\n{S}}^3$ is the 3-sphere endowed with the (restricted) involution $\varpi:(k_1,k_2,k_3,k_4)\mapsto (-k_1,k_2,-k_3,k_4)$. Associated with the reduced spheres we have  
$$
[\hat{\n{S}}^3,\check{\n{S}}^3]_{\rm eq}\;\stackrel{\imath_3}{\longrightarrow}\; [{\n{S}}^3,{\n{S}}^3]\;\simeq\;\pi_3({\n{S}}^3)\;\stackrel{\rm deg}{\simeq}\;\Z
$$
where $\imath_3$ is again the map forgetting the involution. Let us consider the following diagram
\begin{equation}\label{eq:diagA1}
\begin{diagram}
[\hat{\n{S}}^3,\check{\n{S}}^3]_{\rm eq} &&\rTo^{\s{S}_\ast^{\rm eq}}{} &&[\tilde{\n{S}}^4,\check{\n{S}}^4]_{\rm eq}\\
 \dTo^{{\imath_3}}&&&&\dTo_{{\imath_4}}\\
[{\n{S}}^3,{\n{S}}^3]&&\rTo_{\s{S}_\ast}{} &&[{\n{S}}^4,{\n{S}}^4]\;.
\end{diagram}
\end{equation}
The map $\s{S}_\ast$ is defined as follows: each $\varphi:X\to Y$ has a trivial extension $\varphi\times{\rm Id}:X\times[-1,1]\to Y\times[-1,1]$.
Since $\s{S}(X)$ is constructed from the cylinder $X\times[-1,1]$ by collapsing $X\times\{-1\}$ and $X\times\{+1\}$ to two distinct points,
the extension $\varphi\times{\rm Id}$ induces a map $\s{S}_\ast\varphi:\s{S}(X)\to \s{S}(Y)$. This construction works equivariantly (if the involutions are extended trivially along the suspension) and naturally, so that the diagram \eqref{eq:diagA1} is commutative.
The map $\s{S}_\ast$ coincides with the suspension isomorphism from $\pi_3({\n{S}}^3)\simeq \Z$ to $\pi_4({\n{S}}^4)\simeq\Z$.
Since the involutive space $\check{\n{S}}^3$ can be identified with ${\rm S}\n{U}(2)$ endowed with the complex conjugation, we know from
Proposition
\ref{prop:S_4} that the map ${\imath_3}$ is a bijection onto the subgroup $2\Z$ consisting of even integers in $\pi_3({\n{S}}^3)\simeq \Z$.
If we prove the injectivity of ${\imath_4}$  we obtain that $\s{S}_\ast^{\rm eq}$ is a bijection which preserves the degree of the maps
and this completes the proof.

Let us recall the $\Z_2$-skeleton decomposition of $\tilde{\n{S}}^4$
described in Example \ref{ex:sphere_TR_CW}. In particular we have
$$
\tilde{\n{S}}^4\;\simeq X^4\;=\;\underbrace{({\bf e}^0_+\cup {\bf e}^0_-)\cup\;\tilde{\bf e}^1\;\cup\;\tilde{\bf e}^2\;\cup\;\tilde{\bf e}^3}_{X^3}\;\cup\;\tilde{\bf e}^4
$$
where ${\bf e}^0_\pm$ are the two fixed $0$-cells, $\tilde{\bf e}^j$
are the free cells of dimension $1\leqslant j\leqslant 4$
and $X^3$  denotes the $3$-skeleton. Moreover, for the fixed point set of 
$\check{\n{S}}^4$ one has $(\check{\n{S}}^4)^\varpi\simeq\n{S}^2$. An applications of Lemma \ref{lemma:Z2-reduct} leads to the following fact:
each $\Z_2$-equivariant map 
$\varphi:\tilde{\n{S}}^4\to\check{\n{S}}^4$
is $\Z_2$-equivariantly homotopic to a $\Z_2$-equivariant map $\varphi':\tilde{\n{S}}^4\to\check{\n{S}}^4$ such that $\varphi'(X^3)=\ast$ 
for a given fixed point $\ast\in(\check{\n{S}}^4)^\varpi$. Identifying $X^3$ with the intersection of two hemispheres
$$
\ell^4_\pm\;:=\;\big\{(k_0,k_1,\ldots,k_4)\in {\n{S}}^4\ |\ \pm k_4\geqslant 0\big\}
$$
we get two classes $[\varphi_\pm]\in\pi_4({\n{S}}^4)$ from the restriction of $\varphi\sim\varphi'$ to $\ell^4_\pm$, respectively.
Since the involutions on $\tilde{\n{S}}^4$ and $\check{\n{S}}^4$ preserve the orientations $[\varphi_+]=[\varphi_-]=[\psi]$ in $\pi_4({\n{S}}^4)$ and so $\imath_4[\varphi]=2[\psi]$.
Since $\pi_4({\n{S}}^4)\simeq\Z$ is torsion free we proved that 
$\imath_4$ is injective (in addition to the fact that the image of $\imath_4$ is contained in $2\Z$
).
\qed

\medskip

\begin{remark}[The degree of $\varphi_{F,+}$]\label{rk:deg2}{\upshape
The identifications
$\tilde{\n{S}}^4\simeq \s{S}(\hat{\n{S}}^{3})$ and  $\check{\n{S}}^4\simeq \s{S}(\check{\n{S}}^3)$ 
where the (unreduced) suspensions are defined along the invariant direction $k_0$ together with the fact that $\s{S}_\ast^{\rm eq}$ is a bijection which preserves the degree allow us to compute the degree of $\varphi_{F,+}$ just looking at its image $\phi_{F,+}\in[\hat{\n{S}}^3,\check{\n{S}}^3]_{\rm eq}$. This map is explicitly given by 
$$
\phi_{F,+}\;:\;(k_1,k_2,k_3,k_4)\;\longmapsto \;\frac{1}{\sqrt{\sum_{j=1}^4F_j(k)^2}}\;\big(F_1(k),\ldots,F_4(k)\big)
$$
and the functions $F_j$ are defined  by \eqref{eq:AppA5} (fixing $k_0=0$). The degree of $\phi_{F,+}$ may be computed by using the formula \eqref{eq:homo_deg01} under the standard identification $\check{\n{S}}^3\simeq{\rm S}\n{U}(2)$ (\eg given by the  Pauli matrices), but another method is applied here: First of all, notice that the degree ${\rm deg} \phi_{F, +}$ agrees with the \virg{cohomological degree} of the pullback $\phi_{F, +}^* : H^3(\n{S}^3) \to H^3(\n{S}^3)$, that is, $\phi_{F, +}^*(1) = \mathrm{deg} \phi_{F, +}$ under the isomorphism of cohomology with real coefficients  $H^3(\n{S}^3)\equiv H^3(\n{S}^3,\R) \simeq \R$ induced by the orientation on $\n{S}^3$. Since the inclusion $\n{S}^3 \to \R^4\backslash \{ 0 \}$ and the normalization $\R^4 \backslash \{ 0 \} \to \R^4 \backslash \{ 0 \}$, ($k \mapsto k/\lvert k \rvert$) are homotopy equivalences, the degree of $\phi_{F, +}$ is the same as that of $F : \R^4 \backslash \{ 0 \} \to \R^4 \backslash \{ 0 \}$. If $\R^4 \cup \{ \infty \}$ denotes the one point compactification of $\R^4$, then there are natural isomorphisms
$$
H^3(\R^4 \backslash \{ 0 \}) \;\simeq\;
H^4(\R^4, \R^4 \backslash \{ 0 \}) \;\simeq\;
H^4(\R^4 \cup \{ \infty \}, (\R^4 \cup \{ \infty \}) \backslash \{ 0 \}) \;\simeq\;
H^4(\R^4 \cup \{ \infty \}, \{ \infty \})\;,
$$
where the first  comes from the exact sequence for the pair $(\R^4, \R^4 \backslash \{ 0 \})$, the second  from the excision axiom, and the third from the homotopy equivalence $(\R^4 \cup \{ \infty \}) \backslash \{ 0 \} \sim \{ \infty \}$. 
From its very definition \eqref{eq:AppA5} one checks that the map $F : \R^4 \to \R^4$ extends continuously to $\tilde{F} : \R^4 \cup \{ \infty \} \to \R^4 \cup \{ \infty \}$ satisfying $\tilde{F}(\infty) = \infty$. Hence $\mathrm{deg} \phi_{F, +} = \mathrm{deg} \tilde{F}$. Further, $F : \R^4 \to \R^4$ is \textit{proper} in the sense that the inverse image of every compact set is compact: Actually, if we introduce the complex coordinates $z = k_1 + i k_3$ and $w = k_2 + i k_4$ identifying $\R^4 \cong \C^2$, then $F$ agrees with $f : \C^2 \to \C^2$ defined by $f(z, w) = (z + w, zw)$. The inverse image of $(\alpha, \beta) \in \C^2$ under $f$ consists of the 
pairs $(z_\pm,\alpha-z_\pm)$ where $z_\pm=z_\pm(\alpha,\beta)$ are
solutions of the quadratic equation $z^2 - \alpha z + \beta = 0$. This allows us to see that the inverse image $f^{-1}(B)$ of a bounded set $B \subset \C^2$ is bounded. Since $f$ is continuous, the inverse image of a closed set is closed, so that $f$ is proper.  The proper map $f : \C^2 \to \C^2$ has its own degree as defined in \cite[Chapter I, Section 4]{bott-tu-82} by means of the de Rham cohomology with \textit{compact support} $H^4_c(\R^4) = H^4_c(\C^2)$. Under the identification $H^4(\R^4 \cup \{ \infty \}, \{ \infty \}) \cong H^4_c(\R^4)$, we find $\mathrm{deg}\tilde{F} = \mathrm{deg} f$. The degree of the proper map $f$ can be computed by counting the inverse image of a \textit{regular value} of $f$ with sign. The Jacobian of $f$ at $(z, w) \in \C^2$ is $\lvert z - w \rvert^2$, so that $(\alpha, \beta)$ is a regular value of $f$ if and only if $\alpha^2 - 4\beta \neq 0$. Its inverse image consists of two distinct points, at which $f$ always preserves the orientation on $\C^2$. This concludes $\mathrm{deg}\phi_{F, +} = \mathrm{deg} f = 2$.
 }\hfill $\blacktriangleleft$
\end{remark}

\begin{remark}\label{rk:other_symm}{\upshape
In the realization of models $\hat{H}$ like \eqref{eq:hamilt_Sigma} it is possible to impose the {\bf AI} symmetry in different ways. For instance the choices $\hat{J}=\n{1}\otimes\Sigma_j$ with $j=0,2,4$ is compatible with $\hat{C}\hat{J}\hat{C}=\hat{J}^*$ and the {\bf AI} symmetry 
of $\hat{H}$ can be imposed  with an appropriate choice of the parity of the functions $F_j$. 
\begin{center}
 \begin{table}[h]
 \begin{tabular}{|c||c|c|c|c|c||c|}
\hline
 \rule[-2mm]{0mm}{7mm}
 $\hat{J}$ & $F_0$  & $F_1$ & $F_2$&$F_3$&$F_4$&\\
\hline
 \hline
 \rule[-2mm]{0mm}{7mm}
 $\n{1}_4$& $+$ & $-$ & $+$ & $-$ &  $+$  &\\
\cline{1-6}
 \rule[-2mm]{0mm}{7mm}
$\Sigma_0$ & $+$ & $+$ & $-$ &$+$& $-$ & {\bf AI}\\
\cline{1-6}
 \rule[-2mm]{0mm}{7mm}
$\Sigma_2$ & $-$ & $+$ & $+$ &$+$& $-$ &\\
\cline{1-6}
 \rule[-2mm]{0mm}{7mm}
$\Sigma_4$ & $-$ & $+$ & $-$ &$+$& + &\\
\hline
 \hline
  \rule[-2mm]{0mm}{7mm}
 $\Sigma_1$ & $-$ & $-$ & $-$ & $+$&   $-$    & {\bf AII}\\
\cline{1-6}
 \rule[-2mm]{0mm}{7mm}
$\Sigma_3$ & $-$ & $+$ &$-$&$-$&
 $-$  &\\
\hline
\end{tabular}\vspace{2mm}
 \caption{The signs in the table provide the parity of the functions $F_j(-\kappa)=\pm\; F_j(\kappa)$
 needed to impose the symmetry of the Hamiltonian $\hat{H}$ in \eqref{eq:hamilt_Sigma}. 
  }
 \end{table}
 \end{center}
 As showed in Table 6.1 the number of even and odd functions is independent by the particular choice of $\hat{J}$ and the number of inequivalent systems is always described by Lemma  
\ref{lem:AppA}. This is in agreement with Theorem \ref{theo:honotopy_class} which states that the classification of inequivalent \virg{Real} vector bundles depends strongly on the involutive structure on the base space and not too much on the type of involution on the total space.
On the other side the choices $\hat{J}=\n{1}\otimes\Sigma_j$ with $j=1,3$ lead to a completely different pattern for the parity of $F_j$'s and so to a  different classification. This fact is not surprising since with these  choices the unitary $\hat{J}$ verifies the symmetry $\hat{C}\hat{J}\hat{C}=-\hat{J}^*$ for type {\bf AII} topological insulators.
}\hfill $\blacktriangleleft$
\end{remark}


\appendix

\section{Spatial parity, equivariant and real vector bundles}
\label{sec:inv_symm}
With the notation introduced in Section \ref{sect:bloc_bund}, let us consider a Hamiltonian $\hat{H}$ with {\bf AI} symmetry \eqref{intro:1}.
If the system is also translationally invariant, the application of the Fourier (or Bloch-Floquet) analysis leads to consider \virg{Real} vector bundles over $(\n{S}^d,\tau)$ or $(\n{T}^d,\tau)$.
In this section we want to consider the effect of an extra symmetry on  type {\bf AI} systems. 

\medskip

Let us introduce the  \emph{spatial parity} operator 
$\hat{I}$ defined by $(\hat{I}\psi)(x)=\psi(-x)$ for vectors $\psi$ in
 $L^2(\R^d,\dd x)\otimes\C^L$ (continuous case)  or in  $\ell^2(\Z^d)\otimes\C^L$ (periodic case).
 Clearly
 $$
(\f{F}\hat{I}\psi)(\kappa)\;=\;(\f{F}\psi)(-\kappa)\;=\;I(\f{F}\psi)(\kappa)
$$
where $\f{F}$ denotes both the Fourier transform or the Bloch-Floquet transform. If the type {\bf AI} Hamiltonian $\hat{H}$ is also \emph{parity-invariant} $\hat{I}\hat{H}\hat{I}=\hat{H}$ (and $[\hat{\Theta};\hat{I}]=0$) the associated fibered projection $\kappa \mapsto P_S(\kappa)$ given by \eqref{eq:proj} 
carries an extra symmetry
$
I P_S(-\kappa)= P_S(\kappa) I
$
where  $I$ is a \emph{linear} map which connects the fiber on $\kappa$ with the fiber on $-\kappa$. Introducing the stereographic coordinates
$k=k(\kappa)$ and the involution $\tau$ one can rewrite the symmetry \eqref{intro:2} as follows
 \beql{intro:2sym_bis}
\Theta\;P_S(k)\; \Theta\; =\; P_S\big(\tau(k)\big)\;= I\;P_S(k)\;I\; \qquad\quad \forall\ \ k\in\n{S}^d,\n{T}^d\;.
\eeq
This means that the Bloch-bundle $\bb{E}_S$ associated with $P_S$ can be endowed with: 1) a \virg{Real} structure
given by $\Theta$; 2) a $\Z_2$-equivariant structure \cite{segal-68} induced by $I$; 3) 
an \emph{anti}-linear map $\Theta':=\Theta\circ I= I\circ \Theta$
which preserves the fibers. In particular the pair $(\bb{E}_S,\Theta')$ can be seen as a \virg{Real} vector bundle  with  a \emph{trivial} involution $\tau'={\rm Id}_X$, hence as a real vector bundle in view of
Proposition \ref{prop:Rv1}.

\medskip

Complex vector bundles $\bb{E}\to X$ over an involutive space $(X,\tau)$ endowed with such a compatible mixed 
\virg{Real} structure (given by an anti-linear homeomorphism $\Theta$) and $\Z_2$-equivariant structure (given by a linear homeomorphism $I$) are called $(\rr{R},\Z_2)$-bundles. Morphisms of $(\rr{R},\Z_2)$-bundles are vector bundle maps which are equivariant at the same time with respect to the \virg{Real} structure and the $\Z_2$-equivariant structure. We use the symbol ${\rm Vec}_{(\rr{R},\Z_2)}^m(X,\tau)$ for the set of equivalence classes of $(\rr{R},\Z_2)$-bundles.

\begin{proposition}
Let us denote with ${\text{\bf AI}_0}$ the class of
{\upshape{\bf AI}} topological quantum systems which are also {parity-invariant}. Systems of type ${ \text{\bf AI}_0}$ are classified by $(\rr{R},\Z_2)$-bundles.
\end{proposition}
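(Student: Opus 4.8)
The plan is to follow verbatim the Bloch-bundle construction of Section \ref{sect:bloc_bund}, enriching it with the parity operator so as to equip the resulting Bloch-bundle with a \virg{Real} and a $\Z_2$-equivariant structure simultaneously. As a starting point I would fix an isolated spectral region $S$ of the parity-invariant {\bf AI} Hamiltonian $\hat{H}$ and pass, via the Fourier (or Bloch-Floquet) transform, to the continuous family of fiber projections $k\mapsto P_S(k)$ defined by the Riesz-Dunford integral \eqref{eq:proj}. By the Serre-Swan argument already used in Section \ref{sect:bloc_bund} this family defines a complex vector bundle $\bb{E}_S\to X$ (with $X=\n{S}^d$ or $X=\n{T}^d$), and the task reduces to reading the two extra symmetries of $\hat{H}$ as two compatible bundle automorphisms.

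Next I would extract the symmetries of $P_S$. The {\bf AI} condition \eqref{intro:1} yields, exactly as in Section \ref{sect:bloc_bund}, the anti-linear relation $\Theta\,P_S(k)\,\Theta=P_S(\tau(k))$, while the parity invariance $\hat{I}\hat{H}\hat{I}=\hat{H}$ transforms under $\f{F}$ into the \emph{linear} relation $I\,P_S(k)\,I=P_S(\tau(k))$; together they are precisely \eqref{intro:2sym_bis}. From $\hat{I}^2=\hat{\n{1}}$ one gets $I^2={\rm Id}$ fiberwise, from the anti-unitarity of $\hat{\Theta}$ one recovers $\Theta^2={\rm Id}$, and the hypothesis $[\hat{\Theta};\hat{I}]=0$ descends to $I\,\Theta=\Theta\,I$. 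Because $\Theta$ and $I$ both intertwine $P_S(k)$ with $P_S(\tau(k))$, they map the fiber ${\rm Ran}\,P_S(k)$ onto the fiber ${\rm Ran}\,P_S(\tau(k))$; continuity of $k\mapsto P_S(k)$ together with the fact that $\Theta$ and $I$ are fixed (hence continuous) fiber operators makes the induced bundle maps continuous. Thus $\Theta$ is an anti-linear involutive automorphism of $\bb{E}_S$ covering $\tau$ (a \virg{Real} structure in the sense of Section \ref{sect:Rs-vb_main}) and $I$ is a linear involutive automorphism of $\bb{E}_S$ covering $\tau$ (a $\Z_2$-equivariant structure), with $I\Theta=\Theta I$; hence $(\bb{E}_S,\Theta,I)$ is a $(\rr{R},\Z_2)$-bundle.

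For the converse realization I would start from an abstract $(\rr{R},\Z_2)$-bundle $(\bb{E},\Theta,I)$ over $(X,\tau)$, embed it as a subbundle of a trivial bundle $X\times\C^N$ compatibly with both structures (using an equivariant Hermitian metric as in Remark \ref{rk:eq_metric}, averaged also over the $\Z_2$-equivariant action of $I$), and take the associated family of orthogonal projections $k\mapsto P(k)$. This projection-valued map satisfies \eqref{intro:2sym_bis} by construction, so the gapped Hamiltonian $\hat{H}=\f{F}^{-1}\big(\bigoplus_k(\n{1}-2P(k))\big)\f{F}$ is a parity-invariant {\bf AI} system whose Bloch-bundle recovers $(\bb{E},\Theta,I)$. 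Finally I would match morphisms on the two sides: an isomorphism of ${\text{\bf AI}_0}$ systems is required to intertwine the Fermi projections, the complex conjugation and the parity, which is exactly the datum of a vector bundle isomorphism of $\bb{E}_S$ commuting with both $\Theta$ and $I$, that is a $(\rr{R},\Z_2)$-bundle isomorphism; the correspondence therefore descends to equivalence classes and gives the asserted classification by ${\rm Vec}_{(\rr{R},\Z_2)}^m(X,\tau)$. The main obstacle I expect is not homotopy-theoretic, since that content is carried entirely by Section \ref{sect:bloc_bund}, but the bookkeeping of the two involutions at the bundle level: checking that the linear $I$ and the anti-linear $\Theta$ descend to genuinely \emph{commuting} involutive automorphisms of $\bb{E}_S$ (not merely up to phases), and that in the converse direction the metric can be chosen invariant under $\Theta$ and $I$ at once, so that the resulting projections carry \emph{both} symmetries exactly.
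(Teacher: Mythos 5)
Your proposal is correct and follows essentially the same route as the paper: the paper's own justification of this proposition is exactly the discussion preceding it, in which the parity operator is pushed through the Fourier/Bloch-Floquet transform to yield the double intertwining relation \eqref{intro:2sym_bis}, from which the Bloch-bundle $\bb{E}_S$ inherits the commuting anti-linear (\virg{Real}) and linear ($\Z_2$-equivariant) involutions, \ie the structure of an $(\rr{R},\Z_2)$-bundle. If anything you are more thorough than the paper, which leaves implicit both the converse realization (that every $(\rr{R},\Z_2)$-bundle arises as the Bloch-bundle of some ${\bf AI}_0$ system, via an equivariant embedding into a trivial bundle and the flattened Hamiltonian $\n{1}-2P$) and the matching of morphisms on the two sides.
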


The classification of $(\rr{R},\Z_2)$-bundles is more complicated than the classification of complex or \virg{Real} vector bundles. Moreover, albeit similar situations (and even more general) have been already considered from a $K$-theoretic point of view in  \cite{freed-moore-13,shiozaki-sato-14},
a structural analysis of the underlying vector bundle theory seems to be not yet complete in the literature. From its very definition the set ${\rm Vec}_{(\rr{R},\Z_2)}^m(X,\tau)$  carries three natural morphisms
\begin{equation}\label{eq:diagRZ_2}
\begin{diagram}
{\rm Vec}_{\rr{R}}^m(X,\tau)&&\lTo^{f_{\rr{R}}}&&{\rm Vec}_{(\rr{R},\Z_2)}^m(X,\tau)&&\rTo^{f_{\Z_2}}&&{\rm Vec}_{\Z_2}^m(X,\tau)\\
&&&&\dTo^{f_\R}&&&&\\
&&&&{\rm Vec}_{\R}^m(X)&&&&
\end{diagram}
\end{equation}
where the maps $f_{\rr{R}}$ and $f_{\Z_2}$ forget the $\Z_2$-equivariant structure and the $\rr{R}$-structure, respectively. The third map $f_\R$ takes care only of the mixed structure $\Theta'$ and is constructed in Proposition \ref{prop:Rv1}. 

\medskip

Let us consider in details the simple (but interesting) one-dimensional case $(\n{S}^1,\tau)$ leaving to future research
the   analysis of  higher dimensional cases.
 A straightforward generalization of Proposition \ref{prop:stab_ran_R} provides the isomorphism 
${\rm Vec}_{(\rr{R},\Z_2)}^m(\n{S}^1,\tau)\simeq {\rm Vec}_{(\rr{R},\Z_2)}^1(\n{S}^1,\tau)$
and this allows us to specialize diagram \eqref{eq:diagRZ_2} as follows:
\begin{equation}\label{eq:diagRZ_2_bis}
\begin{diagram}
0&&\lTo^{f_{\rr{R}}}&&{\rm Vec}_{(\rr{R},\Z_2)}^m(X,\tau)&&\rTo^{f_{\Z_2}}&&\Z_2\oplus\Z_2\\
&&&&\dTo^{f_\R}&&&&\\
&&&&\Z_2&&&&
\end{diagram}
\end{equation}
where we used the isomorphisms ${\rm Vec}_{\rr{R}}^m(\n{S}^1,\tau)=0$ (Proposition \ref{prop_(i)}), 
${\rm Vec}_{\R}^1(\n{S}^1)\simeq\Z_2$ (Remark \ref{rk:Jmap}) and
${\rm Vec}_{\Z_2}^1(\n{S}^1,\tau)\simeq H_{\Z_2}^2(\tilde{\n{S}}^1,\Z)\simeq\Z_2\oplus\Z_2$ (see \eg \cite[Theorem C.47]{ginzburg-guillemin-karshon-2002} and references therein). Let us consider the two $\Z_2$-equivariant line bundles $\bb{C}_i=\tilde{\n{S}}^1\times \C$, $i=0,1$, with equivariant structure given by $I:(k,z)\mapsto (\tau(k),(-1)^iz)$ and
the two $\Z_2$-equivariant line bundles $\bb{L}_\pm=\tilde{\n{S}}^1\times \C$,  with equivariant structure given by $I:(k,z)\mapsto (\tau(k),\pm(k_0+\ii k_1)z)$. These four $\Z_2$-equivariant line bundles are distinguished by the representations of the  $\Z_2$-actions on the fibers over the fixed points $k_\pm:=(\pm 1,0)$, which turn out to be  (complete) invariants  of $\Z_2$-equivariant line bundles. More precisely one has 
 \begin{center}
 \begin{tabular}{|c||c|c|c|c|}
\hline
&$\bb{C}_0$&$\bb{C}_1$&$\bb{L}_+$&$\bb{L}_-$\\
\hline
$\pi^{-1}(k_+)$&$1$&$\sigma$&$1$&$\sigma$\\
\hline
$\pi^{-1}(k_-)$&$1$&$\sigma$&$\sigma$&$1$\\
\hline
 \end{tabular}
 \end{center}
where $1:\Z_2\to 1$ denotes the trivial representation of $\Z_2$ on $\C$ while $\sigma:\Z_2\to \{\pm 1\}$
is the non-trivial \emph{sign} representation, hence
\beql{eq:Z2-equivVBS1}
{\rm Vec}_{\Z_2}^1(\n{S}^1,\tau)\;=\;\Big\{[\bb{C}_0],[\bb{C}_1],[\bb{L}_+],[\bb{L}_-]\Big\}\;.
\eeq
We notice that $\bb{C}_0$ is the trivial element and $(\bb{L}_-)^*\simeq \bb{L}_+$ are dual to each other. Moreover the  relations $\bb{L}_+\otimes\bb{L}_-\simeq\bb{C}_1$ and $\bb{L}_+\oplus\bb{L}_-\simeq\bb{C}_1\oplus \bb{C}_0$
show that  it is enough to use only $\bb{C}_0,\bb{C}_1$ and $\bb{L}_+$ as
additive generators of the equivariant $K$-theory $K_{\Z_2}^0(\tilde{\n{S}}^1)\simeq\Z^3$
(see \cite[Section 4.2]{gomi-13}). These four line bundles can be also endowed with the (trivial) \virg{Real} structure $\Theta_0:(k,z)\mapsto(\tau(k),\bar{z})$
providing four $(\rr{R},\Z_2)$-bundles of rank 1 over $\tilde{\n{S}}^1$. It is elementary to verify that $f_{\rr{R}}(\bb{C}_i)\simeq f_{\rr{R}}(\bb{L}_\pm)$ are the trivial (and only) element of ${\rm Vec}_{\rr{R}}^1(\n{S}^1,\tau)$. 
\begin{proposition}[Homotopic classification of ${\text{\bf AI}_0}$ topological insulators, $d=1$]
The map $f_{\Z_2}$ in diagram \ref{eq:diagRZ_2_bis} is a bijection, hence ${\rm Vec}_{(\rr{R},\Z_2)}^m(\n{S}^1,\tau)\simeq {\rm Vec}_{(\rr{R},\Z_2)}^1(\n{S}^1,\tau)\simeq\Z_2\oplus\Z_2$ (as a group) with distinguished elements $\bb{C}_0,\bb{C}_1,\bb{L}_+,\bb{L}_-$.
Moreover, the map $f_\R$ is surjective 
with $f_{\R}(\bb{C}_{1,2})\simeq \R\times\C$  the trivial element of ${\rm Vec}_{\R}^1(\n{S}^1)$ and $f_{\R}(\bb{L}_\pm)\simeq \bb{M}$  the M\"{o}bius bundle.
\end{proposition}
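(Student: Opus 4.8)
The plan is to reduce everything to line bundles and then play the three forgetful maps of \eqref{eq:diagRZ_2_bis} against one another. The rank reduction ${\rm Vec}_{(\rr{R},\Z_2)}^m(\n{S}^1,\tau)\simeq{\rm Vec}_{(\rr{R},\Z_2)}^1(\n{S}^1,\tau)$ I take from the $(\rr{R},\Z_2)$-analogue of Proposition \ref{prop:stab_ran_R} already invoked above (the global nowhere-zero $\rr{R}$-section can be produced while respecting the extra linear involution $I$, since the inductive argument uses only the connectivity of the fibre and the equivariant cell structure). After this reduction it suffices to establish two facts: that $f_{\Z_2}$ is onto, and that ${\rm Vec}_{(\rr{R},\Z_2)}^1(\n{S}^1,\tau)$ has exactly four elements. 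A surjective homomorphism between abelian groups of the same finite order is an isomorphism, which then yields the bijectivity of $f_{\Z_2}$ and the group statement ${\rm Vec}_{(\rr{R},\Z_2)}^1\simeq\Z_2\oplus\Z_2$.

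Surjectivity of $f_{\Z_2}$ is the easy half. I would exhibit the four $\Z_2$-equivariant line bundles $\bb{C}_0,\bb{C}_1,\bb{L}_+,\bb{L}_-$ of \eqref{eq:Z2-equivVBS1} equipped with the \emph{trivial} \virg{Real} structure $\Theta_0:(k,z)\mapsto(\tau(k),\bar z)$ and check that $(\bb{C}_i,\Theta_0,I)$ and $(\bb{L}_\pm,\Theta_0,I)$ are genuine $(\rr{R},\Z_2)$-bundles. This is a one-line verification in each case: $\Theta_0^2={\rm Id}$ is automatic, and $\Theta_0$ commutes with the prescribed $I$ because the defining characters $(-1)^i$ and $\pm(k_0+\ii k_1)$ are carried to their complex conjugates under $\tau$ (recall that $\overline{k_0+\ii k_1}=k_0-\ii k_1$ is the value of $k_0+\ii k_1$ at $\tau(k)$ on $\tilde{\n{S}}^1$). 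Since these four bundles already represent every class of ${\rm Vec}_{\Z_2}^1(\n{S}^1,\tau)\simeq\Z_2\oplus\Z_2$, the map $f_{\Z_2}$ is surjective.

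To count ${\rm Vec}_{(\rr{R},\Z_2)}^1(\n{S}^1,\tau)$ I would use the left arrow $f_{\rr{R}}$. By Proposition \ref{prop_(i)} one has ${\rm Vec}_{\rr{R}}^1(\n{S}^1,\tau)=0$, so every $(\rr{R},\Z_2)$-bundle is $\rr{R}$-trivial; fixing a global $\rr{R}$-frame normalises $\Theta$ to $\Theta_0$ on $\n{S}^1\times\C$. In this gauge the remaining datum $I$ is a \emph{global} automorphism $(k,z)\mapsto(\tau(k),g(k)z)$, and the relations $I^2={\rm Id}$ and $\Theta_0I=I\Theta_0$ force $g$ to be an equivariant map $g:\tilde{\n{S}}^1\to\n{U}(1)$ (with $\n{U}(1)$ carrying complex conjugation), i.e. $g(\tau(k))=\overline{g(k)}$. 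The residual freedom is the group of $\rr{R}$-automorphisms of $(\n{S}^1\times\C,\Theta_0)$, namely equivariant maps $u:\tilde{\n{S}}^1\to\n{U}(1)$, acting on clutching functions by $g\mapsto g\,u^{-2}$. Hence, up to equivariant homotopy and this gauge action,
\[
{\rm Vec}_{(\rr{R},\Z_2)}^1(\n{S}^1,\tau)\;\simeq\;[\tilde{\n{S}}^1,\n{U}(1)]_{\rm eq}\big/\,2\,[\tilde{\n{S}}^1,\n{U}(1)]_{\rm eq}\;.
\]
Using the identification $[\tilde{\n{S}}^1,\n{U}(1)]_{\rm eq}\simeq H^1_{\Z_2}(\tilde{\n{S}}^1,\Z(1))\simeq\Z\oplus\Z_2$ of \cite{gomi-13} together with \eqref{eq:equi_cohom1dB}, the squaring subgroup is $2\Z\oplus0$ and the quotient is $\Z_2\oplus\Z_2$, with exactly four elements. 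This finishes the proof that $f_{\Z_2}$ is an isomorphism; as a consistency check one verifies that under this description $\bb{C}_0,\bb{C}_1,\bb{L}_+,\bb{L}_-$ correspond to $(0,0),(0,1),(1,0),(1,1)$.

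Finally, for $f_\R$ I would compute the mixed structure $\Theta':=\Theta_0\circ I$, an anti-linear involution covering ${\rm Id}_{\n{S}^1}$, on each of the four bundles and read off the class of its fixed real line bundle via Proposition \ref{prop:Rv1}. For $\bb{C}_i$ one obtains $\Theta'(k,z)=(k,(-1)^i\bar z)$, whose fixed subbundle is $\n{S}^1\times\R$ (respectively $\n{S}^1\times\ii\R$), the trivial real line bundle. For $\bb{L}_\pm$ one obtains $\Theta'(k,z)=(k,\pm(k_0-\ii k_1)\bar z)$; writing $k_0+\ii k_1=\expo{\ii\theta}$, the fixed real line is spanned by $\expo{-\ii\theta/2}$, which reverses sign as $\theta$ runs once around the circle, so the fixed subbundle is the M\"obius bundle $\bb{M}$. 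Thus $f_\R$ attains both values of ${\rm Vec}_\R^1(\n{S}^1)\simeq\Z_2$ and is surjective, with the stated values. The only genuinely delicate point in the whole argument is the count in the third paragraph: one must be sure that the gauge group acts precisely by \emph{squares}, so that the $\Z$-summand collapses to $\Z_2$ rather than disappearing, which is exactly what makes the non-stable $(\rr{R},\Z_2)$-classification twice as large as the naive $\rr{R}$-classification.
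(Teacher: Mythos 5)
Your proposal is correct, and its backbone coincides with the paper's own proof: both use Proposition \ref{prop_(i)} to normalize the \virg{Real} structure to $\Theta_0$ on $\n{S}^1\times\C$, so that the whole $(\rr{R},\Z_2)$-datum collapses to a single map $g:\n{S}^1\to\n{U}(1)$ with $g\circ\tau=\bar{g}$, on which isomorphisms act by $g\mapsto(\psi\circ\tau)\,g\,\psi^{-1}=g\,\psi^{-2}$ for equivariant $\psi$. The difference is the endgame. The paper finishes with an explicit normal form: writing $g=C\,(k_0+\ii k_1)^{j}\,\expo{\ii 2\pi f}$ with $j=\deg g$, it deduces $C=\pm1$ and $f\circ\tau=-f$ mod $\Z$, gauges away $f$ by $\psi=\expo{-\ii\pi f}$, and reduces $j$ mod $2$ by $\psi_0=k_0+\ii k_1$; this produces the four representatives $\varphi_{i,j}$, $i,j\in\{0,1\}$, and identifies them with $\bb{C}_0,\bb{C}_1,\bb{L}_\pm$ in one stroke. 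You instead count: you identify the classifying set with $[\tilde{\n{S}}^1,\n{U}(1)]_{\rm eq}\,/\,2[\tilde{\n{S}}^1,\n{U}(1)]_{\rm eq}\simeq(\Z\oplus\Z_2)/(2\Z\oplus 0)$ and then promote surjectivity of $f_{\Z_2}$ (clear from the four explicit bundles and \eqref{eq:Z2-equivVBS1}) to bijectivity by cardinality. This is a legitimate and somewhat more structural route: it makes transparent, at the level of the group $\Z\oplus\Z_2$, why the degree survives only mod $2$ and why the answer is $\Z_2\oplus\Z_2$.

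The one place where your write-up replaces an argument by the word \emph{hence} is the passage from the honest classifying set --- equivariant maps modulo the strict gauge relation $g\sim g\psi^{-2}$ --- to equivariant homotopy classes modulo twice the homotopy group. This requires that every equivariantly null-homotopic $h$ with $h\circ\tau=\bar{h}$ be the square of an equivariant map; if it were not, the set of isomorphism classes would be strictly larger than $(\Z\oplus\Z_2)/(2\Z\oplus 0)$, and your cardinality argument (hence the injectivity of $f_{\Z_2}$) would break down. The statement is true, and its proof is exactly the lifting computation that the paper performs inside its normal-form step: write $h=\expo{\ii f_0}$ with $f_0:\n{S}^1\to\R$ continuous, note that $f_0\circ\tau+f_0=2\pi n$ is constant, evaluate at a fixed point (where $h=+1$, since an equivariant null-homotopy keeps the value in $\{\pm1\}$ there and ends at $+1$) to conclude $n$ is even, pass to the odd lift $f=f_0-\pi n$, and set $\psi=\expo{\ii f/2}$, which is equivariant and satisfies $\psi^2=h$. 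With this lemma inserted, your counting step --- and with it the whole proof --- is complete. Your computation of $f_\R$ via the fixed subbundle spanned by $\expo{-\ii\theta/2}$ is fine and equivalent to the paper's explicit isomorphism with the M\"obius bundle.
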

\proof[{proof} (sketch of)]
Let $(\bb{E},I,\Theta)$ be any line bundle over $\tilde{\n{S}}^1$ of class ${\text{\bf AI}_0}$. Since  
Proposition \ref{prop_(i)}, without loss of
generality,  we can identify the underlying \virg{Real} line bundle $(\bb{E},\Theta)$ with the product bundle $\tilde{\n{S}}^1\times\C$ endowed with the trivial \virg{Real} structure $(k,z)\mapsto (\tau(k),\bar{z})$. This implies that the $\Z_2$-structure must have the form $I:(k,z)\mapsto (\tau(k),\varphi(k){z})$ with $\varphi:\n{S}^1\to \n{U}(1)$ (we are assuming that an equivariant
Hermitian metric
has been fixed) such that $\varphi\circ{\tau}=\bar{\varphi}$. We notice that the latter condition ensures both the involutive property $I^2={\rm Id}_{\bb{E}}$ and the compatibility condition $I\circ\Theta=\Theta\circ I$.
Let $(\bb{E},I,\Theta)$ and $(\bb{E}',I',\Theta)$ be  two line bundles of class ${\text{\bf AI}_0}$ over 
$\tilde{\n{S}}^1$ identified by the maps $\varphi,\varphi':\n{S}^1\to \n{U}(1)$ respectively. An isomorphism between
$(\bb{E},I,\Theta)$ and $(\bb{E}',I',\Theta)$ is specified by a map $\bb{E}\ni (k,z)\mapsto (k,\psi(k)z)\in\bb{E}'$ such that
$\varphi' =(\psi\circ\tau)\cdot\varphi \cdot\psi^{-1}$ ($\Z_2$-equivariance) and $\psi\circ\tau=\bar{\psi}$
($\rr{R}$-equivariance).

For each $i=0,1$ and $j\in\Z$ let $\varphi_{i,j}:\n{S}^1\to\n{U}(1)$ be the map defined by $\varphi_{i,j}(k_0,k_1):=(-1)^i (k_0+\ii k_1)^j$. Clearly, $\bb{C}_0$ and $\bb{C}_1$  correspond to $\varphi_{0,0}$ and $\varphi_{1,0}$, respectively. Similarly, $\bb{L}_+$ and $\bb{L}_-$ are associated with $\varphi_{0,1}$ and $\varphi_{1,1}$, respectively. 
We observe also that the map $\psi_0:\n{S}^1\to \n{U}(1)$ defined by $\psi_0(k_0,k_1):=k_0+\ii k_1$ provides the isomorphism $(\psi_0\circ\tau)\cdot \varphi_{i,j}\cdot\psi_0^{-1}= \varphi_{i,j-2}$ showing that only the values $j=0,1$ are relevant (up to isomorphism). We can show that for each map $\varphi:\n{S}^1\to \n{U}(1)$ such that $\varphi\circ{\tau}=\bar{\varphi}$ there exists an isomorphism $\psi:\n{S}^1\to \n{U}(1)$
such that $\varphi =(\psi\circ\tau)\cdot\varphi_{i,j} \cdot\psi^{-1}$ for some $i=\pm 1$ and $j\in\Z$. This can be done as follows: let $j:={\rm deg}\varphi$ be the map degree and rewrite $\varphi(k_0,k_1)=C(k_0+\ii k_1)^{j} \expo{\ii 2\pi f(k_0,k_1)}$
for some $C\in\n{U}(1)$ and $f:\n{S}^1\to \R$ such that $f(+1,0)=0$. The condition  $\varphi\circ{\tau}=\bar{\varphi}$
implies $C^2=\expo{-\ii 2\pi [f(\tau(k))+f(k)]}$ for all $k\in \n{S}^1$ and this is possible if and only if $C=\pm 1$ and
$f(\tau(k))=-f(k)$ mod. $\Z$. Then, after setting $\psi(k)=\expo{-\ii \frac{\pi}{2}[ f(k)-f(\tau(k))]}=\expo{-\ii \pi f(k)}$,
we obtain that
$\varphi(k)=\psi(\tau(k))(-1)^i(k_0+\ii k_1)^{j} \psi(k)^{-1}$. This concludes the proof of the bijectivity of $f_{\Z_2}$.

Finally let us recall that  the M\"{o}bius bundle $\bb{M}:=(\n{S}\times\R)/\sim$ is defined by the equivalence relation $(k,r)\sim(-k,-r)$ and the fixed point set $\bb{L}_\pm^{\Theta'}$ is given by $\{(k,z)\in \bb{L}_\pm\ |\ \bar{z}=\pm(k_0+\ii k_1)z\}$. An isomorphism $\bb{M}\simeq \bb{L}_\pm^{\Theta'}$ is established by the map $[(k_0,k_1),r]\mapsto \big((k_0^2+k_1^2,2k_0k_1),r(k_0-\ii k_1)\big)$.
\qed
\begin{remark}\label{rk:redKZ2-VBZ2}{\upshape
A closer look to the $\Z_2$-equivariant $K$-theory of the TR involutive space $\tilde{\n{S}}^1$ shows that 
$K_{\Z_2}^0(\tilde{\n{S}}^1)\simeq\Z^3$ (generated for instance by $\bb{C}_0,\bb{C}_1$ and $\bb{L}_+$) and 
$K_{\Z_2}^0(\{\ast\})\simeq\Z^2$ provided that $\ast$ is one of the two fixed point of $\tilde{\n{S}}^1$ (and generators $\bb{C}_0$ and $\bb{C}_1$) \cite[Section 4.2]{gomi-13}. This implies, for the reduced $K$-group that $\tilde{K}_{\Z_2}^0(\tilde{\n{S}}^1)\simeq\Z$ (generated for instance by the difference $\bb{C}_0-\bb{L}_+$ if $\ast=(+1,0)$).
In particular, a comparison with \eqref{eq:Z2-equivVBS1} this shows that the reduced $\tilde{K}_{\Z_2}^0(\tilde{\n{S}}^1)$ does not describe the set of equivalence classes ${\rm Vec}_{\Z_2}^m(\n{S}^1,\tau)$ of $\Z_2$-equivariant vector bundles.
}\hfill $\blacktriangleleft$
\end{remark}

\section{An overview to $KR$-theory}
\label{sect:KR-theory}

According to  \cite{atiyah-66} we denote with  $KR(X,\tau)$  the \emph{Grothendieck group} 
of $\rr{R}$-vector bundles over the involutive space $(X,\tau)$. Restricting to fixed point set $X^\tau$ (hereafter assumed non empty) one has a homomorphisms
$KR(X,\tau)\to KR(X^\tau,{\rm Id_X})\simeq KO(X^\tau)$, where $KO$ denotes the $K$-theory for real vector bundles. The \emph{reduced group} $\widetilde{KR}(X,\tau)$ is the kernel of the homomorphism $KR(X,\tau)\to KR(\{\ast\})$ where $\ast\in X$ is a $\tau$-invariant base point. 
When $X$ is compact one has the usual relation
\beql{eq:KR0}
KR(X,\tau)\;\simeq\;\widetilde{KR}(X,\tau)\;\oplus\; KR(\{\ast\})\;\simeq\;\widetilde{KR}(X,\tau)\;\oplus\; \Z
\eeq
where we used $KR(\{\ast\})\simeq KO(\{\ast\})\simeq\Z$.
The isomorphism
\beql{eq:KR0X}
\tilde{KR}(X,\tau)\;\simeq\;{\rm Vec}_\rr{R}(X,\tau)\;:=\;\bigcup_{m\in\N}{\rm Vec}^m_\rr{R}(X,\tau)
\eeq
 proved in \cite[Lemma 3.4]{nagata-nishida-toda-82} establishes 
 the fact that $\tilde{KR}(X,\tau)$ provides the description for $\rr{R}$-bundles in the stable regime (\ie when the rank of the fiber is assumed to be sufficiently large).

\medskip

The $KR$-theory can be endowed with a grading structure as follows: first of all one introduces the groups
$$
\begin{aligned}
KR^{j}(X,\tau)\;&:=\;KR(X\times\n{D}^{0,j} ; X\times \n{S}^{0,j} ,\tau\times\vartheta)\\
KR^{-j}(X,\tau)\;&:=\;KR(X\times\n{D}^{j,0} ; X\times \n{S}^{j,0} ,\tau)
\end{aligned}\;\qquad\qquad j=0,1,2,3,\ldots
$$
where $\n{D}^{p,q}$ and $\n{S}^{p,q}$ are the unit ball and unit sphere in the involutive space $\R^{p,q}:=\R^p\oplus{\rm i}\R^q$
introduced in Example \ref{ex:homot_spher}. The relative group $KR(X;Y,\tau)$ of an involutive space $(X,\tau)$ with respect to a $\tau$-invariant subset $Y\subset X$ is defined as $\widetilde{KR}(X/Y,\tau)$ and corresponds to the Grothendieck group of $\rr{R}$-bundles over $X$ which vanish on $Y$. The negative groups $KR^{-j}$ agree with the usual suspension groups since the spaces 
$\n{D}^{j,0}$ and $\n{S}^{j,0}$ are invariant. The positive groups $KR^{j}$ are \virg{twisted} suspension groups since the spaces 
$\n{D}^{0,j}$ and $\n{S}^{0,j}$ are endowed with the  $\Z_2$-action induced by the antipodal map $\vartheta$. With respect to this grading the $KR$ groups are 8-periodic,\ie
$$
KR^{j}(X,\tau)\;\simeq\; KR^{j+8}(X,\tau)\;,\qquad\qquad j\in\Z\;.
$$
Moreover, if $X$ has fixed points one can extend the isomorphism \eqref{eq:KR0} for negative groups:
\beql{eq:KR1}
KR^{-j}(X,\tau)\;\simeq\;\widetilde{KR}^{-j}(X,\tau)\;\oplus\; KR^{-j}(\{\ast\})\;,\qquad\qquad  j=0,1,2,3,\ldots
\eeq
where $KR^{-j}(\{\ast\})\simeq KO^{-j}(\{\ast\})$ for all $j$.

\begin{table}[htp]
 \label{tab:KR1}
 \begin{tabular}{|c||c|c|c|c|c|c|c|c|}
\hline
   & $j=0$ & $j=1$&$j=2$&$j=3$&$j=4$&$j=5$&$j=6$&$j=7$\\
\hline
 \hline
 \rule[-2mm]{0mm}{6mm}
$KR^{-j}(\{\ast\})$&   $\Z$ & $\Z_2$ & $\Z_2$ & $0$ &$\Z$&$0$&$0$&$0$\\
\hline
\end{tabular}\vspace{1mm}
 \caption{
 {\footnotesize 
 The table is calculated using  $KR^{-j}(\{\ast\})\simeq KO^{-j}(\{\ast\})\simeq \widetilde{KO}(\n{S}^j)$ \cite[Theorem 5.19, Chapter III]{karoubi-97}.  The Bott periodicity implies $KR^{-j}(\{\ast\})\simeq KR^{-j-8}(\{\ast\})$.
 }
 }
 \end{table}

In order to compute the $KR$ groups for TR-tori and TR-spheres one starts from the following isomorphism (\cf \cite[eq. 7]{doran-mendez-rosenberg-13})
\beql{eq:KR2}
{KR}^{j-9}(\tilde{\n{S}}^{1})\;\simeq\;KR^{j-9}(\{\ast\})\;\oplus\; KR^{j-8}(\{\ast\})\;.
\eeq
where we used the short notation $\tilde{\n{S}}^{1}\equiv ({\n{S}}^{1},\tau)$.
Using the 8-periodicity one gets
\beql{eq:KR3}
{KR}(\tilde{\n{S}}^{1})\;\simeq\;{KR}^{-8}(\tilde{\n{S}}^{1})\;\simeq\;KR(\{\ast\})\;\oplus\; KR^{-7}(\{\ast\})\;\simeq\;\Z\;
\eeq
which implies $\widetilde{KR}(\tilde{\n{S}}^{1})=0$. The $KR$ groups for $\tilde{\n{T}}^{d}\equiv(\n{T}^d,\tau)$ can be computed with the help of the isomorphism
\beql{eq:KR4}
{KR}^{-j}(\tilde{\n{S}}^{1}\times Y,\tau\times\sigma)\;\simeq\;{KR}^{-(j-1)}(Y,\sigma)\;\oplus\; {KR}^{-j}(Y,\sigma)
\eeq
where $(Y,\sigma)$ is any involutive space \cite[eq 5.18]{hori-99}. Since $\tilde{\n{T}}^{d}=\tilde{\n{S}}^{1}\times\ldots\times\tilde{\n{S}}^{1}$  ($d$-times) 
one gets after repeated iterations
\beql{eq:KR4bis}
{KR}^{-j}(\tilde{\n{T}}^{d})\;\simeq\;\bigoplus_{n=0}^d\Big({KR}^{-(j-n)}(\{\ast\})\Big)^{\oplus \;\binom {d} {n}}\;.
\eeq

\begin{table}[htp]
 \label{tab:KR2}
 \begin{tabular}{|c||c|c|c|c|c|c|c|c|}
\hline
   & $d=1$ & $d=2$&$d=3$&$d=4$&$d=5$&$d=6$&$d=7$&$d=8$\\
\hline
\hline
\rule[-2mm]{0mm}{6mm}
$\widetilde{K}(\n{T}^d)$&   $0$ & $\Z$ & $\Z^3$ & $\Z^7$ &$\Z^{15}$&$\Z^{31}$&$\Z^{63}$&$\Z^{127}$\\

 \hline
 \rule[-2mm]{0mm}{6mm}
$\widetilde{KR}(\tilde{\n{T}}^{d})$&   $0$ & $0$ & $0$ & $\Z$ &$\Z_2^5$&$\Z_2^{16}$&$\Z_2^{43}$&$\Z\oplus \Z_2^{106}$\\
\hline
\rule[-2mm]{0mm}{6mm}
$\widetilde{KO}(\n{T}^d)$&   $\Z_2$ & $\Z_2^{3}$ & $\Z_2^6$ & $\Z\oplus\Z_2^{10}$ &$\Z^5\oplus\Z_2^{15}$&$\Z^{15}\oplus\Z_2^{21}$&$\Z^{35}\oplus\Z_2^{28}$&$\Z^{71}\oplus\Z_2^{36}$\\
\hline
\end{tabular}\vspace{1mm}
 \caption{
 {\footnotesize
 The groups $\widetilde{KR}(\n{T}^d,\tau)$ are obtained from equation \eqref{eq:KR4bis}.
 In the real case a recursive formula can be derived from the isomorphism 
 ${KO}^{-j}(\n{S}^{1}\times Y)\simeq{KO}^{-(j+1)}(Y)\;\oplus\; {KO}^{-j}(Y)$ (\cf with \cite[eq 5.17]{hori-99} when $Y$ is a space with trivial involution).
 In the complex case the groups ${K}(\n{T}^d)$ can be computed from \cite[eq 5.19]{hori-99} and the classification agrees with the 
 description in terms of Chern classes according to \cite{peterson-59}.
}
 }
 \end{table}

The {equivariant} reduced suspension \eqref{eq:susp2} can be equivalently written as
$$
\tilde{\Sigma}X\;:=\;(X\times\n{D}^{0,1})/(X\times \partial\n{S}^{0,1}\;\cup\;\{\ast\}\times\n{D}^{0,1})
$$
where $\{\ast\}$ is a fixed point of $(X,\tau)$. The positively graded groups can be written as
\beql{eq:KR_susp}
\widetilde{KR}(\tilde{\Sigma}X,\tau)\;=\;KR^{1}(X,\{\ast\})\;=\; \widetilde{KR}^1(X,\tau)
\eeq
The reduced $KR$ groups for TR-spheres $\tilde{\n{S}}^d$ can be derived from the \emph{suspension formula} \eqref{eq:KR_susp} together with the isomorphism $\tilde{\n{S}}^{d}\simeq \tilde{\Sigma}\tilde{\n{S}}^{d-1}$ which gives
\beql{eq:KR_susp2}
\widetilde{KR}(\tilde{\n{S}}^{d})\;\simeq\; \widetilde{KR}^{d-1}(\tilde{\n{S}}^1)
\eeq
\begin{table}[htp]
 \label{tab:KR3}
 \begin{tabular}{|c||c|c|c|c|c|c|c|c|}
\hline
   & $d=1$ & $d=2$&$d=3$&$d=4$&$d=5$&$d=6$&$d=7$&$d=8$\\
\hline
\hline
\rule[-2mm]{0mm}{6mm}
$\widetilde{K}(\n{S}^d)$&   $0$ & $\Z$ & $0$ & $\Z$ &$0$&$\Z$&$0$&$\Z$\\
 \hline
 \rule[-2mm]{0mm}{6mm}
$\widetilde{KR}(\tilde{\n{S}}^d)$&   $0$ & $0$ & $0$ & $\Z$ &$0$&$\Z_2$&$\Z_2$&$\Z$\\
\hline
\rule[-2mm]{0mm}{6mm}
$\widetilde{KO}(\n{S}^d)$&   $\Z_2$ & ${\Z_2}$ & $0$ & $\Z$ &$0$&$0$&$0$&$\Z$\\
\hline
\end{tabular}\vspace{1mm}
 \caption{
 {\footnotesize
 The reduced $KR$ groups are computed with the help of  \eqref{eq:KR_susp2} and \eqref{eq:KR2}.
 The reduced $K$ groups for real and complex vector bundles are
 computed in \cite[Chapter 9, Corollary 5.2]{husemoller-94}.
 In the complex case the $\Z$ invariant in even dimensions is  the $d/2$-th Chern class according to 
  \cite{peterson-59}.
}
 }
 \end{table}


\medskip
\medskip


\begin{thebibliography} {[RMCPV]}
\frenchspacing \baselineskip=12 pt plus 1pt minus 1pt



\bibitem[AB]{atiyah-bott-64} {Atiyah, M.~F.; Bott, R.}: {\sl On the periodicity theorem for complex vector bundles}. 
Acta Math. {\bf 112}, 229-247 (1964)



\bibitem[ADG]{adem-duman-gomez-11} {Adem, A.; Duman, A. N.; G\'{o}mez, J. M.}: {\sl Cohomology of toroidal orbifold quotients}. 
J. Algebra {\bf 344}, 114-136 (2011)



\bibitem[ALR]{adem-leida-ruan-07} {Adem, A.; Leida, J.; Ruan, Y.}: 
{\em Orbifolds and Stringy Topology}. 
Cambridge University Press, Cambridge, 2007

\bibitem[AP]{allday-puppe-93} {Allday, C.;~Puppe, V.}: 
{\em Cohomological Methods in Transformation Groups}. 
Cambridge University Press, Cambridge, 1993




 
 \bibitem[At1]{atiyah-66}
{Atiyah, M.~F.}: {\sl $K$-theory and reality}. 
Quart. J. Math. Oxford Ser. (2) {\bf 17}, 367-386 (1966)

\bibitem[At2]{atiyah-67} {Atiyah, M.~F.}: {\em $K$-theory}. 
W. A. Benjamin, New York, 1967
 
\bibitem[AZ]{altland-zirnbauer-97} Altland,~A.; Zirnbauer,~M.: {\sl Non-standard symmetry classes in mesoscopic normal-superconducting hybrid structures}, Phys. Rev. {\bf B 55}, 1142-1161 (1997)



 
 
 \bibitem[BES]{bellissard-elst-schulz-baldes-94} Bellissard,~J.; van~Elst,~A.; Schulz-Baldes,~H.: {\sl The Non-Commutative Geometry of the Quantum Hall Effect}. 
{J. Math. Phys.}~{\bf 35}, 5373-5451 (1994)

\bibitem[BHRA]{bernevig-hughes-raghu-arovas-07} 
Bernevig, B. A.; Hughes, T. L.; Raghu, S.; Arovas, D. P.:
{\sl Theory of the Three-Dimensional Quantum Hall Effect in Graphite}. 
{Phys. Rev. Lett.}~{\bf 99}, 146804 (2007)


\bibitem[BMKNZ]{bohm-mostafazadeh-koizumi-niu-zwanziger-03} B\"{o}hm, A.; Mostafazadeh, A.; Koizumi, H.; Niu, Q.; Zwanziger, J.: 
{\em The Geometric Phase in Quantum Systems}. 
Springer-Verlag, Berlin, 2003



\bibitem[Bo]{borel-60} Borel,~A.: {\em Seminar on transformation groups}, with contributions by G.~Bredon, E.~E.~Floyd, D.~Montgomery, R.~Palais. Annals of Mathematics Studies {\bf 46}, Princeton University Press, Princeton,  1960



\bibitem[Bre]{bredon-97} Bredon,~G.~E.:
{\em Sheaf Theory}. 
Springer, Berlin, 1997




\bibitem[BT]{bott-tu-82} Bott,~R.; Tu,~L.~W.: {\em Differential Forms in Algebraic Topology}. 
Springer, Berlin, 1982



\bibitem[BuT]{budich-trauzettel-13} Budich,~J.~C.; Trauzettel,~B.: {\sl From the adiabatic theorem of quantum mechanics to topological states of matter}. 
{Phys. Status Solidi RRL}~{\bf 7}, 109-129 (2013)













\bibitem[CV]{cadek-vanzura-93} \v{C}adek, M.; Van\v{z}ura, J.:
{\sl On the classification of oriented vector bundles over 5-complexes.}
{Czechoslovak Math. J.}~{\bf 43}, 753-764 (1993)




\bibitem[DG]{denittis-gomi} De~Nittis,~G.; Gomi,~K.: {\em Classification of \virg{Quaternionic} Bloch-bundles: Topological quantum systems of type AII}. E-print \texttt{arXiv:1404.5804} (2014)


\bibitem[DK]{davis-kirk-01} Davis, J. F.; Kirk, P.: {\em Lecture Notes in Algebraic Topology}. 
AMS, Providence, 2001




\bibitem[DL]{denittis-lein-11} De~Nittis,~G.; Lein,~M.: {\sl
Exponentially localized Wannier functions in periodic zero flux
magnetic fields}. 
{J. Math. Phys.}~{\bf 52}, 112103 (2011)




\bibitem[DMR]{doran-mendez-rosenberg-13} Doran, C.;  Mendez-Diez, S.;  Rosenberg, J.: {\sl T-duality For Orientifolds and Twisted $KR$-theory}. 
E-print \texttt{arXiv:1306.1779} (2013)


\bibitem[Ed1]{edelson-71}  
Edelson, A.~L.:
{\sl Real Vector Bundles and Spaces with Free Involutions}. 
{Trans. Amer. Math. Soc.}~{\bf 157}, 179-188 (1971)

\bibitem[Ed2]{edelson-71_2}  
Edelson, A.~L.:
{\sl Real Line Bundles on Spheres}. 
{ Proc. Amer. Math. Soc.}~{\bf 27}, 579-583 (1971)


\bibitem[EGJ]{easther-greene-jackson-02}  
Easther, R.; Greene, B.~R.; Jackson, M.~G.:
{\sl Cosmological string gas on orbifolds}. 
{Phys. Rev. D}~{\bf 66}, 023502 (2002)




\bibitem[EMV]{essin-moore-vanderbilt-09}  
Essin, A. M.; Moore, J. E.; Vanderbilt, D.
{\sl Magnetoelectric Polarizability and Axion Electrodynamics in Crystalline Insulators}. 
{Phys. Rev. Lett.}~{\bf 102}, 146805 (2009)




\bibitem[FM]{freed-moore-13} 
{Freed, D. S.; Moore, G. W.}: {\sl Twisted Equivariant Matter}. {Ann. Henri Poincar\'{e}}~{\bf 14}, 1927-2023 (2013)




\bibitem[Fr]{frankel-97}  Frankel, T.:  {\em The Geometry of Physics: An Introduction}.  Cambridge University Press, Cambridge, 1997









\bibitem[GBVF]{gracia-varilly-figueroa-01}  Gracia-Bondia, J. M., Varilly, J. C.,  Figueroa,  H.:  {\em Elements of Noncommutative Geometry}.  Birkh\"{a}user, Boston, 2001




\bibitem[GGK]{ginzburg-guillemin-karshon-2002}  Guillemin, V.; Ginzburg, V. L.;  Karshon, Y.:
{\em Moment maps, Cobordisms, and Hamiltonian Group Actions}. 
AMS, Providence,  2002



 


\bibitem[GH]{griffiths-harris-78}  Griffiths, P.; Harris, J.: 
{\em Principles of algebraic geometry}. Wiley, New York,  1978


 \bibitem[Go]{gomi-13} Gomi,~K.: {\sl A variant of K-theory and topological T-duality for Real circle bundles}. 
Accepted for publication in Comm. Math. Phys. 
E-print \texttt{arXiv:1310.8446} (2013)




\bibitem[Hat]{hatcher-02} Hatcher,~A.: {\em  Algebraic Topology}. Cambridge University Press, Cambridge,  2002
 
 \bibitem[HK]{hasan-kane-10} Hasan,~M.~Z.; Kane,~C.~L.: {\sl Colloquium: Topological insulators}. 
{Rev. Mod. Phys.}~{\bf 82}, 3045-3067 (2010)





\bibitem[Ho]{hori-99} Hori, K.: {\sl D-branes, T duality, and index theory}. 
{Adv. Theor. Math. Phys.}~{\bf 3}, 281-342 (1999)


\bibitem[HPB]{hughes-prodan-bernevig-11} Hughes, T. L.; Prodan, E.; Bernevig, B. A.: {\sl Inversion-symmetric topological insulators}. 
{Phys. Rev. B}~{\bf 83}, 245132 (2011)


\bibitem[Hs]{hsiang-75} Hsiang, W. Y.: {\em  Cohomology Theory of Topological Transformation Groups}. Springer-Verlag, Berlin, 1975





\bibitem[HS]{helffer-sostrand-90} Helffer,~B.; Sj\"{o}strand,~J.:
{\em {\'E}quation de Schr\"{o}dinger avec champ magn{\'e}tique et {\'e}quation de Harper}. In: Schr\"{o}dinger operators. Lecture Notes in Physics {\bf 345}, 118-197. Springer, Berlin, 1989 





\bibitem[Hu]{husemoller-94} Husemoller,~D.: {\em  Fibre bundles}. Springer-Verlag, New York, 1994






\bibitem[JJWM]{joannopoulos-johnson-winn-meade-08} 
Joannopoulos, J. D.; Johnson, S. G.; Winn, J. N.; Meade, R. D.:
{\em Photonic crystals: molding the flow of light}. Princeton University Press, Princeton, 2008





\bibitem[JK]{jahren-kwasik-11} Jahren,~B.; Kwasik,~S.: {\sl Free involutions on $S^1\times S^n$}. 
{Math. Ann.}~{\bf 351}, 281-303 (2011)









\bibitem[Kah]{kahn-59} Kahn,~B.: 
{\sl Construction de classes de Chern \'{e}quivariantes pour un fibr\'{e} vectoriel R\'{e}el}. 
{Comm. Algebra.}~{\bf 15}, 695-711 (1987)




\bibitem[KAH]{koshino-aoki-halperin-02} 
Koshino, M.; Aoki, H.; Halperin, B. I.:
{\sl Wrapping current versus bulk integer quantum Hall effect in three dimensions}. 
{Phys. Rev. B}~{\bf 66}, 081301 (2002)







\bibitem[Kar]{karoubi-97} 
Karoubi,~M.: {\em  $K$-Theory. An Introduction}.  Springer-Verlag, New York, 1978.


\bibitem[Kat]{kato-95} Kato, T.: {\em  Perturbation theory of linear operators.}
 Reprint of the 1980 edition, Springer, Berlin, 1995 

\bibitem[Ke]{kervaire-60}  Kervaire, M. A.:
{\sl Some nonstable homotopy groups of Lie groups}. 
{Illinois J. Math.}~{\bf 4}, 161-169 (1960)



\bibitem[KHW]{kohmoto-halperin-wu-92} 
Kohmoto, M.; Halperin, B. I.;   Wu, Y.-S.:
{\sl Diophantine Equation for the 3D Quantum Hall Effect}. 
{Phys. Rev. B}~{\bf 45}, 13488 (1992)



\bibitem[Ki]{kitaev-09} Kitaev,~A.: 
{\sl Periodic table for topological insulators and superconductors}. 
{AIP Conf. Proc.}~{\bf 1134}, 22-30 (2009)

\bibitem[Kr]{krasnov-92} Krasnov,~V.~ A.: 
{\sl Characteristic classes of vector bundles on a real algebraic variety}. 
{Math. USSR Izv.}~{\bf 39}, 703-730 (1992)




\bibitem[Ku]{kuchment-93} Kuchment,~P.: {\em Floquet theory for partial differential equations}. Birkh\"{a}user, Boston,  1993




\bibitem[LM]{luke-mishchenko-98} Luke,~G.; Mishchenko, A.~S.: {\em Vector Bundles and Their Applications}. Kluwer Academic Publishers, 1998









\bibitem[Ma]{matumoto-71} Matumoto,~T.:
{\sl On $G$-CW complexes and a theorem of J. H. C. Whitehead}. 
J. Fac. Sci. Univ. Tokyo  {\bf 18}, 363-374 (1971)  



\bibitem[MS]{milnor-stasheff-74} Milnor,~J.;   Stasheff,~J.~D.: {\em Characteristic Classes}. Princeton University Press, 1974




\bibitem[NNT]{nagata-nishida-toda-82} Nagata, M.; Nishida, G.; Toda, H.: 
{\sl Segal-Becker theorem for $KR$-theory}. 
{J. Math. Soc. Japan}~{\bf 34}, 15-33 (1982)







\bibitem[Pa]{panati-07} 
{Panati, G.}: {\sl Triviality of Bloch and Bloch-Dirac Bundles}. {Ann. Henri Poincar\'{e}}~{\bf 8}, 995-1011 (2007)

\bibitem[Pe]{peterson-59} 
{Peterson, F. P.}: {\sl Some remarks on Chern classes}. {Ann. of Math.}~{\bf 69}, 414-420 (1959)



\bibitem[PS]{pitsch-scherer-13} 
{Pitsch, W.; Scherer, J.}: {\sl Conjugation spaces and equivariant Chern classes}. {Bull. Belg. Math. Soc. Simon Stevin}~{\bf 20}, 77-90 (2013)



\bibitem[PST]{panati-spohn-teufel-03} 
{Panati, G.; Spohn, H.; Teufel, S.}: {\sl Effective dynamics for Bloch electrons: Peierls substitution and beyond}. {Comm. Math. Phys.}~{\bf 242}, 547-578 (2003)









\bibitem[QHZ]{qi-hughes-zhang-08} 
{Qi, X.-L.; Hughes, T. L.; Zhang, S.-C.}:{\sl Topological field theory of time-reversal invariant insulators}. {Phys. Rev. B}~{\bf 78}, 195424 (2008)








\bibitem[RSFL]{ryu-schnyder-furusaki-ludwig-10}
 Ryu,~S.; Schnyder,~A.~P.; Furusaki,~A.; Ludwig,~A.~W.~W.:
{\sl Topological insulators and superconductors: tenfold way and dimensional hierarchy}.
New J. Phys. {\bf 12}, 065010 (2010)







\bibitem[Sc]{scorpan-05} Scorpan,~A.: {\em The wild world of $4$-manifolds}. AMS, Providence, 2005



\bibitem[Se]{segal-68}
Segal,~G:
{\sl Equivariant $K$-theory}.
Publ. Math. Inst. Hautes \'{E}tudes Sci. {\bf 34}, 129-151 (1968)


\bibitem[SRFL]{schnyder-ryu-furusaki-ludwig-08}
Schnyder,~A.~P.; Ryu,~S.; Furusaki,~A.; Ludwig,~A.~W.~W.:
{\sl Classification of topological insulators and superconductors in three spatial dimensions}.
Phys. Rev. {\bf B 78}, 195125 (2008)



\bibitem[SS]{shiozaki-sato-14}
Shiozaki,~K.; Sato,~M.:
{\sl Topology of crystalline insulators and superconductors}.
E-print \texttt{arXiv:1403.3331} (2014)








\bibitem[tD]{tom-dieck-87} tom Dieck, T.: {\em Transformation Groups}. Studies in Mathematics Vol. 8. Walter de Gruyter, Berlin, 1987


 \bibitem[TKNN]{thouless-kohmoto-nightingale-nijs-82} Thouless,~D.~J.; Kohmoto,~M.; Nightingale,~M.~P.; den
Nijs,~M.: {\sl Quantized Hall Conductance in a Two-Dimensional Periodic Potential}. 
{Phys. Rev. Lett.}~{\bf 49}, 405-408 (1982)



 \bibitem[Va]{vaisman-90} Vaisman, I.: {\sl Exotic characteristic classes of quaternionic bundles}. 
{Israel J. Math.}~{\bf 69}, 46-58 (1990)




 


\bibitem[Wo]{woodward-82} Woodward, L. M.: {\sl The classification of orientable vector bundles over CW-complexes of small dimension}. {Proc. Roy. Soc. Edinburgh}~{\bf 92A}, 175-179 (1982)


 
 
 
 \end{thebibliography}
\end{document}